\documentclass[10pt,reqno]{amsart}

\usepackage{amsmath,amssymb,amsfonts,amsthm}
\usepackage{amssymb,url,color,booktabs,nccmath}
\definecolor{red}{rgb}{0,0,0}
\usepackage{microtype}
\usepackage{hyperref}
\hypersetup{colorlinks=true,linkcolor=blue,citecolor=blue,urlcolor=blue}
\allowdisplaybreaks[2]
\usepackage[hmargin=0.95in,vmargin=1in]{geometry}
\newcommand{\T}{\mathbb T}
\newcommand{\N}{\mathbb N}

\newcommand{\Z}{\mathbb Z}

\newcommand{\F}{\mathfrak F}
\newcommand{\cN}{\mathcal N}
\newcommand{\Tr}{\mathrm{Tr}}

\newcommand{\e}{\mathrm e}
\newcommand{\ii}{\mathrm i}
\renewcommand{\d}{\,\mathrm d}
\newcommand{\dG}{\mathrm d\Gamma}

\newcommand{\gH}{\mathfrak{H}}
\newcommand{\gF}{\mathfrak{F}}

\newcommand{\Fock}{\mathfrak F}
\newcommand{\hs}{\otimes_s}
\newcommand{\1}{\mathbf{1}}

\newcommand{\rmk}[1]{#1}
\newcommand{\vbeta}{\rmk{v_\beta}}
\newcommand{\hvb}{\rmk{\widehat v_\beta}}
\newcommand{\Wbeta}{\rmk{W_{\beta}^{(2)}}}
\newcommand{\vbstar}{\rmk{v_{\beta,*}}}

\numberwithin{equation}{section}

\theoremstyle{plain}
\newtheorem{theorem}{Theorem}[section]
\newtheorem{lemma}[theorem]{Lemma}
\newtheorem{remark}[theorem]{Remark}

\newtheorem{proposition}[theorem]{Proposition}

\newtheorem{corollary}[theorem]{Corollary}

\makeatletter
\renewcommand{\paragraph}{\@startsection{paragraph}{4}{\z@}%
  {0.8\baselineskip \@plus 0.2\baselineskip \@minus 0.1\baselineskip}%
  {-0.5em}%
  {\normalfont\bfseries}}
\makeatother

\title[Derivation of Gibbs measure from Gibbs state with the fractional Bessel interaction in Two Dimensions]{\rmk{Derivation of Gibbs measure from Gibbs state with the fractional Bessel interaction in Two Dimensions}}
\author{PHAN TH\`ANH NAM, RONGCHAN ZHU, AND XIANGCHAN ZHU}
\date{}

\begin{document}
\begin{abstract}
	We derive the classical Gibbs measure on $\T^2$ associated with the fractional Bessel interaction potential $\hvb(k)=\langle k\rangle^{-\beta}$ from a renormalized grand-canonical quantum Bose gas with the same interaction. Our result covers the whole range $\frac32<\beta\le2$, where $\hvb(k)$ is  not summable and the quantum model cannot be written in the usual density-square form, as the associated self-energy diverges.  We therefore need to renormalize the zero mode by a centered number-fluctuation term and then develop a detailed analysis for the high-frequency remainders. All this allows us to implement a low-frequency localization and obtain the convergence of the quantum relative free energy to the classical fractional-Bessel free energy, as well as the convergence of the reduced density matrices to the limiting Gibbs measure.
	
	
\end{abstract}
\maketitle
\thispagestyle{plain}

\setcounter{tocdepth}{2}
\tableofcontents
\bigskip

\section{Introduction}\label{sec:introduction}

\paragraph{Background and motivation.}
The derivation of nonlinear Gibbs measures from many-body quantum Gibbs states provides a precise bridge between bosonic quantum statistical mechanics and classical random fields; see, for instance, \cite{LewNamRou-15,LewNamRou-21,FKSS22,FKSS23,NZZ25,JouRou25,CKRTG26, LNZ26}. In this program one starts from a grand-canonical quantum Bose gas at temperature of order $\lambda^{-1}$, lets $\lambda\downarrow0$, and asks whether the interacting quantum Gibbs state converges to a classical Gibbs measure built on the Gaussian free field. {\rmk{The present paper addresses this question for the two-dimensional fractional-Bessel interaction on the torus, in the full range $\frac32<\beta\le2$.}} Our goal is not only to put the renormalized quantum model in a workable form, but to complete the quantum-to-classical derivation at the level of free energy and reduced density matrices.

{\rmk{The periodic fractional Bessel potential $v_\beta:\T^2\to \mathbb{C}$ is given by its Fourier coefficients 
		\begin{align}\label{eq:Bessel-potential}
			\hvb(k)=\langle k\rangle^{-\beta}=(1+|k|^2)^{-\beta/2},
			\qquad k\in\Z^2,
		\end{align}
		with a fixed exponent $\frac32<\beta\le2$. Put differently, $\vbeta(x-y)$ is the kernel of $(1-\Delta)^{-\beta/2}$ on $L^2(\T^2)$. This family contains the Yukawa case  $\beta=2$, which is as singular as the Coulomb potential, and also allows even stronger singularities. In particular, the potential remains non-summable in two dimensions throughout the whole range considered here. 
		
		On the one hand, $\hvb(k)\ge0$ and the corresponding classical Gibbs measure can be constructed directly (see Lemma~\ref{lem:class}) or within the variational approach of \cite{BG20}; for Hartree-type fields this program has been implemented in \cite{Bri22,OOT24}. Classical fields of this kind can also be constructed through stochastic quantization, starting from \cite{PW81} and developed in \cite{JLM85,DD03,Hai14,Kup16,RZZ17a,TW18,MW17a,MW17,CC18,GH19,GH21}; see also \cite{NZZ25} for a recent quantum-to-classical derivation in which stochastic-quantization methods play an essential role. On the other hand, this non-summability means that the standard density-square rewriting of the quantum interaction produces an infinite self-energy. The present model therefore lies beyond the reach of the previous derivations from quantum Gibbs states with fixed pair interactions in the nonlocal setting, which do not cover the regime $\sum_k \hvb(k)=\infty$. The point is precisely to treat a kernel that is already singular at the quantum level: previous derivations for fixed pair interactions concern bounded or otherwise sufficiently regular potentials, whereas here the ultraviolet singularity survives in the quantum Hamiltonian itself.}}

\paragraph{The quantum model and the derivation problem.}
We work on the bosonic Fock space $\F(L^2(\T^2))$ and consider the renormalized Gibbs state
\[
\Gamma^{\rm re}_\lambda=(Z^{\rm re}_\lambda)^{-1}e^{-\mathbb H^{\rm re}_\lambda},
\qquad
\mathbb H^{\rm re}_\lambda=\lambda\dG(h)+\Wbeta+c_\lambda\cN+C_\lambda,
\qquad h=-\Delta+1.
\]
{\rmk{Here $\Wbeta$ is the genuine two-body fractional-Bessel interaction in second quantization, i.e. on the \(n\)-particle sector \(\gH_n\), it acts as}}
\[
\lambda^2\sum_{1\le i<j\le n}\vbeta(x_i-x_j).
\]
Here  $\cN$ is the number operator and $c_\lambda$, $C_\lambda$ are constants, which are chosen in Section \ref{sec:ren-int} below. 

The limiting object is the classical Gibbs measure associated with $\vbeta$
\[
\d\nu(u)=z_{\mathrm{cl}}^{-1}e^{-D(u)}\,\d\mu_0(u),
\]
where $\mu_0$ is the Gaussian free field with covariance $h^{-1}$ and {\rmk{$D(u)$ is the renormalized classical interaction associated with $\vbeta$, obtained as the infinite-dimensional limit of the finite-dimensional Hartree functionals}}
\begin{align}\label{def:D}
	D(u):=\frac12\int_{\T^2\times \T^2} \vbeta(x-y):\!|u(y)|^2\!::\!|u(x)|^2\!:\d x\d y,
\end{align}
where $:\!|u(x)|^2\!:$ means the Wick product and is formally given by $|u(x)|^2-\langle |u(x)|^2\rangle_{\mu_0}$ and can be rigorously defined via suitable smooth approximation. 
The derivation problem is to show that the renormalized quantum free energy converges to $-\log z_{\mathrm{cl}}$ and that the reduced density matrices of $\Gamma^{\rm re}_\lambda$ converge to the moments of $\nu$.

The precise statement of the paper is given in Theorem~\ref{thm:main-result} below. It combines the convergence of the relative free energy with the Hilbert--Schmidt convergence of the rescaled reduced density matrices to the classical correlation operators. The scalar moment convergence against finite Fourier test functions, stated later in Theorem~\ref{thm:density-matrix-convergence-final}, is recovered as a consequence of this operator formulation.

\paragraph{Why the problem is delicate.}
There are several layers of difficulty.

First, the usual density-square rewriting already breaks down at the operator level. If one introduces the density modes
\(\rho_k=\dG(e_k),
\)
and formally rewrites the interaction in terms of $\rho_k\rho_{-k}$, the canonical commutation relations generate a self-energy proportional to
\(
(\sum_{k\in\Z^2}\hvb(k))\cN,
\)
{\rmk{which diverges because $\hvb(k)=\langle k\rangle^{-\beta}$ is not summable in dimension two throughout the whole range $\beta\le2$. The renormalized Hamiltonian must therefore be formulated directly in terms of the genuine two-body operator, with only the zero mode extracted and centered.}}

Second, the free-energy lower bound is not automatic. Once the zero mode is centered, the interaction is no longer manifestly nonnegative when compared with the original two-body Yukawa operator. Hence the lower bound needed for Theorem~\ref{thm:free-energy-convergence-final} cannot be recovered from positivity alone; it has to be reconstructed through a detailed comparison with the finite-dimensional classical problem.

Third, the high-frequency analysis is genuinely multi-scale. After the first cutoff $P=1_{\{h\le \Lambda^2\}}$, the error does not reduce to a single positive tail term. One has to separate a zero-mode remainder from the nonzero-mode contribution, and then introduce a second cutoff to split the latter into a shell part and a tail part. The shell term is neither positive nor termwise negligible and has to be reorganized into fluctuation observables. The tail part is controlled only after exploiting positivity of the full Yukawa operator together with quantitative bounds on high-frequency particle-number fluctuations.

Fourth, even the preliminary a priori theory is already borderline. The source-dependent estimates in Section~\ref{sec:var} cannot be obtained by first rewriting the interaction in a density-square form, because that is exactly where the divergent self-energy appears. One has to work directly with the genuine quartic Yukawa operator under perturbed quasi-free states, and the potentially dangerous direct Wick contribution is controlled only after exploiting the cancellation $\Tr(\gamma_0M_k)=0$ for $k\neq0$ with $M_k$ being multiplication by $e_k$. Thus the singular nature of the fixed Bessel interaction enters not only in the final lower bound, but already in the basic estimates that make the rest of the argument possible.

\paragraph{Relation to previous works.}
This paper fits into the general program initiated in \cite{LewNamRou-15} and developed further in \cite{LewNamRou-21,FKSS22,FKSS23,NZZ25,JouRou25,CKRTG26}. The recent works \cite{JouRou25,CKRTG26} derive $\Phi^4_2$-type field theories from two-dimensional quantum Gibbs states in homogeneous and trapped settings, respectively, through shrinking-range or inhomogeneous limits. By contrast, the present Bessel problem keeps a fixed nonlocal interaction whose Fourier coefficients are not summable. This is different from the summable-interaction setting because the interaction cannot be written as the formal density-square representation before the quantum-to-classical comparison: the divergent quantity $\sum_k \hvb(k)$ is precisely the obstruction. On the classical side we rely on  the usual Wick product and direct analysis for the construction of the Gibbs measure; related nonlinear fields may also be accessed through the variational analysis \cite{BG20, Bri22, OOT24} or stochastic quantization \cite{PW81,JLM85,DD03,Hai14,Kup16,RZZ17a,TW18,MW17a,MW17,CC18,GH19,GH21}. On the quantum side the main new input is a decomposition and control of the high-frequency part that is sharp enough to recover both the free-energy limit and the convergence of reduced density matrices.

\begin{remark}
	We expect that the method developed here can be extended to the inhomogeneous setting with an additional one-body potential $V$, in the spirit of \cite{LewNamRou-21, CKRTG26}. In such a situation, the main high-frequency analysis should remain essentially the same, while the low-frequency comparison would have to be adapted to the presence of $V$. In particular, one should then be able to replace the Yukawa potential by the Coulomb potential. We do not pursue this extension here.
\end{remark}

\begin{remark}[Fractional Bessel kernels --- implemented range]
	\label{rem:frac-bessel-implemented}
	{\rmk{
			For $\beta\le \frac32$, genuinely new arguments seem necessary already for the shell contribution in Section~\ref{sec:shell-vanishing-rewritten}, and also for the tail contribution in Section~\ref{sec:tail-vanishing-checked}. More precisely, the restriction $\beta>\frac32$ already appears in the weighted shell summability estimate in Section~\ref{sec:shell-vanishing-rewritten}, and reappears in Section~\ref{sec:tail-vanishing-checked} in both the weighted mixed-channel estimate and the control of the high--high off-diagonal block.
			For $\beta\le 1$ the present Schur estimate for the high--high block is no longer sufficient, while for $\beta\le \frac12$ the commutator summations from Sections~\ref{sec:high0}--\ref{sec:shell-vanishing-rewritten} already fail.}}
\end{remark}


\bigskip
\noindent
{\bf Acknowledgments.}   P.T.N. was  supported by the European Research Council via the ERC Consolidator Grant RAMBAS (Project No. 10104424).  R.Z. and X.Z. are grateful to the financial supports by National Key R\&D Program of China (No. 2022YFA1006300) and the financial supports of the NSFC (No. 12426205).
R.Z. is grateful to the financial supports of the NSFC (No. 12271030). X.Z. is grateful to the financial supports in part by  the NSFC (No. 12595281, 12288201) and the support by key Lab of Random
Complex Structures and Data Science, Chinese Academy of
Science and the financial supports  by the Deutsche Forschungsgemeinschaft (DFG, German Research Foundation) – Project-ID 317210226--SFB 1283.

\section{Setting and Main result}\label{sec:setting}

In this section, we introduce the renormalized quantum model and state the main result concerning the rigorous connection between the quantum Gibbs state and the classical Gibbs measure. We  always consider the periodic fractional Bessel potential $\vbeta$ defined in \eqref{eq:Bessel-potential}. We use $\langle k\rangle:=(1+|k|^2)^{1/2}$ for $k\in\Z^2$.

\subsection{Many-body framework} Following the standard Fock-space formalism used in \cite[Section~2.1]{LewNamRou-21} and
\cite[Section~2.1]{NZZ25}, we first recall the bosonic grand-canonical framework and the
free quasi-free Gibbs state. Let
\[
\gH:=L^2(\T^2),\qquad \T^2=[0,2\pi]^2,
\]
and let
\[
\Fock:=\bigoplus_{n=0}^{\infty}\gH^{\otimes_s n},
\qquad
\gH^{\otimes_s 0}:=\mathbb C,
\qquad
\gH_n:=\gH^{\otimes_s n}=L^2_{\mathrm{sym}}\!\big((\T^2)^n\big).
\]
Here \(\otimes_s\) denotes the symmetric tensor product. For a self-adjoint one-body
operator \(A\) on \(\gH\), we use the usual second quantization
\[
\dG(A)
=
0\oplus\bigoplus_{n\ge1}\sum_{j=1}^{n}A_j,
\]
where \(A_j\) acts as \(A\) on the \(j\)-th variable and as the identity on the others.
In particular,
\[
\cN:=\dG(1)
\]
is the particle-number operator.

For every $f\in L^2(\T^2)$, we define the annihilation operator $a(f)$ and the creation operator $a^*(f)$ on Fock space \(\Fock\) by
\begin{align*}
	(a(f) \psi)(x_1,...,x_{n-1}) &= \sqrt{n} \int_{\T^2} \overline{f(x)}  \psi(x_1,...,x_{n-1},x) \d x,\\
	(a^*(f) \psi)(x_1,...,x_{n+1}) &= \frac1{\sqrt{n+1}} \sum_{j=1}^{n+1} f(x_j) \psi(x_1,...,x_{j-1}, x_{j+1}, ..., x_{n+1}),
\end{align*}
and extend to \(\Fock\) by linearity. These operators are related to the operators $a_x,a^*_x$  by
$$
a(f)= \int_{\T^2} \overline{f(x)} a_x \d x , \quad a^{*}(f) = \int_{\T^2} {f(x)} a_x^* \d x,\quad \forall f\in L^2(\T^2).
$$
They satisfy the canonical commutation relations
\[
[a(f),a(g)]=0,
\qquad
[a^\ast(f),a^\ast(g)]=0,
\qquad
[a(f),a^\ast(g)]=\langle f,g\rangle_{\gH}.
\]
With the normalized Fourier basis
\[
e_k(x)=(2\pi)^{-1}\e^{\ii k\cdot x},
\qquad
k\in\Z^2,
\]
we write
\[
a_k:=a(e_k),\qquad a_k^\ast:=a^\ast(e_k),
\]
so that
\[
[a_k,a_\ell]=0,\qquad [a_k^\ast,a_\ell^\ast]=0,\qquad [a_k,a_\ell^\ast]=\delta_{k,\ell}.
\]

A quantum state on \(\Fock\) is a positive trace-class operator \(\Gamma\) with
\(\Tr_{\Fock}\Gamma=1\). Writing \(\Gamma=\bigoplus_{n\ge0}\Gamma_n\) according to the
particle-number decomposition, its reduced \(k\)-body density matrix is
\[
\Gamma^{(k)}
=
\sum_{n\ge k}\binom{n}{k}\Tr_{k+1\to n}\Gamma_n
\qquad (k\ge1).
\]
Equivalently, for every bounded self-adjoint operator \(A_k\) on \(\gH^{\otimes_s k}\),
\[
\Tr_{\gH^{\otimes_s k}}\!\big(A_k\Gamma^{(k)}\big)
=
\Tr_{\Fock}\!\big(\mathbb A_k\Gamma\big),
\]
where \(\mathbb A_k\) denotes the second-quantized lift of \(A_k\). In particular,
\[
\Tr_{\Fock}(\cN\Gamma)=\Tr_{\gH}\!\big(\Gamma^{(1)}\big).
\]
We also use $\mathbb W(\cdot)$ to denote the bosonic second quantization of a two-body operator, characterized by
\[
\Tr_{\mathfrak F}(\mathbb W(T)\Gamma)=\Tr_{\gH^{\otimes_s 2}}(T\Gamma^{(2)})
\]
for every state $\Gamma$ whenever both sides make sense.
Define the one-body operator
\[
h=-\Delta+1,
\qquad
\dG(h)=\sum_{k\in\Z^2}(|k|^2+1)a_k^\ast a_k.
\]
We also write $h(p)=|p|^2+1$ for $p\in \Z^2$.

\paragraph{Free Gibbs state.}
For \(\lambda>0\), define the free quasi-free Gibbs state
\begin{equation}\label{eq:Gamma0}
	\Gamma_0 = Z_0^{-1}\exp(-\lambda \dG(h)),
	\qquad
	Z_0=\Tr\exp(-\lambda \dG(h)).
\end{equation}
We write
\[
\langle A\rangle_0:=\Tr_{\Fock}(A\Gamma_0).
\]
Its one-body density matrix is
\[
\gamma_0=\frac{1}{\e^{\lambda h}-1},
\]
and in momentum space
\[
\langle a_k^\ast a_\ell\rangle_0=n_k\,\delta_{k,\ell},
\qquad
n_k=\frac{1}{\e^{\lambda(|k|^2+1)}-1}.
\]
By translation invariance, the diagonal of \(\gamma_0\) is constant, and
\[
N_0:=\Tr_{\Fock}(\cN\Gamma_0)=\sum_{k\in\Z^2}n_k.
\]
Recall from \cite{LewNamRou-21} 
\[\lambda N_0\le C(1+|\log\lambda|),\qquad \lambda^q\Tr(\cN^q\Gamma_{0})
\le C_q(1+|\log\lambda|)^q,\qquad q\geq1.\] 



\subsection{Renormalized interaction, Hamiltonian, and Gibbs state}\label{sec:ren-int}\label{sec:ren-gibbs}

Now we introduce the renormalized fractional-Bessel interaction on
\(\T^2\). 
{\rmk{We first consider the genuine two-body fractional-Bessel interaction. On the \(n\)-particle sector
		\(\gH_n\), it acts as
		\[
		\lambda^2\sum_{1\le i<j\le n}\vbeta(x_i-x_j),
		\]
		and in Fock space it can be written in position space as
		\[
		\Wbeta
		:=
		\frac{\lambda^2}{2}\iint_{\T^2\times\T^2}
		\vbeta(x-y)\,a_x^\ast a_y^\ast a_y a_x\,\d x\,\d y .
		\]
		Expanding \(a_x=\sum_{p\in\Z^2}e_p(x)a_p\) together with the Fourier series of \(\vbeta\), one
		obtains the equivalent momentum representation}}
\begin{equation}\label{eq:Wmom}
	\Wbeta
	=
	\frac{\lambda^2}{2(2\pi)^2}\sum_{k\in\Z^2}\hvb(k)\sum_{p,q\in\Z^2}
	a_{p+k}^\ast a_{q-k}^\ast a_q a_p.
\end{equation}
For later comparison with density-square representations, define the density Fourier modes by
\begin{equation}\label{eq:rhok}
	\rho_k := \dG(e_k) = \int_{\T^2} e_k(x)\,a_x^\ast a_x\,\d x
	= \frac{1}{2\pi}\sum_{p\in\Z^2} a_{p+k}^\ast a_p.
\end{equation}
A direct CCR computation yields
\begin{equation}\label{eq:keyidentity}
	\sum_{p,q\in\Z^2} a_{p+k}^\ast a_{q-k}^\ast a_q a_p
	= (2\pi)^2\,\rho_k\rho_{-k} - \cN.
\end{equation}

Let us explain why the formal density-square representation is not used. Formally, plugging \eqref{eq:keyidentity} into \eqref{eq:Wmom} gives
\begin{equation}\label{eq:W-density-square}
	\Wbeta
	=
	\frac{\lambda^2}{2}\sum_{k\in\Z^2}\hvb(k)\,\rho_k\rho_{-k}
	-
	\frac{\lambda^2}{2(2\pi)^2}\Big(\sum_{k\in\Z^2}\hvb(k)\Big)\cN.
\end{equation}
For interactions with summable Fourier coefficients, the identity
\eqref{eq:W-density-square} is a convenient starting point for renormalization; see, for
instance, \cite[Section~2.4]{LewNamRou-21}. {\rmk{For the fractional-Bessel kernel in the present range $\frac32<\beta\le2$, however,
		\[
		\sum_{k\in\Z^2}\hvb(k)=\sum_{k\in\Z^2}\langle k\rangle^{-\beta}=+\infty,
		\]}}
so the second term in \eqref{eq:W-density-square} is a divergent self-energy proportional to
\(\cN\). Therefore the non-zero modes must be kept in the genuine quartic form
\eqref{eq:Wmom}, while only the zero mode is centered separately.

We define the renormalized interaction by
\begin{equation}\label{eq:Wre-def}
	W^{\mathrm{re}}
	:=
	\frac{\lambda^2}{2(2\pi)^2}
	\sum_{k\neq0}\hvb(k)\sum_{p,q\in\Z^2}
	a_{p+k}^\ast a_{q-k}^\ast a_q a_p
	\;+\;
	\frac{\lambda^2}{2(2\pi)^2}(\cN-N_0)^2.
\end{equation}
Since \(\hvb(0)=1\) and
\[
\sum_{p,q\in\Z^2}a_p^\ast a_q^\ast a_q a_p = \cN(\cN-1)=\cN^2-\cN,
\]
the \(k=0\) mode in \(\Wbeta\) equals
\[
\frac{\lambda^2}{2(2\pi)^2}(\cN^2-\cN).
\]
Therefore,
\begin{equation}\label{def1:Wre}
	W^{\mathrm{re}}
	=
	\Wbeta
	+
	\frac{\lambda^2}{2(2\pi)^2}\Big[(\cN-N_0)^2-(\cN^2-\cN)\Big]
	=
	\Wbeta + c_\lambda \cN + C_\lambda,
\end{equation}
with
\[
c_\lambda=\frac{\lambda^2}{2(2\pi)^2}(1-2N_0),
\qquad
C_\lambda=\frac{\lambda^2}{2(2\pi)^2}N_0^2.
\]
By Proposition~\ref{prop:def-Gibbs}, the renormalized Hamiltonian defines a well-posed
grand-canonical Gibbs state:
\begin{equation}\label{eq:Hre}
	\mathbb{H}_\lambda^{\mathrm{re}}:=\lambda \dG(h)+W^{\mathrm{re}},
	\qquad
	\Gamma_\lambda^{\mathrm{re}}:=(Z_\lambda^{\mathrm{re}})^{-1}\exp(-\mathbb{H}_\lambda^{\mathrm{re}}),
	\qquad
	Z_\lambda^{\mathrm{re}}:=\Tr\exp(-\mathbb{H}_\lambda^{\mathrm{re}}).
\end{equation}
From now on, for notational simplicity, we write
\[
\Gamma_\lambda:=\Gamma_\lambda^{\mathrm{re}}.
\]
We also denote by $\gamma_\lambda$ the one-body density matrix of $\Gamma_\lambda$.

\subsection{\rmk{Classical fractional-Bessel Gibbs measure and main result}}\label{sec:main-results}

{\rmk{Fix $\frac32<\beta\le2$. Let $\mu_0$ be the Gaussian free field on $\T^2$ with covariance $h^{-1}$. By the variational construction in \cite{OOT24}, the renormalized classical interaction $D$ associated with $\vbeta$ from \eqref{def:D} is well defined as the $L^1(\mu_0)$-limit of the finite-dimensional Hartree functionals, and the corresponding classical Gibbs measure is}}
\[
\d\nu(u)=z_{\mathrm{cl}}^{-1}e^{-D(u)}\,\d\mu_0(u),
\qquad
z_{\mathrm{cl}}=\int e^{-D(u)}\,\d\mu_0(u).
\]

\begin{theorem}[Main result]
	\label{thm:main-result}
	{\rmk{As $\lambda\downarrow0$, the following hold for the fixed fractional-Bessel interaction with $\frac32<\beta\le2$.}}
	\begin{enumerate}
		\item The relative free energy converges:
		\[
		-\log\frac{Z_\lambda^{\mathrm{re}}}{Z_0}
		\longrightarrow
		-\log z_{\mathrm{cl}}.
		\]
		\item For every fixed $k\ge1$,
		\[
		k!\lambda^k\Gamma_\lambda^{(k)}
		\longrightarrow
		\int |u^{\otimes k}\rangle\langle u^{\otimes k}|\,\d\nu(u)
		\qquad\text{in }\mathfrak S^2(\gH^{\otimes_s k}).
		\]
	\end{enumerate}
\end{theorem}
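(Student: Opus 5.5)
The proof follows the variational strategy of \cite{LewNamRou-21}. We write the relative free energy as a minimization of the relative entropy plus interaction energy, and obtain matching upper and lower bounds via a low-frequency localization. Concretely,
\[
-\log\frac{Z^{\rm re}_\lambda}{Z_0}=\min_{\Gamma}\Big\{\mathcal H(\Gamma,\Gamma_0)+\Tr(W^{\rm re}\Gamma)\Big\},
\]
and classically $-\log z_{\rm cl}=\min_{\nu'}\{\mathcal H_{\rm cl}(\nu',\mu_0)+\int D\,\d\nu'\}$. Fix a cutoff $P=1_{\{h\le\Lambda^2\}}$ with $\Lambda$ fixed, later $\Lambda\to\infty$ after $\lambda\to0$. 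Let $D_\Lambda$ be the Hartree functional with Wick ordering of $Pu$, and $z_{{\rm cl},\Lambda}$ the corresponding partition function; by the $L^1(\mu_0)$ convergence quoted from \cite{OOT24} (and Lemma~\ref{lem:class}), $z_{{\rm cl},\Lambda}\to z_{\rm cl}$.

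\emph{Upper bound.} I would take the trial state $\Gamma=\Gamma_{P}^{\rm cl}\otimes\Gamma_{0,Q}$, where $Q=1-P$. Here $\Gamma_P^{\rm cl}$ is a lower-symbol (coherent-state) quantization of the finite-dimensional classical measure $z_{{\rm cl},\Lambda}^{-1}e^{-D_\Lambda}\d\mu_{0,P}$, and $\Gamma_{0,Q}$ is the free state on high modes. The entropy is controlled by Berezin--Lieb, giving the classical relative entropy up to $o(1)$. For the energy I would compute $\Tr(W^{\rm re}\Gamma)$ directly in the quartic form \eqref{eq:Wmom}. Cross terms with the free high modes vanish thanks to $\Tr(\gamma_0M_k)=0$ for $k\ne0$. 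The high--high direct term is likewise zero for $k\neq0$, and the exchange terms are $O(\lambda^2\sum_{p}n_p n_{p+k}\hvb(k))$, which tends to zero. The zero mode $(\cN-N_0)^2$ is centered precisely so that its contribution matches the Wick-ordered zero-mode piece. This gives $\limsup\le -\log z_{{\rm cl},\Lambda}+o_\Lambda(1)$.

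\emph{Lower bound.} This is the main obstacle. First I would use the a priori estimates for $\Gamma_\lambda$ from Section~\ref{sec:var}, namely moment bounds on $\lambda\cN$, on $\lambda\dG(h^{1-\epsilon})$-type quantities, and source-dependent bounds. Next I would localize $\Gamma_\lambda$ to $P\gH$ and $Q\gH$ in Fock space, using the localization inequality for relative entropy $\mathcal H(\Gamma,\Gamma_0)\ge\mathcal H(\Gamma_P,\Gamma_{0,P})$. I would then decompose
\[
W^{\rm re}=W^{\rm re}_{P}+\mathcal R_{0}+\mathcal R_{\rm shell}+\mathcal R_{\rm tail}.
\]
Here $W^{\rm re}_P$ is the interaction in which only $P$-modes appear, with zero mode $(\cN_P-\Tr P\gamma_0)^2$. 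The term $\mathcal R_0$ is the zero-mode remainder coming from $\cN_Q-\Tr Q\gamma_0$. For the nonzero modes I would introduce a second cutoff $\Lambda'\gg\Lambda$: $\mathcal R_{\rm shell}$ collects terms with momenta in $\Lambda<|p|\le\Lambda'$, and $\mathcal R_{\rm tail}$ collects the rest.
\begin{itemize}
\item For $\mathcal R_0$, I would expand the square and bound cross terms by Cauchy--Schwarz, using variance bounds on $\lambda(\cN_Q-\Tr Q\gamma_0)$ that are small as $\Lambda\to\infty$.
\item For the shell, I would reorganize the terms into fluctuation observables $\lambda(\rho_k-\langle\rho_k\rangle)$ and bound them with the weighted summability $\sum_k\hvb(k)\langle k\rangle^{-s}<\infty$. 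This step uses $\beta>\tfrac32$.
\item For the tail, I would exploit positivity of the full Bessel two-body operator ($\hvb\ge0$) to drop the pure high--high part from below. The mixed and off-diagonal blocks would be controlled by a Schur test plus the high-frequency number-fluctuation bounds.
\end{itemize}
After these reductions, $\Tr(W^{\rm re}_P\Gamma_{\lambda,P})$ compared with the upper-symbol measure via the quantum de Finetti theorem in finite dimension gives the classical Hartree energy of the limit measure, up to errors that vanish as $\lambda\to0$ and then $\Lambda\to\infty$. Lower semicontinuity of the classical relative entropy then yields $\liminf\ge-\log z_{{\rm cl},\Lambda}-o_\Lambda(1)$.

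\emph{Density matrices.} I would deduce part (2) by the standard perturbation argument: replace $\mathbb H^{\rm re}_\lambda$ by $\mathbb H^{\rm re}_\lambda+\epsilon\lambda^k\mathbb A_k$ for finite-rank $A_k$, rerun the free-energy convergence uniformly (the source-dependent bounds in Section~\ref{sec:var} serve exactly this purpose), and differentiate in $\epsilon$ by convexity. This gives weak convergence of $k!\lambda^k\Gamma^{(k)}_\lambda$ against finite Fourier test operators. To upgrade to $\mathfrak S^2$, I would use uniform bounds $\Tr[(h^{s}\otimes\cdots)\lambda^k\Gamma^{(k)}_\lambda]$ with small $s>0$. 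Together with the a priori estimates, these provide compactness, and the upgrade follows.
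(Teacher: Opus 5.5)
The overall architecture matches the paper: Gibbs variational principle, Fock-space localization to $P\gH$, a zero-mode/shell/tail split, positivity of the full kernel $V$ for the tail, and de Finetti plus Berezin--Lieb on $P\gH$. A coherent-state superposition of $\nu_P$ is also a legitimate alternative upper-bound trial state to the paper's truncated Gibbs state with Peierls--Bogoliubov. Two side remarks on that part. Your symbol labels are swapped: that trial state has $\nu_P$ as \emph{upper} symbol, whereas the lower bound must use the \emph{lower} symbol, which is where Berezin--Lieb holds. Also, the $P$--$Q$ exchange terms $\lambda^2\sum_{k\neq0}\hvb(k)\sum_{p\in\mathcal K,\,p+k\notin\mathcal K}m_pn_{p+k}$ do not vanish and must be estimated.

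\textbf{The choice of scales is a genuine gap.} You fix $\Lambda$ and send $\lambda\to0$ first. But every a priori bound available here carries logarithmic losses, and these cannot be removed before the lower bound is proved. Since $W^{\rm re}$ is not nonnegative ($c_\lambda<0$), one only has $\mathcal H(\Gamma_\lambda,\Gamma_0)\le C(1+|\log\lambda|^2)$ and $\lambda^q\Tr(\cN^q\Gamma_\lambda)\le C_q(1+|\log\lambda|)^q$.
\begin{itemize}
\item Consequently every remainder bound has the form $|\log\lambda|^m\Lambda^{-c}$ plus positive powers of $\lambda$. For example, Theorem~\ref{prop:high-frequency-fluctuations} gives $(1+|\log\lambda|^2)(\Lambda^{-1}+\lambda^{3/2}\Lambda^2)$ for the $Q$-variance. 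Such bounds do not vanish as $\lambda\to0$ at fixed $\Lambda$. The paper couples $\Lambda=\lambda^{-\alpha}$ precisely so that the negative power of $\Lambda$ beats the logarithms.
\item The second cutoff is more constrained still, and ``$\Lambda'\gg\Lambda$'' is not enough. After dropping $R_1VR_1\ge0$, the subtraction $V_{\neq0}=V-(2\pi)^{-2}$ leaves the negative scalar term $-\frac{\lambda^2}{2(2\pi)^2}\Tr(R_1\Gamma_\lambda^{(2)})$ (Proposition~\ref{prop:second-cutoff-decomp}), which your plan never mentions. Its size is at least of order $(\lambda N_{0,Q_1})^2$. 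Since $\lambda N_{0,Q_1}$ is of order $\log(1/(\lambda\Lambda_1^2))$ when $\lambda\Lambda_1^2\lesssim1$, this term is small only above the thermal scale, i.e.\ when $\lambda\Lambda_1^2\to\infty$.
\item The paper therefore takes $\Lambda_1=\lambda^{-\alpha_1}$ with $\alpha_1>\frac12$, which makes the free tail exponentially small (Lemma~\ref{lem:free-tail-Q1}). It also needs $\alpha_1<\frac34$, so that the commutator term $\lambda^{3/2}\Lambda_1^2$ in the $Q_1$-fluctuation bound vanishes.
\item You do not say where the zero-mode and shell variance bounds come from. In the paper they come from the second-order correlation inequality, Theorem~\ref{thm:correlation-intro}; that is what the source-dependent bounds of Section~\ref{sec:var} are for. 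The double commutators are written in quartic form and bounded by Lemma~\ref{lem:aH-S}, using the fact that the pure $Q\!\to\!Q$/$Q\!\to\!Q$ block cancels.
\end{itemize}

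\textbf{The density-matrix step also does not go through as proposed.}
\begin{itemize}
\item A Feynman--Hellmann perturbation $\epsilon\lambda^k\mathbb A_k$ is not covered by the one-body source bounds of Section~\ref{sec:var}.
\item For $k\ge3$ this perturbation makes the Hamiltonian unbounded below for one sign of $\epsilon$. Test $A_k=|\phi^{\otimes k}\rangle\langle\phi^{\otimes k}|$ on $\phi^{\otimes n}$: the term $-|\epsilon|\lambda^k\binom nk$ beats $O(\lambda n+\lambda^2n^2)$. So only one-sided derivatives are available.
\item More seriously, the compactness you invoke does not exist in two dimensions. Already for the free state, $\lambda\Tr(h^s\gamma_0)\sim\lambda^{-s}$ for $s>0$ and $\lambda\Tr\gamma_0\sim|\log\lambda|$. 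Meanwhile the only available control of $\lambda\sqrt h(\gamma_\lambda-\gamma_0)\sqrt h$ in $\mathfrak S^2$ costs a factor $(1+|\log\lambda|^2)$.
\end{itemize}
The paper instead uses quantitative rates:
\begin{itemize}
\item The identity $\mathcal F_\lambda(\Gamma)-\mathcal F_\lambda(\Gamma_\lambda)=\mathcal H(\Gamma,\Gamma_\lambda)$, combined with the matching upper and lower bounds, gives $\mathcal H(\widetilde\Gamma_\lambda,\Gamma_\lambda)\le C\varepsilon_\lambda$, where $\varepsilon_\lambda$ is a positive power of $\lambda$ up to logarithms (Proposition~\ref{prop:HS-trial-vs-true-state}).
\item Pinsker's inequality and Lemma~\ref{lem:HS-states-to-rdm-trace} then transfer this to the reduced density matrices of $U^*(\Gamma_{\lambda,P}\otimes\Gamma_{0,Q})U$. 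The lemma's $(1+|\log\lambda|)^k$ loss is affordable only because $\varepsilon_\lambda$ decays polynomially.
\item In that product state the high-frequency factor is exactly free, with $\|\lambda^m\Gamma_{0,Q}^{(m)}\|_{\mathfrak S^2}\le C\Lambda^{-m}$.
\item The low-frequency block is identified using de Finetti and Pinsker for the lower symbols against $\nu_P$. The classical limit $\gamma^{(k)}_{\nu_P}\to\gamma^{(k)}_\nu$ in $\mathfrak S^2$ is Proposition~\ref{prop:HS-classical-cutoff-limit}.
\end{itemize}
Again, this argument only closes with a coupled cutoff $\Lambda=\lambda^{-\alpha}$.
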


The free-energy convergence is proved in Theorem~\ref{thm:free-energy-convergence-final}, and the Hilbert--Schmidt convergence of the reduced density matrices is proved in Theorem~\ref{thm:HS-convergence-main}.

\paragraph{Strategy of proof.}
The proof is variational throughout.  In Section~\ref{sec:var} we establish source-dependent a priori bounds for perturbed Gibbs states. This is already a genuinely Bessel-specific step: the nonzero modes are estimated directly from the quartic formula, and the cancellation $\Tr(\gamma_0M_k)=0$ removes the potentially divergent direct term. We then localize the Gibbs state to low frequencies and use an exact decomposition of the interaction into a projected part plus high-frequency remainders. The projected interaction is compared in Subsection~\ref{sec:low-freq} with the finite-dimensional Hartree functional $D_P$ by means of a quantitative de Finetti theorem and a comparison between the free lower symbol and the Gaussian measure on $PH$.

This first reduction produces explicit remainder terms, and the rest of the paper is devoted to showing that they are negligible. In Section~\ref{sec:dec} we introduce the second cutoff that separates the nonzero-mode error into shell and tail pieces. Section~\ref{sec:high0} controls the zero-mode high-frequency remainder by combining a second-order correlation inequality with an exact block decomposition of the double commutator, so that every surviving quartic monomial still contains a low-frequency leg and can be estimated by asymmetric quartic Hilbert--Schmidt bounds. Section~\ref{sec:shell-vanishing-rewritten} proves that the shell contribution vanishes by polarizing the localized fluctuation observables and applying the same commutator strategy mode by mode. Section~\ref{sec:tail-vanishing-checked} handles the tail contribution differently: after discarding the positive $Q_1Q_1$ block, the remaining mixed part is kept as a genuine off-diagonal two-body operator, whose norm is paired with the small far-tail mass. These three mechanisms are what eventually recover the sharp lower bound and allow the quantum and classical variational principles to match.

\section{A priori estimate}

This section collects the basic a priori estimates for the renormalized Gibbs state. In
Subsection~\ref{sec:var} we recall the variational characterization of the free energy and prove a
uniform upper bound on $-\log(Z_\lambda^{\mathrm{re}}/Z_0)$. For later applications of the
second-order correlation inequality, we also formulate the corresponding bounds uniformly for a
family of perturbed Gibbs states obtained by adding a bounded one-body source $tB$. As
$\hvb(k)$ is not summable, the argument differs from
\cite{LewNamRou-21,NZZ25}: after the quasi-free entropy comparison in
Lemma~\ref{lem:relent-signedA}, the nonzero-mode contribution in
Lemma~\ref{lem:shell-perturbed-free} must be estimated directly from the genuine quartic formula,
and the direct term is controlled through the cancellation $\Tr(M_k\gamma_0)=0$ for $k\neq0$ with $M_k$ being multiplication by $e_k$,
which allows us to write $\Tr(\gamma_{0,t}M_k)=\Tr((\gamma_{0,t}-\gamma_0)M_k)$. In
Subsection~\ref{sec:numki} we derive the particle-number, entropy, and kinetic-energy estimates
that will be used later in Sections~\ref{sec:high0}, \ref{sec:shell-vanishing-rewritten}, and
\ref{sec:tail-vanishing-checked}.

\subsection{Variational formulation and a-priori free-energy upper bound}\label{sec:var}

The relevant free-energy functional is the relative entropy with respect to the free
quasi-free state \(\Gamma_0\) plus the renormalized interaction energy. By the Gibbs
variational principle,
\begin{equation}\label{eq:variational}
	-\log\frac{Z_\lambda^{\mathrm{re}}}{Z_0}
	=
	\min_{\Gamma\ge0,\ \Tr\Gamma=1}
	\Big\{\mathcal{H}(\Gamma,\Gamma_0)+\Tr(W^{\mathrm{re}}\Gamma)\Big\},
\end{equation}
where
\[
\mathcal{H}(\Gamma,\Gamma_0)=\Tr[\Gamma(\log\Gamma-\log\Gamma_0)]\ge0.
\]

\begin{lemma}
	\label{lem:relent-signedA}
	Let $h>0$ be self-adjoint on the one-body space and let $A=A^\ast$ satisfy
	\begin{equation}
		\label{eq:A-upper-bound}
		A \le c\,h \qquad\text{for some }c\in(0,1).
	\end{equation}
	For $\lambda>0$ define the quasi-free Gibbs states
	\[
	\Gamma_s := Z_s^{-1}\exp\!\big(-\lambda\, \dG(h-sA)\big),\qquad s\in[0,1],
	\]
	with one-body density matrices
	\[
	\gamma_s := (e^{\lambda(h-sA)}-1)^{-1}.
	\]
	Then $\Gamma_s$ is well-defined for all $s\in[0,1]$ and the relative entropy satisfies
	\begin{equation}
		\label{eq:relent-bound}
		\mathcal{H}(\Gamma_1,\Gamma_0)
		\;\le\;
		C(c)\,\Tr\!\big(h^{-1}Ah^{-1}A\big).
	\end{equation}
\end{lemma}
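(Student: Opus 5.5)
The plan is to write the relative entropy between the two quasi-free states explicitly through the one-body log-partition function, reduce it to a second-order Taylor remainder, and bound the second derivative by a quadratic form in $A$. Throughout, set $H_s:=h-sA$. By \eqref{eq:A-upper-bound}, $H_s\ge h-scH\ge(1-c)h>0$ for $s\in[0,1]$ (here I use $s\ge0$ and only the upper bound on $A$, where $H$ should read $h$), so
\[
H_s\ge(1-c)\,h .
\]
Well-definedness of $\Gamma_s$ then follows from the usual criterion for quasi-free states: $\Tr_{\Fock} e^{-\lambda\dG(H_s)}=\prod_j(1-e^{-\lambda\mu_j})^{-1}$, where $\mu_j$ are the eigenvalues of $H_s$. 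This product is finite as soon as $\Tr e^{-\lambda H_s}<\infty$. By min-max, $\mu_j\ge(1-c)\times$(the $j$-th eigenvalue of $h$), so this reduces to $\Tr e^{-\lambda(1-c)h}<\infty$, which holds for $h=-\Delta+1$ on $\T^2$ (and in general is the standing assumption for $\Gamma_0$).

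Next I compute the entropy. With $f(s):=\log Z_s=-\Tr\log(1-e^{-\lambda H_s})$, we have $\log\Gamma_s=-\lambda\dG(H_s)-f(s)$, and hence
\[
\mathcal H(\Gamma_1,\Gamma_0)=\lambda\Tr(A\gamma_1)+f(0)-f(1).
\]
Since $f'(s)=\lambda\Tr(A\gamma_s)$, Taylor's formula with integral remainder gives
\[
\mathcal H(\Gamma_1,\Gamma_0)=f'(1)-\big(f(1)-f(0)\big)=\int_0^1 s\,f''(s)\,\d s .
\]
It therefore suffices to show $f''(s)\le C(c)\Tr(h^{-1}Ah^{-1}A)$ uniformly in $s$.

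To compute $f''(s)=\lambda\,\tfrac{d}{ds}\Tr(A\,g(\lambda H_s))$ with $g(x)=(e^x-1)^{-1}$, I use the Daleckii--Krein formula in an eigenbasis $(\phi_i,\mu_i)$ of $H_s$:
\[
f''(s)=\lambda^2\sum_{i,j}|\langle\phi_i,A\phi_j\rangle|^2\,\frac{g(\lambda\mu_i)-g(\lambda\mu_j)}{\lambda\mu_j-\lambda\mu_i},
\]
with the diagonal terms read as $-g'(\lambda\mu_i)$. The key elementary inequality is
\[
0\le\frac{g(x)-g(y)}{y-x}\le\frac1{xy}\qquad (x,y>0).
\]
This follows because $g(x)-1/x=\tfrac12\big(\coth(x/2)-2/x\big)-\tfrac12$ is increasing on $(0,\infty)$, so $g(x)-g(y)\le 1/x-1/y$ for $x<y$; in the diagonal case, $-g'(x)=\tfrac14\sinh^{-2}(x/2)\le x^{-2}$. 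Plugging this in gives
\[
f''(s)\le\sum_{i,j}\frac{|A_{ij}|^2}{\mu_i\mu_j}=\Tr\big(H_s^{-1}AH_s^{-1}A\big).
\]
Finally, since $X\mapsto\Tr(XAYA)$ is monotone in each of $X,Y\ge0$ (it equals $\|X^{1/2}AY^{1/2}\|_{\mathfrak S^2}^2$) and $H_s^{-1}\le(1-c)^{-1}h^{-1}$, we get $f''(s)\le(1-c)^{-2}\Tr(h^{-1}Ah^{-1}A)$. Integrating against $s\,\d s$ yields \eqref{eq:relent-bound} with $C(c)=\tfrac12(1-c)^{-2}$.

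The main technical obstacle is justifying the differentiation in infinite dimensions when $A$ is unbounded. I would first assume $A$ is finite rank or bounded, and handle the general case by approximation. If the right-hand side of \eqref{eq:relent-bound} is infinite there is nothing to prove. If it is finite, I would truncate $A$ to $P_nAP_n$ with spectral projections $P_n$ of $h$; these truncations still satisfy the upper bound $\le ch$. I would then pass to the limit using lower semicontinuity of the relative entropy together with convergence of the states, which follows from $\exp(-\lambda\dG(H_s^{(n)}))\to\exp(-\lambda\dG(H_s))$ in trace norm.
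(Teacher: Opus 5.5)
Your proof is correct, but it bounds the key integrand differently from the paper.

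\textbf{Common starting point.} The paper writes $\mathcal H(\Gamma_1,\Gamma_0)=\lambda\int_0^1\Tr\big(A(\gamma_1-\gamma_s)\big)\,\d s$. Since $\lambda\Tr(A(\gamma_1-\gamma_s))=f'(1)-f'(s)=\int_s^1 f''(u)\,\d u$, Fubini turns this into your $\int_0^1 s\,f''(s)\,\d s$.

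\textbf{The paper's route.} It never differentiates twice. For each fixed $s$ it sets $h_s=h-sA$ and $A_s=(1-s)A$, so that $h_s-A_s=h-A$ and $A_s\le c_s h_s$. It then applies \cite[Lemma 6.3]{LewNamRou-21} as a black box, bounding the first-order difference by $(1-c_s)^{-1}\Tr(h_s^{-1}A_sh_s^{-1}A_s)$, and finally compares $h_s$ with $h$.

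\textbf{Your route.} You prove the needed estimate directly. You use the Daleckii--Krein representation of $f''$ together with the scalar divided-difference bound $0\le (g(x)-g(y))/(y-x)\le 1/(xy)$, which is equivalent to monotonicity of $\coth t-1/t$. This gives the pointwise bound $0\le f''(s)\le \Tr(H_s^{-1}AH_s^{-1}A)$.

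\textbf{What each buys.} Your argument is self-contained and shows transparently why the bound is quadratic in $A$, namely through convexity of $s\mapsto\log Z_s$. It also gives an explicit constant $\int_0^1 s(1-sc)^{-2}\,\d s\le\frac12(1-c)^{-2}$; keeping $(1-sc)^{-2}$ inside the integral recovers the same $O((1-c)^{-1})$ behaviour as the paper's constant. The cost is justifying second-order differentiation in infinite dimensions. Your finite-rank truncation handles this:
\begin{itemize}
\item $P_nAP_n\le ch$ still holds.
\item $\Tr(h^{-1}P_nAP_nh^{-1}P_nAP_n)=\|P_nh^{-1/2}Ah^{-1/2}P_n\|_{\mathfrak S^2}^2\le\Tr(h^{-1}Ah^{-1}A)$.
\item The reference state $\Gamma_0$ does not depend on $n$, so only lower semicontinuity in the first argument is needed.
\item Finiteness of the right-hand side makes $h^{-1/2}Ah^{-1/2}$ compact. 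Hence $h^{-1/2}(A-P_nAP_n)h^{-1/2}\to0$ in norm, which gives the convergence of the Gibbs states you assert. It would be worth stating this step explicitly.
\end{itemize}
The paper's route is shorter given the external lemma and avoids second derivatives entirely, at the price of depending on that lemma and on the bookkeeping of $c_s$.
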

\begin{proof}
	Since
	\[
	\log\Gamma_s=-\lambda\, \dG(h-sA)-\log Z_s,
	\qquad
	\partial_s \log Z_s=\lambda\,\Tr(A\gamma_s),
	\]
	a direct computation yields the exact identity
	\begin{equation}
		\label{eq:relent-path}
		\mathcal{H}(\Gamma_1,\Gamma_0)
		=
		\lambda\int_0^1 \Tr\!\big(A(\gamma_1-\gamma_s)\big)\,\d s .
	\end{equation}
	Fix $s\in[0,1)$. Set
	\[
	h_s:=h-sA,
	\qquad
	A_s:=(1-s)A,
	\]
	so that
	\[
	h_s-A_s=h-A,
	\qquad
	\gamma_s=(e^{\lambda h_s}-1)^{-1},
	\qquad
	\gamma_1=(e^{\lambda(h_s-A_s)}-1)^{-1}.
	\]
	Hence
	\begin{equation}
		\label{eq:integrand-rewrite}
		\Tr\!\big(A(\gamma_1-\gamma_s)\big)
		=
		\frac{1}{1-s}\,
		\Tr\!\Big(A_s\Big(\frac{1}{e^{\lambda(h_s-A_s)}-1}-\frac{1}{e^{\lambda h_s}-1}\Big)\Big).
	\end{equation}
	From \eqref{eq:A-upper-bound} we have
	\[
	h_s=h-sA \ge (1-sc)h>0,
	\qquad
	A_s=(1-s)A \le (1-s)c\,h \le \frac{(1-s)c}{1-sc}\,h_s.
	\]
	Define
	\[
	c_s:=\frac{(1-s)c}{1-sc}\in(0,1),
	\qquad
	\text{so that } A_s\le c_s h_s.
	\]
	Applying \cite[Lemma 6.3]{LewNamRou-21}  to the pair $(\lambda h_s,\lambda A_s)$ gives
	\begin{equation}
		\label{eq:lemma63-applied}
		0\le
		\lambda\Tr\!\Big(A_s\Big(\frac{1}{e^{\lambda(h_s-A_s)}-1}-\frac{1}{e^{\lambda h_s}-1}\Big)\Big)
		\le
		\frac{1}{1-c_s}\,\Tr\!\big(h_s^{-1}A_s\,h_s^{-1}A_s\big).
	\end{equation}
	Combining \eqref{eq:integrand-rewrite} and \eqref{eq:lemma63-applied} yields, for $s\in[0,1)$,
	\begin{equation}
		\label{eq:integrand-bound}
		\lambda\Tr\!\big(A(\gamma_1-\gamma_s)\big)
		\le
		\frac{1}{1-s}\cdot \frac{1}{1-c_s}\,\Tr\!\big(h_s^{-1}A_s\,h_s^{-1}A_s\big).
	\end{equation}
	First note that
	\[
	1-c_s = 1-\frac{(1-s)c}{1-sc}=\frac{1-c}{1-sc}
	\qquad\Longrightarrow\qquad
	\frac{1}{1-c_s}=\frac{1-sc}{1-c}.
	\]
	Next, since $h_s\ge (1-sc)h$, we have $h_s^{-1}\le (1-sc)^{-1}h^{-1}$ and hence
	\begin{align*}
	&\Tr\!\big(h_s^{-1}A_s\,h_s^{-1}A_s\big)
	=
	\Tr\!\big(h_s^{-1/2}A_s\,h_s^{-1}A_sh_s^{-1/2}\big)
	\\
	&\le \frac1{1-sc}\Tr\!\big(h_s^{-1/2}A_s\,h^{-1}A_sh_s^{-1/2}\big)
	=
	\frac1{1-sc}\Tr\!\big(h^{-1/2}A_s\,h_s^{-1}A_sh^{-1/2}\big)
	\\
	&\le \frac1{(1-sc)^2}\Tr\!\big(h^{-1/2}A_s\,h^{-1}A_sh^{-1/2}\big)
	=
	\frac{(1-s)^2}{(1-sc)^2}\,\Tr\!\big(h^{-1}Ah^{-1}A\big).
	\end{align*}
	Insert these two relations into \eqref{eq:integrand-bound} to obtain
	\[
	\lambda\Tr\!\big(A(\gamma_1-\gamma_s)\big)
	\le
	\frac{1}{1-s}\cdot \frac{1-sc}{1-c}\cdot
	\frac{(1-s)^2}{(1-sc)^2}\,\Tr\!\big(h^{-1}Ah^{-1}A\big)
	=
	\frac{1-s}{1-c}\cdot\frac{1}{1-sc}\,\Tr\!\big(h^{-1}Ah^{-1}A\big).
	\]
	Plugging this into \eqref{eq:relent-path} and integrating gives
	\[
	\mathcal{H}(\Gamma_1,\Gamma_0)
	\le
	\frac{1}{1-c}\Big(\int_0^1 \frac{1-s}{1-sc}\,\d s\Big)\Tr\!\big(h^{-1}Ah^{-1}A\big).
	\]
	which proves \eqref{eq:relent-bound}.
\end{proof}

\begin{lemma}
	\label{lem:shell-perturbed-free}
	Let $f=f^*$ be a bounded one-body operator with
	\[
	\|f\|\le 1,
	\qquad
	a_f:=\bigl\|h^{-1/2}fh^{-1/2}\bigr\|_{\mathrm{HS}}\le c_0,
	\]
	for some sufficiently small universal constant $c_0>0$.  Define
	\[
	h_t:=h-\frac t2 f,
	\qquad
	\Gamma_{0,t}
	:=
	\frac{e^{-\lambda \dG(h_t)}}{\Tr(e^{-\lambda \dG(h_t)})},
	\qquad |t|\le 1,
	\]
	and denote by $\gamma_{0,t}$ the one-body density matrix of $\Gamma_{0,t}$.
	Then uniformly for $|t|\le 1$,
	\begin{align}
		\lambda^2
		\Tr\!\Big(
		(\cN-N_0)^2\Gamma_{0,t}
		\Big)
		&\le C,
		\label{eq:shell-free-N2-local}
		\\
		\Tr\!\big(W^{\rm re}\Gamma_{0,t}\big)
		&\le C.
		\label{eq:shell-free-W-local}
	\end{align}
\end{lemma}

\begin{proof}
	We divide the proof into two steps.

	\smallskip
	\noindent\textbf{Step 1. Centered number fluctuations for the perturbed free state.}
	Using \cite[Theorem 5.11]{LewNamRou-21}, we obtain
	\[
	\lambda^2
	\Tr_{\mathfrak F}\!\Big((\cN-\Tr(\gamma_{0,t}))^2\Gamma_{0,t}\Big)
	\lesssim \Tr(h_t^{-2})\lesssim 1.
	\]
By \cite[Theorem 6.1]{LewNamRou-21}, together with
	Lemma~\ref{lem:relent-signedA} we obtain
\begin{align*}
\lambda\bigl|\Tr(\gamma_{0,t})-N_0\bigr|
&\le
\Big(\Tr \bigl|\sqrt{\lambda h}(\gamma_{0,t}-\gamma_0)\sqrt{\lambda h}\bigr|^2\Big)^{1/2}
(\Tr h^{-2})^{1/2}
\le C a_f.
\end{align*}
Thus we find
	\[
	\lambda^2
	\Tr\!\Big((\cN-N_0)^2\Gamma_{0,t}\Big)
	\le
	2\lambda^2
	\Tr\!\Big((\cN-\Tr(\gamma_{0,t}))^2\Gamma_{0,t}\Big)
	+
	2\lambda^2\bigl(\Tr(\gamma_{0,t})-N_0\bigr)^2
	\le C.
	\]
	This proves \eqref{eq:shell-free-N2-local}.
	
	\smallskip
	\noindent\textbf{Step 2. Expectation of the renormalized interaction.}
	We split
	\[
	W^{\rm re}
	=
	\frac{\lambda^2}{2(2\pi)^2}(\cN-N_0)^2
	+
	W_{\neq0}^{\rm re}.
	\]
	The zero-mode part is already controlled by \eqref{eq:shell-free-N2-local}. For the
	nonzero-mode part, let
	\[
	M_k:=M_{e_k}
	\qquad (k\in\Z^2)
	\]
	denote multiplication by the normalized Fourier mode $e_k$. Then Wick's theorem yields
	\begin{equation}\label{eq:wick}
	\Tr_{\mathfrak F}\!\bigl(W_{\neq0}^{\rm re}\Gamma_{0,t}\bigr)
	=
	\frac{\lambda^2}{2}
	\sum_{k\neq 0}\hvb(k)
	\Big(
	|\Tr(\gamma_{0,t}M_k)|^2
	+
	\Tr(\gamma_{0,t}M_k\gamma_{0,t}M_{-k})
	\Big).
	\end{equation}
	Indeed, recalling that
	\[
	W_{\neq0}^{\rm re}
	=
	\frac{\lambda^2}{2(2\pi)^2}
	\sum_{k\neq 0}\hvb(k)\sum_{p,q\in\mathbb Z^2}
	a_{p+k}^*a_{q-k}^*a_q a_p,
	\]
	it is enough to compute the expectation of each quartic monomial in the quasi-free state
	$\Gamma_{0,t}$.
	
	For a gauge-invariant quasi-free state with one-body density matrix $\gamma_{0,t}$, one has
	\[
	\Tr_{\mathfrak F}\!\bigl(a^*(f)a(g)\Gamma_{0,t}\bigr)
	=
	\langle g,\gamma_{0,t}f\rangle,
	\]
	and Wick's theorem gives
	\begin{align*}
		\Tr_{\mathfrak F}\!\bigl(a_{p+k}^*a_{q-k}^*a_q a_p\,\Gamma_{0,t}\bigr)
		&=
		\Tr_{\mathfrak F}\!\bigl(a_{p+k}^*a_p\,\Gamma_{0,t}\bigr)
		\Tr_{\mathfrak F}\!\bigl(a_{q-k}^*a_q\,\Gamma_{0,t}\bigr) \\
		&\quad+
		\Tr_{\mathfrak F}\!\bigl(a_{p+k}^*a_q\,\Gamma_{0,t}\bigr)
		\Tr_{\mathfrak F}\!\bigl(a_{q-k}^*a_p\,\Gamma_{0,t}\bigr).
	\end{align*}
	Thus \eqref{eq:wick} follows by a direct computation.
	
%
%
	We first treat the exchange term. Since $h_t\ge \frac12 h$ for $|t|\le1$, we have
	\[
	0\le \lambda\gamma_{0,t}=\frac{\lambda}{e^{\lambda h_t}-1}\le h_t^{-1}\le 2h^{-1}.
	\]
	Hence
	\begin{align*}
		&\lambda^2\Tr(\gamma_{0,t}M_k\gamma_{0,t}M_{-k})
		=
		\lambda^2\Tr\bigl(\gamma_{0,t}^{1/2}M_k\gamma_{0,t}M_{-k}\gamma_{0,t}^{1/2}\bigr)
		\\
		&\le
		C\lambda\Tr\bigl(\gamma_{0,t}^{1/2}M_k h^{-1}M_{-k}\gamma_{0,t}^{1/2}\bigr)
		=
		C\lambda\Tr\bigl(h^{-1/2}M_{-k}\gamma_{0,t}M_kh^{-1/2}\bigr)
		\\
		&\le
		C\Tr\bigl(h^{-1}M_{-k}h^{-1}M_k\bigr)
		=
		C\sum_{p\in\Z^2}\frac1{(|p+k|^2+1)(|p|^2+1)}
		\le
		C\frac{\log(2+|k|)}{1+|k|^2},
	\end{align*}
where the last step follows from Lemma~\ref{lem:sum}.
	
	For the direct term, we use $\Tr(\gamma_0M_k)=0$ for $k\neq0$ and write
	\[
	\Tr(\gamma_{0,t}M_k)=\Tr\bigl((\gamma_{0,t}-\gamma_0)M_k\bigr).
	\]
	By \cite[Theorem 6.1]{LewNamRou-21}, together with
	Lemma~\ref{lem:relent-signedA},
	\[
	\lambda\,
	\bigl\|\sqrt h\,(\gamma_{0,t}-\gamma_0)\,\sqrt h\bigr\|_{\mathrm{HS}}
	\le C a_f.
	\]
	Hence
	\[
	\lambda |\Tr(\gamma_{0,t}M_k)|
	\le
	\lambda\,
	\bigl\|\sqrt h\,(\gamma_{0,t}-\gamma_0)\,\sqrt h\bigr\|_{\mathrm{HS}}
	\bigl\|h^{-1/2}M_kh^{-1/2}\bigr\|_{\mathrm{HS}}
	\le
	C a_f\Bigl(\frac{\log(2+|k|)}{1+|k|^2}\Bigr)^{1/2},
	\]
	where we used
	\[
	\|h^{-1/2}M_k h^{-1/2}\|_{\mathrm{HS}}^2
	=
	\sum_{q\in\mathbb Z^2}
	\frac{1}{(|q|^2+1)(|q+k|^2+1)}
	\le
	C\frac{\log(2+|k|)}{1+|k|^2}.
	\]
	{\color{red}Since $\hvb(k)=\langle k\rangle^{-\beta}$ with $\beta>0$, both the direct and exchange
	contributions are summed against
	\[
	\hvb(k)\frac{\log(2+|k|)}{1+|k|^2},
	\]
	which is summable on $\Z^2$. Hence the two series converge uniformly in $|t|\le1$, and we conclude that}
	\[
	\Tr_{\mathfrak F}\!\bigl(W_{\neq0}^{\rm re}\Gamma_{0,t}\bigr)\le C.
	\]
	This proves \eqref{eq:shell-free-W-local}.
\end{proof}

Let $f$ be as in Lemma \ref{lem:shell-perturbed-free}. Set
	\[
	B:=\frac{\lambda}{2}\dG(f),
	\qquad
	\Gamma_{\lambda,t}
	:=
	\frac{e^{-\mathbb{H}_\lambda^{\rm re}+tB}}{\Tr(e^{-\mathbb{H}_\lambda^{\rm re}+tB})},
	\qquad
	\Gamma_{0,t}
	:=
	\frac{e^{-\lambda \dG(h)+tB}}{\Tr(e^{-\lambda \dG(h)+tB})}
	=
	\frac{e^{-\lambda \dG(h_t)}}{\Tr(e^{-\lambda \dG(h_t)})},
	\qquad |t|\le 1.
	\]

\begin{lemma}
	\label{lem:exch2d}\label{lem:yukawa-lemma11-safe}
Then uniformly for \(|t|\le1\),
\begin{equation}\label{eq:yukawa-free-energy-upper}
	-\log\frac{\Tr(e^{-\mathbb{H}_\lambda^{\mathrm{re}}+tB})}{\Tr(e^{-\lambda \dG(h)+tB})}\le C.
\end{equation}
Consequently,
\begin{equation}\label{eq:yukawa-free-energy-functional}
	\mathcal{H}(\Gamma_{\lambda,t},\Gamma_{0,t})
	+\Tr\!\big(W^{\mathrm{re}}\Gamma_{\lambda,t}\big)
	\le C.
\end{equation}
\end{lemma}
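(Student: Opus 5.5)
The plan is to apply the Gibbs variational principle relative to the perturbed free state $\Gamma_{0,t}$ and use $\Gamma_{0,t}$ itself as the trial state. Since $-\lambda\dG(h)+tB=-\lambda\dG(h_t)$ and $\mathbb H^{\rm re}_\lambda-tB=\lambda\dG(h_t)+W^{\rm re}$, the analogue of \eqref{eq:variational} with $\Gamma_0$ replaced by $\Gamma_{0,t}$ reads
\[
-\log\frac{\Tr(e^{-\mathbb{H}_\lambda^{\mathrm{re}}+tB})}{\Tr(e^{-\lambda \dG(h)+tB})}
=\min_{\Gamma\ge0,\ \Tr\Gamma=1}\Big\{\mathcal H(\Gamma,\Gamma_{0,t})+\Tr(W^{\rm re}\Gamma)\Big\}.
\]
The minimizer is exactly $\Gamma_{\lambda,t}$. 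This uses $e^{-\mathbb H^{\rm re}_\lambda+tB}=e^{-\lambda\dG(h_t)-W^{\rm re}}$, so that $\log\Gamma_{\lambda,t}-\log\Gamma_{0,t}=-W^{\rm re}+\log\bigl(\Tr e^{-\lambda\dG(h_t)}/\Tr e^{-\mathbb H^{\rm re}_\lambda+tB}\bigr)$ on the appropriate form domain. Both sides are meaningful because $\Gamma_{\lambda,t}$ is a well-defined Gibbs state: this follows as in Proposition~\ref{prop:def-Gibbs}, using $h_t\ge\frac12 h$ and $W^{\rm re}$ bounded below on each sector.

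Taking the trial state $\Gamma=\Gamma_{0,t}$ kills the entropy term, so the left side of \eqref{eq:yukawa-free-energy-upper} is bounded by $\Tr(W^{\rm re}\Gamma_{0,t})$. By \eqref{eq:shell-free-W-local} of Lemma~\ref{lem:shell-perturbed-free}, this is at most $C$ uniformly in $|t|\le1$. That proves \eqref{eq:yukawa-free-energy-upper}.

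The consequence \eqref{eq:yukawa-free-energy-functional} follows directly. Inserting the minimizer $\Gamma_{\lambda,t}$ into the variational identity gives exactly
\[
\mathcal H(\Gamma_{\lambda,t},\Gamma_{0,t})+\Tr(W^{\rm re}\Gamma_{\lambda,t})=-\log\frac{\Tr(e^{-\mathbb{H}_\lambda^{\mathrm{re}}+tB})}{\Tr(e^{-\lambda \dG(h)+tB})},
\]
and this is $\le C$ by the previous step.

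The main technical point is justifying the variational identity for the unbounded operators involved. The cleanest route is to verify it on each $n$-particle sector, where $W^{\rm re}$ is bounded below: the zero-mode term is nonnegative, and the nonzero-mode part on $\gH_n$ equals $\lambda^2\sum_{i<j}v_\beta(x_i-x_j)$ minus a constant times $n(n-1)$, with $v_\beta\ge0$ as a positive-definite operator. One then passes to the full Fock space by the standard Gibbs variational principle for operators bounded below relative to $\lambda\dG(h_t)$. In particular, $\Tr(W^{\rm re}\Gamma_{\lambda,t})$ is well defined, and possibly negative, since $W^{\rm re}$ is bounded below only sectorwise. This causes no trouble, because the identity above gives that combination exactly. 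No further estimate is needed beyond Lemma~\ref{lem:shell-perturbed-free}.
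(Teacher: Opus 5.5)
Your proposal is correct and follows the same route as the paper: you apply the Gibbs variational principle relative to $\Gamma_{0,t}$, take $\Gamma_{0,t}$ itself as the trial state so the entropy term vanishes, and bound $\Tr(W^{\mathrm{re}}\Gamma_{0,t})$ by \eqref{eq:shell-free-W-local}; you then read off \eqref{eq:yukawa-free-energy-functional} from equality at the minimizer $\Gamma_{\lambda,t}$. Your remarks on the sector-wise well-posedness are additional justification that the paper leaves implicit. Note, though, that it is the pointwise bound $v_\beta\ge \vbstar>0$ of Lemma~\ref{lem:periodic-yukawa-positive}, not Fourier positive-definiteness, that makes $W^{\mathrm{re}}$ bounded below.
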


\begin{proof}
	By the Gibbs variational principle relative to $\Gamma_{0,t}$,
	\[
	-\log\frac{\Tr(e^{-\mathbb{H}_\lambda^{\mathrm{re}}+tB})}{\Tr(e^{-\lambda \dG(h)+tB})}
	=
	\mathcal{H}(\Gamma_{\lambda,t},\Gamma_{0,t})
	+
	\Tr\!\big(W^{\mathrm{re}}\Gamma_{\lambda,t}\big).
	\]
	Taking the trial state $\Gamma_{0,t}$ yields
	\[
	\mathcal{H}(\Gamma_{\lambda,t},\Gamma_{0,t})
	+
	\Tr\!\big(W^{\mathrm{re}}\Gamma_{\lambda,t}\big)
	\le
	\Tr\!\big(W^{\mathrm{re}}\Gamma_{0,t}\big).
	\]
	By Lemma~\ref{lem:shell-perturbed-free},
	\[
	\Tr\!\big(W^{\mathrm{re}}\Gamma_{0,t}\big)\le C
	\qquad\text{uniformly for }|t|\le1.
	\]
	This proves \eqref{eq:yukawa-free-energy-functional}, and
	\eqref{eq:yukawa-free-energy-upper} follows immediately.
\end{proof}

\begin{remark}

	\label{rem:missing-lower-bound-lemma11}
	At this stage,  the lower bound
	\[
	0\le -\log\frac{Z_\lambda^{\mathrm{re}}}{Z_0}
	\]
	does \emph{not} follow directly from positivity of the interaction, because with the present
	renormalization one keeps the \(k\neq0\) part in genuine quartic form in order to avoid the divergent
	self-energy term \(\big(\sum_k \hvb(k)\big)\cN\). Equivalently, in the sector-wise realization one has
	\[
	\mathbb{H}_\lambda^{\mathrm{re}}
	=
	\lambda \dG(h)+\Wbeta+c_\lambda \cN+C_\lambda,
	\]
	and for small \(\lambda\) the coefficient \(c_\lambda\) is negative. Thus \(W^{\mathrm{re}}\) is not manifestly
	nonnegative. Later, the required lower bound is recovered through the comparison with the finite-dimensional classical problem and the vanishing of the high-frequency remainders; see  Theorem~\ref{thm:lower-bound-classical-final}.
\end{remark}

\subsection{A priori bounds on the number of particles and kinetic operator}\label{sec:numki}

The goal of this subsection is to establish the a priori estimates that will be used in the
high-frequency analysis: logarithmic bounds on the moments of $\cN$, a weighted
Hilbert--Schmidt estimate for $\gamma_\lambda-\gamma_0$, and a uniform kinetic-energy
bound.

\begin{proposition}[Logarithmic moments of the number operator for the perturbed Gibbs states]
	\label{prop:number-moments-log}\label{prop:number-moments-log-safe}
	Let $B$,
	$\Gamma_{\lambda,t}$ and $\Gamma_{0,t}$ be as in Lemma \ref{lem:yukawa-lemma11-safe}. Then for every fixed $q>0$
	there exists $C_q>0$ such that, for all sufficiently small $\lambda\in(0,1/2)$,
	\[
	\sup_{|t|\le1}\lambda^q\Tr(\cN^q\Gamma_{\lambda,t})
	\le C_q(1+|\log\lambda|)^q.
	\]
	In particular,
	\[
	\sup_{|t|\le1}\lambda\Tr(\cN\Gamma_{\lambda,t})
	\le C(1+|\log\lambda|).
	\]
\end{proposition}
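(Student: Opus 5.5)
The plan is to exploit that $\Gamma_{\lambda,t}$ commutes with $\cN$, since both $\mathbb H^{\rm re}_\lambda$ and $B=\frac\lambda2\dG(f)$ preserve particle number. It is then enough to control the law of $\cN$ under $\Gamma_{\lambda,t}$ sector by sector. Concretely, we compare $\Tr(\1_{\{\cN=n\}}e^{-\mathbb H^{\rm re}_\lambda+tB})$ with $\Tr(\1_{\{\cN=n\}}e^{-\lambda\dG(h_t)})$, using an operator lower bound on $W^{\rm re}$ that is coercive in $\cN$. The free-energy upper bound \eqref{eq:yukawa-free-energy-upper} of Lemma~\ref{lem:exch2d} will supply the denominator $Z$.

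\textbf{Step 1 (coercive lower bound on $W^{\rm re}$).} The kernel of $(1-\Delta)^{-\beta/2}$ on $\T^2$ is pointwise positive. Subordination writes it as $\int_0^\infty s^{\beta/2-1}e^{-s}\,e^{s\Delta}(x,y)\,\d s/\Gamma(\beta/2)$, and the periodized heat kernel is strictly positive. The kernel is also continuous away from the diagonal and $+\infty$ at the diagonal. Hence $\vbeta(x)\ge m>0$ on $\T^2$ for some $m=m(\beta)$. Consequently the two-body operator with kernel $\vbeta-m$ is a nonnegative multiplication operator on each $\gH_n$, so
\[
\Wbeta\ \ge\ \lambda^2 m\,\tfrac{\cN(\cN-1)}{2}.
\]
Writing the nonzero-mode part as $\Wbeta-\frac{\lambda^2}{2(2\pi)^2}(\cN^2-\cN)$ then gives $W^{\rm re}_{\neq0}\ge-(1-\delta)\frac{\lambda^2}{2(2\pi)^2}(\cN^2-\cN)$ with $\delta=(2\pi)^2m\in(0,1]$. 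Note $\delta\le1$ because $\hvb(0)=1$ is the mean of $(2\pi)^2\vbeta$. Adding the zero-mode term and completing the square yields
\[
W^{\rm re}\ \ge\ \frac{\lambda^2\delta}{2(2\pi)^2}\Big(\cN-\frac{N_0}{\delta}\Big)^2-C\lambda^2N_0^2
\ \ge\ c\,\lambda^2\big(\cN-\delta^{-1}N_0\big)^2-C(1+|\log\lambda|)^2,
\]
using $\lambda N_0\le C(1+|\log\lambda|)$.

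\textbf{Step 2 (sector weights).} In each sector $\cN=n$ the bound of Step 1 is a scalar shift. Operator monotonicity of $\Tr e^{-(\cdot)}$ then gives
\[
\Tr\big(\1_{\{\cN=n\}}e^{-\mathbb H^{\rm re}_\lambda+tB}\big)\le e^{C(1+|\log\lambda|)^2-c\lambda^2(n-\delta^{-1}N_0)^2}\,\Tr\big(\1_{\{\cN=n\}}e^{-\lambda\dG(h_t)}\big).
\]
Dividing by $\Tr(e^{-\mathbb H^{\rm re}_\lambda+tB})\ge e^{-C}\Tr(e^{-\lambda\dG(h_t)})$ from \eqref{eq:yukawa-free-energy-upper}, we get
\[
\Gamma_{\lambda,t}(\cN=n)\le e^{C(1+|\log\lambda|)^2-c\lambda^2(n-\delta^{-1}N_0)^2}\,\Gamma_{0,t}(\cN=n).
\]
Now split $\Tr(\cN^q\Gamma_{\lambda,t})$ at $n_*=K(1+|\log\lambda|)/\lambda$ with $K$ large. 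For $n\le n_*$ the contribution is at most $n_*^q$. For $n>n_*$ we have $\lambda n-\delta^{-1}\lambda N_0\ge (K/2)(1+|\log\lambda|)$. The Gaussian factor is then $\le e^{-cK^2(1+|\log\lambda|)^2/4}$, which beats $e^{C(1+|\log\lambda|)^2}$ once $K$ is large. What remains is bounded by $\Tr(\cN^q\Gamma_{0,t})$. The latter satisfies $\lambda^q\Tr(\cN^q\Gamma_{0,t})\le C_q(1+|\log\lambda|)^q$ uniformly in $|t|\le1$ by the same argument as for $\Gamma_0$, since $h_t\ge\frac12h$. This gives the claim uniformly in $|t|\le1$.

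\textbf{Main obstacle.} The delicate point is Step 1: the sign problem noted in Remark~\ref{rem:missing-lower-bound-lemma11} means one cannot simply drop the nonzero-mode part. The resolution proposed is the strict pointwise positivity $\vbeta\ge m>0$, which transfers a fraction $\delta$ of the zero-mode coercivity past the nonzero modes. I would verify this via the subordination formula; if a clean lower bound on $m$ is awkward, it suffices to know $\vbeta$ is continuous and positive on the compact set $\T^2\setminus\{|x|<r\}$ and large near $0$. The loss $e^{C|\log\lambda|^2}$ is harmless precisely because the coercive term is quadratic in $\lambda\cN$, so the Gaussian decay beyond $K|\log\lambda|/\lambda$ dominates it.
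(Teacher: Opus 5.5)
Your proposal is correct. Its first half is also the paper's key input. The paper uses $\vbeta\ge\vbstar>0$ (Lemma~\ref{lem:periodic-yukawa-positive}, proved by the same subordination argument) to get $\Wbeta\ge\frac{\vbstar\lambda^2}{2}\cN(\cN-1)$. It then absorbs $c_\lambda\cN+C_\lambda$ by Young's inequality, which is your completion of the square.

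The two routes differ in how the moments are extracted.
\begin{itemize}
\item The paper tilts the state by a chemical potential, $\Gamma_{\lambda,t,s}\propto e^{-\mathbb H^{\rm re}_\lambda+tB+s\lambda\cN}$. It reapplies Lemma~\ref{lem:shell-perturbed-free} to $h_{t,s}=h-s-\frac t2 f$ to get $\Tr(W^{\rm re}\Gamma_{\lambda,t,s})\le C$. From this it deduces $\lambda\Tr(\cN\Gamma_{\lambda,t,s})\le C(1+|\log\lambda|)$ uniformly in $|s|\le s_0$. Integrating $\partial_s\log Z_{\lambda,t}(s)$ then gives exponential moments of $\lambda\cN/(1+|\log\lambda|)$.
\item You use only the partition-function bound \eqref{eq:yukawa-free-energy-upper} at $s=0$ and compare sector weights directly. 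This avoids the tilt and gives a stronger conclusion: for $\lambda n\ge K(1+|\log\lambda|)$, the interacting number distribution is dominated by the free one times a Gaussian factor.
\end{itemize}

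In exchange, your route needs $\lambda^q\Tr(\cN^q\Gamma_{0,t})\le C_q(1+|\log\lambda|)^q$, which the paper's route never uses. This bound holds, but it deserves a line of proof. Under $\Gamma_{0,t}$, $\cN$ is a sum of independent geometric occupation numbers determined by the spectrum of $h_t$ alone. Min--max with $h_t\ge\frac12h$ then gives stochastic domination by $\cN$ under the free state at inverse temperature $\lambda/2$, to which the known bound for $\Gamma_0$ applies. Argue this spectrally rather than through an operator inequality for $\gamma_{0,t}$, since $x\mapsto(e^{\lambda x}-1)^{-1}$ is not operator monotone.
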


\begin{proof}
	Set
	\(
	L_\lambda:=1+|\log\lambda|.
	\)
	Choose once and for all
	\(
	s_0\in(0,\frac14].
	\)
	For every $|t|\le1$ and $|s|\le s_0$, define
	\[
	Z_{\lambda,t}(s):=\Tr_{\mathfrak F}\!\big(e^{-\mathbb{H}_\lambda^{\mathrm{re}}+tB+s\lambda \cN}\big),
	\qquad
	\Gamma_{\lambda,t,s}:=\frac{e^{-\mathbb{H}_\lambda^{\mathrm{re}}+tB+s\lambda \cN}}{Z_{\lambda,t}(s)},
	\]
	and
	\[
	\Gamma_{0,t,s}
	:=
	\frac{e^{-\lambda\dG(h)+tB+s\lambda\cN}}{\Tr(e^{-\lambda\dG(h)+tB+s\lambda\cN})}
	=
	\frac{e^{-\lambda\dG(h_{t,s})}}{\Tr(e^{-\lambda\dG(h_{t,s})})},
	\]
	where
	\(
	h_{t,s}:=h-s-\frac t2 f.
	\)
	Since $\|f\|\le1$ and $h\ge1$, we have
	\[
	h_{t,s}\ge h-\Bigl(s_0+\frac12\Bigr)\ge \frac14 h>0,
	\]
	so the free state $\Gamma_{0,t,s}$ is well defined.

	Set
	\(
	A_{t,s}:=s+\frac t2 f.
	\)
	Then
	\[
	h_{t,s}=h-A_{t,s},
	\qquad
	A_{t,s}\le \Bigl(s_0+\frac12\Bigr)h\le \frac34 h.
	\]
	Moreover,
	\[
	\Tr\!\big(h^{-1}A_{t,s}h^{-1}A_{t,s}\big)\le C,
	\]
	uniformly for $|t|\le1$ and $|s|\le s_0$, because $h^{-1}\in\mathfrak S^2$ and
	$\|h^{-1/2}fh^{-1/2}\|_{\mathrm{HS}}\le c_0$. The proof of Lemma~\ref{lem:shell-perturbed-free}
	uses only the lower bound $h_{t,s}\gtrsim h$ and the control of
	$\Tr(h^{-1}A_{t,s}h^{-1}A_{t,s})$, so it applies verbatim with $h_t$ replaced by $h_{t,s}$.
	Therefore
	\begin{equation}\label{eq:free-W-ts}
	\Tr\!\big(W^{\mathrm{re}}\Gamma_{0,t,s}\big)\le C
	\qquad\text{uniformly for }|t|\le1,\ |s|\le s_0.
	\end{equation}

	\smallskip
	\noindent\textbf{Step 1. Uniform first and second particle-number bounds.}
	By the Gibbs variational principle relative to $\Gamma_{0,t,s}$,
	\[
	\mathcal H(\Gamma_{\lambda,t,s},\Gamma_{0,t,s})
	+
	\Tr\!\big(W^{\mathrm{re}}\Gamma_{\lambda,t,s}\big)
	\le
	\Tr\!\big(W^{\mathrm{re}}\Gamma_{0,t,s}\big).
	\]
	Hence \eqref{eq:free-W-ts} implies
	\[
	\Tr\!\big(W^{\mathrm{re}}\Gamma_{\lambda,t,s}\big)\le C
	\qquad\text{uniformly for }|t|\le1,\ |s|\le s_0.
	\]
	Using
	\[
	W^{\mathrm{re}}=\Wbeta+c_\lambda\cN+C_\lambda,
	\]
	together with
	\[
	\Wbeta
	\ge
	\frac{\vbstar\lambda^2}{2}\,\cN(\cN-1)
	\ge
	\frac{\vbstar\lambda^2}{4}\,\cN^2-C\lambda^2,
	\]
from Lemma \ref{lem:periodic-yukawa-positive},	we infer
	\[
	\frac{\vbstar\lambda^2}{4}\Tr(\cN^2\Gamma_{\lambda,t,s})
	+
	c_\lambda\Tr(\cN\Gamma_{\lambda,t,s})
	+
	C_\lambda
	\le C.
	\]
	Using
	\[
	c_\lambda\Tr(\cN\Gamma_{\lambda,t,s})
	=
	\alpha_0\lambda^2\Tr(\cN\Gamma_{\lambda,t,s})
	-
	2\alpha_0(\lambda N_0)\bigl(\lambda\Tr(\cN\Gamma_{\lambda,t,s})\bigr),
	\]
	with $\alpha_0=\frac12(2\pi)^{-2}$, we discard the positive term
	$\alpha_0\lambda^2\Tr(\cN\Gamma_{\lambda,t,s})$ and infer
	\[
	\frac{\vbstar\lambda^2}{4}\Tr(\cN^2\Gamma_{\lambda,t,s})
	-
	2\alpha_0(\lambda N_0)\bigl(\lambda\Tr(\cN\Gamma_{\lambda,t,s})\bigr)
	+
	\alpha_0(\lambda N_0)^2
	\le C.
	\]
	By Cauchy--Schwarz,
	\[
	\lambda\Tr(\cN\Gamma_{\lambda,t,s})
	\le
	\bigl(\lambda^2\Tr(\cN^2\Gamma_{\lambda,t,s})\bigr)^{1/2},
	\]
	and another application of Young's inequality yields
	\[
	2\alpha_0(\lambda N_0)\bigl(\lambda\Tr(\cN\Gamma_{\lambda,t,s})\bigr)
	\le
	\frac{\vbstar}{8}\lambda^2\Tr(\cN^2\Gamma_{\lambda,t,s})
	+
	C(\lambda N_0)^2.
	\]
Thus
	\[
	\sup_{\substack{|t|\le1\\ |s|\le s_0}}
	\lambda^2\Tr(\cN^2\Gamma_{\lambda,t,s})
	\le
	C L_\lambda^2,
	\]
	and therefore
	\begin{equation}\label{eq:perturbed-first-moment-ts}
	\sup_{\substack{|t|\le1\\ |s|\le s_0}}
	\lambda\Tr(\cN\Gamma_{\lambda,t,s})
	\le
	C L_\lambda.
	\end{equation}

	\smallskip
	\noindent\textbf{Step 2. Exponential moments and all fixed moments.}
	Since both $\mathbb{H}_\lambda^{\mathrm{re}}$ and $B$ commute with $\cN$, the map
	$s\mapsto Z_{\lambda,t}(s)$ is differentiable and
	\[
	\frac{\d}{\d s}\log Z_{\lambda,t}(s)
	=
	\lambda\Tr(\cN\Gamma_{\lambda,t,s}).
	\]
	By \eqref{eq:perturbed-first-moment-ts}, for every $|t|\le1$ and $|s|\le s_0$,
	\[
	\left|
	\log\frac{Z_{\lambda,t}(s)}{Z_{\lambda,t}(0)}
	\right|=
	\left|\int_0^s\lambda\Tr(\cN\Gamma_{\lambda,t,r})\,\d r\right|
	\le
	C L_\lambda |s|.
	\]
	Define
	\(
	X_\lambda:=\frac{\lambda\cN}{L_\lambda}.
	\)
	Then for every $\tau\in[0,s_0]$,
	\[
	\log\Tr\!\big(e^{\tau X_\lambda}\Gamma_{\lambda,t}\big)
	=
	\log\frac{Z_{\lambda,t}(\tau/L_\lambda)}{Z_{\lambda,t}(0)}
	\le C\tau,
	\]
	hence
	\[
	\sup_{|t|\le1}\Tr\!\big(e^{\tau X_\lambda}\Gamma_{\lambda,t}\big)\le e^{C\tau}
	\qquad\text{for all }\tau\in[0,s_0].
	\]
	Fix $q>0$ and $\tau_*:=s_0/2$. Using the elementary inequality
	\(
	x^q\le C_{q,\tau_*}e^{\tau_*x}
	\) for $x\geq0$,
	we conclude that
	\[
	\sup_{|t|\le1}\Tr(X_\lambda^q\Gamma_{\lambda,t})
	\le
	C_{q,\tau_*}\sup_{|t|\le1}\Tr\!\big(e^{\tau_*X_\lambda}\Gamma_{\lambda,t}\big)
	\le
	C_q.
	\]
	Equivalently,
	\[
	\sup_{|t|\le1}\lambda^q\Tr(\cN^q\Gamma_{\lambda,t})
	\le C_qL_\lambda^q.
	\]
	This proves the proposition.
\end{proof}

\begin{proposition}[Weighted Hilbert--Schmidt estimate with logarithmic loss for the perturbed Gibbs states]
	\label{prop:one-body-HS-safe}
		Let $B$,
	$\Gamma_{\lambda,t}$ and $\Gamma_{0,t}$ be as in Lemma \ref{lem:yukawa-lemma11-safe}. Then there exists $C>0$ such that,
	for all sufficiently small $\lambda\in(0,1/2)$,
	\begin{equation}\label{eq:relative-entropy-log2}
		\sup_{|t|\le1}\mathcal{H}(\Gamma_{\lambda,t},\Gamma_{0,t})
		\le C(1+|\log\lambda|^2).
	\end{equation}
	Consequently,
	\begin{equation}\label{eq:one-body-HS-safe}
		\sup_{|t|\le1}
		\lambda\,
		\bigl\|
		\sqrt h\,(\Gamma_{\lambda,t}^{(1)}-\Gamma_{0,t}^{(1)})\,\sqrt h
		\bigr\|_{\mathrm{HS}}
		\le C(1+|\log\lambda|^2).
	\end{equation}
	In particular,
	\begin{equation}\label{eq:number-difference-global-safe}
		\sup_{|t|\le1}
		\lambda\,
		\left|
		\Tr\big[(\Gamma_{\lambda,t}^{(1)}-\Gamma_{0,t}^{(1)})\big]
		\right|
		\le
		C(1+|\log\lambda|^2).
	\end{equation}
\end{proposition}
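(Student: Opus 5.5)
The plan is to bound the relative entropy through the Gibbs variational identity of Lemma~\ref{lem:yukawa-lemma11-safe}, and then pass to one-body quantities using \cite[Theorem 6.1]{LewNamRou-21}.

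\textbf{Step 1: the entropy bound \eqref{eq:relative-entropy-log2}.} Lemma~\ref{lem:yukawa-lemma11-safe} gives
\[
\mathcal H(\Gamma_{\lambda,t},\Gamma_{0,t})\le C-\Tr(W^{\rm re}\Gamma_{\lambda,t}),
\]
uniformly in $|t|\le1$. It therefore suffices to bound $\Tr(W^{\rm re}\Gamma_{\lambda,t})$ from below by $-C L_\lambda^2$, where $L_\lambda=1+|\log\lambda|$. We write
\[
W^{\rm re}=\Wbeta+c_\lambda\cN+C_\lambda .
\]
\begin{itemize}
\item By Lemma~\ref{lem:periodic-yukawa-positive}, $\Wbeta\ge0$; more precisely $\Wbeta\ge \frac{\vbstar\lambda^2}{2}\cN(\cN-1)\ge -C\lambda^2$.
\item $C_\lambda\ge0$.
\item Since $c_\lambda\ge -2\alpha_0\lambda^2 N_0$, we get
\[
c_\lambda\Tr(\cN\Gamma_{\lambda,t})\ge -2\alpha_0(\lambda N_0)\,\lambda\Tr(\cN\Gamma_{\lambda,t}).
\]
\end{itemize}
Proposition~\ref{prop:number-moments-log} gives $\lambda\Tr(\cN\Gamma_{\lambda,t})\le CL_\lambda$, and the free-state bound recalled in Section~\ref{sec:setting} gives $\lambda N_0\le CL_\lambda$. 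Combining these,
\[
\Tr(W^{\rm re}\Gamma_{\lambda,t})\ge -CL_\lambda^2,
\]
which yields \eqref{eq:relative-entropy-log2}. The only possible subtlety is that the positivity input must hold without the zero-mode centering. Since $\Wbeta$ itself (not $W^{\rm re}$) is nonnegative, this is fine; the loss is exactly the crude product $(\lambda N_0)(\lambda\langle\cN\rangle)$, and this product is the source of the $|\log\lambda|^2$.

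\textbf{Step 2: the weighted one-body bound \eqref{eq:one-body-HS-safe}.} We apply \cite[Theorem 6.1]{LewNamRou-21} with reference quasi-free state $\Gamma_{0,t}=Z^{-1}e^{-\lambda\dG(h_t)}$. This is the same estimate the paper already invoked in Lemma~\ref{lem:shell-perturbed-free}. It bounds
\[
\lambda^2\bigl\|\sqrt{h_t}\,(\Gamma^{(1)}-\gamma_{0,t})\sqrt{h_t}\bigr\|_{\rm HS}^2
\]
by a constant times $\mathcal H(\Gamma,\Gamma_{0,t})+\mathcal H(\Gamma,\Gamma_{0,t})^2$, or in the form $\lesssim \mathcal H(\Gamma,\Gamma_{0,t})\,(1+\mathcal H)$, depending on the version. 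The statement (one power of $\lambda$, loss $L_\lambda^2$) corresponds to a bound of the form $\lambda\|\cdots\|_{\rm HS}\lesssim \mathcal H+\sqrt{\mathcal H}$, and I would use the version matching this. Since $\frac12h\le h_t\le 2h$, we can replace $h_t$ by $h$ at the cost of constants, using
\[
\|\sqrt h X\sqrt h\|_{\rm HS}\le \|h^{1/2}h_t^{-1/2}\|^2\,\|\sqrt{h_t}X\sqrt{h_t}\|_{\rm HS}.
\]
Inserting Step 1 gives \eqref{eq:one-body-HS-safe}. The main point to check is that the cited theorem applies uniformly to the perturbed one-body operator $h_t$. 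It does, because it only requires the reference state to be quasi-free with $h_t>0$ and $\Tr h_t^{-2}<\infty$.

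\textbf{Step 3: the trace bound \eqref{eq:number-difference-global-safe}.} This follows exactly as in Step 1 of Lemma~\ref{lem:shell-perturbed-free}:
\[
\lambda|\Tr X|=\lambda\bigl|\Tr\bigl(h^{-1/2}\,(\sqrt hX\sqrt h)\,h^{-1/2}\bigr)\bigr|\le \lambda\|\sqrt hX\sqrt h\|_{\rm HS}\,\|h^{-1}\|_{\rm HS},
\]
and $\|h^{-1}\|_{\rm HS}^2=\sum_k\langle k\rangle^{-4}<\infty$.
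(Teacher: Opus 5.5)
Your proposal is correct and follows the paper's proof essentially step for step. The paper likewise combines Lemma~\ref{lem:yukawa-lemma11-safe}, $\Wbeta\ge0$, $C_\lambda\ge0$ and $|c_\lambda|\le C\lambda L_\lambda$ with Proposition~\ref{prop:number-moments-log} to get the entropy bound. It then applies \cite[Theorem~6.1]{LewNamRou-21} with $h_t$, in the form $4\mathcal H(\sqrt2+\sqrt{\mathcal H})^2\lesssim\mathcal H+\mathcal H^2$, switches $h_t$ to $h$ via $h\le 2h_t$, and closes with the same Hilbert--Schmidt Cauchy--Schwarz using $\|h^{-1}\|_{\mathrm{HS}}<\infty$.
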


\begin{proof}
	Set again
	\(
	L_\lambda:=1+|\log\lambda|.
	\)
	By Lemma~\ref{lem:yukawa-lemma11-safe},
	\[
	\mathcal{H}(\Gamma_{\lambda,t},\Gamma_{0,t})
	+
	\Tr(W^{\mathrm{re}}\Gamma_{\lambda,t})
	\le C
	\qquad\text{for all }|t|\le1.
	\]
	Using
	\[
	W^{\mathrm{re}}=\Wbeta+c_\lambda\cN+C_\lambda,
	\]
	with
	\[
	c_\lambda
	=
	\frac{\lambda^2}{2(2\pi)^2}(1-2N_0),
	\qquad
	C_\lambda
	=
	\frac{\lambda^2}{2(2\pi)^2}N_0^2,
	\]
	{\rmk{and the pointwise nonnegativity of the fractional-Bessel kernel, we have}}
	\[
	\Wbeta\ge0,
	\qquad
	C_\lambda\ge0.
	\]
	Hence
	\[
	\Tr(W^{\mathrm{re}}\Gamma_{\lambda,t})
	\ge
	c_\lambda\Tr(\cN\Gamma_{\lambda,t}),
	\]
	and therefore
	\[
	\mathcal{H}(\Gamma_{\lambda,t},\Gamma_{0,t})
	\le
	C+|c_\lambda|\,\Tr(\cN\Gamma_{\lambda,t}).
	\]
	Now
	\[
	|c_\lambda|
	\le C\lambda^2(1+N_0)
	\le C\lambda L_\lambda,
	\]
	while Proposition~\ref{prop:number-moments-log} gives
	\[
	\sup_{|t|\le1}\lambda\Tr(\cN\Gamma_{\lambda,t})\le CL_\lambda.
	\]
	Combining the last two bounds yields \eqref{eq:relative-entropy-log2}.

		Here
		\(
		h_t=h-\frac t2 f.
		\)
	Applying \cite[Theorem~6.1]{LewNamRou-21} with $h_t$ in place of $h$ gives
	\[
	\lambda^2
	\bigl\|
	\sqrt{h_t}\,(\Gamma_{\lambda,t}^{(1)}-\Gamma_{0,t}^{(1)})\,\sqrt{h_t}
	\bigr\|_{\mathrm{HS}}^2
	\le
	4\,\mathcal{H}(\Gamma_{\lambda,t},\Gamma_{0,t})
	\bigl(\sqrt2+\sqrt{\mathcal{H}(\Gamma_{\lambda,t},\Gamma_{0,t})}\bigr)^2.
	\]
	Since $h\le 2h_t$, we obtain \eqref{eq:one-body-HS-safe}.
By the Hilbert--Schmidt Cauchy--Schwarz inequality,
	\[
	\lambda\left|
	\Tr\big[(\Gamma_{\lambda,t}^{(1)}-\Gamma_{0,t}^{(1)})\big]
	\right|
	\le
	\lambda\|h^{-1}\|_{\mathrm{HS}}\,
	\bigl\|
	\sqrt h\,(\Gamma_{\lambda,t}^{(1)}-\Gamma_{0,t}^{(1)})\,\sqrt h
	\bigr\|_{\mathrm{HS}}
	\le
	CL_\lambda^2,
	\]
	uniformly in $|t|\le1$. This proves \eqref{eq:number-difference-global-safe}.
\end{proof}

We now return to the unperturbed Gibbs state $\Gamma_\lambda$ itself. The next lemma is stated
without the auxiliary perturbation $tB$ and provides the kinetic estimate that will be used later
in the high-frequency analysis. Unlike \cite[Lemma~5.15]{LewNamRou-21}, the proof does not come from a comparison with the free Hamiltonian. The obstruction is the nonsummable  $\sum_{k\in\mathbb Z^2}\hvb(k)=\infty$, which prevents one from rewriting the interaction so as to dominate $\mathbb H^{\rm re}_\lambda$ by a harmless multiple of $\lambda\dG(h)$. We therefore argue directly from the entropy inequality with the choice $X_t=t\lambda\dG(h)$.

\begin{lemma}[First kinetic moment under logarithmic relative entropy]

	\label{lem:first-kinetic-log2}
	For every fixed $t\in(0,1)$ there exists $C_t>0$ such that
	\[
	\lambda^2\Tr\!\big(\dG(h)\Gamma_\lambda\big)
	\le
	\frac{C_t}{t}
	+
	\frac{C}{t}\,\lambda(1+|\log\lambda|^2).
	\]
	In particular,
	\[
	\lambda^2\Tr\!\big(\dG(h)\Gamma_\lambda\big)\le C.
	\]
\end{lemma}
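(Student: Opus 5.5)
The plan is to apply the Gibbs (Donsker--Varadhan) entropy inequality to the self-adjoint operator $X_t:=t\lambda\dG(h)$ and then bound the two resulting terms separately. For any state $\Gamma$ and self-adjoint $X$ with $\Tr(e^{X}\Gamma_0)<\infty$, the inequality reads
\[
\Tr(X\Gamma)\le \mathcal H(\Gamma,\Gamma_0)+\log\Tr\!\big(e^{X}\Gamma_0\big).
\]
I will use it with $\Gamma=\Gamma_\lambda$ and $X=X_t$ for fixed $t\in(0,1)$. Because $X_t$ commutes with $\dG(h)$, the exponential moment is an explicit ratio of free partition functions:
\[
\log\Tr\!\big(e^{X_t}\Gamma_0\big)=\log\frac{\Tr e^{-(1-t)\lambda\dG(h)}}{\Tr e^{-\lambda\dG(h)}}
=\sum_{k\in\Z^2}\log\frac{1-e^{-\lambda h(k)}}{1-e^{-(1-t)\lambda h(k)}} .
\]
This quantity is finite for $t<1$, so the inequality applies.

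\textbf{Step 1 (entropy term).} Proposition~\ref{prop:one-body-HS-safe} with the source switched off ($t=0$ there, so $\Gamma_{\lambda,0}=\Gamma_\lambda$ and $\Gamma_{0,0}=\Gamma_0$) gives
\[
\mathcal H(\Gamma_\lambda,\Gamma_0)\le C(1+|\log\lambda|^2).
\]

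\textbf{Step 2 (the mode sum).} I will show that the sum above is at most $C_t/\lambda$. Set $x=\lambda h(k)$ and $g_t(x):=\log\frac{1-e^{-x}}{1-e^{-(1-t)x}}$. Two elementary bounds are needed.
\begin{enumerate}
\item For $x\le1$: the map $y\mapsto (1-e^{-y})/y$ is decreasing and bounded between two positive constants on $[0,1]$. Hence $g_t(x)\le \log\frac{x}{(1-t)x}+C=\log\frac{1}{1-t}+C$.
\item For $x\ge1$: since $1-e^{-x}\le 1$ and $1-e^{-(1-t)x}\ge 1-e^{-(1-t)}$, we get $g_t(x)\le -\log(1-e^{-(1-t)x})\le C_t e^{-(1-t)x}$.
\end{enumerate}
The number of $k\in\Z^2$ with $\lambda(|k|^2+1)\le1$ is $O(\lambda^{-1})$. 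Comparing with an integral,
\[
\sum_k e^{-(1-t)\lambda(|k|^2+1)}\le C(1+((1-t)\lambda)^{-1}).
\]
Combining the two regimes yields $\log\Tr(e^{X_t}\Gamma_0)\le C_t\lambda^{-1}$ for $\lambda\in(0,1/2)$.

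\textbf{Step 3 (conclusion).} Inserting Steps 1 and 2 into the entropy inequality gives
\[
t\lambda\Tr(\dG(h)\Gamma_\lambda)\le C(1+|\log\lambda|^2)+C_t\lambda^{-1}.
\]
Multiplying by $\lambda/t$ yields the claimed bound
\[
\lambda^2\Tr(\dG(h)\Gamma_\lambda)\le \frac{C_t}{t}+\frac{C}{t}\lambda(1+|\log\lambda|^2).
\]
Fixing, say, $t=1/2$ and using $\lambda(1+|\log\lambda|^2)\le C$ on $(0,1/2)$ gives the uniform bound $\lambda^2\Tr(\dG(h)\Gamma_\lambda)\le C$.

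One technicality must be settled before using the inequality: whether $\Tr(X_t\Gamma_\lambda)$ could be infinite. I will justify the inequality for the truncations $X_t\mathbf 1_{\{\dG(h)\le M\}}$ and let $M\to\infty$ by monotone convergence. This is harmless because $e^{X_t}\Gamma_0$ is trace class.

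The only step with real content is the explicit estimate of $\log\Tr(e^{X_t}\Gamma_0)$. The entire bound rests on this sum scaling exactly like $\lambda^{-1}$, that is, like the number of thermally occupied modes. This is precisely what makes the $\lambda^2$ normalization the correct one for the kinetic energy.
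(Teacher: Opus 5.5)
Your proposal is correct and follows the paper's proof: the same entropy inequality with $X_t=t\lambda\dG(h)$, the same bound $\mathcal H(\Gamma_\lambda,\Gamma_0)\le C(1+|\log\lambda|^2)$ from Proposition~\ref{prop:one-body-HS-safe}, and a $C_t\lambda^{-1}$ bound on the free partition-function ratio. The differences are cosmetic. You bound $\log\frac{1-e^{-x}}{1-e^{-(1-t)x}}$ mode by mode, whereas the paper writes the ratio as $\int_{1-t}^{1}\sum_p \lambda h(p)\,(e^{r\lambda h(p)}-1)^{-1}\,\d r$ and splits at $h(p)=\lambda^{-1}$; your truncation remark also settles a point the paper leaves implicit.
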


\begin{proof}
	For every self-adjoint operator $X$ such that $\Tr(e^X\Gamma_0)<\infty$, the entropy inequality gives
	\[
	\Tr(X\Gamma_\lambda)
	\le
	\mathcal{H}(\Gamma_\lambda,\Gamma_0)+\log\Tr(e^X\Gamma_0).
	\]
	Choose
	\[
	X_t:=t\lambda\dG(h),
	\qquad 0<t<1.
	\]
	Then
	\[
	t\lambda\Tr\!\big(\dG(h)\Gamma_\lambda\big)
	\le
	\mathcal{H}(\Gamma_\lambda,\Gamma_0)
	+
	\log\frac{\Tr(e^{-(1-t)\lambda\dG(h)})}{\Tr(e^{-\lambda\dG(h)})}.
	\]
	By Proposition~\ref{prop:one-body-HS-safe}, the entropy term is bounded by
	\[
	\mathcal{H}(\Gamma_\lambda,\Gamma_0)\le C(1+|\log\lambda|^2).
	\]
	It remains to estimate the free partition-function ratio.
	
	Set
	\[
	\Phi(s):=\log\Tr(e^{-s\dG(h)})
	=
	-\sum_{p\in\mathbb Z^2}\log(1-e^{-sh(p)}),
	\qquad s>0.
	\]
Here we used \cite[Appendix A]{LewNamRou-15}.
	Then
	\[
	\Phi'(s)
	=
	-\sum_{p\in\mathbb Z^2}\frac{h(p)}{e^{sh(p)}-1},
	\]
	so that
	\[
	\log\frac{\Tr(e^{-(1-t)\lambda\dG(h)})}{\Tr(e^{-\lambda\dG(h)})}
	=
	-\int_{(1-t)\lambda}^{\lambda}\Phi'(s)\,\d s
	=
	\int_{1-t}^{1}
	\sum_{p\in\mathbb Z^2}\frac{\lambda h(p)}{e^{r\lambda h(p)}-1}\,\d r,
	\]
	after the change of variables $s=r\lambda$.
	
	Fix $t\in(0,1)$ and $r\in[1-t,1]$. Split the sum into the regions $h(p)\le\lambda^{-1}$ and $h(p)>\lambda^{-1}$.
	If $h(p)\le\lambda^{-1}$, then
	\[
	\frac{\lambda h(p)}{e^{r\lambda h(p)}-1}\le C_t,
	\]
	hence
	\[
	\sum_{h(p)\le\lambda^{-1}}\frac{\lambda h(p)}{e^{r\lambda h(p)}-1}
	\le
	C_t\,\#\{p\in\mathbb Z^2:\ h(p)\le\lambda^{-1}\}
	\le C_t\lambda^{-1}.
	\]
	If $h(p)>\lambda^{-1}$, then
	\[
	\frac{\lambda h(p)}{e^{r\lambda h(p)}-1}
	\le
	C_t\lambda h(p)e^{-c_t\lambda h(p)},
	\]
	and comparison with the corresponding Gaussian integral on $\mathbb R^2$ gives
	\[
	\sum_{h(p)>\lambda^{-1}}\frac{\lambda h(p)}{e^{r\lambda h(p)}-1}
	\le
	C_t\sum_{p\in\mathbb Z^2}\lambda h(p)e^{-c_t\lambda h(p)}
	\le C_t\lambda^{-1}.
	\]
	Therefore
	\[
	\log\frac{\Tr(e^{-(1-t)\lambda\dG(h)})}{\Tr(e^{-\lambda\dG(h)})}
	\le C_t\lambda^{-1}.
	\]
	Combining the previous estimates, we arrive at
	\[
	t\lambda\Tr\!\big(\dG(h)\Gamma_\lambda\big)
	\le
	C(1+|\log\lambda|^2)+C_t\lambda^{-1}.
	\]
	Multiplying by $\lambda$ yields
	\[
	t\lambda^2\Tr\!\big(\dG(h)\Gamma_\lambda\big)
	\le
	C\lambda(1+|\log\lambda|^2)+C_t,
	\]
	which proves the first estimate. Choosing, for instance, $t=1/2$ gives the uniform bound in the second display.
\end{proof}

Since $[\cN,\dG(h)]=0$, we also have the following H\"older's inequality, which is useful in the sequel.

\begin{lemma}\label{lem:holder}
	Let
	\(
	\theta\in(0,1).
	\)
	Then
	\begin{align}
		\Tr\!\bigl[\dG(h)^\theta \cN^{2-\theta}\Gamma_\lambda\bigr]
		&\le
		\Tr\!\bigl[\dG(h)\Gamma_\lambda\bigr]^\theta\,
		\Tr\!\Bigl[\cN^{\frac{2-\theta}{1-\theta}}\Gamma_\lambda\Bigr]^{1-\theta}.
		\label{eq:main-estimate}
	\end{align}
Moreover, write $L_\lambda:=1+|\log\lambda|$ and $\mathbb{H}_s:=\dG(h^s)$ for $s\in(0,1)$. Then for $s_1,s_2\in(0,1)$ with $\theta=s_1+s_2\in(0,1)$
\begin{equation}
			\label{eq:shell-energy-number-moment-local}
			\lambda^4
			\Tr\!\Big(
			\mathbb{H}_{s_1}\mathbb{H}_{s_2}\Gamma_\lambda
			\Big)
			\le
			C\lambda^{2-\theta}L_\lambda^2.
		\end{equation}
\end{lemma}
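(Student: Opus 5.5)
The plan has two parts: a joint-spectral Hölder argument for \eqref{eq:main-estimate}, and then a reduction of \eqref{eq:shell-energy-number-moment-local} to that inequality. The key structural fact is that $\dG(h)$ and $\cN$ commute; both also commute with $\Gamma_\lambda$, since $\mathbb H^{\rm re}_\lambda$ is a function of $\dG(h)$, $\cN$ and the number-conserving quartic term. More simply, $\dG(h)$ and $\cN$ are both diagonal in the occupation-number basis of the Fourier modes $\{e_k\}$. So on the spectral decomposition of the commuting pair $(\dG(h),\cN)$, the operators $\dG(h)^\theta\cN^{2-\theta}$, $\dG(h)$ and $\cN^{(2-\theta)/(1-\theta)}$ are functions of one commuting family. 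The state $\Gamma_\lambda$ therefore induces a probability measure on their joint spectrum, namely the diagonal of $\Gamma_\lambda$ in the occupation basis, which is nonnegative.

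\textbf{Step 1: proof of \eqref{eq:main-estimate}.} With respect to the induced measure, $\Tr[\dG(h)^\theta\cN^{2-\theta}\Gamma_\lambda]=\mathbb E[X^\theta Y^{2-\theta}]$, where $X,Y\ge0$ are the eigenvalues of $\dG(h)$ and $\cN$. Apply the classical Hölder inequality with exponents $1/\theta$ and $1/(1-\theta)$:
\[
\mathbb E[X^\theta Y^{2-\theta}]\le (\mathbb E X)^\theta\bigl(\mathbb E Y^{\frac{2-\theta}{1-\theta}}\bigr)^{1-\theta}.
\]
This gives \eqref{eq:main-estimate}, with finiteness justified by truncating to finitely many modes and occupations and then using monotone convergence.

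\textbf{Step 2: reduction of \eqref{eq:shell-energy-number-moment-local}.} The operators $\mathbb H_{s_1}$ and $\mathbb H_{s_2}$ also commute with $\cN$ and are diagonal in the same basis, so we work pointwise on the joint spectrum. On the $n$-particle sector with occupation configuration $\{n_p\}$, we have $\mathbb H_s=\sum_p n_p h(p)^s$. Concavity of $x\mapsto x^s$ (Jensen with weights $n_p/n$) gives $\mathbb H_s\le \cN^{1-s}\dG(h)^s$. Hence
\[
\mathbb H_{s_1}\mathbb H_{s_2}\le \dG(h)^{\theta}\cN^{2-\theta}
\]
as commuting operators. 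By \eqref{eq:main-estimate},
\[
\lambda^4\Tr(\mathbb H_{s_1}\mathbb H_{s_2}\Gamma_\lambda)\le \lambda^4\,\Tr[\dG(h)\Gamma_\lambda]^\theta\,\Tr[\cN^{q}\Gamma_\lambda]^{1-\theta},\qquad q=\tfrac{2-\theta}{1-\theta}.
\]

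\textbf{Step 3: inserting the a priori bounds.} Lemma~\ref{lem:first-kinetic-log2} gives $\Tr[\dG(h)\Gamma_\lambda]\le C\lambda^{-2}$. Proposition~\ref{prop:number-moments-log} at $t=0$ gives $\Tr[\cN^q\Gamma_\lambda]\le C\lambda^{-q}L_\lambda^q$. Substituting these, the power of $\lambda$ is
\[
4-2\theta-q(1-\theta)=4-2\theta-(2-\theta)=2-\theta,
\]
and the logarithmic factor is $L_\lambda^{q(1-\theta)}=L_\lambda^{2-\theta}\le L_\lambda^2$. Together these give \eqref{eq:shell-energy-number-moment-local}.

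The only nonroutine point is the operator inequality $\mathbb H_s\le\cN^{1-s}\dG(h)^s$ in Step 2. It holds because everything is simultaneously diagonal, so it reduces to the scalar Jensen inequality on each occupation configuration; the remaining care is about domains and finiteness, handled by truncation.
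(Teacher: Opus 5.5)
Your proof is correct and follows the paper's route. You reduce to scalar inequalities on the joint spectrum of the commuting pair $(\dG(h),\cN)$, then use $\mathbb H_s\le \dG(h)^s\cN^{1-s}$ together with Lemma~\ref{lem:first-kinetic-log2} and Proposition~\ref{prop:number-moments-log}. The only difference is cosmetic: the paper proves the H\"older step by Young's inequality with an optimized parameter on each $n$-particle sector, which is equivalent to your direct H\"older against the diagonal of $\Gamma_\lambda$.

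One correction. Your remark that $\dG(h)$ commutes with $\Gamma_\lambda$ is false, because the quartic interaction does not conserve kinetic energy. Your argument never uses it, though, since the diagonal of any positive trace-class state in the occupation-number basis already gives the probability measure you need.
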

\begin{proof}
	Set
	\[
	A:=\dG(h),\qquad \Gamma:=\Gamma_\lambda,\qquad r:=\frac{2-\theta}{1-\theta}.
	\]
	If either $\Tr(A\Gamma)=\infty$ or $\Tr(\cN^r\Gamma)=\infty$, then the desired inequality is
	trivial. Hence we may assume that both quantities are finite.
	
		We use the particle-number decomposition
		\[
		\mathfrak F=\bigoplus_{n=0}^\infty \gH_n,\qquad \gH_n:=\gH^{\otimes_s n},
		\]
	and let $\Pi_n$ be the orthogonal projection onto $\gH_n$. By the definition of second
	quantization,
	\[
	A=0\oplus \bigoplus_{n\ge 1} A_n,
	\qquad
	A_n:=\sum_{j=1}^n h_j
	\quad\text{on }\gH_n,
	\]
	whereas
	\[
	\cN\big|_{\gH_n}=n\,I_{\gH_n}.
	\]
	Moreover, since $[\cN,\Gamma]=0$, the state $\Gamma$ is block-diagonal with respect to the
	particle-number decomposition:
	\[
	\Gamma=\bigoplus_{n=0}^\infty \Gamma_n,
	\qquad
	\Gamma_n:=\Pi_n\Gamma\Pi_n\ge 0.
	\]
	Therefore
	\[
	\Tr(A^\theta \cN^{2-\theta}\Gamma)
	=
	\sum_{n\ge 0} n^{2-\theta}\Tr(A_n^\theta\Gamma_n).
	\]
	
	Fix $t>0$. For every $x\ge 0$, Young's inequality gives
	\[
	n^{2-\theta}x^\theta
	=
	(tx)^\theta
	\Big(t^{-\frac{\theta}{1-\theta}}n^r\Big)^{1-\theta}
	\le
	\theta\,tx+(1-\theta)t^{-\frac{\theta}{1-\theta}}n^r.
	\]
	Applying the functional calculus to the positive self-adjoint operator $A_n$, we obtain
	the quadratic-form inequality on $\gH_n$,
	\[
	n^{2-\theta}A_n^\theta
	\le
	\theta\,tA_n+(1-\theta)t^{-\frac{\theta}{1-\theta}}n^r I_{\gH_n}.
	\]
	Taking the trace against $\Gamma_n\ge 0$ and summing over $n$, we infer that
	\[
	\Tr(A^\theta \cN^{2-\theta}\Gamma)
	\le
	\theta\,t\,\Tr(A\Gamma)
	+
	(1-\theta)t^{-\frac{\theta}{1-\theta}}\Tr(\cN^r\Gamma).
	\]
	
	Now write
	\[
	a:=\Tr(A\Gamma),
	\qquad
	b:=\Tr(\cN^r\Gamma).
	\]
	If $a=0$ or $b=0$, then the desired inequality is immediate. Hence we may assume that
	$a,b>0$ and choose
	\[
	t=\Big(\frac{b}{a}\Big)^{1-\theta}.
	\]
	Then
	\[
	\theta\,ta+(1-\theta)t^{-\frac{\theta}{1-\theta}}b
	=
	\theta\,a^\theta b^{1-\theta}+(1-\theta)a^\theta b^{1-\theta}
	=
	a^\theta b^{1-\theta}.
	\]
	Hence
	\[
	\Tr(A^\theta \cN^{2-\theta}\Gamma)
	\le
	\Tr(A\Gamma)^\theta\,\Tr(\cN^r\Gamma)^{1-\theta},
	\]
	which is exactly the first result.
	
	On the $n$-particle sector,
	\[
	\mathbb{H}_s=\sum_{j=1}^n h_j^s
	\le
	\Bigl(\sum_{j=1}^n h_j\Bigr)^s n^{1-s}
	=
	\dG(h)^s\cN^{1-s},
	\qquad 0<s<1,
	\]
	so, using $[\dG(h),\cN]=0$,
	\[
	\mathbb{H}_{s_1}\mathbb{H}_{s_2}
	\le
	\dG(h)^\theta\cN^{2-\theta}.
	\]
	Applying the first part of the lemma and then Lemma~\ref{lem:first-kinetic-log2} together with Proposition~\ref{prop:number-moments-log} (with $t=0$ and $q=\frac{2-\theta}{1-\theta}$) yields
	\[
	\Tr\!\big(\mathbb{H}_{s_1}\mathbb{H}_{s_2}\Gamma_\lambda\big)
	\le
	C\,\lambda^{-2\theta}\,\lambda^{-(2-\theta)}L_\lambda^{2-\theta}.
	\]
	Since $L_\lambda\ge 1$ and $2-\theta\le 2$, multiplying by $\lambda^4$ gives \eqref{eq:shell-energy-number-moment-local}.
	
\end{proof}

\section{Decomposition of the high-frequency remainders }\label{sec:dec}

This section isolates the high-frequency errors created by the low-frequency localization and records the preliminary tail inputs needed for the lower bound. We proceed in two steps. First, after introducing the cutoff $P$, we identify the two primary remainders $HF_{\neq 0}$ and $HF_0$ coming from the nonzero and zero Fourier modes. Second, after introducing a second cutoff $\Lambda_1$, we split $HF_{\neq 0}$ into a shell part $\mathcal E_{\rm shell}$ and a genuine tail part $\mathcal E_{\rm tail}$. The outcome of the section is that the lower-bound analysis is reduced to three explicit error terms---$HF_0$, $\mathcal E_{\rm shell}$, and $\mathcal E_{\rm tail}$---which will be handled separately in Sections~\ref{sec:high0}--\ref{sec:tail-vanishing-checked}.

\paragraph{First cutoff}
We begin with the first cutoff $P$, which localizes the interaction to the low-frequency sector and isolates the two primary high-frequency remainders.

Recall that $H:=\gH=L^2(\T^2)$ and fix $h=-\Delta+1$. For $\Lambda\ge1$ define the spectral cutoff
\[
P:=P_\Lambda:=\mathbf 1_{h\le \Lambda^2},\qquad Q:=1-P.
\]
Then $PH$ is finite-dimensional with $K:=\Tr(P)<\infty$.

For $k\in\Z^2$ let $M_{e_k}$ denote the multiplication operator $(M_{e_k}f)(x)=e_k(x)f(x)$ on $H$.
Let $\Gamma_0=Z_0^{-1}e^{-\lambda \dG(h)}$ be the free Gibbs state. Define
\[
N_{0,P}:=\Tr(P\gamma_0),\qquad N_{0,Q}:=\Tr(Q\gamma_0),\qquad \cN_P:=\dG(P),\qquad \cN_Q:=\dG(Q),
\]
where $\gamma_0$ is the $1$-body reduced density matrices of $\Gamma_0$.

Define the two-body operator on $H^{\otimes2}$ corresponding to the $k\neq0$ part by
\begin{equation}\label{eq:Vneq0}
	V_{\neq0}
	:=
\sum_{k\neq0}\hvb(k)\,(M_{e_k}\otimes M_{e_{-k}}).
\end{equation}
Its low-frequency compression is
\[
V_{\neq0,P}:=P^{\otimes2}V_{\neq0}P^{\otimes2}\quad\text{on }(PH)^{\otimes2}.
\]
The finer projected notation $e_{k,P}$, $\rho_{k,P}$, and $W_P^{\mathrm{re}}$ will only be needed in the finite-dimensional comparison and is therefore deferred to Subsection~\ref{sec:low-freq}.

We define the high-frequency error for $k\neq0$:
\begin{equation}\label{eq:HF-final}
	HF_{\neq0}(\Gamma)
	:=
	\frac{\lambda^2}{2}
	\mathrm{Tr}_{H^{\otimes_s2}}
	\big[
	\big(V_{\neq0}-V_{\neq0,P}\big)\Gamma^{(2)}
	\big].
\end{equation}
By construction, $V_{\neq0}-V_{\neq0,P}$ collects precisely those
two-body contributions in which at least one particle lies in the
high-energy subspace $QH$.

For $k=0$ we define
\begin{equation}\label{eq:HF-zero}
	HF_0(\Gamma)
	:=
	\frac{\lambda^2}{2(2\pi)^2}
	\Tr_{\mathfrak F(H)}
	\Big[\Big(
	2(\cN_P-N_{0,P})(\cN_Q-N_{0,Q})
	+
	(\cN_Q-N_{0,Q})^2
	\Big)\Gamma\Big].
\end{equation}


\paragraph{Second cutoff} We introduce an auxiliary cutoff $\Lambda_1 > \Lambda$ and define
\[
P_1 := \1_{\Lambda^2 < h \le \Lambda_1^2},
\qquad
Q_1 := \1_{h > \Lambda_1^2},
\qquad
S := P + P_1 = \1_{h \le \Lambda_1^2},
\]
and
\[
\Lambda_1=\lambda^{-{\alpha_1}}
\qquad\text{with}\qquad
{\alpha_1}>\frac12.
\]
The second cutoff is introduced to separate the intermediate shell $P_1H$ from the genuine far tail
$Q_1H$. The shell part still contains all legs below the scale $\Lambda_1$ and will be treated
in Section~\ref{sec:shell-vanishing-rewritten} by mode-by-mode fluctuation estimates. The reason for
this additional decomposition is that the blocks involving the whole high-frequency projector $Q$ are
not suited to direct moment bounds: because  summing the fractional Bessel
coefficients in those channels gives no useful absolute convergence, so a crude $Q$-sector estimate
would retain  divergence. Splitting off the shell isolates the configurations that still
carry a low-frequency leg and can therefore be controlled by fluctuation observables, while the genuine
far tail $Q_1$ can be handled instead through positivity of the full Bessel operator and moments of the
far-tail number observable. The tail part, by contrast, contains at least one
$Q_1$-leg and can therefore be related to the tail number observable
\[
\cN_{Q_1}:=\dG(Q_1), 	\qquad
N_{0,Q_1}:=\Tr(Q_1\gamma_0).
\]
With this notation in
place, the second decomposition becomes completely explicit.

On the two-particle space we set
\[
\Pi := P^{\otimes 2},
\qquad
\Pi_1 := S^{\otimes 2},
\qquad
R_1 := 1 - \Pi_1.
\]

Recall that
\[
HF_{\neq 0}(\Gamma)
=
\frac{\lambda^2}{2}
\Tr_{H^{\otimes_s 2}}
\!\Big[
\bigl(V_{\neq 0} - \Pi V_{\neq 0} \Pi\bigr)\Gamma^{(2)}
\Big].
\]
We now split this remainder into a shell part and a tail part:
\begin{equation}\label{dec:HFneq0}
	HF_{\neq 0}(\Gamma)
	=
	\mathcal E_{\rm shell}(\Gamma)
	+
	\mathcal E_{\rm tail}(\Gamma),
\end{equation}
where
\[
\mathcal E_{\rm shell}(\Gamma)
:=
\frac{\lambda^2}{2}
\Tr_{H^{\otimes_s 2}}
\!\Big[
\bigl(\Pi_1 V_{\neq 0}\Pi_1 - \Pi V_{\neq 0}\Pi\bigr)\Gamma^{(2)}
\Big],
\]
and
\[
\mathcal E_{\rm tail}(\Gamma)
:=
\frac{\lambda^2}{2}
\Tr_{H^{\otimes_s 2}}
\!\Big[
\bigl(V_{\neq 0} - \Pi_1 V_{\neq 0}\Pi_1\bigr)\Gamma^{(2)}
\Big].
\]
\paragraph{Outputs used later.}
We will use the Gibbs variational principle to analyze  the lower-bound of the free energy, which is reduced to three separate high-frequency
remainders,
\[
HF_0(\Gamma_\lambda),
\qquad
\mathcal E_{\rm shell}(\Gamma_\lambda),
\qquad
\mathcal E_{\rm tail}(\Gamma_\lambda).
\]
They will be treated one by one in the next three sections. Section~\ref{sec:high0} handles the
zero-mode remainder $HF_0$ and proves Theorem~\ref{cor:HF0-vanishes}, namely
\[
|HF_0(\Gamma_\lambda)|
\le
C(1+|\log\lambda|^2)(\lambda^{\alpha/2}+\lambda^{3/4-\alpha})
\qquad
\Bigl(\Lambda=\lambda^{-\alpha},\ 0<\alpha<\frac34\Bigr).
\]
{\rmk{Section~\ref{sec:shell-vanishing-rewritten} handles the shell term and proves Theorem~\ref{prop:shell-vanishing-local}; for every fixed $\frac32<\beta\le2$ one gets
\[
|\mathcal E_{\rm shell}(\Gamma_\lambda)|\longrightarrow0.
\]
 Section~\ref{sec:tail-vanishing-checked} handles the tail term and proves Theorem~\ref{cor:tail-lower-bound}, namely
\[
\mathcal E_{\rm tail}(\Gamma_\lambda)\longrightarrow0,
\]
provided $\alpha>0$ is chosen sufficiently small and $\alpha_1>\frac12$ sufficiently close to $\frac12$.}}

To control the contribution from high momenta $HF_0(\Gamma_\lambda)$ and $\mathcal E_{\rm shell}(\Gamma_\lambda)$, we will repeatedly use the following recent result from \cite[Theorem~2]{DeuNamNap-25}, which improves \cite[Theorem~7.1]{LewNamRou-21}. In Sections~\ref{sec:high0} and \ref{sec:shell-vanishing-rewritten}, the relevant observables are localized fluctuations, and the theorem below converts their second moments into a first-moment input plus a double-commutator term. The tail contribution $\mathcal E_{\rm tail}(\Gamma_\lambda)$ is of a different nature: after the second cutoff, the useful lower bound comes instead from the positivity of the full Bessel operator, so the argument in Section~\ref{sec:tail-vanishing-checked} relies on a separate decomposition.

\begin{theorem}[Second order correlation inequality] \label{thm:correlation-intro} Let $A$ be a self-adjoint operator on a separable Hilbert space such that $\Tr [e^{-sA} ] <+\infty$ holds for all $s> 0$. Let $B$ be a symmetric operator such that $B$ is $A$-relatively bounded with a relative bound strictly smaller than $1$. We also assume that the perturbed Gibbs states
	\begin{equation}\label{eq:Gibbs-t-intro}
		G_t = \frac{\exp(-A+tB)}{\Tr [ \exp(-A+tB) ]} , \quad t\in [-1,1]
	\end{equation}
	satisfy
	\begin{equation}\label{eq:CRI-condition-intro}
		\sup_{t\in [-1,1]}|\Tr (B G_t )| \le a.
	\end{equation}
	Then we have
	\begin{equation}\label{eq:StahlA2-intro}
		\Tr [ B^2 G_0 ] \leq a e^{a} + \frac{1}{4} \Tr[ [B, [A, B]] G_0].
	\end{equation}
\end{theorem}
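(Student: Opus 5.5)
The statement is the second-order correlation inequality of \cite[Theorem~2]{DeuNamNap-25}. The plan is to study the log-partition function along the perturbation, which reduces the claim to a one-variable convexity estimate plus a Duhamel--Bogoliubov bound.

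Set
\[
F(t):=\log\Tr\bigl[e^{-A+tB}\bigr],\qquad t\in[-1,1].
\]
The relative-boundedness assumption (bound $<1$) together with $\Tr e^{-sA}<\infty$ for all $s>0$ makes $-A+tB$ self-adjoint, bounded below and with trace-class exponential, uniformly for $t\in[-1,1]$. So $F$ is finite and smooth there. Its derivatives are
\[
F'(t)=\Tr(BG_t),\qquad F''(t)=\langle B-\langle B\rangle_t,\,B-\langle B\rangle_t\rangle_{\mathrm{DB},t}\ge0,
\]
where $\langle X,Y\rangle_{\mathrm{DB},t}=\int_0^1\Tr\bigl(X^*G_t^{s}YG_t^{1-s}\bigr)\,\d s$ is the Duhamel (Bogoliubov) inner product. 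Hence $F$ is convex, and by \eqref{eq:CRI-condition-intro} $|F'|\le a$ on $[-1,1]$.

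The first step is to control the Duhamel variance at $t=0$ by $a$. Since $F'$ is nondecreasing with oscillation at most $2a$ on $[-1,1]$, one gets $\int_{-1}^1F''(t)\,\d t\le 2a$. This alone does not give a pointwise bound at $t=0$. I would use convexity in the stronger form given by Gibbs variational principle or Jensen: comparing $G_0$ with $G_{\pm1}$ gives
\[
\mathcal H(G_0,G_{t})=F(t)-F(0)-t\,F'(0)\le 2a|t|.
\]
The entropy of $G_0$ relative to $G_{\pm t}$ then controls the exponential moment of $B$ under $G_0$. This should yield the Duhamel variance bound $\langle B,B\rangle_{\mathrm{DB},0}\le ae^{a}$, with the factor $e^{a}$ coming from exponentiating the $F$-increments: the nonzero mean $\langle B\rangle_0$, which is at most $a$, must be reinserted, and $(\Tr BG_0)^2\le a^2\le ae^a$ only when $a\le e^a$, which always holds.

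The second step, which I expect to be the main obstacle, is to pass from the Duhamel inner product to the ordinary second moment $\Tr(B^2G_0)$. In the eigenbasis of $A$ with weights $p_i\propto e^{-\lambda_i}$ one has
\[
\Tr(B^2G_0)-\langle B,B\rangle_{\mathrm{DB}}=\sum_{i,j}|B_{ij}|^2\Bigl(\tfrac{p_i+p_j}{2}-\tfrac{p_i-p_j}{\log p_i-\log p_j}\Bigr),
\qquad
\tfrac14\Tr\bigl([B,[A,B]]G_0\bigr)=\tfrac14\sum_{i,j}|B_{ij}|^2\,\tfrac{p_i+p_j}{2}\,(\lambda_i-\lambda_j)^2 .
\]
So one needs the elementary inequality, with $x=\lambda_i-\lambda_j$,
\[
\frac{1+e^{-x}}2-\frac{1-e^{-x}}{x}\le\frac{x^2}{4}\cdot\frac{1+e^{-x}}2,
\]
which reduces to $\frac{\tanh(x/2)}{x/2}\ge 1-\frac{x^2}{4}$. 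This is a short calculus check. Combining the two steps gives \eqref{eq:StahlA2-intro}. Throughout, the relative-boundedness assumption is used only to justify the domain and trace manipulations, for instance by approximating $B$ with spectral truncations of $A$ and passing to the limit by monotone convergence.
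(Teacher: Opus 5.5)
The paper gives no proof of this statement; it is quoted from \cite[Theorem~2]{DeuNamNap-25}. Your two-step architecture is sound: first bound the non-centered Duhamel function $\int_0^1\Tr(G_0^sBG_0^{1-s}B)\,\d s$ by $ae^a$, then pay $\frac14\Tr([B,[A,B]]G_0)$ to reach $\Tr(B^2G_0)$. However, Step~1 has a genuine gap.

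You rightly note that $\int_{-1}^1F''\le 2a$ gives no pointwise information at $t=0$. The entropy adds nothing, though. The quantity $\mathcal H(G_0,G_t)=F(t)-F(0)-tF'(0)=\int_0^t(t-s)F''(s)\,\d s$ is again an integral of $F''$. Any convex $F$ with $|F'|\le a$ satisfies your bound $2a|t|$, however large $F''(0)$ is. Nor do you obtain exponential moments of $B$ under $G_0$. When $[A,B]\neq0$, $e^{F(\pm1)-F(0)}=\Tr e^{-A\pm B}/\Tr e^{-A}$ is not $\Tr(e^{\pm B}G_0)$, and Golden--Thompson only bounds it \emph{above} by that quantity, which is the wrong direction.

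The missing idea is that
\[
\psi(t):=\frac{\d^2}{\d t^2}\Tr e^{-A+tB}=\int_0^1\Tr\bigl(Be^{s(-A+tB)}Be^{(1-s)(-A+tB)}\bigr)\,\d s
\]
is \emph{convex} in $t$, i.e.\ $\frac{\d^4}{\d t^4}\Tr e^{-A+tB}\ge0$. For commuting $A,B$ this is just $\Tr(B^4e^{-A+tB})\ge0$. In general it is a genuinely non-commutative fact. It follows, for instance, from Stahl's theorem (the former BMV conjecture), which says $t\mapsto\Tr e^{-A+tB}$ is the Laplace transform of a positive measure; one proves this for matrices first and then approximates. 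Granting it, $\psi(0)\le\frac12(\psi(s)+\psi(-s))$ for $s\in[0,1]$. Integrating over $s$ gives
\[
\begin{aligned}
\int_0^1\Tr(G_0^sBG_0^{1-s}B)\,\d s=\frac{\psi(0)}{\Tr e^{-A}}
&\le\frac{\Tr(Be^{-A+B})-\Tr(Be^{-A-B})}{2\Tr e^{-A}}\\
&=\tfrac12\bigl(\Tr(BG_1)e^{F(1)-F(0)}-\Tr(BG_{-1})e^{F(-1)-F(0)}\bigr)\le ae^{a},
\end{aligned}
\]
because $F(\pm1)-F(0)\le a$. This is where the factor $ae^a$ comes from.

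Step~2 uses the right mechanism, but the computation is wrong. In the eigenbasis of $A$ one has $\Tr([B,[A,B]]G_0)=\sum_{i,j}|B_{ij}|^2(p_i-p_j)(\lambda_j-\lambda_i)$. The expression you wrote, $\sum_{i,j}|B_{ij}|^2\frac{p_i+p_j}{2}(\lambda_i-\lambda_j)^2$, is $\Tr([A,B]^*[A,B]G_0)$. It is larger by the termwise factor $y/\tanh y\ge1$, where $y:=(\lambda_j-\lambda_i)/2$. Hence the inequality $\frac{\tanh y}{y}\ge 1-y^2$ you plan to check does not imply the claim.

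Write $p_i=\sqrt{p_ip_j}\,e^{y}$ and $p_j=\sqrt{p_ip_j}\,e^{-y}$. What is actually needed termwise is
\[
\cosh y-\frac{\sinh y}{y}\le y\sinh y ,
\]
i.e.\ $\frac{\tanh y}{y}\ge 1-y\tanh y$. This holds: the coefficient of $y^{2n}$ is $\frac{2n}{(2n+1)!}$ on the left and $\frac{1}{(2n-1)!}$ on the right. With the convexity input in Step~1, this corrected inequality, and the approximation arguments you mention for unbounded $B$, your outline closes.
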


We now rewrite  $\mathcal E_{\rm tail}(\Gamma)$ in a form adapted to the lower-bound argument.
The tail part is treated through the full positive Bessel operator
\[
V
=
\sum_{k\in\mathbb Z^2}
\hvb(k)\,(M_{e_k}\otimes M_{e_{-k}}),
\qquad
V_{\neq 0}
=
V-\frac{1}{(2\pi)^2}\,1_{H^{\otimes_s 2}}.
\]

\begin{proposition}[Second cutoff decomposition of $HF_{\neq 0}$]
	\label{prop:second-cutoff-decomp}
	For every state $\Gamma$ on $\mathfrak F(H)$, one has
	\[
	\mathcal E_{\rm tail}(\Gamma)
	=
	\frac{\lambda^2}{2}
	\Tr_{H^{\otimes_s 2}}
	\!\Big[
	\bigl(
	R_1 V R_1 + R_1 V \Pi_1 + \Pi_1 V R_1
	\bigr)\Gamma^{(2)}
	\Big]
	-
	\frac{\lambda^2}{2(2\pi)^2}
	\Tr_{H^{\otimes_s 2}}(R_1\Gamma^{(2)}).
	\]
	In particular, since $V\ge 0$,
	\[
	\mathcal E_{\rm tail}(\Gamma)
	\ge
	\frac{\lambda^2}{2}
	\Tr_{H^{\otimes_s 2}}
	\!\Big[
	\bigl(
	R_1 V \Pi_1 + \Pi_1 V R_1
	\bigr)\Gamma^{(2)}
	\Big]
	-
	\frac{\lambda^2}{2(2\pi)^2}
	\Tr_{H^{\otimes_s 2}}(R_1\Gamma^{(2)}).
	\]
\end{proposition}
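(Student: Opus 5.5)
The identity is purely algebraic on the two-particle space, so the plan is to work at the level of operators on $H^{\otimes 2}$ and only at the end pair with $\Gamma^{(2)}$. One caveat applies throughout: the pairing $\Tr[(\cdot)\Gamma^{(2)}]$ is understood in the quadratic-form sense used in the definitions of $HF_{\neq0}$ and $\mathcal E_{\rm shell}$. All operators involved are sums of the bounded two-body pieces $\hvb(k)M_{e_k}\otimes M_{e_{-k}}$ compressed by spectral projections of $h$. The manipulations below are therefore legitimate for states with finite kinetic energy, and extend by the usual form argument.

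\textbf{Step 1 (resolve the identity).} Since $\Pi_1+R_1=1$ on $H^{\otimes 2}$, I would write
\[
V_{\neq0}=(\Pi_1+R_1)V_{\neq0}(\Pi_1+R_1)=\Pi_1V_{\neq0}\Pi_1+R_1V_{\neq0}R_1+R_1V_{\neq0}\Pi_1+\Pi_1V_{\neq0}R_1 .
\]
Subtracting $\Pi_1V_{\neq0}\Pi_1$ gives
\[
V_{\neq0}-\Pi_1V_{\neq0}\Pi_1=R_1V_{\neq0}R_1+R_1V_{\neq0}\Pi_1+\Pi_1V_{\neq0}R_1.
\]

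\textbf{Step 2 (reinstate the zero mode).} Next I substitute $V_{\neq0}=V-(2\pi)^{-2}1$. The constant part contributes
\[
-(2\pi)^{-2}\bigl(R_1R_1+R_1\Pi_1+\Pi_1R_1\bigr)=-(2\pi)^{-2}R_1,
\]
because $R_1$ is a projection orthogonal to $\Pi_1$ (both are functions of the commuting $h\otimes1$, $1\otimes h$, hence $R_1\Pi_1=\Pi_1R_1=0$). Here I check that $V_{\neq0}=V-(2\pi)^{-2}1$: the $k=0$ term is $\hvb(0)M_{e_0}\otimes M_{e_0}=(2\pi)^{-2}1$, since $e_0=(2\pi)^{-1}$. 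Multiplying by $\lambda^2/2$ and taking the trace against $\Gamma^{(2)}$ then yields the stated formula for $\mathcal E_{\rm tail}(\Gamma)$. Note also that $\Pi_1$ and $R_1$ commute with the swap, so they preserve the symmetric subspace on which $\Gamma^{(2)}$ lives.

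\textbf{Step 3 (lower bound).} For the inequality, I drop $R_1VR_1$. The key point is $V\ge0$ as an operator on $H^{\otimes2}$. This holds because $V$ is multiplication by $\vbeta(x-y)$ and, more to the point, because in Fourier form $\langle\Phi,V\Phi\rangle=\sum_k\hvb(k)\|(M_{e_k}\otimes1)\ldots\|$. Explicitly, $V=\sum_k\hvb(k)\,T_k T_k^*$-type: the operator $M_{e_k}\otimes M_{e_{-k}}$ acts in momentum space by shifting $(p,q)\mapsto(p+k,q-k)$, and $V$ is multiplication by the function $\vbeta(x-y)$, which is pointwise nonnegative. That positivity (Bessel kernel $\ge0$, already used in Proposition~\ref{prop:one-body-HS-safe}) gives $R_1VR_1\ge0$, and hence $\Tr[R_1VR_1\Gamma^{(2)}]\ge0$.

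The only potential obstacle is justifying the trace manipulations for the unbounded or non-summable sum over $k$. I would handle this by truncating $|k|\le K$, for which everything is bounded and finite rank in the relevant compressed pieces. The identity holds exactly for each truncation, and I would pass to $K\to\infty$ using that $\hvb(k)\ge0$ makes the $R_1VR_1$ term monotone, while the other terms converge as in the definition of $HF_{\neq0}$.
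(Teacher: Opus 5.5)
Your proof is correct and is essentially the paper's argument. The paper substitutes $V_{\neq0}=V-(2\pi)^{-2}1$ first and then expands $V-\Pi_1V\Pi_1=R_1VR_1+R_1V\Pi_1+\Pi_1VR_1$, whereas you expand $V_{\neq0}$ first and then collapse the constant using $R_1\Pi_1=\Pi_1R_1=0$; both routes get $R_1VR_1\ge0$ from the pointwise nonnegativity of $v_\beta$.

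You should drop two side remarks, however. The first is the Fourier-side ``$T_kT_k^*$'' positivity: the individual terms $M_{e_k}\otimes M_{e_{-k}}$ are not even self-adjoint, let alone positive. The second is monotonicity under the truncation $|k|\le K$. The increment $\sum_{K<|k|\le K'}\widehat v_\beta(k)\,M_{e_k}\otimes M_{e_{-k}}$ is multiplication by a real, zero-mean trigonometric polynomial in $x-y$, which takes negative values. So the truncations are neither monotone nor guaranteed positive, and positivity must be taken directly from $v_\beta\ge0$ for the full operator, as you already state.
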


\begin{proof}
	Using
	\[
	V_{\neq 0}
	=
	V-\frac{1}{(2\pi)^2}\,1_{H^{\otimes_s 2}},
	\]
	we get
	\[
	V_{\neq 0}-\Pi_1V_{\neq 0}\Pi_1
	=
	V-\Pi_1V\Pi_1
	-
	\frac{1}{(2\pi)^2}R_1.
	\]
	Next,
	\[
	V-\Pi_1V\Pi_1
	=
	(\Pi_1+R_1)V(\Pi_1+R_1)-\Pi_1V\Pi_1
	=
	R_1VR_1+R_1V\Pi_1+\Pi_1VR_1.
	\]
	Substituting this into the definition of $\mathcal E_{\rm tail}$ yields the claimed identity.
	Since $V\ge 0$, we have $R_1VR_1\ge 0$, and the lower bound follows immediately.
\end{proof}

Proposition~\ref{prop:second-cutoff-decomp} isolates exactly the part of the tail sector that matters in the lower bound. After the positive contribution $R_1VR_1$ is discarded, the only term with an explicit unfavorable sign is the scalar correction $\Tr_{H^{\otimes_s 2}}(R_1\Gamma^{(2)})$. The mixed $S/Q_1$ channels remain inside the positive Bessel form and will be estimated later, whereas the scalar correction will be reduced to moments of the far-tail number operator $\cN_{Q_1}$.

\begin{remark}[Surviving channels after the second cutoff]
	\label{rem:surviving-channels}
	After discarding the positive contribution $R_1VR_1$, the only channels which remain in the lower
	bound are
	\[
	(P,P_1),\quad (P_1,P),\quad (P_1,P_1),
	\]
	coming from the shell term, and
	\[
	(P,Q_1),\quad (P_1,Q_1),\quad (Q_1,P),\quad (Q_1,P_1),
	\]
	coming from the mixed tail term. In other words, the second cutoff isolates two qualitatively different
	mechanisms: shell interactions, where both legs remain below $\Lambda_1$, and mixed tail channels,
	where exactly one leg lies in $Q_1H$ after the positive $Q_1Q_1$ contribution has been discarded.
	The pure $Q_1Q_1$ channel is entirely contained in $R_1VR_1$ and therefore needs no further
	estimate.
\end{remark}

\section{Control of the zero-mode high-frequency remainders  }\label{sec:high0}
%
The goal of this section is to show that the high-frequency part of the renormalized $k=0$
channel is negligible in the interacting Gibbs state. The theorem below records the form needed later
in the lower-bound argument. To prove the quantitative fluctuation bound, we apply the second-order
correlation inequality from Theorem~\ref{thm:correlation-intro} to the observable
$\frac{\lambda}{2}(\cN_Q-N_{0,Q})$. This reduces the problem to estimating the first moment under
perturbed Gibbs states and the double commutator $[\cN_Q,[W^{\mathrm{re}},\cN_Q]]$. Because
$\hvb$ is not summable, the latter cannot be treated by a naive density-square argument;
instead, we rewrite it as a quartic operator, estimate its coefficients through the asymmetric quartic
Hilbert--Schmidt bound of Lemma~\ref{lem:aH-S}, and then control the resulting expression by the
kinetic operator. This reduction is effective only when a surviving quartic monomial contains at least
one $P$-leg coming from a cutoff crossing. A key point of the block decomposition below is therefore
that the pure $Q\!\to\!Q$ / $Q\!\to\!Q$ block vanishes in the double commutator, while every
surviving block still carries such a $P$-leg; making this structure explicit is precisely what forces
the more delicate decomposition carried out in this section.

\begin{theorem}[Quantitative vanishing of the zero-mode remainder]
	\label{cor:HF0-vanishes}
	Let
	\(
	\Lambda=\lambda^{-\alpha},
	0<\alpha<\frac34.
	\)
	Then
	\begin{equation}\label{eq:HF0-vanishes-rate}
		|HF_0(\Gamma_\lambda)|
		\le
		C(1+|\log\lambda|^2)(\lambda^{\alpha/2}+\lambda^{3/4-\alpha}).
	\end{equation}
	In particular,
	\begin{equation}\label{eq:HF0-vanishes}
		HF_0(\Gamma_\lambda)\longrightarrow 0
		\qquad (\lambda\downarrow0).
	\end{equation}
\end{theorem}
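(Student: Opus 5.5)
By Cauchy--Schwarz, the quantity to control is the second moment $\lambda^2\Tr((\cN_Q-N_{0,Q})^2\Gamma_\lambda)$. Indeed,
\[
|HF_0(\Gamma_\lambda)|\le C\Big(\lambda^2\Tr((\cN_P-N_{0,P})^2\Gamma_\lambda)\Big)^{1/2}\Big(\lambda^2\Tr((\cN_Q-N_{0,Q})^2\Gamma_\lambda)\Big)^{1/2}+C\lambda^2\Tr((\cN_Q-N_{0,Q})^2\Gamma_\lambda).
\]
The first factor is bounded by $CL_\lambda^2$. To see this, write $\cN_P-N_{0,P}=(\cN-N_0)-(\cN_Q-N_{0,Q})$. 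Then $\lambda^2\Tr((\cN-N_0)^2\Gamma_\lambda)\le C$ follows from the a priori bound $\Tr(W^{\rm re}\Gamma_\lambda)\le C$ (Lemma~\ref{lem:exch2d}), combined with $\Wbeta\ge0$ in the form of the decomposition $W^{\rm re}=W^{\rm re}_{\neq0}+\frac{\lambda^2}{2(2\pi)^2}(\cN-N_0)^2$. Alternatively, one can simply use Proposition~\ref{prop:number-moments-log} together with $\lambda N_0\le CL_\lambda$, which gives the crude bound $CL_\lambda^2$. The theorem therefore follows once we show $\lambda^2\Tr((\cN_Q-N_{0,Q})^2\Gamma_\lambda)\le CL_\lambda^2(\lambda^{\alpha}+\lambda^{3/2-2\alpha})$, because taking square roots then produces the stated rate.

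For this second moment I apply Theorem~\ref{thm:correlation-intro} with $A=\mathbb H^{\rm re}_\lambda$, $B=\frac\lambda2(\cN_Q-N_{0,Q})$ and $f=Q$. The smallness hypothesis is violated since $\|h^{-1/2}Qh^{-1/2}\|_{\rm HS}\sim\lambda^{0}\cdot(\log)$ small only like $(\sum_{h>\Lambda^2}h^{-2})^{1/2}\sim\Lambda^{-1}$, which is $\le c_0$ for small $\lambda$. The constant shift $N_{0,Q}$ does not change $\Gamma_{\lambda,t}$, so Lemma~\ref{lem:yukawa-lemma11-safe} and Propositions~\ref{prop:number-moments-log}, \ref{prop:one-body-HS-safe} apply.

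\textbf{First moment.} We need $\sup_t\lambda|\Tr((\cN_Q-N_{0,Q})\Gamma_{\lambda,t})|$. Split it as $\lambda|\Tr(Q(\gamma_{\lambda,t}-\gamma_{0,t}))|+\lambda|\Tr(Q(\gamma_{0,t}-\gamma_0))|$. Weighted Hilbert--Schmidt Cauchy--Schwarz against $\|Qh^{-1}\|_{\rm HS}\le C\Lambda^{-1}$ bounds both terms, using \eqref{eq:one-body-HS-safe} for the first and \cite[Theorem 6.1]{LewNamRou-21} with Lemma~\ref{lem:relent-signedA} for the second. This gives $a\le CL_\lambda^2\Lambda^{-1}=CL_\lambda^2\lambda^{\alpha}$, and hence $ae^a\le CL_\lambda^2\lambda^\alpha$.

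\textbf{Double commutator.} Since $\cN_Q$ commutes with $\dG(h)$ and with $\cN$, only $W^{\rm re}_{\neq0}$ contributes. In the quartic representation \eqref{eq:Wmom}, $[\cN_Q,a^*_{p+k}a^*_{q-k}a_qa_p]$ equals the monomial times $\chi_Q(p+k)+\chi_Q(q-k)-\chi_Q(q)-\chi_Q(p)$. The double commutator thus carries the square of this counting difference. It vanishes when all four momenta lie in $Q$ (the pure $Q\to Q/Q\to Q$ block) and when all lie in $P$. Every surviving monomial therefore has a leg crossing the cutoff $\Lambda$, that is, one of $p,p+k$ in $P$ and the other in $Q$, or similarly for $q$. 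I decompose into these crossing blocks and bound each one, via the asymmetric quartic Hilbert--Schmidt estimate (Lemma~\ref{lem:aH-S}), by an operator of the form $\lambda^4\,\mathbb H_{s_1}\mathbb H_{s_2}$. The coefficient sums that appear are of the type $\sum_{p\in P,\,|p+k|>\Lambda}\hvb(k)^2 h(p)^{-s_1}h(p+k)^{-s_2}$. Because $k$ must be at least of order $\Lambda-|p|$, this constraint provides a gain, and together with $\beta>\frac32$ it makes the sums finite with a power of $\Lambda$ to spare. Lemma~\ref{lem:holder} then gives $\lambda^4\Tr(\mathbb H_{s_1}\mathbb H_{s_2}\Gamma_\lambda)\le C\lambda^{2-\theta}L_\lambda^2$. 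Choosing $\theta$ close to $\frac12$ and absorbing the $\Lambda$-powers coming from the $P$-legs should produce the factor $\lambda^{3/2-2\alpha}$.

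The main obstacle is this last step. The key point is to organize the crossing blocks so that the non-summable $\hvb$ is always paired with a factor $\langle k\rangle^{-\beta}$ from the constraint $|k|\gtrsim\Lambda-|p|$. At the same time, the total power of $h$ used must stay below $\theta<1$, and the resulting $\Lambda$-loss must be at most $\Lambda^{2}\lambda^{1/2}$. I would first try symmetric weights $s_1=s_2=\frac14$ with Schur tests on each block, and adjust if the $P$–$Q$ crossing sum loses too much.
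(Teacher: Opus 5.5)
There is a genuine gap, and it sits in the step you flag yourself: the double-commutator estimate is left open, and the plan you sketch for it cannot work. Your overall architecture is the paper's. That includes the Cauchy--Schwarz reduction to $\lambda^2\Tr[(\cN_Q-N_{0,Q})^2\Gamma_\lambda]$, Theorem~\ref{thm:correlation-intro} for $B_Q$, the first-moment bound $a\lesssim L_\lambda^2\Lambda^{-1}$, and the vanishing of the pure $PP$ and $QQ$ blocks. For the crude bound $\lambda^2\Tr[(\cN_P-N_{0,P})^2\Gamma_\lambda]\le CL_\lambda^2$ only your second alternative is valid. The first needs $\Tr(W^{\rm re}_{\neq0}\Gamma_\lambda)\ge -C$, but $W^{\rm re}_{\neq0}=\Wbeta-\frac{\lambda^2}{2(2\pi)^2}(\cN^2-\cN)$ has no sign.

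\textbf{Where the plan fails.} Among the surviving blocks are those with one crossing leg and one $Q\to Q$ leg, e.g.\ $p\in P$, $p+k\in Q$, $q,q-k\in Q$, where $\Delta_{p,q,k}=1$.
\begin{itemize}
\item Lemma~\ref{lem:aH-S} puts one exponent on both ends of a leg, so $\|\widetilde\gamma\|_{\ell^2}^2=\tilde c_\lambda^2\sum_k|\hvb(k)|^2A^{(p)}(k)A^{(q)}(k)$.
\item The $Q\to Q$ factor $\sum_q Q_qQ_{q-k}h(q)^{-s}h(q-k)^{-s}$ is infinite in two dimensions unless $s>\frac12$.
\item With different exponents $s_1,s_2$ on the two ends of that leg, as in your displayed weight $h(p)^{-s_1}h(p+k)^{-s_2}$, it would even need $s_1+s_2>1$.
\end{itemize}
So $s_1=s_2=\frac14$ fails. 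Any symmetric choice that makes this leg finite forces $\theta=2s>1$, which is outside the range of Lemma~\ref{lem:holder}, where only the first kinetic moment $\lambda^2\Tr(\dG(h)\Gamma_\lambda)\le C$ is available.

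The missing idea is the asymmetric choice $s_1=\varepsilon$, $s_2=\frac12+\varepsilon$. The weight $s_2$ goes on the leg with no $P$-restriction; use $(p,q,k)\mapsto(q,p,-k)$ and bosonic symmetry to move it there. This keeps $\theta=\frac12+2\varepsilon<1$.

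\textbf{The mechanism is also misplaced.}
\begin{itemize}
\item The $k$-sum is harmless: only $|\hvb(k)|^2=\langle k\rangle^{-2\beta}$ enters the Hilbert--Schmidt norm, and it is summable. No gain from $|k|\gtrsim\Lambda-|p|$ is needed.
\item The $P$-legs cost powers of $\Lambda$ rather than giving them. Lemma~\ref{lem:convolution-kernels} gives $S^{(P)}_k(s)\le C_s\Lambda^{2-2s}\langle k\rangle^{-2s}$ and $S_k(s_2)\le C$ for $s_2>\frac12$.
\item Hence $\|\widetilde\gamma\|_{\ell^2}\lesssim\lambda^2\Lambda^{2-s_1-s_2}$ when both legs touch $P$. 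When one leg is $Q\to Q$, $\|\widetilde\gamma\|_{\ell^2}\lesssim\lambda^2\Lambda^{1-s_1}\le\lambda^2\Lambda^{2-\theta}$.
\item This loss is paid by the factor $\lambda^{2-\theta}L_\lambda^2$ from Lemma~\ref{lem:holder}. One gets $\Tr([B_Q,[\mathbb H^{\rm re}_\lambda,B_Q]]\Gamma_\lambda)\lesssim(\lambda\Lambda)^{2-\theta}L_\lambda^2=\lambda^{(2-\theta)(1-\alpha)}L_\lambda^2\le\lambda^{3/2-2\alpha}L_\lambda^2$ once $\varepsilon<\alpha/4$.
\end{itemize}
That last inequality is exactly where the rate $\lambda^{3/2}\Lambda^2$, and hence $\lambda^{3/4-\alpha}$ after the square root, comes from.
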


\subsection{Double commutator estimates}

Recall that
\[
HF_0(\Gamma)
=
\frac{\lambda^2}{2(2\pi)^2}
\Tr_{\mathfrak F(H)}
\Big[\Big(
2(\cN_P-N_{0,P})(\cN_Q-N_{0,Q})
+
(\cN_Q-N_{0,Q})^2
\Big)\Gamma\Big].
\]
Because the zero mode has already been centered through $(\cN-N_0)^2$, the only genuinely new
quantity is the high-frequency fluctuation of the number operator. Once this fluctuation is shown to
be small, the vanishing of $HF_0(\Gamma_\lambda)$ follows immediately from the previously obtained
control of the low-frequency number fluctuations.

The main estimate proved in this section is the following.

\begin{theorem}[High-frequency number fluctuations]
	\label{prop:high-frequency-fluctuations}
	Let $\Lambda=\lambda^{-\alpha}$ for $\alpha\in(0,3/4)$.
	Then
	\begin{equation}\label{eq:high-frequency-fluctuation-bound}
		\lambda^2
		\Tr_{\mathfrak F(H)}
		\Big[
		(\cN_Q-N_{0,Q})^2\,\Gamma_\lambda
		\Big]
		\le
		C
		(1+|\log\lambda|^2)(\Lambda^{-1}
		+
		\lambda^{\frac32}\Lambda^2)
		.
	\end{equation}
\end{theorem}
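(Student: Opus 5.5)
We apply the second order correlation inequality, Theorem~\ref{thm:correlation-intro}, with
\[
A:=\mathbb H^{\rm re}_\lambda,\qquad \mathcal B:=\frac{\lambda}{2}\bigl(\cN_Q-N_{0,Q}\bigr)=\frac{\lambda}{2}\dG(Q)-\frac{\lambda}{2}N_{0,Q}.
\]
Up to the scalar shift, $\mathcal B$ is of the form $B=\frac\lambda2\dG(f)$ with $f=Q$. The shift only changes the perturbed Gibbs state by a normalization, so $G_t$ coincides with $\Gamma_{\lambda,t}$ of Lemma~\ref{lem:yukawa-lemma11-safe} with $f=Q$. However, $\|h^{-1/2}Qh^{-1/2}\|_{\rm HS}^2=\sum_{h(p)>\Lambda^2}h(p)^{-2}\lesssim\Lambda^{-2}$, which is small. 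So the smallness hypothesis $a_f\le c_0$ of Lemma~\ref{lem:shell-perturbed-free} holds once $\Lambda$ is large, and all a priori bounds of Section~3 apply uniformly in $|t|\le1$.

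\textbf{First moment.} We write
\[
\Tr(\mathcal B\Gamma_{\lambda,t})=\frac\lambda2\Tr\bigl(Q(\gamma_{\lambda,t}-\gamma_{0,t})\bigr)+\frac\lambda2\Tr\bigl(Q(\gamma_{0,t}-\gamma_0)\bigr).
\]
For each term we use Hilbert--Schmidt Cauchy--Schwarz with $\|Qh^{-1}\|_{\rm HS}\lesssim\Lambda^{-1}$. For the first term, Proposition~\ref{prop:one-body-HS-safe} gives a bound $CL_\lambda^2\Lambda^{-1}$. For the second term, \cite[Theorem~6.1]{LewNamRou-21} and Lemma~\ref{lem:relent-signedA} give a bound $Ca_f\Lambda^{-1}\lesssim\Lambda^{-2}$. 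Hence $a\lesssim L_\lambda^2\Lambda^{-1}$. This term yields the $\Lambda^{-1}$ contribution. Since $a\le C$, we have $ae^a\lesssim a$; the loss relative to the stated $L_\lambda^2$ is harmless.

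\textbf{Double commutator.} Since $\cN_Q$ commutes with $\dG(h)$ and with $\cN$, only $W_{\neq0}^{\rm re}$ contributes to $[\mathcal B,[A,\mathcal B]]$. For the monomial $a^*_{p+k}a^*_{q-k}a_qa_p$, commutation with $\cN_Q$ multiplies it by $\sigma:=\1_Q(p+k)+\1_Q(q-k)-\1_Q(q)-\1_Q(p)$. The double commutator therefore equals
\[
\frac{\lambda^4}{8(2\pi)^2}\sum_{k\ne0}\hvb(k)\sum_{p,q}\sigma^2\,a^*_{p+k}a^*_{q-k}a_qa_p ,
\]
up to sign conventions. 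Here $\sigma=0$ whenever all four legs lie in $Q$ or all four lie in $P$. Decomposing into blocks according to which legs cross the cutoff, every surviving monomial has at least one $P$-leg and at least one $Q$-leg, and $|\sigma|\le2$.

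Each block is estimated in expectation by Cauchy--Schwarz in the form $|\langle a^*a^*aa\rangle|\le\|$coefficient weighted by $h^{-s_1},h^{-s_2}\|\cdot\Tr(\mathbb H_{s_1}\mathbb H_{s_2}\Gamma_\lambda)$, using the asymmetric quartic Hilbert--Schmidt Lemma~\ref{lem:aH-S}. The weighted coefficient norm is bounded by a sum of the type
\[
\sum_{k}\hvb(k)^2\sum_{p\in P\text{-region}}\sum_q h(p)^{-s}\cdots .
\]
The $P$-leg restricted to $|p|\le\Lambda$ costs a factor of at most $\Lambda^2$, while the $Q$-legs give decay. The factor $\hvb(k)^2=\langle k\rangle^{-2\beta}$ is summable since $\beta>1$.

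Combining this with \eqref{eq:shell-energy-number-moment-local}, where $\lambda^4\Tr(\mathbb H_{s_1}\mathbb H_{s_2}\Gamma_\lambda)\lesssim\lambda^{2-\theta}L_\lambda^2$, we choose $\theta$ close to $1/2$ so that $\lambda^{2-\theta}\approx\lambda^{3/2}$. This gives a double-commutator contribution $\lesssim L_\lambda^2\lambda^{3/2}\Lambda^2$. Finally, $\lambda^2\Tr((\cN_Q-N_{0,Q})^2\Gamma_\lambda)=4\Tr(\mathcal B^2\Gamma_\lambda)$, which yields \eqref{eq:high-frequency-fluctuation-bound}.

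\textbf{Main obstacle.} The main obstacle is the double-commutator bound: the block decomposition must be organized so that every block admits a weight split $h^{-s_1},h^{-s_2}$ with $s_1+s_2<1$ whose coefficient sum stays bounded by $\Lambda^2$ without divergence. The dangerous case is one $P$-leg against three $Q$-legs with large momentum transfer $k$, where the available decay comes only from $\hvb(k)$. There I would first place both weights on the annihilated legs and handle the $q$-sum using Lemma~\ref{lem:sum}.
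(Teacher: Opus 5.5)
Your proposal is correct and follows the paper's proof essentially step by step: the same observable $\frac{\lambda}{2}(\cN_Q-N_{0,Q})$ in Theorem~\ref{thm:correlation-intro}, the same two-step first-moment comparison giving $a\lesssim L_\lambda^2\Lambda^{-1}$, and the same block decomposition of $\Delta_{p,q,k}^2$ controlled by Lemma~\ref{lem:aH-S} and \eqref{eq:shell-energy-number-moment-local} (for the free--free piece you use \cite[Theorem~6.1]{LewNamRou-21} with Lemma~\ref{lem:relent-signedA} instead of the paper's mean-value argument, which works equally well). Two details need to be made precise. First, the $Q\to Q$ leg must carry the weight $s_2>\frac12$ on both its creation and its annihilation index; this is always possible, since the symmetry $(p,q,k)\mapsto(q,p,-k)$ identifies $\mathcal C_{78}$ with $\mathcal C_{36}$, whereas weighting only the annihilated legs would require a kinetic moment of order greater than one. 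Second, because $\theta=s_1+s_2>\frac12$ strictly, the $P$-leg sums must be bounded by $S_k^{(P)}(s)\le C_s\Lambda^{2-2s}\langle k\rangle^{-2s}$ (Lemma~\ref{lem:convolution-kernels}) rather than crudely by $\Lambda^2$. Then the double-commutator term is $\lesssim L_\lambda^2\lambda^{(2-\theta)(1-\alpha)}$, which is at most $L_\lambda^2\lambda^{3/2}\Lambda^2$ once $\theta-\frac12\le\frac{\alpha}{2(1-\alpha)}$.
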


The proof is organized in three steps. We first
analyze the double commutator $[\cN_Q,[W_{\neq0}^{\mathrm{re}},\cN_Q]]$ and identify exactly which
$P/Q$-blocks survive. We then estimate the surviving blocks by an asymmetric quartic Hilbert--
Schmidt argument. Finally, we combine the resulting commutator bound with the second-order
correlation inequality Theorem \ref{thm:correlation-intro} for the observable
\(
B_Q:=\frac{\lambda}{2}(\cN_Q-N_{0,Q})
\)
and compare the perturbed interacting and free Gibbs states.

To organize the commutator, it is convenient to classify each quartic monomial according to how the two legs move across the cutoff. This bookkeeping has two advantages. First, the coefficient $\Delta_{p,q,k}$ only depends on whether each incoming and outgoing momentum lies in $P$ or $Q$, so the block structure makes the value of $\Delta_{p,q,k}$ immediate. Second, it separates the configurations for which a low-frequency leg forces $|k|\lesssim \Lambda$ from the genuinely high--high configurations where no such restriction is available.

\paragraph{Block labels.}
We label each leg by whether it stays in $P$ or $Q$ before/after scattering:
\[
\text{first leg: } p\to p+k\in\{P\to P,\ P\to Q,\ Q\to P,\ Q\to Q\},
\]
\[
\text{second leg: } q\to q-k\in\{P\to P,\ P\to Q,\ Q\to P,\ Q\to Q\}.
\]
Thus a \emph{block} is a pair $(\sigma_1,\sigma_2)$ of such transitions.

\paragraph{Finite $k$--sum blocks.}
If at least one leg is of type $P\to P$, then necessarily $|k|\lesssim \Lambda$.
Indeed, if $p,p+k\in\mathcal K$ for $\mathcal{K}=\{k\in\Z^2, h(k)\leq \Lambda^2\}$ then
\[
|k|=|(p+k)-p|\le |p+k|+|p|\lesssim \Lambda,
\]
and similarly for $q,q-k\in\mathcal K$.

Hence the following $7$ blocks have a \emph{finite} $k$--sum (automatically restricted to $|k|\lesssim \Lambda$):

\begin{align*}
	\text{(A1)}\quad (P\to P,\ P\to P) &:\quad
	p,p+k\in\mathcal K,\ \ q,q-k\in\mathcal K,\\
	\text{(A2)}\quad (P\to P,\ P\to Q) &:\quad
	p,p+k\in\mathcal K,\ \ q\in\mathcal K,\ q-k\notin\mathcal K,\\
	\text{(A3)}\quad (P\to P,\ Q\to P) &:\quad
	p,p+k\in\mathcal K,\ \ q\notin\mathcal K,\ q-k\in\mathcal K,\\
	\text{(A4)}\quad (P\to P,\ Q\to Q) &:\quad
	p,p+k\in\mathcal K,\ \ q,q-k\notin\mathcal K,\\
	\text{(A5)}\quad (P\to Q,\ P\to P) &:\quad
	p\in\mathcal K,\ p+k\notin\mathcal K,\ \ q,q-k\in\mathcal K,\\
	\text{(A6)}\quad (Q\to P,\ P\to P) &:\quad
	p\notin\mathcal K,\ p+k\in\mathcal K,\ \ q,q-k\in\mathcal K,\\
	\text{(A7)}\quad (Q\to Q,\ P\to P) &:\quad
	p,p+k\notin\mathcal K,\ \ q,q-k\in\mathcal K.
\end{align*}

\paragraph{Non-finite $k$--sum blocks.}
If \emph{neither} leg is $P\to P$, then there is \emph{no} geometric restriction forcing
$|k|\lesssim \Lambda$. These $9$ blocks allow $k$ of arbitrary size:

\begin{align*}
	\text{(B1)}\quad (P\to Q,\ P\to Q) &:\quad
	p\in\mathcal K,\ p+k\notin\mathcal K,\ \ q\in\mathcal K,\ q-k\notin\mathcal K,\\
	\text{(B2)}\quad (P\to Q,\ Q\to P) &:\quad
	p\in\mathcal K,\ p+k\notin\mathcal K,\ \ q\notin\mathcal K,\ q-k\in\mathcal K,\\
	\text{(B3)}\quad (P\to Q,\ Q\to Q) &:\quad
	p\in\mathcal K,\ p+k\notin\mathcal K,\ \ q,q-k\notin\mathcal K,\\
	\text{(B4)}\quad (Q\to P,\ P\to Q) &:\quad
	p\notin\mathcal K,\ p+k\in\mathcal K,\ \ q\in\mathcal K,\ q-k\notin\mathcal K,\\
	\text{(B5)}\quad (Q\to P,\ Q\to P) &:\quad
	p\notin\mathcal K,\ p+k\in\mathcal K,\ \ q\notin\mathcal K,\ q-k\in\mathcal K,\\
	\text{(B6)}\quad (Q\to P,\ Q\to Q) &:\quad
	p\notin\mathcal K,\ p+k\in\mathcal K,\ \ q,q-k\notin\mathcal K,\\
	\text{(B7)}\quad (Q\to Q,\ P\to Q) &:\quad
	p,p+k\notin\mathcal K,\ \ q\in\mathcal K,\ q-k\notin\mathcal K,\\
	\text{(B8)}\quad (Q\to Q,\ Q\to P) &:\quad
	p,p+k\notin\mathcal K,\ \ q\notin\mathcal K,\ q-k\in\mathcal K,\\
	\text{(B9)}\quad (Q\to Q,\ Q\to Q) &:\quad
	p,p+k\notin\mathcal K,\ \ q,q-k\notin\mathcal K.
\end{align*}

We begin with the exact commutator structure behind the $Q$-fluctuation observable. Recall $W^{\mathrm{re}}$ defined in \eqref{eq:Wre-def}.

\begin{lemma}[Double commutator structure for $\cN_Q$]
	\label{lem:NQ-double-commutator-direct}
	The following hold.
	
	\medskip
	
	\noindent\textnormal{(i) Exact formula.}
	For
	\[
	\cN_Q=\dG(Q),
	\qquad
	M_{p,q,k}:=a_{p+k}^\ast a_{q-k}^\ast a_q a_p,
	\]
	define
	\[
	\Delta_{p,q,k}:=Q_{p+k}+Q_{q-k}-Q_q-Q_p,
	\qquad
	Q_r:=\mathbf 1_{\{h(r)>\Lambda^2\}}\in\{0,1\}.
	\]
	Then, as a quadratic form on finite-particle vectors,
	\begin{equation}\label{eq:exact-double-commutator-DQ}
		[\cN_Q,[W^{\mathrm{re}},\cN_Q]]
		=
		-\frac{\lambda^2}{2(2\pi)^2}
		\sum_{k\neq0}\hvb(k)\sum_{p,q\in\mathbb Z^2}
		\Delta_{p,q,k}^2\, M_{p,q,k}.
	\end{equation}
	
	\medskip
	
	\noindent\textnormal{(ii) Blockwise values of $\Delta_{p,q,k}$.}
	With the $A/B$-block decomposition introduced earlier, one has
	\[
	\Delta_{p,q,k}=0
	\quad\text{for}\quad
	A1,\ A4,\ A7,\ B2,\ B4,\ B9,
	\]
	and
	\[
	|\Delta_{p,q,k}|=1
	\quad\text{for}\quad
	A2,\ A3,\ A5,\ A6,\ B3,\ B6,\ B7,\ B8,
	\]
	while
	\[
	|\Delta_{p,q,k}|=2
	\quad\text{for}\quad
	B1,\ B5.
	\]
	Hence only the ten blocks
	\[
	A2,\ A3,\ A5,\ A6,\ B1,\ B3,\ B5,\ B6,\ B7,\ B8
	\]
	survive in \eqref{eq:exact-double-commutator-DQ}.
	
	%
\end{lemma}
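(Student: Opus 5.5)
The plan is to reduce everything to the action of $\cN_Q$ on a single quartic monomial. Since $\cN_Q=\dG(Q)=\sum_r Q_r a_r^\ast a_r$ is diagonal in the Fourier basis, the CCR give
\[
[\cN_Q,a_r^\ast]=Q_r a_r^\ast,\qquad [\cN_Q,a_r]=-Q_r a_r .
\]
Applying the Leibniz rule to $M_{p,q,k}=a_{p+k}^\ast a_{q-k}^\ast a_q a_p$ then yields the eigen-relation
\[
[\cN_Q,M_{p,q,k}]=\Delta_{p,q,k}M_{p,q,k}.
\]
Consequently
\[
[W,\cN_Q]=-\textstyle\sum \Delta M,\qquad [\cN_Q,[W,\cN_Q]]=-\textstyle\sum \Delta^2 M,
\]
for any $W$ of the quartic form \eqref{eq:Wmom} restricted to $k\ne0$, with the prefactor $\frac{\lambda^2}{2(2\pi)^2}\hvb(k)$ carried along.

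The zero-mode part $\frac{\lambda^2}{2(2\pi)^2}(\cN-N_0)^2$ of $W^{\mathrm{re}}$ is a function of $\cN$. Since $[\cN,\cN_Q]=0$, it commutes with $\cN_Q$ and drops out of both commutators. This gives \eqref{eq:exact-double-commutator-DQ}.

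The only point needing care is the meaning as a quadratic form. The $k$-sum is infinite and $\hvb$ is not summable, so I would first check every identity on finite-particle vectors with finitely many nonzero Fourier coefficients. On that dense domain each sector is invariant under all operators involved, and $\cN_Q$ acts as a bounded diagonal operator there. The sums over $p,q$ converge in the sense of forms on each $n$-particle sector, because $W_{\neq0}^{\mathrm{re}}$ restricted to $\gH_n$ is the bounded multiplication operator $\lambda^2\sum_{i<j}(\vbeta-(2\pi)^{-2})(x_i-x_j)$ up to constants. In fact the commutator computation can also be done directly in that sector: $\cN_Q$ restricted to $\gH_n$ equals $\sum_j Q_j$, and the matrix elements of the two-body multiplication operator between plane waves reproduce $\Delta^2$ termwise. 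I expect this domain bookkeeping to be the only mildly delicate step, and I would handle it by working sectorwise with matrix elements.

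For part (ii), I would simply evaluate $\Delta=Q_{p+k}+Q_{q-k}-Q_q-Q_p$ block by block. Each transition contributes (out$-$in): $P\to P$ gives $0$, $Q\to Q$ gives $0$, $P\to Q$ gives $+1$, and $Q\to P$ gives $-1$. Summing the contributions of the two legs gives:
\begin{itemize}
\item $0$ for A1, A4, A7 and B9, where both legs are neutral;
\item $0$ for B2 and B4, where the legs contribute $+1-1$;
\item $\pm1$ for A2, A3, A5, A6, B3, B6, B7 and B8, where exactly one leg crosses the cutoff;
\item $\pm2$ for B1 and B5, where both legs cross in the same direction.
\end{itemize}
The ten listed blocks are exactly those with $\Delta^2\ne0$, which proves the survival statement.
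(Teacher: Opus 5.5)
Your argument is the paper's argument. The CCR give $[\cN_Q,a_r^\ast]=Q_ra_r^\ast$ and $[\cN_Q,a_r]=-Q_ra_r$. The Leibniz rule then gives $[\cN_Q,M_{p,q,k}]=\Delta_{p,q,k}M_{p,q,k}$, hence the $-\Delta_{p,q,k}^2$ factor. Part (ii) is the same per-leg bookkeeping ($0$ for $P\to P$ or $Q\to Q$, $\pm1$ for a crossing). You also state explicitly that the zero-mode term $(\cN-N_0)^2$ commutes with $\cN_Q$ and drops out, which the paper leaves implicit.

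One remark in your domain discussion is false and should be removed. $W_{\neq0}^{\mathrm{re}}$ restricted to $\gH_n$ is \emph{not} a bounded multiplication operator. Since $\hvb\ge0$ is not summable, $\vbeta$ is unbounded near the origin: it behaves like $|x|^{\beta-2}$ for $\beta<2$ and like $\log(1/|x|)$ for $\beta=2$. This is exactly the non-summability the paper is built around.

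The error is harmless because that justification is not needed. Take $\Psi$ with finitely many nonzero Fourier coefficients. Then $\langle\Psi,M_{p,q,k}\Psi\rangle=\langle a_{q-k}a_{p+k}\Psi,a_qa_p\Psi\rangle$ vanishes unless $p,q,p+k,q-k$ all lie in a fixed finite set. The same holds with $\cN_Q\Psi$ in place of $\Psi$. So only finitely many triples $(p,q,k)$ contribute, and every identity in (i) is a finite algebraic computation on that dense domain.
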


\begin{proof}
	We first prove part~\textnormal{(i)}. Since
	\[
	\cN_Q=\sum_{r\in\mathbb Z^2}Q_r\,a_r^\ast a_r,
	\]
	the CCR imply
	\begin{equation}\label{eq:DQ-on-creation-annihilation}
		[\cN_Q,a_j^\ast]=Q_j a_j^\ast,
		\qquad
		[\cN_Q,a_j]=-Q_j a_j.
	\end{equation}
	Hence,
	the Leibniz rule gives
	\[
	[\cN_Q,M_{p,q,k}]
	=
	(Q_{p+k}+Q_{q-k}-Q_q-Q_p)M_{p,q,k}
	=
	\Delta_{p,q,k}M_{p,q,k},
	\]
	which implies that
	\[
	[\cN_Q,[M_{p,q,k},\cN_Q]]
	=
	[\cN_Q,-\Delta_{p,q,k}M_{p,q,k}]
	=
	-\Delta_{p,q,k}^2M_{p,q,k}.
	\]
	Summing over $p,q\in\mathbb Z^2$ and $k\neq0$ yields
	\eqref{eq:exact-double-commutator-DQ}.
	
	\medskip
	
	For part~\textnormal{(ii)}, it is convenient to record the contribution of a single leg to
	$\Delta_{p,q,k}$.
	Therefore $\Delta_{p,q,k}=0$ exactly when the two increments are either $0+0$, $1-1$, or
	$-1+1$, which gives the blocks
	\[
	A1,\ A4,\ A7,\ B2,\ B4,\ B9.
	\]
	Likewise, $|\Delta_{p,q,k}|=1$ when exactly one leg crosses the cutoff, namely on
	\[
	A2,\ A3,\ A5,\ A6,\ B3,\ B6,\ B7,\ B8,
	\]
	and $|\Delta_{p,q,k}|=2$ when both legs cross in the same direction, namely on
	\[
	B1,\ B5.
	\]
	This proves the claimed list of surviving blocks.
\end{proof}

\begin{remark}\label{rem:B9-disappears}
	The block decomposition is already useful at the level of \eqref{eq:exact-double-commutator-DQ}. In particular, the purely high--high block
	\[
	B9=(Q\to Q,\,Q\to Q)
	\]
	disappears because $\Delta_{p,q,k}=0$ there. This is crucial for the later Hilbert--Schmidt bounds: $B9$ carries no $P$-projector, hence no low-frequency input and no geometric restriction on $k$, so it would be the hardest channel to estimate directly. Every surviving block still contains at least one cutoff crossing, and therefore at least one $P$-leg from which the later bounds extract the truncated kernels $S_k^{(P)}$ and in the following we can take $s_1+s_2<1$.
\end{remark}

	By the exact identity \eqref{eq:exact-double-commutator-DQ},
\(
[\cN_Q,[W_{\neq0}^{\rm re},\cN_Q]]\)
is a quartic operator of the form
\[
Q_\gamma=\sum_{r_1,r_2,u_1,u_2\in\Z^2}
\gamma_{r_1,r_2,u_1,u_2}\,a_{r_1}^\ast a_{r_2}^\ast a_{u_1}a_{u_2},
\]
with coefficients
\[
\gamma_{p+k,q-k,q,p}
=
-\frac{\lambda^2}{2(2\pi)^2}\,\hvb(k)\,\Delta_{p,q,k}^2,
\qquad (k\neq0),
\]
and $\gamma_{r_1,r_2,u_1,u_2}=0$ otherwise.

The next ingredient is a Hilbert--Schmidt reduction that converts the surviving blocks into weighted quartic kernels.


\paragraph{Hilbert--Schmidt reduction.}
Let
\[
\mathbb H_s:=\dG(h^s)=\sum_{p\in\mathbb Z^2} h(p)^s a_p^\ast a_p.
\]

To turn the exact block decomposition into operator bounds, we now use an asymmetric quartic Hilbert--Schmidt estimate.

\begin{lemma}[Asymmetric quartic Hilbert--Schmidt bound]\label{lem:aH-S}
	Let \(s_1,s_2\ge 0\). For a quartic kernel
	\[
	\gamma=\gamma_{r_1,r_2,u_1,u_2},
	\]
	define
	\[
	Q_\gamma
	:=
	\sum_{r_1,r_2,u_1,u_2\in\mathbb Z^2}
	\gamma_{r_1,r_2,u_1,u_2}\,
	a_{r_1}^\ast a_{r_2}^\ast a_{u_1}a_{u_2}.
	\]
	Define the asymmetrically rescaled kernel
	\[
	\widetilde\gamma^{(s_1,s_2)}_{r_1,r_2,u_1,u_2}
	:=
	h(r_1)^{-s_1/2}h(r_2)^{-s_2/2}
	\gamma_{r_1,r_2,u_1,u_2}
	h(u_1)^{-s_2/2}h(u_2)^{-s_1/2}.
	\]
	If \(\widetilde\gamma^{(s_1,s_2)}\in \ell^2((\mathbb Z^2)^4)\), then for every
	\(\psi\in\mathfrak F_{\mathrm{fin}}\),
	\[
	\bigl|\langle \psi,Q_\gamma\psi\rangle\bigr|
	\le
	\|\widetilde\gamma^{(s_1,s_2)}\|_{\ell^2}\,
	\langle \psi,(\mathbb{H}_{s_1}\mathbb{H}_{s_2}-\mathbb{H}_{s_1+s_2})\psi\rangle
	\le
	\|\widetilde\gamma^{(s_1,s_2)}\|_{\ell^2}\,
	\langle \psi,\mathbb{H}_{s_1}\mathbb{H}_{s_2}\psi\rangle.
	\]
	Here $\mathfrak F_{\mathrm{fin}}$ denotes finite-particle subspace of the bosonic Fock space.
\end{lemma}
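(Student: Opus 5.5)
Fix $\psi\in\mathfrak F_{\mathrm{fin}}$ and work sector by sector, since $Q_\gamma$, $\mathbb H_{s_1}$ and $\mathbb H_{s_2}$ all commute with $\cN$. The plan is to factor $Q_\gamma$ through two annihilation-type operators weighted by $h^{s_1/2}$ and $h^{s_2/2}$, and then use Cauchy--Schwarz in $\ell^2$ of the four indices. Write
\[
\langle\psi,Q_\gamma\psi\rangle=\sum_{r_1,r_2,u_1,u_2}\gamma_{r_1,r_2,u_1,u_2}\,\langle a_{r_2}a_{r_1}\psi,\;a_{u_1}a_{u_2}\psi\rangle .
\]
Insert the weights by writing $\gamma=h(r_1)^{s_1/2}h(r_2)^{s_2/2}\,\widetilde\gamma^{(s_1,s_2)}\,h(u_1)^{s_2/2}h(u_2)^{s_1/2}$. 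Set $\Phi_{r_1,r_2}:=h(r_1)^{s_1/2}h(r_2)^{s_2/2}a_{r_2}a_{r_1}\psi$, and note that $a_{u_1}a_{u_2}=a_{u_2}a_{u_1}$. Then
\[
\langle\psi,Q_\gamma\psi\rangle=\sum \widetilde\gamma_{r_1,r_2,u_1,u_2}\langle\Phi_{r_1,r_2},\Phi_{u_2,u_1}\rangle .
\]

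Apply Cauchy--Schwarz in the combined index $(r_1,r_2,u_1,u_2)$, using $|\langle\Phi,\Phi'\rangle|\le\|\Phi\|\|\Phi'\|$:
\[
|\langle\psi,Q_\gamma\psi\rangle|\le\|\widetilde\gamma\|_{\ell^2}\Big(\sum_{r_1,r_2,u_1,u_2}\|\Phi_{r_1,r_2}\|^2\|\Phi_{u_2,u_1}\|^2\Big)^{1/2}=\|\widetilde\gamma\|_{\ell^2}\sum_{r_1,r_2}\|\Phi_{r_1,r_2}\|^2 .
\]
The last equality holds because the sum factorizes as $(\sum_{r}\|\Phi_r\|^2)^2$.

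It remains to compute
\[
\sum_{r_1,r_2}h(r_1)^{s_1}h(r_2)^{s_2}\|a_{r_2}a_{r_1}\psi\|^2=\Big\langle\psi,\sum_{r_1,r_2}h(r_1)^{s_1}h(r_2)^{s_2}a_{r_1}^*a_{r_2}^*a_{r_2}a_{r_1}\psi\Big\rangle .
\]
Normal ordering gives $a_{r_1}^*a_{r_2}^*a_{r_2}a_{r_1}=a_{r_1}^*a_{r_1}a_{r_2}^*a_{r_2}-\delta_{r_1r_2}a_{r_1}^*a_{r_1}$. Summing, the operator equals $\mathbb H_{s_1}\mathbb H_{s_2}-\mathbb H_{s_1+s_2}$, which is the first inequality. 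The second inequality follows since $\mathbb H_{s_1+s_2}=\dG(h^{s_1+s_2})\ge0$.

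The only real care point is bookkeeping. One must check that the weight placement in $\widetilde\gamma^{(s_1,s_2)}$ ($s_1$ on $r_1,u_2$ and $s_2$ on $r_2,u_1$) matches the pairing $\Phi_{r_1,r_2}$ versus $\Phi_{u_2,u_1}$. One must also justify that all sums are finite: on a finite-particle vector each $\Phi_{r}$ is finite-sector, and $\sum_r\|\Phi_r\|^2<\infty$ provided $\psi$ lies in the form domain of $\mathbb H_{s_1}\mathbb H_{s_2}$; otherwise the right-hand side is $+\infty$ and there is nothing to prove. I expect no genuine obstacle beyond this index matching.
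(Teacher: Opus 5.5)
Your proposal is correct and is essentially the paper's argument: your vectors $\Phi_{r_1,r_2}$ and $\Phi_{u_2,u_1}$ are exactly the paper's $Y_{r_1,r_2}=c_{r_2}b_{r_1}\psi$ and $X_{u_1,u_2}=c_{u_1}b_{u_2}\psi$ (with $b_u=h(u)^{s_1/2}a_u$, $c_u=h(u)^{s_2/2}a_u$), followed by the same Cauchy--Schwarz in the four indices. The only cosmetic difference is that you identify $\sum_{r_1,r_2}\|\Phi_{r_1,r_2}\|^2$ with $\langle\psi,(\mathbb H_{s_1}\mathbb H_{s_2}-\mathbb H_{s_1+s_2})\psi\rangle$ by normal ordering, whereas the paper uses the pull-through relation $\mathbb H_{s_2}b_u=b_u(\mathbb H_{s_2}-h(u)^{s_2})$.
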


\begin{proof}
	Set
	\[
	b_u:=h(u)^{s_1/2}a_u,\qquad
	c_u:=h(u)^{s_2/2}a_u.
	\]
	Then
	\[
	a_{u_2}=h(u_2)^{-s_1/2}b_{u_2},\qquad
	a_{u_1}=h(u_1)^{-s_2/2}c_{u_1},
	\]
	and similarly
	\[
	a_{r_1}^\ast=h(r_1)^{-s_1/2}b_{r_1}^\ast,\qquad
	a_{r_2}^\ast=h(r_2)^{-s_2/2}c_{r_2}^\ast.
	\]
	Hence
	\[
	Q_\gamma
	=
	\sum_{r_1,r_2,u_1,u_2}
	\widetilde\gamma^{(s_1,s_2)}_{r_1,r_2,u_1,u_2}\,
	b_{r_1}^\ast c_{r_2}^\ast c_{u_1}b_{u_2}.
	\]
	For \(\psi\in\mathfrak F_{\mathrm{fin}}\), define
	\[
	X_{u_1,u_2}:=c_{u_1}b_{u_2}\psi,\qquad
	Y_{r_1,r_2}:=c_{r_2}b_{r_1}\psi.
	\]
	Then
	\[
	\langle \psi,Q_\gamma\psi\rangle
	=
	\sum_{r_1,r_2,u_1,u_2}
	\widetilde\gamma^{(s_1,s_2)}_{r_1,r_2,u_1,u_2}
	\langle Y_{r_1,r_2},X_{u_1,u_2}\rangle.
	\]
	By Cauchy--Schwarz,
	\[
	|\langle \psi,Q_\gamma\psi\rangle|^2
	\le
	\|\widetilde\gamma^{(s_1,s_2)}\|_{\ell^2}^2
	\Bigl(\sum_{r_1,r_2}\|Y_{r_1,r_2}\|^2\Bigr)
	\Bigl(\sum_{u_1,u_2}\|X_{u_1,u_2}\|^2\Bigr).
	\]
	Since the two factors are equal, it follows that
	\[
	|\langle \psi,Q_\gamma\psi\rangle|
	\le
	\|\widetilde\gamma^{(s_1,s_2)}\|_{\ell^2}
	\sum_{u_1,u_2}\|c_{u_1}b_{u_2}\psi\|^2.
	\]
	Now
	\[
	\sum_{u_1,u_2}\|c_{u_1}b_{u_2}\psi\|^2
	=
	\Bigl\langle\psi,
	\sum_{u_1,u_2} b_{u_2}^\ast c_{u_1}^\ast c_{u_1}b_{u_2}
	\,\psi\Bigr\rangle
	=
	\Bigl\langle\psi,
	\sum_{u_2} b_{u_2}^\ast \mathbb{H}_{s_2}b_{u_2}
	\,\psi\Bigr\rangle.
	\]
	Using
	\[
	[\mathbb{H}_{s_2},a_u]= -h(u)^{s_2}a_u,
	\]
	we get
	\[
	\mathbb{H}_{s_2}b_u=b_u(\mathbb{H}_{s_2}-h(u)^{s_2}).
	\]
	Therefore
	\[
	\sum_{u_2} b_{u_2}^\ast \mathbb{H}_{s_2}b_{u_2}
	=
	\sum_{u_2} h(u_2)^{s_1}a_{u_2}^\ast a_{u_2}(\mathbb{H}_{s_2}-h(u_2)^{s_2})
	=
	\mathbb{H}_{s_1}\mathbb{H}_{s_2}-\mathbb{H}_{s_1+s_2}.
	\]
 This proves
	\[
	|\langle \psi,Q_\gamma\psi\rangle|
	\le
	\|\widetilde\gamma^{(s_1,s_2)}\|_{\ell^2}
	\langle \psi,(\mathbb{H}_{s_1}\mathbb{H}_{s_2}-\mathbb{H}_{s_1+s_2})\psi\rangle
	\le
	\|\widetilde\gamma^{(s_1,s_2)}\|_{\ell^2}
	\langle \psi,\mathbb{H}_{s_1}\mathbb{H}_{s_2}\psi\rangle.
	\]
\end{proof}

\paragraph{Exact block decomposition of the direct commutator.}
By Lemma \ref{lem:NQ-double-commutator-direct}, only the ten blocks
\[
A2,\ A3,\ A5,\ A6,\ B1,\ B3,\ B5,\ B6,\ B7,\ B8
\]
survive. We group them into five packages according to the geometric pattern of cutoff crossings,
\[
(A2+A3),\qquad (A5+A6),\qquad (B1+B5),\qquad (B3+B6),\qquad (B7+B8),
\]
because each package has a common Hilbert--Schmidt structure. The first two packages contain one low--low leg together with one crossing leg, the third contains two same-direction crossings, and the last two contain one crossing leg together with one $Q\to Q$ leg; these are the natural model configurations for the bounds below.
Accordingly define
\begin{equation}\label{dec:com}
[\cN_Q,[W^{\mathrm{re}},\cN_Q]]
=
\mathcal C_{23}+\mathcal C_{56}+\mathcal C_{15}+\mathcal C_{36}+\mathcal C_{78},
\end{equation}
where
\[
\mathcal C_{23}
:=
-\tilde c_\lambda\sum_{k\neq0}\hvb(k)\sum_{p,q}
P_pP_{p+k}\bigl(P_qQ_{q-k}+Q_qP_{q-k}\bigr)\,M_{p,q,k},
\]
\[
\mathcal C_{56}
:=
-\tilde c_\lambda\sum_{k\neq0}\hvb(k)\sum_{p,q}
\bigl(P_pQ_{p+k}+Q_pP_{p+k}\bigr)P_qP_{q-k}\,M_{p,q,k},
\]
\[
\mathcal C_{15}
:=
-4\tilde c_\lambda\sum_{k\neq0}\hvb(k)\sum_{p,q}
\Bigl(
P_pQ_{p+k}P_qQ_{q-k}
+
Q_pP_{p+k}Q_qP_{q-k}
\Bigr)\,M_{p,q,k},
\]
\[
\mathcal C_{36}
:=
-\tilde c_\lambda\sum_{k\neq0}\hvb(k)\sum_{p,q}
\bigl(P_pQ_{p+k}+Q_pP_{p+k}\bigr)Q_qQ_{q-k}\,M_{p,q,k},
\]
\[
\mathcal C_{78}
:=
-\tilde c_\lambda\sum_{k\neq0}\hvb(k)\sum_{p,q}
Q_pQ_{p+k}\bigl(P_qQ_{q-k}+Q_qP_{q-k}\bigr)\,M_{p,q,k},
\]
for $\tilde c_\lambda=\frac{\lambda^2}{2(2\pi)^2}$.	In fact, \(\mathcal C_{23}=\mathcal C_{56}\) and \(\mathcal C_{78}=\mathcal C_{36}\) by the symmetry
	\((p,q,k)\mapsto(q,p,-k)\), the evenness \(\hvb(-k)=\hvb(k)\), and bosonic
	symmetry of the quartic monomial.

Recall that for $s>0$ we write
\[
S_k(s):=\sum_{u\in\mathbb Z^2} h(u+k)^{-s}h(u)^{-s},
\qquad
S_k^{(P)}(s):=\sum_{u\in\mathbb Z^2} P_u\,h(u+k)^{-s}h(u)^{-s},
\]
where $P_u=\mathbf 1_{\{h(u)\le \Lambda^2\}}$.

\begin{lemma}[Blockwise Hilbert--Schmidt bounds]\label{lem:com1}
	Assume \(0<s_1<1\) and \(\frac12<s_2<1\). Then
	\begin{align}
		\pm (\mathcal C_{23}+\mathcal C_{56}+\mathcal C_{15})
		&\le
		C_{s_1,s_2}\lambda^2 \Lambda^{2-s_1-s_2}
		\mathbb{H}_{s_1}\mathbb{H}_{s_2},
		\label{eq:blockwise-HS-two-cutoff}
		\\
		\pm (\mathcal C_{36}+ \mathcal C_{78})
		&\le
		C_{s_1,s_2}\lambda^2 \Lambda^{1-s_1}
		\mathbb{H}_{s_1}\mathbb{H}_{s_2},
		\label{eq:blockwise-HS-36}
	\end{align}
	In particular, as a quadratic form on \(\mathfrak F_{\mathrm{fin}}\),
	\[
	\pm [\cN_Q,[W^{\mathrm{re}}_{\neq0},\cN_Q]]
	\le
	C_{s_1,s_2}\lambda^2
	\Lambda^{2-s_1-s_2}
	\mathbb{H}_{s_1}\mathbb{H}_{s_2}.
	\]
\end{lemma}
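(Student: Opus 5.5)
The plan is to apply Lemma~\ref{lem:aH-S} to each of the five packages in \eqref{dec:com} separately. We use the kernel $\gamma_{p+k,q-k,q,p}=-\tilde c_\lambda\hvb(k)\chi(p,q,k)$, where $\chi$ is the product of $P/Q$ indicators defining the package. For a package whose kernel $\gamma$ satisfies $\|\widetilde\gamma^{(s_1,s_2)}\|_{\ell^2}\le X$, Lemma~\ref{lem:aH-S} gives $\pm\mathcal C\le X\,\mathbb H_{s_1}\mathbb H_{s_2}$. It therefore suffices to compute
\[
\|\widetilde\gamma^{(s_1,s_2)}\|_{\ell^2}^2=\tilde c_\lambda^2\sum_{k\neq0}\hvb(k)^2\sum_{p,q}\chi(p,q,k)\,h(p+k)^{-s_1}h(p)^{-s_1}\,h(q-k)^{-s_2}h(q)^{-s_2}.
\]
Since $\chi$ factorizes into a $p$-leg indicator times a $q$-leg indicator, the inner sum factors. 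Writing $u=p$ and $u=q-k$, it becomes a product of two truncated sums of type $S_k(s_1)$ and $S_k(s_2)$, with $P$-constraints on some legs. We are free to assign the larger exponent $s_2>\frac12$ to whichever leg is convenient; using the symmetries $\mathcal C_{23}=\mathcal C_{56}$ and $\mathcal C_{78}=\mathcal C_{36}$, this costs nothing. The elementary inputs are as follows. First, for $s>\frac12$ the bound $S_k(s)\lesssim\langle k\rangle^{-2s}\log$ (or $\langle k\rangle^{2-4s}$) follows from Lemma~\ref{lem:sum}-type estimates. Second, a $P$-constrained sum obeys $S_k^{(P)}(s)\lesssim\Lambda^{2-2s}\langle k\rangle^{-2s}$ when $|k|\gg\Lambda$. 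Third, whenever a leg is $P\to P$ we have $|k|\lesssim\Lambda$, and the crude bound $S_k^{(P)}(s)\le\sum_{h(u)\le\Lambda^2}h(u)^{-s}\cdots\lesssim\Lambda^{2-2s}$ holds.

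\textbf{Packages $\mathcal C_{23},\mathcal C_{56}$ (one $P\to P$ leg).} Here $|k|\lesssim\Lambda$. The $P\to P$ leg contributes at most $S^{(P)}_k(s_1)\lesssim\Lambda^{2-2s_1}$. The crossing leg contributes at most $S_k(s_2)\lesssim1$, since $s_2>\frac12$ makes $\sum_u h(u)^{-s_2}h(u+k)^{-s_2}$ bounded uniformly in $k$ up to logs. Summing $\hvb(k)^2=\langle k\rangle^{-2\beta}$ over $|k|\lesssim\Lambda$ is bounded because $2\beta>2$. This gives squared norm $\lesssim\lambda^4\Lambda^{2-2s_1}$, which is better than $\lambda^4\Lambda^{2(2-s_1-s_2)}$ once we note $s_2<1$.

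\textbf{Package $\mathcal C_{15}$ (both legs cross in the same direction).} One $P$-index sits on each leg. In the $P\to Q$ case, $p,q\in\mathcal K$ and $p+k,q-k\notin\mathcal K$. For $|k|\gg\Lambda$ we have $h(p+k)\sim\langle k\rangle^2$, so each leg gives $\lesssim\Lambda^{2-2s_i}\langle k\rangle^{-2s_i}$. Summing $\langle k\rangle^{-2\beta-2s_1-2s_2}$ over $|k|\gtrsim\Lambda$ leaves a factor $\Lambda^{-2\beta-2(s_1+s_2)+2}$, which is harmless. The region $|k|\lesssim\Lambda$ gives $\Lambda^{4-2s_1-2s_2}$ times a convergent $k$-sum. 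Taking square roots yields the factor $\lambda^2\Lambda^{2-s_1-s_2}$.

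\textbf{Packages $\mathcal C_{36},\mathcal C_{78}$ (one crossing leg, one $Q\to Q$ leg).} We put $s_1$ on the crossing leg and $s_2$ on the $Q\to Q$ leg. The $Q\to Q$ leg is bounded by the full $S_k(s_2)\lesssim\langle k\rangle^{-(4s_2-2)}$ (up to logs), which is finite because $s_2>\frac12$. The crossing leg, with one index in $\mathcal K$, gives $\lesssim\Lambda^{2-2s_1}\min(1,\Lambda^{2s_1}\langle k\rangle^{-2s_1})$. Summing $\hvb(k)^2$ against these is convergent, so the squared norm is $\lesssim\lambda^4\Lambda^{2-2s_1}$, i.e. the norm is $\lesssim\lambda^2\Lambda^{1-s_1}$, which is \eqref{eq:blockwise-HS-36}. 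Since $1-s_1\le2-s_1-s_2$, all five bounds can be combined into the final displayed estimate.

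\textbf{Main obstacle.} The hard step is the case analysis for the truncated sums $S_k^{(P)}$ across the ranges $|k|\lesssim\Lambda$ versus $|k|\gg\Lambda$, while tracking the $\log$ factors. These must be absorbed using $\beta>\frac32$ together with the room left by $s_2>\frac12$; I would handle this by proving one auxiliary two-regime bound for $S_k^{(P)}(s)$ first and then reusing it in every package.
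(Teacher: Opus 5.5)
Your proposal is correct and follows the paper's argument. You apply Lemma~\ref{lem:aH-S} package by package, factor the rescaled kernel norm into a $p$-leg sum times a $q$-leg sum, bound these by $S_k^{(P)}(s)$ or $S_k(s)$ from Lemma~\ref{lem:convolution-kernels}, and sum against $\hvb(k)^2$. Your two-regime split in $k$ is just an unpacked form of the uniform bound $S_k^{(P)}(s)\le C_s\Lambda^{2-2s}\langle k\rangle^{-2s}$, so no logarithms actually appear and $\beta>1$ already suffices for the $k$-sums; using $S_k(s_2)\le C$ on the crossing leg of $\mathcal C_{23}$ only gives the slightly stronger factor $\Lambda^{1-s_1}$ there.
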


\begin{proof}
	We begin with the two-cutoff contribution. It is enough to estimate \(\mathcal C_{23}\), since
	the arguments for \(\mathcal C_{56}\) and \(\mathcal C_{15}\) are identical.
	
	Write \(\mathcal C_{23}=Q_{\gamma_{23}}\) with kernel
	\[
	\gamma_{23;r_1,r_2,u_1,u_2}
	=
	-\tilde c_\lambda\sum_{k\neq0} \hvb(k)\,
	\delta_{r_1,u_2+k}\delta_{r_2,u_1-k}\,
	P_{u_2}P_{u_2+k}\bigl(P_{u_1}Q_{u_1-k}+Q_{u_1}P_{u_1-k}\bigr).
	\]
	Let
	\(
	\widetilde\gamma_{23}:=\widetilde\gamma_{23}^{(s_1,s_2)}.
	\)
	We have
	\[
	\|\widetilde\gamma_{23}\|_{\ell^2}^2
	=
	\tilde c_\lambda^2 \sum_{k\neq0}|\hvb(k)|^2 A_{23}^{(p)}(k)A_{23}^{(q)}(k),
	\]
	where
	\[
	A_{23}^{(p)}(k)
	:=
	\sum_{u_2}
	P_{u_2}P_{u_2+k}\,h(u_2+k)^{-s_1}h(u_2)^{-s_1},
	\]
	\[
	A_{23}^{(q)}(k)
	:=
	\sum_{u_1}
	\bigl(P_{u_1}Q_{u_1-k}+Q_{u_1}P_{u_1-k}\bigr)
	h(u_1-k)^{-s_2}h(u_1)^{-s_2}.
	\]
	Using
	\[
	A_{23}^{(p)}(k)\le S_k^{(P)}(s_1),
	\qquad
	A_{23}^{(q)}(k)\le 2S_k^{(P)}(s_2),
	\]
	we obtain
	\[
	\|\widetilde\gamma_{23}\|_{\ell^2}^2
	\le
	C\lambda^4
	\sum_{k\neq0}|\hvb(k)|^2
	S_k^{(P)}(s_1)S_k^{(P)}(s_2).
	\]
	Lemma~\ref{lem:convolution-kernels} gives
	\[
	\|\widetilde\gamma_{23}\|_{\ell^2}^2
	\le
	C_{s_1,s_2}\lambda^4
	\Lambda^{4-2s_1-2s_2}
	\sum_{k\neq0}|\hvb(k)|^2\langle k\rangle^{-2s_1-2s_2}.
	\]
	{\color{red}Since $|\hvb(k)|^2=\langle k\rangle^{-2\beta}$ and $\beta+s_1+s_2>1$ (indeed $\beta>\frac32$ and $s_1,s_2>0$), the $k$-sum is finite, hence}
	\[
	\|\widetilde\gamma_{23}\|_{\ell^2}
	\le
	C_{s_1,s_2}\lambda^2\Lambda^{2-s_1-s_2}.
	\]
	Applying Lemma~\ref{lem:aH-S} yields the first bound in
	\eqref{eq:blockwise-HS-two-cutoff}; summing the identical estimates for
	\(\mathcal C_{56}\) and \(\mathcal C_{15}\) only changes the constant.
	
	\medskip
	
	For the one-cutoff pair \(\mathcal C_{36}\), write \(\mathcal C_{36}=Q_{\gamma_{36}}\) with kernel
	\[
	\gamma_{36;r_1,r_2,u_1,u_2}
	=
	-\tilde c_\lambda\sum_{k\neq0} \hvb(k)\,
	\delta_{r_1,u_2+k}\delta_{r_2,u_1-k}\,
	\bigl(P_{u_2}Q_{u_2+k}+Q_{u_2}P_{u_2+k}\bigr)Q_{u_1}Q_{u_1-k}.
	\]
	Let \(\widetilde\gamma_{36}:=\widetilde\gamma_{36}^{(s_1,s_2)}\). As above,
	\[
	\|\widetilde\gamma_{36}\|_{\ell^2}^2
	=
	\tilde c_\lambda^2\sum_{k\neq0}|\hvb(k)|^2 A_{36}^{(p)}(k)A_{36}^{(q)}(k),
	\]
	with
	\[
	A_{36}^{(p)}(k)
	:=
	\sum_{u_2}
	\bigl(P_{u_2}Q_{u_2+k}+Q_{u_2}P_{u_2+k}\bigr)
	h(u_2+k)^{-s_1}h(u_2)^{-s_1},
	\]
	\[
	A_{36}^{(q)}(k)
	:=
	\sum_{u_1}
	Q_{u_1}Q_{u_1-k}
	h(u_1-k)^{-s_2}h(u_1)^{-s_2}.
	\]
	Using
	\[
	A_{36}^{(p)}(k)\le 2S_k^{(P)}(s_1),
	\qquad
	A_{36}^{(q)}(k)\le S_k(s_2),
	\]
	we find
	\[
	\|\widetilde\gamma_{36}\|_{\ell^2}^2
	\le
	C\lambda^4\sum_{k\neq0}|\hvb(k)|^2 S_k^{(P)}(s_1)S_k(s_2).
	\]
	Now Lemma~\ref{lem:convolution-kernels} gives
	\[
	\|\widetilde\gamma_{36}\|_{\ell^2}^2
	\le
	C_{s_1,s_2}\lambda^4\Lambda^{2-2s_1}
	\sum_{k\neq0}|\hvb(k)|^2\langle k\rangle^{-2s_1},
	\]
	{\color{red}and the $k$-sum is finite because $|\hvb(k)|^2\langle k\rangle^{-2s_1}=\langle k\rangle^{-2\beta-2s_1}\in \ell^1(\Z^2)$ for every $\beta>1$. Hence}
	\[
	\|\widetilde\gamma_{36}\|_{\ell^2}
	\le
	C_{s_1,s_2}\lambda^2\Lambda^{1-s_1}.
	\]
	Lemma~\ref{lem:aH-S} then gives \eqref{eq:blockwise-HS-36}. The final result then follows from \eqref{dec:com}.
	\end{proof}

\subsection{Proof of the main fluctuation bound.}

\begin{proof}[Proof of Theorem~\ref{prop:high-frequency-fluctuations}]
	We split the argument into three steps.
	
	\medskip
	
	\noindent\textbf{Step 1: application of the second-order correlation inequality.}
	
	Recall the self-adjoint observable
	\(
	B_Q=\frac{\lambda}{2}(\cN_Q-N_{0,Q}),
	\)
	and define the perturbed Gibbs states
	\[
	\Gamma_{\lambda,t}^{(Q)}
	:=
	\frac{e^{-\mathbb{H}_\lambda^{\mathrm{re}}+tB_Q}}
	{\Tr(e^{-\mathbb{H}_\lambda^{\mathrm{re}}+tB_Q})},
	\qquad
	\Gamma_{0,t}^{(Q)}
	:=
	\frac{e^{-\lambda \dG(h)+tB_Q}}
	{\Tr(e^{-\lambda \dG(h)+tB_Q})},
	\qquad t\in[-1,1].
	\]
	Set
	\[
	a_{Q,\lambda}:=
	\sup_{|t|\le 1}
	\left|
	\Tr(B_Q\Gamma_{\lambda,t}^{(Q)})
	\right|.
	\]
	By the second-order correlation inequality Theorem \ref{thm:correlation-intro}, we have
	\begin{equation}\label{eq:DNN-BQ}
		\Tr(B_Q^2\Gamma_\lambda)
		\le
		a_{Q,\lambda}e^{a_{Q,\lambda}}
		+
		\frac14\Tr\!\Big([B_Q,[\mathbb{H}_\lambda^{\mathrm{re}},B_Q]]\Gamma_\lambda\Big).
	\end{equation}
Therefore
	it remains to show that both terms on the right-hand side of \eqref{eq:DNN-BQ}
	tend to zero.
	
	\medskip
	
	\noindent\textbf{Step 2: the first moment of $B_Q$ under the perturbed Gibbs states is $o(1)$.}
	
	We first compare $\Gamma_{\lambda,t}^{(Q)}$ with the corresponding free perturbed state.
	Observe that
	\[
	-\lambda \dG(h)+tB_Q
	=
	-\lambda \dG(h_t)-\frac{t\lambda}{2}N_{0,Q},
	\qquad
	h_t:=h-\frac{t}{2}Q.
	\]
	Hence $\Gamma_{0,t}^{(Q)}$ is the quasi-free Gibbs state associated with $h_t$.
	
	For $|t|\le1$ and $\lambda$ small enough, the operator $h_t$ is uniformly comparable with $h$. We apply Proposition~\ref{prop:one-body-HS-safe} to have
	\begin{equation}\label{eq:HS-perturbed-Q}
		\lambda\,
		\big\|
		\sqrt h\,(\gamma_{\lambda,t}^{(Q)}-\gamma_{0,t}^{(Q)})\,\sqrt h
		\big\|_{\mathrm{HS}}
		\le
		C(1+|\log\lambda|^2),
		\qquad |t|\le1,
	\end{equation}
	where $\gamma_{\lambda,t}^{(Q)}$ and $\gamma_{0,t}^{(Q)}$ denote the one-body density matrices of
	$\Gamma_{\lambda,t}^{(Q)}$ and $\Gamma_{0,t}^{(Q)}$.
	
	By Cauchy--Schwarz in Hilbert--Schmidt norm, we obtain
	\begin{align}
		\lambda\,
		\left|
		\Tr\!\big[Q(\gamma_{\lambda,t}^{(Q)}-\gamma_{0,t}^{(Q)})\big]
		\right|\le 	
			\lambda\|Qh^{-1}\|_{\mathrm{HS}}
		\big\|\sqrt h\,(\gamma_{\lambda,t}^{(Q)}-\gamma_{0,t}^{(Q)})\,\sqrt h\big\|_{\mathrm{HS}}
		&\le
		C(1+|\log\lambda|^2)\,\|Qh^{-1}\|_{\mathrm{HS}}.
		\label{eq:Q-difference-perturbed-1}
	\end{align}
	Now, in dimension $2$,
	\[
	\|Qh^{-1}\|_{\mathrm{HS}}^2
	=
	\sum_{p:\,h(p)>\Lambda^2} h(p)^{-2}
	\le C\Lambda^{-2},
	\]
	hence
	\begin{equation}\label{eq:Q-difference-perturbed-2}
		\sup_{|t|\le1}
		\lambda\,
		\left|
		\Tr\!\big[Q(\gamma_{\lambda,t}^{(Q)}-\gamma_{0,t}^{(Q)})\big]
		\right|
		\le
		C(1+|\log\lambda|^2)\Lambda^{-1}
		\longrightarrow 0.
	\end{equation}
	
	Next, we compare the free perturbed and unperturbed high-frequency means. Since
\[
\gamma_{0,t}^{(Q)}
=
\frac{1}{e^{\lambda h_t}-1},
\qquad
\gamma_0=\frac{1}{e^{\lambda h}-1},
\]
we have
\[
\lambda\,
\Tr\!\big[Q(\gamma_{0,t}^{(Q)}-\gamma_0)\big]
=
\sum_{p:\,h(p)>\Lambda^2}
\left(
\frac{\lambda}{e^{\lambda(h(p)-t/2)}-1}
-
\frac{\lambda}{e^{\lambda h(p)}-1}
\right).
\]
For each $p$ and $|t|\le 1$, the mean value theorem gives a point $\theta_{p,t}$ between $0$ and
$t$ such that
\[
\left|
\frac{\lambda}{e^{\lambda(h(p)-t/2)}-1}
-
\frac{\lambda}{e^{\lambda h(p)}-1}
\right|
=
\frac{|t|}{2}\,\lambda^2
\frac{e^{\lambda(h(p)-\theta_{p,t}/2)}}{\bigl(e^{\lambda(h(p)-\theta_{p,t}/2)}-1\bigr)^2}.
\]
Since $h(p)>\Lambda^2$ and $|\theta_{p,t}|\le 1$,  we have
$h(p)-\theta_{p,t}/2\ge h(p)-1/2\ge \frac12 h(p)$, and therefore
\[
\lambda^2
\frac{e^{\lambda(h(p)-\theta_{p,t}/2)}}{\bigl(e^{\lambda(h(p)-\theta_{p,t}/2)}-1\bigr)^2}
\le
C\,h(p)^{-2},
\]
because $\sup_{x>0} x^2 e^x (e^x-1)^{-2}<\infty$.
Hence
\begin{align}
	\sup_{|t|\le1}
	\lambda\,
	\left|
	\Tr\!\big[Q(\gamma_{0,t}^{(Q)}-\gamma_0)\big]
	\right|
	&\le
	C\sum_{p:\,h(p)>\Lambda^2} h(p)^{-2}
	\le C\Lambda^{-2}
	\longrightarrow 0.
	\label{eq:Q-difference-free-perturbed}
\end{align}
	Combining \eqref{eq:Q-difference-perturbed-2} and
	\eqref{eq:Q-difference-free-perturbed}, we find
	\begin{align}
		a_{Q,\lambda}
		&=
		\sup_{|t|\le1}
		\left|
		\Tr(B_Q\Gamma_{\lambda,t}^{(Q)})
		\right|
		=
		\frac12
		\sup_{|t|\le1}
		\lambda\,
		\left|
		\Tr(Q\gamma_{\lambda,t}^{(Q)})-N_{0,Q}
		\right|
		\nonumber\\
		&\le
		\frac12
		\sup_{|t|\le1}
		\lambda\,
		\left|
		\Tr\!\big[Q(\gamma_{\lambda,t}^{(Q)}-\gamma_{0,t}^{(Q)})\big]
		\right|
		+
		\frac12
		\sup_{|t|\le1}
		\lambda\,
		\left|
		\Tr\!\big[Q(\gamma_{0,t}^{(Q)}-\gamma_0)\big]
		\right|\lesssim (1+|\log\lambda|^2)\Lambda^{-1}.
		\label{eq:aQlambda-o1}
	\end{align}
	
	\medskip
	
	\noindent\textbf{Step 3: the double commutator term.}
	
	Using
	\(
	[\cN_Q,\lambda \dG(h)]=0,
	\)
we obtain
	\begin{equation}\label{eq:BQ-comm-reduction}
		[B_Q,[\mathbb{H}_\lambda^{\mathrm{re}},B_Q]]
		=
		\frac{\lambda^2}{4}\,
		[\cN_Q,[W^{\mathrm{re}},\cN_Q]].
	\end{equation}
	Fix $\varepsilon\in(0,\frac\alpha4)$ and choose
	\[
	s_1:=\varepsilon,
	\qquad
	s_2:=\frac12+\varepsilon,
	\qquad
	\theta:=s_1+s_2=\frac12+2\varepsilon<1.
	\]
	Then Lemma~\ref{lem:com1}, together with \eqref{eq:BQ-comm-reduction}, gives
	\begin{equation}\label{eq:BQ-comm-bound-final}
		\Tr\!\Big([B_Q,[\mathbb{H}_\lambda^{\mathrm{re}},B_Q]]\Gamma_\lambda\Big)
		\le
		C_\varepsilon\lambda^4\Lambda^{2-\theta}
		\Tr\!\big[\mathbb{H}_{s_1}\mathbb{H}_{s_2}\Gamma_\lambda\big].
	\end{equation}
	Now Lemma~\ref{lem:holder}, in the form \eqref{eq:shell-energy-number-moment-local}, yields
	\[
	\lambda^4
	\Tr\!\big[\mathbb{H}_{s_1}\mathbb{H}_{s_2}\Gamma_\lambda\big]
	\le
	C_\varepsilon\lambda^{2-\theta}(1+|\log\lambda|^2).
	\]
	Therefore
	\[
	\Tr\!\Big([B_Q,[\mathbb{H}_\lambda^{\mathrm{re}},B_Q]]\Gamma_\lambda\Big)
	\le
	C_\varepsilon\lambda^{2-\theta}\Lambda^{2-\theta}(1+|\log\lambda|^2).
	\]
	Since \(\theta=\frac12+2\varepsilon\) and \(\varepsilon\) small enough,
	\[
	\lambda^{2-\theta}\Lambda^{2-\theta}
	=
	\lambda^{(2-\theta)(1-\alpha)}
	\le
	\lambda^{3/2-2\alpha}
	=
	\lambda^{3/2}\Lambda^2.
	\]
	Hence
	\[
	\Tr\!\Big([B_Q,[\mathbb{H}_\lambda^{\mathrm{re}},B_Q]]\Gamma_\lambda\Big)
	\le
	C\lambda^{\frac32}\Lambda^2(1+|\log\lambda|^2).
	\]
	Since \eqref{eq:aQlambda-o1} implies $a_{Q,\lambda}\to0$, we have $a_{Q,\lambda}e^{a_{Q,\lambda}}\le
	2a_{Q,\lambda}$ for small $\lambda$. Inserting this and the last estimate into
	\eqref{eq:DNN-BQ} yields \eqref{eq:high-frequency-fluctuation-bound}.
\end{proof}


\begin{proof}[Proof of Theorem \ref{cor:HF0-vanishes}]
	Set
	\[
	A_\lambda:=\lambda^2\Tr\!\big[(\cN_Q-N_{0,Q})^2\Gamma_\lambda\big],
	\qquad
	B_\lambda:=\lambda^2\Tr\!\big[(\cN_P-N_{0,P})^2\Gamma_\lambda\big].
	\]
	By Cauchy--Schwarz,
	\[
	|HF_0(\Gamma_\lambda)|
	\le
	CA_\lambda
	+
	C B_\lambda^{1/2}A_\lambda^{1/2}.
	\]
	Using Theorem~\ref{prop:high-frequency-fluctuations} with $\Lambda=\lambda^{-\alpha}$, we obtain
	\[
	A_\lambda
	\le
	C(1+|\log\lambda|^2)\bigl(\Lambda^{-1}+\lambda^{3/2}\Lambda^2\bigr)
	=
	C(1+|\log\lambda|^2)\bigl(\lambda^\alpha+\lambda^{3/2-2\alpha}\bigr).
	\]
	On the other hand, we have
	\begin{equation}\label{eq:low-number-fluctuation-log2}B_\lambda
		\le C(1+|\log\lambda|^2).
	\end{equation}
	In fact, we have
	\[
	(\cN_P-N_{0,P})^2\le 2\cN_P^2+2N_{0,P}^2.
	\]
	For the first term, the operator inequality $\cN_P\le\cN$ gives
	\[
	\lambda^2\Tr_{\mathfrak F(H)}(\cN_P^2\Gamma_{\lambda})
	\le
	\lambda^2\Tr_{\mathfrak F(H)}(\cN^2\Gamma_\lambda)
	\le C(1+|\log\lambda|^2),
	\]
	by Proposition~\ref{prop:number-moments-log}. For the second term, $0\le P\le1$ implies
	\[
	\lambda^2N_{0,P}^2\le(\lambda N_0)^2\le C(1+|\log\lambda|^2).
	\]
	This proves \eqref{eq:low-number-fluctuation-log2}. Therefore
	\[
	|HF_0(\Gamma_\lambda)|
	\le
	C(1+|\log\lambda|^2)\bigl(\lambda^\alpha+\lambda^{3/2-2\alpha}\bigr)
	+
	C(1+|\log\lambda|^2)\bigl(\lambda^{\alpha/2}+\lambda^{3/4-\alpha}\bigr).
	\]
	Since $0<\alpha<\frac34$, one has $\lambda^\alpha\le \lambda^{\alpha/2}$ and
	$\lambda^{3/2-2\alpha}\le \lambda^{3/4-\alpha}$ for $0<\lambda<1$. Hence
\eqref{eq:HF0-vanishes-rate} and the convergence
	\eqref{eq:HF0-vanishes} follow immediately.
\end{proof}

\section{Vanishing of the shell contribution}
\label{sec:shell-vanishing-rewritten}

Recall from Section~\ref{sec:dec} that
\[
\mathcal E_{\rm shell}(\Gamma)
:=
\frac{\lambda^2}{2}
\Tr_{H^{\otimes 2}}
\Big[
\bigl(\Pi_1V_{\neq0}\Pi_1-\Pi V_{\neq0}\Pi\bigr)\Gamma^{(2)}
\Big],
\]
where
\[
P:=\1_{\{h\le \Lambda^2\}},
\qquad
P_1:=\1_{\{\Lambda^2<h\le \Lambda_1^2\}},
\qquad
S:=P+P_1=\1_{\{h\le \Lambda_1^2\}},
\]
and
\[
\Pi:=P^{\otimes 2},
\qquad
\Pi_1:=S^{\otimes 2}.
\]

Write throughout this section
\[
L_\lambda:=1+|\log\lambda|,\qquad \Lambda=\lambda^{-\alpha},\qquad\Lambda_1=\lambda^{-\alpha_1}, \qquad\alpha\in(0,1/4), \quad\alpha_1\in(1/2,3/4).
\]
The main conclusion of this section is Theorem~\ref{prop:shell-vanishing-local} below: the
shell contribution vanishes.

\begin{theorem}[Vanishing of the shell contribution]
	\label{prop:shell-vanishing-local}
	{\rmk{Fix $\frac32<\beta\le2$ and choose $\eta\in(2-\beta,\beta-1)$ and \(\frac34-\alpha_1-\frac\alpha2(2-\beta)>0\). Then}}
	\begin{equation}
		\label{eq:shell-vanishing-rate-local}
		|\mathcal E_{\rm shell}(\Gamma_\lambda)|
		\le
		{\rmk{C_\beta(1+|\log\lambda|)^{5/2}\Lambda^{(2-\beta)/2}\bigl(\Lambda^{-\eta/2}+\lambda^{3/4}\Lambda_1\bigr)
		+
		C_\beta\lambda (1+|\log\lambda|^2)\Lambda_1^{2-\beta}\longrightarrow0.}}
	\end{equation}
\end{theorem}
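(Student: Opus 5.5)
The plan is to rewrite the shell operator mode by mode in terms of localized density fluctuations, then control each mode by the second-order correlation inequality, exactly as was done for $\cN_Q$ in Section~\ref{sec:high0}. Since $\Pi_1V_{\neq0}\Pi_1-\Pi V_{\neq0}\Pi=\sum_{k\neq0}\hvb(k)\bigl(SM_{e_k}S\otimes SM_{e_{-k}}S-PM_{e_k}P\otimes PM_{e_{-k}}P\bigr)$, we set $T_k:=SM_{e_k}S$ and $T_k^P:=PM_{e_k}P$, and use the second-quantization identity for a product of one-body operators, $\mathbb W(A\otimes B)=\dG(A)\dG(B)-\dG(AB)$. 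This gives
\[
\mathcal E_{\rm shell}(\Gamma)=\frac{\lambda^2}{2}\sum_{k\neq0}\hvb(k)\,\Tr\Bigl[\bigl(\dG(T_k)\dG(T_k)^*-\dG(T_k^P)\dG(T_k^P)^*\bigr)\Gamma\Bigr]-\frac{\lambda^2}{2}\sum_{k\neq0}\hvb(k)\Tr\bigl[\bigl(T_kT_k^*-T_k^PT_k^{P*}\bigr)\Gamma^{(1)}\bigr].
\]

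\textbf{The one-body term.} Here $T_kT_{-k}-T^P_kT^P_{-k}$ is supported in $S$ and has trace norm weight only from momenta in the shell or crossing into it. We sum $\hvb(k)$ over the $|k|\lesssim\Lambda_1$ that can occur, which gives $\sum_{|k|\le\Lambda_1}\langle k\rangle^{-\beta}\lesssim\Lambda_1^{2-\beta}$. Combined with $\lambda\Tr(\cN\Gamma_\lambda)\lesssim L_\lambda$ from Proposition~\ref{prop:number-moments-log}, this produces the term $C_\beta\lambda L_\lambda^2\Lambda_1^{2-\beta}$. It vanishes because $\alpha_1<3/4<1/(2-\beta)$.

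\textbf{The quadratic term.} Write $\dG(T_k)=\dG(T_k^P)+X_k$, where $X_k:=\dG(T_k-T_k^P)$ collects the shell-crossing transitions $P\to P_1$, $P_1\to P$ and $P_1\to P_1$. The difference then becomes $\dG(T^P_k)X_k^*+X_k\dG(T^P_k)^*+X_kX_k^*$. Since $\Tr(\gamma_0(T_k-T_k^P))=0$ for $k\neq0$, the $X_k$ are automatically centered. We polarize into self-adjoint pieces $\frac{\lambda}{2}\mathrm{Re}X_k$ and $\frac{\lambda}{2}\mathrm{Im}X_k$ and apply Cauchy--Schwarz in $k$ with weights:
\[
\Bigl|\sum_k\hvb(k)\lambda^2\langle\dG(T_k^P)X_k^*\rangle\Bigr|\le\Bigl(\sum_k\hvb(k)\langle k\rangle^{\eta}\lambda^2\langle|\dG(T^P_k)|^2\rangle\Bigr)^{1/2}\Bigl(\sum_k\hvb(k)\langle k\rangle^{-\eta}\lambda^2\langle|X_k|^2\rangle\Bigr)^{1/2}.
\]

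\textbf{Bounding the two factors.} The first factor is controlled, for $|k|\lesssim\Lambda$, by $\lambda^2\langle|\dG(T_k^P)|^2\rangle\lesssim L_\lambda^2$ (Proposition~\ref{prop:number-moments-log}). Summing $\langle k\rangle^{\eta-\beta}$ up to $\Lambda$ gives $\Lambda^{2+\eta-\beta}$. To keep only the loss $\Lambda^{(2-\beta)/2}$ appearing in the statement, I would instead bound it by the variance $\lambda^2\langle|\dG(T_k^P)|^2\rangle\lesssim L_\lambda\langle k\rangle^{-\eta'}$, obtained from the correlation inequality. For the second factor, for each $k$ we apply Theorem~\ref{thm:correlation-intro} to $B=\frac{\lambda}{2}\mathrm{Re}X_k$ (and to the imaginary part). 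This requires two inputs.
\begin{enumerate}
\item The first moment under the perturbed states, via Proposition~\ref{prop:one-body-HS-safe} and the bound $\lambda|\Tr((\gamma_{\lambda,t}-\gamma_{0,t})Y)|\le CL^2_\lambda\|h^{-1/2}Yh^{-1/2}\|_{\rm HS}$. Here $\|h^{-1/2}(T_k-T_k^P)h^{-1/2}\|_{\rm HS}^2\lesssim\Lambda^{-2s}\langle k\rangle^{-(2-2s)}$, which gives the factor $\Lambda^{-\eta/2}$ after summing against $\hvb(k)\langle k\rangle^{-\eta}$. This uses $\eta>2-\beta$ for summability.
\item The double commutator $[X_k,[W^{\rm re},X_k^*]]$, treated by the block decomposition and Lemma~\ref{lem:aH-S} as in Lemma~\ref{lem:com1}. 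Every surviving monomial carries an $S$-leg, so there is a cutoff at $\Lambda_1$. Combined with Lemma~\ref{lem:holder}, this yields $\lambda^{3/2}\Lambda_1^2$ per mode up to logarithms, i.e.\ $\lambda^{3/4}\Lambda_1$ after the square root. Summing over $k$ requires $\sum_k\hvb(k)^2\langle k\rangle^{-\eta}\ldots<\infty$, which uses $\eta<\beta-1$.
\end{enumerate}
The diagonal $X_kX_k^*$ piece is handled by the same variance bound. The hypothesis $\frac34-\alpha_1-\frac\alpha2(2-\beta)>0$ then ensures that the product $\Lambda^{(2-\beta)/2}\lambda^{3/4}\Lambda_1$ vanishes.

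\textbf{Main obstacle.} The hardest step is the uniform-in-$k$ double commutator bound for $X_k$. Unlike $\cN_Q$, the observable $X_k$ does not commute with $\dG(h)$. One therefore also needs $[X_k,[\lambda\dG(h),X_k^*]]=\lambda\dG$ of a one-body operator bounded by $|k|^2$-weighted shell projections. This must be checked to give at most $\lambda^2\cdot\lambda\Lambda_1^2\langle\cN_S\rangle\lesssim\lambda^2\Lambda_1^2L_\lambda$, and it is precisely here that $\alpha_1<3/4$ enters. I would first try to bound this kinetic commutator by $\lambda^3|k|(\Lambda_1+|k|)\langle\cN\rangle$. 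The quartic part follows the Schur/Hilbert--Schmidt route of Lemma~\ref{lem:com1} with $S_k^{(S)}$ in place of $S_k^{(P)}$.
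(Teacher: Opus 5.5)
\textbf{Gap in the low--low factor.} Your architecture is the paper's. You use the same quadratic-plus-one-body splitting, with your $X_k$ equal to the paper's $B_k$; note that $\mathbb W(X\otimes Y)=\frac12(\dG(X)\dG(Y)-\dG(XY))$, so your prefactors are off by $\frac12$. You also use the same one-body bound, polarization of the shell modes, and Theorem~\ref{thm:correlation-intro} with the Hilbert--Schmidt first-moment input and the kinetic and quartic double commutators. The genuine gap is in the low--low factor of your Cauchy--Schwarz step.

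With your weights $\langle k\rangle^{\pm\eta}$ and the crude bound $\lambda^2\langle \dG(T^P_k)\dG(T^P_k)^*\rangle\lesssim L_\lambda^2$, the product is $L_\lambda^2\Lambda^{(2+\eta-\beta)/2}\cdot\Lambda^{-\eta/2}=L_\lambda^2\Lambda^{(2-\beta)/2}$. This does not vanish, so everything rests on your claimed variance bound $\lambda^2\langle|\dG(T^P_k)|^2\rangle\lesssim L_\lambda\langle k\rangle^{-\eta'}$.

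That bound does not follow from Theorem~\ref{thm:correlation-intro}. For $|k|=O(1)$ the polarized low observable has $a_f=\|h^{-1/2}fh^{-1/2}\|_{\rm HS}$ of order $(\sum_p h(p)^{-1}h(p+k)^{-1})^{1/2}\asymp 1$, uniformly in $\Lambda$. The available first-moment bound is then only $a\lesssim a_fL_\lambda^2\sim L_\lambda^2$, so the term $ae^a$ in \eqref{eq:StahlA2-intro} is of size $e^{CL_\lambda^2}$. Rescaling $B$ by $\varepsilon\sim L_\lambda^{-2}$ to make $a$ small only returns $O(L_\lambda^4)$, which is worse than the crude bound. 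Your bookkeeping is also inconsistent: a decaying variance bound would not be the source of the factor $\Lambda^{(2-\beta)/2}$ in the statement.

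\textbf{How the paper avoids it.} The paper uses no low-mode variance estimate. The loss $\Lambda^{(2-\beta)/2}$ is exactly what the crude bound gives \emph{without} auxiliary weights:
\begin{itemize}
\item Cauchy--Schwarz is applied with the bare weight $\hvb(\ell)$ on both factors.
\item $A_\ell A_\ell^*\le\cN^2\mathbf 1_{\{|\ell|\lesssim\Lambda\}}$ gives $C_\beta(1+\log\Lambda)\Lambda^{2-\beta}L_\lambda^2$.
\item All the smallness sits on the shell side, in the unweighted summability $\sum_\ell\hvb(\ell)\lambda^2\Tr(B_\ell^*B_\ell\Gamma_\lambda)\lesssim(\Lambda^{-\eta}+\lambda^{3/2}\Lambda_1^2)L_\lambda^2$ of Corollary~\ref{cor:shell-summability-local}.
\end{itemize}
You need that unweighted estimate anyway, because the diagonal term $B_\ell B_{-\ell}$ carries the bare weight $\hvb(\ell)$. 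So the extra weights buy nothing.

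In this scheme the constraints play roles different from those you assign. The condition $\eta<\beta-1$ is what makes $\sum_\ell\hvb(\ell)\langle\ell\rangle^{-1+\eta+\kappa}$ finite. The condition $\eta>2-\beta$ is what makes $\Lambda^{(2-\beta-\eta)/2}\to0$. The quartic-kernel $k$-sums converge simply because $|\hvb|^2\in\ell^1$.

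Since the $\ell$-sum is unweighted and $\hvb\notin\ell^1$, a uniform-in-$\ell$ per-mode bound $\lambda^{3/2}\Lambda_1^2$ would cost an extra $\Lambda_1^{2-\beta}$ relative to the stated rate. To reach $\lambda^{3/2}\Lambda_1^2$ after summation you must keep the decay $\langle\ell\rangle^{-s_1}$ of Lemma~\ref{lem:shell-W-comm-local} and take $s_1\downarrow0$, $s_2\downarrow\frac12$. It is this interaction term, not the kinetic one, that forces $\alpha_1<\frac34$; the kinetic term contributes only $\lambda^2\Lambda_1^{4-\beta}\log\Lambda_1$ after summation.
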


The rest of Section~\ref{sec:shell-vanishing-rewritten} is devoted to proving this estimate from the
moment bounds. The strategy is parallel to the analysis of the zero-mode remainder in Section~\ref{sec:high0}, but the relevant observables are now the shell-mode fluctuations. We first rewrite $\mathcal E_{\rm shell}$ in terms of the localized operators $A_\ell$ and $B_\ell$, and then reduce the problem to weighted estimates on the shell channels. For each polarized shell observable we apply Theorem~\ref{thm:correlation-intro}, which requires us to bound the double commutators with both the kinetic part and the interaction part of the Hamiltonian. Since $\hvb$ is not summable, the interaction commutator is again written as a quartic operator; the asymmetric quartic Hilbert--Schmidt bound from Lemma~\ref{lem:aH-S}, combined with the a priori kinetic estimates, then yields the required mode-by-mode control.

For the three shell channels we write
\[
\mathcal C_{\rm shell}
:=
\{(P,P_1),(P_1,P),(P_1,P_1)\},
\qquad
\rho_\ell^{A,B}:=\dG(Ae_\ell B),
\qquad
(A,B)\in \mathcal C_{\rm shell}.
\]
It is also convenient to introduce
\begin{equation}
	\label{eq:def-shell-A-B}
	A_\ell:=\dG(Pe_\ell P),
	\qquad
	B_\ell:=\dG(Se_\ell S)-\dG(Pe_\ell P)
	=
	\sum_{(A,B)\in \mathcal C_{\rm shell}}\rho_\ell^{A,B}.
\end{equation}

We first rewrite $\mathcal E_{\rm shell}$ in terms of the operators $A_\ell$ and $B_\ell$.

\begin{lemma}[Quadratic decomposition of the shell part]
	\label{lem:shell-decomposition-local}
	There exists a one-body remainder $R_{\rm shell}^{(1)}(\Gamma)$ such that
	\begin{equation}
		\label{eq:shell-decomposition-local}
		\mathcal E_{\rm shell}(\Gamma)
		=
		\frac{\lambda^2}{4}
		\sum_{\ell\neq 0}\hvb(\ell)\,
		\Tr\Big[\Big(
		A_\ell B_{-\ell}+B_\ell A_{-\ell}+B_\ell B_{-\ell}
		\Big)\Gamma\Big]
		+
		R_{\rm shell}^{(1)}(\Gamma),
	\end{equation}
	where $A_\ell$ and $B_\ell$ are defined in \eqref{eq:def-shell-A-B}.  Moreover,
	\begin{equation}
		\label{eq:shell-remainder-local}
		{\color{red}|R_{\rm shell}^{(1)}(\Gamma_\lambda)|
		\le
		C_\beta\lambda L_\lambda(1+\log\Lambda_1)\Lambda_1^{2-\beta}.}
	\end{equation}
\end{lemma}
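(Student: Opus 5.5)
The plan is to write the compressed two-body operators in second quantization and use the CCR normal-ordering identity, exactly as in \eqref{eq:keyidentity}, but now for compressed densities.

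\textbf{Step 1: second quantization of the compressions.} For a projection $X\in\{P,S\}$, the kernel of $X^{\otimes2}(M_{e_\ell}\otimes M_{e_{-\ell}})X^{\otimes2}$ lifts to
\[
\mathbb W\bigl(X^{\otimes2}(M_{e_\ell}\otimes M_{e_{-\ell}})X^{\otimes2}\bigr)
=\sum_{p,q}X_{p+\ell}X_pX_{q-\ell}X_q\,\frac{1}{(2\pi)^2}\,a_{p+\ell}^\ast a_{q-\ell}^\ast a_qa_p .
\]
Commuting $a_p$ past $a_{q-\ell}^\ast$ gives
\[
\dG(Xe_\ell X)\,\dG(Xe_{-\ell}X)-\dG(Xe_\ell X e_{-\ell}X).
\]
Up to the factor conventions (the $\tfrac12$ from $\mathbb W$ of a symmetric two-body operator versus $\Tr_{H^{\otimes2}}$ of $\Gamma^{(2)}$, and the two orderings), this gives
\[
\frac{\lambda^2}{2}\Tr\bigl[X^{\otimes 2}V_{\neq0}X^{\otimes2}\Gamma^{(2)}\bigr]
=\frac{\lambda^2}{4}\sum_{\ell\ne0}\hvb(\ell)\Tr\Bigl[\bigl(\rho^X_\ell\rho^X_{-\ell}-\dG(Xe_\ell Xe_{-\ell}X)\bigr)\Gamma\Bigr],
\qquad \rho^X_\ell=\dG(Xe_\ell X).
\]
The normalisation should be checked against \eqref{eq:W-density-square}. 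I then take the difference between $X=S$ and $X=P$. Using $\dG(Se_\ell S)=A_\ell+B_\ell$, the quadratic parts give
\[
(A_\ell+B_\ell)(A_{-\ell}+B_{-\ell})-A_\ell A_{-\ell}=A_\ell B_{-\ell}+B_\ell A_{-\ell}+B_\ell B_{-\ell},
\]
which is \eqref{eq:shell-decomposition-local}. The remainder is therefore the one-body term
\[
R^{(1)}_{\rm shell}(\Gamma)=-\frac{\lambda^2}{4}\sum_{\ell\ne0}\hvb(\ell)\,\Tr\bigl[\bigl(Se_\ell Se_{-\ell}S-Pe_\ell Pe_{-\ell}P\bigr)\gamma_\Gamma\bigr],
\]
where $\gamma_\Gamma$ is the one-body density matrix. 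This step is purely algebraic and finite, since $S$ has finite rank, so no convergence issues arise.

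\textbf{Step 2: estimate of the remainder.} In momentum space, $Se_\ell Se_{-\ell}S$ is diagonal:
\[
(2\pi)^{-2}\sum_p S_p\,S_{p+\ell}\,|p\rangle\langle p|.
\]
The same holds for $P$. Hence the one-body operator
\[
T:=\sum_{\ell\neq0}\hvb(\ell)\bigl(Se_\ell Se_{-\ell}S-Pe_\ell Pe_{-\ell}P\bigr)
\]
is a diagonal multiplication by
\[
t(p)=(2\pi)^{-2}\sum_{\ell\ne0}\hvb(\ell)\bigl(S_pS_{p+\ell}-P_pP_{p+\ell}\bigr).
\]
This is bounded by
\[
|t(p)|\le C\,S_p\sum_{|p+\ell|\lesssim\Lambda_1}\langle\ell\rangle^{-\beta}.
\]
Since $S_p$ forces $|p|\lesssim\Lambda_1$, the constraint $S_{p+\ell}$ forces $|\ell|\lesssim\Lambda_1$, so the sum is at most $C\sum_{|\ell|\lesssim\Lambda_1}\langle\ell\rangle^{-\beta}\le C_\beta\Lambda_1^{2-\beta}$. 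The bound is uniform in $p$, with a log factor at $\beta=2$, which accounts for the $(1+\log\Lambda_1)$. Thus $\|T\|\le C_\beta(1+\log\Lambda_1)\Lambda_1^{2-\beta}$, and
\[
|R^{(1)}_{\rm shell}(\Gamma_\lambda)|\le C\lambda^2\|T\|\,\Tr(\cN\Gamma_\lambda)\le C_\beta\lambda L_\lambda(1+\log\Lambda_1)\Lambda_1^{2-\beta},
\]
by Proposition~\ref{prop:number-moments-log} with $t=0$, which gives $\lambda\Tr(\cN\Gamma_\lambda)\le CL_\lambda$.

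The only delicate point is bookkeeping: matching the $\frac12$ factors and the symmetrisation between $\Tr_{H^{\otimes_s2}}(\cdot\,\Gamma^{(2)})$ and the Fock-space quartic operator, and making sure that no cross term mixing $P$- and $S$-compressions is lost when subtracting. I will verify this by checking the diagonal case $X=1$ against \eqref{eq:keyidentity}.
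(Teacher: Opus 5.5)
Your proposal is correct and follows the paper's proof essentially step by step. Both arguments apply $\mathbb W(X\otimes Y)=\frac12\bigl(\dG(X)\dG(Y)-\dG(XY)\bigr)$ to the $S$- and $P$-compressions, expand $(A_\ell+B_\ell)(A_{-\ell}+B_{-\ell})-A_\ell A_{-\ell}$, and bound the diagonal one-body remainder by $C_\beta(1+\log\Lambda_1)\Lambda_1^{2-\beta}$ before using $\lambda\Tr(\cN\Gamma_\lambda)\le CL_\lambda$. One small slip: the Fourier symbol of $Se_\ell Se_{-\ell}S$ is $(2\pi)^{-2}S_pS_{p-\ell}$, not $S_pS_{p+\ell}$, but this is immaterial once you sum against the even weight $\hvb(\ell)$.
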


\begin{proof}
	We expand the quadratic part first and isolate the one-body remainder afterwards.
	
	Using the Fourier representation
	\[
	V_{\neq0}
	=
	\sum_{\ell\neq 0}\hvb(\ell)
	(M_{e_\ell}\otimes M_{e_{-\ell}})
	\]
	and the bosonic quantization identity
	\[
	\mathbb{W}(X\otimes Y)=\frac12\bigl(\dG(X)\dG(Y)-\dG(XY)\bigr),
	\]
	we obtain
	\[
	\mathbb{W}(\Pi_1V_{\neq0}\Pi_1)
	=
	\frac{1}{2}
	\sum_{\ell\neq0}\hvb(\ell)
	\Big(
	\dG(Se_\ell S)\,\dG(Se_{-\ell}S)
	-
	\dG(Se_\ell Se_{-\ell}S)
	\Big),
	\]
	and the same formula with $S$ replaced by $P$ for $\Pi V_{\neq0}\Pi$.
	
	Since
	\[
	\dG(Se_\ell S)=A_\ell+B_\ell,
	\]
	the quadratic difference is exactly
	\[
	(A_\ell+B_\ell)(A_{-\ell}+B_{-\ell})-A_\ell A_{-\ell}
	=
	A_\ell B_{-\ell}+B_\ell A_{-\ell}+B_\ell B_{-\ell}.
	\]
	This is the quadratic part appearing in \eqref{eq:shell-decomposition-local}. The leftover
	one-body contribution is therefore
	\[
	R_{\rm shell}^{(1)}(\Gamma)
	=
	-\frac{\lambda^2}{4}
	\sum_{\ell\neq0}\hvb(\ell)
	\Tr\!\Big(\dG(Se_\ell Se_{-\ell}S-Pe_\ell Pe_{-\ell}P)\Gamma\Big).
	\]
	
	We then bound the one-body remainder.
	Introduce
	\[
	K_{\rm shell}
	:=
	\sum_{\ell\neq0}\hvb(\ell)
	\bigl(Se_\ell Se_{-\ell}S-Pe_\ell Pe_{-\ell}P\bigr).
	\]
	{\rmk{This operator is diagonal in the Fourier basis, and its symbol is bounded by
	\[
	\sup_p \sum_{q:\,h(q)\leq \Lambda_1^2}\hvb(q-p)
	\le C_\beta(1+\log\Lambda_1)\Lambda_1^{2-\beta}.
	\]
	Hence
	\[
	\|K_{\rm shell}\|\le C_\beta(1+\log\Lambda_1)\Lambda_1^{2-\beta}.
	\]
	It follows that
	\[
	|R_{\rm shell}^{(1)}(\Gamma)|
	\le C_\beta\lambda^2(1+\log\Lambda_1)\Lambda_1^{2-\beta}\Tr(\cN\Gamma).
	\]
	which by Proposition~\ref{prop:number-moments-log-safe} gives
	\eqref{eq:shell-remainder-local}.}}
\end{proof}
\begin{lemma}[Polarization of the shell channels]
	\label{lem:shell-polarization-local}
	For $(A,B)\in \mathcal C_{\rm shell}$ and $\ell\in\mathbb Z^2$, define
	\[
	g_\ell^{A,B}:=Ae_\ell B,
	\]
	and
	\[
	f_{\ell,+}^{A,B}:=g_\ell^{A,B}+(g_\ell^{A,B})^*,
	\qquad
	f_{\ell,-}^{A,B}:=i(g_\ell^{A,B}-(g_\ell^{A,B})^*).
	\]
	Then
	\begin{equation}
		\label{eq:shell-polarization-local}
		(\rho_\ell^{A,B})^*\rho_\ell^{A,B}
		+
		(\rho_{-\ell}^{B,A})^*\rho_{-\ell}^{B,A}
		=
		\frac12
		\Big(
		\dG(f_{\ell,+}^{A,B})^2+\dG(f_{\ell,-}^{A,B})^2
		\Big).
	\end{equation}
	In particular,
	\begin{equation}
		\label{eq:shell-polarization-ineq-local}
		(\rho_\ell^{A,B})^*\rho_\ell^{A,B}
		\le
		\frac12
		\Big(
		\dG(f_{\ell,+}^{A,B})^2+\dG(f_{\ell,-}^{A,B})^2
		\Big).
	\end{equation}
\end{lemma}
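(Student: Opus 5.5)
The plan is a purely algebraic computation in terms of the single operator $X:=\rho_\ell^{A,B}=\dG(g_\ell^{A,B})$. All identities are understood as quadratic forms on $\mathfrak F_{\mathrm{fin}}$, or equivalently on each $n$-particle sector $\gH_n$. This is unproblematic: for $(A,B)\in\mathcal C_{\rm shell}$ both $A$ and $B$ are among the finite-rank spectral projections $P,P_1$. Hence $g_\ell^{A,B}$ is a bounded finite-rank operator and $\dG(g_\ell^{A,B})$ is bounded on every $\gH_n$, with norm at most $n\|g_\ell^{A,B}\|$. All products below are then well defined, and the second-quantization rules apply.

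\textbf{Step 1: identify the adjoint.} Since $M_{e_\ell}^*=M_{\overline{e_\ell}}=M_{e_{-\ell}}$ and $A,B$ are orthogonal projections, we have
\[
(g_\ell^{A,B})^*=(Ae_\ell B)^*=Be_{-\ell}A=g_{-\ell}^{B,A}.
\]
Second quantization is linear and commutes with adjoints, $\dG(g)^*=\dG(g^*)$, so
\[
(\rho_\ell^{A,B})^*=\dG\bigl((g_\ell^{A,B})^*\bigr)=\rho_{-\ell}^{B,A}=X^*.
\]
Therefore the left-hand side of \eqref{eq:shell-polarization-local} equals $X^*X+XX^*$.

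\textbf{Step 2: expand the squares.} By linearity, $\dG(f_{\ell,+}^{A,B})=X+X^*$ and $\dG(f_{\ell,-}^{A,B})=\ii(X-X^*)$. Expanding, and keeping track of the order of the non-commuting factors,
\[
(X+X^*)^2=X^2+XX^*+X^*X+(X^*)^2,
\qquad
\bigl(\ii(X-X^*)\bigr)^2=-X^2+XX^*+X^*X-(X^*)^2 .
\]
Adding the two lines, the terms $X^2$ and $(X^*)^2$ cancel and we are left with $2(XX^*+X^*X)$. Halving this gives exactly \eqref{eq:shell-polarization-local}.

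\textbf{Step 3: the inequality.} The term $(\rho_{-\ell}^{B,A})^*\rho_{-\ell}^{B,A}=XX^*$ is nonnegative as a quadratic form. Discarding it from the left-hand side of \eqref{eq:shell-polarization-local} yields \eqref{eq:shell-polarization-ineq-local}.

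There is no genuine obstacle here. The only points needing care are the adjoint identification $(Ae_\ell B)^*=Be_{-\ell}A$ and keeping the cross terms $XX^*$ and $X^*X$ in their correct order, since $X$ and $X^*$ need not commute.
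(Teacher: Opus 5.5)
Your proposal is correct and follows the same route as the paper: set $X=\rho_\ell^{A,B}$ and identify $X^*=\rho_{-\ell}^{B,A}$. Then apply the elementary identity $X^*X+XX^*=\frac12\bigl((X+X^*)^2+(\ii(X-X^*))^2\bigr)$, and drop the nonnegative term $XX^*$ to get the inequality. Your remarks on the finite rank of $g_\ell^{A,B}$ and on the order of the cross terms are sound extra care that the paper leaves implicit.
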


\begin{proof}

	Now use the elementary identity
	\[
	X^*X+XX^*=
	\frac12\Big((X+X^*)^2+(i(X-X^*))^2\Big).
	\]
	with $X:=\dG(Ae_\ell B)=\rho_\ell^{A,B},X^*=\dG(Be_{-\ell}A)=\rho_{-\ell}^{B,A}$, we obtain
	\eqref{eq:shell-polarization-local}. Since the second term on the left-hand side of
	\eqref{eq:shell-polarization-local} is nonnegative, \eqref{eq:shell-polarization-ineq-local}
	follows immediately.
\end{proof}

In order to control the shell modes, it therefore suffices to estimate the right-hand side of \eqref{eq:shell-polarization-ineq-local}. This will be done by applying the second-order correlation inequality in Theorem~\ref{thm:correlation-intro} to the polarized observables $\dG(f_{\ell,\pm}^{A,B})$.

\begin{lemma}[Hilbert--Schmidt bounds for the shell observables]
	\label{lem:shell-HS-local}
	Fix $\eta\in(0,1)$. Then for every $(A,B)\in\mathcal C_{\rm shell}$ and every
	$\ell\in\mathbb Z^2$,
	\begin{equation}
		\label{eq:shell-HS-local}
		\bigl\|
		h^{-1/2}f_{\ell,\pm}^{A,B}h^{-1/2}
		\bigr\|_{\mathrm{HS}}
		\le
		C_\eta\,
		\langle\ell\rangle^{-1+\eta+\kappa}\Lambda^{-\eta}\,\mathbf 1_{\{|\ell|\lesssim \Lambda_1\}},
	\end{equation}
	for $\kappa>0$.
\end{lemma}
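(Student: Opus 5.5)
The bound is an explicit Fourier-side computation. I will reduce it to the matrix elements of $g_\ell^{A,B}=Ae_\ell B$ and then sum a lattice series.

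\textbf{Reduction.} By the triangle inequality,
\[
\|h^{-1/2}f_{\ell,\pm}^{A,B}h^{-1/2}\|_{\mathrm{HS}}\le 2\|h^{-1/2}g_\ell^{A,B}h^{-1/2}\|_{\mathrm{HS}},
\]
since $(g_\ell^{A,B})^*=Be_{-\ell}A$ has the same weighted Hilbert--Schmidt norm. In the Fourier basis, $M_{e_\ell}$ maps $e_p$ to $(2\pi)^{-1}e_{p+\ell}$, and $A,B$ are diagonal spectral projections. Hence
\[
\|h^{-1/2}g_\ell^{A,B}h^{-1/2}\|_{\mathrm{HS}}^2=(2\pi)^{-2}\sum_{p}A_{p+\ell}B_p\,h(p+\ell)^{-1}h(p)^{-1},
\]
where $A_r,B_r\in\{0,1\}$ are the indicators of the projections.

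\textbf{Support in $\ell$.} Every channel in $\mathcal C_{\rm shell}$ has both legs in $SH$, so $|p|,|p+\ell|\lesssim\Lambda_1$. Therefore $|\ell|\lesssim\Lambda_1$, which gives the indicator $\mathbf 1_{\{|\ell|\lesssim\Lambda_1\}}$.

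\textbf{The factor $\Lambda^{-\eta}$.} Each channel contains at least one leg in $P_1H$, so at least one of $h(p),h(p+\ell)$ exceeds $\Lambda^2$. On that leg, and for any $\eta\in(0,1)$, I would use
\[
h(\cdot)^{-1}\le \Lambda^{-2\eta}h(\cdot)^{-1+\eta}.
\]
The squared norm is then bounded by $\Lambda^{-2\eta}$ times
\[
\sum_p h(p+\ell)^{-1+\eta}h(p)^{-1}+\sum_p h(p+\ell)^{-1}h(p)^{-1+\eta},
\]
where I drop the remaining projections for an upper bound.

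\textbf{The lattice convolution.} This is the main step: showing
\[
\sum_{p\in\mathbb Z^2}\langle p+\ell\rangle^{-2+2\eta}\langle p\rangle^{-2}\le C_{\eta,\kappa}\,\langle\ell\rangle^{-2+2\eta+2\kappa}.
\]
I would use the standard split (the same type as Lemma~\ref{lem:sum} and Lemma~\ref{lem:convolution-kernels}).

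\emph{Region $|p|\le|\ell|/2$.} Here $\langle p+\ell\rangle\sim\langle\ell\rangle$. This contributes $\langle\ell\rangle^{-2+2\eta}\sum_{|p|\le|\ell|}\langle p\rangle^{-2}\lesssim\langle\ell\rangle^{-2+2\eta}\log\langle\ell\rangle$.

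\emph{Region $|p+\ell|\le|\ell|/2$.} Here $\langle p\rangle\sim\langle\ell\rangle$. This contributes $\langle\ell\rangle^{-2}\sum_{|r|\lesssim|\ell|}\langle r\rangle^{-2+2\eta}\lesssim\langle\ell\rangle^{-2+2\eta}$.

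\emph{Remaining region.} Here both factors are $\gtrsim\langle p\rangle$ and $|p|\gtrsim|\ell|$, giving $\sum_{|p|\gtrsim|\ell|}\langle p\rangle^{-4+2\eta}\lesssim\langle\ell\rangle^{-2+2\eta}$. This uses $\eta<1$ for convergence.

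The logarithm in the first region is absorbed into $\langle\ell\rangle^{2\kappa}$, which is why the statement allows an arbitrary $\kappa>0$. The symmetric sum is handled identically.

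\textbf{Conclusion.} Taking square roots gives $C_\eta\langle\ell\rangle^{-1+\eta+\kappa}\Lambda^{-\eta}$. No Gibbs-state information is needed, so the only delicate points are the cutoff-gain step and the logarithmic loss, which is tracked through $\kappa$.
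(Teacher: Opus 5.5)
Your proposal is correct and follows the same route as the paper: compute the weighted Hilbert--Schmidt norm of $Ae_\ell B$ in the Fourier basis, gain $\Lambda^{-2\eta}$ on the shell leg via $h^{-1}\le \Lambda^{-2\eta}h^{-1+\eta}$, and finish with a two-dimensional lattice convolution bound. The only difference is that you prove that convolution estimate directly by a three-region split. This shows that $\kappa$ only absorbs a logarithm, and your support argument also justifies the indicator $\mathbf 1_{\{|\ell|\lesssim\Lambda_1\}}$. The paper instead cites Lemma~\ref{lem:sum}, whose part (i) strictly requires lowering the exponent $2$ by $\kappa$ before it applies.
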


\begin{proof}
	We explain the argument first for the mixed channel $(A,B)=(P,P_1)$; the other cases are
	obtained from the same computation after a harmless relabelling.
	
	\smallskip
	For $g_\ell^{P,P_1}=Pe_\ell P_1$, we have
	\[
	\bigl\|h^{-1/2}g_\ell^{P,P_1}h^{-1/2}\bigr\|_{\mathrm{HS}}^2
	=
	\sum_{u\in\mathbb Z^2}
	P_{u+\ell}P_{1,u}\,h(u+\ell)^{-1}h(u)^{-1}.
	\]
	Whenever $P_{1,u}\neq0$, the momentum $u$ lies in the shell, hence $h(u)\ge \Lambda^2$.
	Therefore, for every $\eta\in(0,1)$,
	\[
	h(u)^{-1}\le \Lambda^{-2\eta}h(u)^{-1+\eta}.
	\]
	Substituting this into the previous sum gives
	\[
	\bigl\|h^{-1/2}g_\ell^{P,P_1}h^{-1/2}\bigr\|_{\mathrm{HS}}^2
	\le
	\Lambda^{-2\eta}
	\sum_{u}P_{u+\ell}h(u+\ell)^{-1}h(u)^{-1+\eta}
	\le C_\eta\langle\ell\rangle^{-2+2\eta+\kappa}\Lambda^{-2\eta},
	\]
	where we used Lemma \ref{lem:sum} in the last step.
	The case $P_1e_\ell P$ is identical after exchanging the two momentum variables. For
	$P_1e_\ell P_1$, one may extract the same factor $\Lambda^{-2\eta}$ from either of the two
	heat weights, because both sides lie in the shell. Finally,
	$f_{\ell,\pm}^{A,B}$ is the sum of at most two such elementary pieces, so the triangle
	inequality yields \eqref{eq:shell-HS-local} for the polarized observables as well.
\end{proof}

\begin{lemma}[Double commutators with $h$]
	\label{lem:shell-h-comm-local}
	For every $(A,B)\in\mathcal C_{\rm shell}$ and every $\ell\neq0$,
	\begin{equation}
		\label{eq:shell-h-comm-local}
		\bigl\|
		[f_{\ell,\pm}^{A,B},[f_{\ell,\pm}^{A,B},h]]
		\bigr\|
		\le
		C\Lambda_1^2\,\mathbf 1_{\{|\ell|\lesssim \Lambda_1\}}.
	\end{equation}
\end{lemma}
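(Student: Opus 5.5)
The double commutator will be computed in the Fourier basis, where $h$ is the multiplication operator $h(p)=|p|^2+1$. Write $f:=f_{\ell,\pm}^{A,B}$. Its kernel is $\langle e_p,fe_q\rangle=\frac{1}{2\pi}\bigl(c_{p,q}\delta_{p,q+\ell}+c'_{p,q}\delta_{p,q-\ell}\bigr)$, with coefficients $|c|,|c'|\le1$ that are products of indicators $A_p B_q$ or $B_p A_q$ (up to the factor $\ii$ for the minus case). So $f$ is a sum of at most two "shift by $\pm\ell$" operators, compressed between spectral projections of $h$ that lie below $\Lambda_1^2$. Every nonzero entry has both $p,q\in\operatorname{supp}S$, i.e. $h(p),h(q)\le\Lambda_1^2$.

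First, the support indicator. If $f\neq0$, there exist $p,q$ with $h(p),h(q)\le\Lambda_1^2$ and $p-q=\pm\ell$. Hence $|\ell|\le|p|+|q|\le2\Lambda_1$, and $f=0$ otherwise. This gives the factor $\mathbf 1_{\{|\ell|\lesssim\Lambda_1\}}$.

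Next, expand $[f,[f,h]]=f^2h-2fhf+hf^2$. Because $S$ commutes with $h$ and $f=SfS$, we have $f=SfS$ and $[f,[f,h]]=S[f,[f,Sh]]S$, so we may replace $h$ by the bounded operator $hS$ with $\|hS\|\le\Lambda_1^2$. Then
\[
\|[f,[f,h]]\|\le4\|f\|^2\|hS\|\le4\|f\|^2\Lambda_1^2 .
\]
Finally, $\|f\|\le2\|Ae_\ell B\|\le2\|M_{e_\ell}\|=2(2\pi)^{-1}$, since $A,B$ are projections. This yields the claim with a universal $C$.

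If one prefers a sharper form, the kernel of $[f,[f,h]]$ can be written explicitly: it is $\sum f_{pr}f_{rq}\,(h(p)-2h(r)+h(q))$. The differences $h(p)-h(r)=O(|\ell|\Lambda_1+|\ell|^2)$ give the same $\Lambda_1^2$ bound via Schur's test, since each row has at most four nonzero entries. There is no real obstacle here. The only point requiring care is that the bound must be taken in operator norm on the unbounded $h$. That is why we use the compression by $S$: it justifies the algebraic manipulation on $\mathrm{Ran}\,S$, and the double commutator vanishes on $\mathrm{Ran}(1-S)$ because $f(1-S)=(1-S)f=0$ and $h$ commutes with $S$.
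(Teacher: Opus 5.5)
Your proof is correct, and it takes a slightly different route from the paper's.

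The paper keeps the commutator structure. Since $A,B$ are spectral projections of $h$, it writes $[g_\ell^{A,B},h]=A[e_\ell,h]B$. In the Fourier basis this is the shift by $\ell$ multiplied by $h(p)-h(p+\ell)$. The paper bounds $|h(p+\ell)-h(p)|\le 2|p||\ell|+|\ell|^2\lesssim\Lambda_1^2$ on the support $S_pS_{p+\ell}=1$, and concludes with $\|[f,[f,h]]\|\le 2\|f\|\,\|[f,h]\|$.

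You never look at the commutator with $h$ itself. You use $f=SfS$ and $[S,h]=0$ to replace $h$ by the bounded operator $hS$, with $\|hS\|\le\Lambda_1^2$. You then bound the expanded double commutator by $4\|f\|^2\|hS\|$.

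Your route is shorter and settles the domain issue at once, since everything becomes a product of bounded operators. The price is that you discard the cancellation in $h(p)-h(p\pm\ell)$. The paper's route would in principle give the sharper, $\ell$-dependent bound $\lesssim|\ell|\Lambda_1+|\ell|^2$, though the paper does not use this and only keeps $\Lambda_1^2$. Your closing ``sharper form'' remark, with $h(p)-2h(r)+h(q)$ and Schur's test, is essentially the paper's argument.
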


\begin{proof}
	Since $A,B\in\{P,P_1\}$ are spectral projections of $h$, we have
	\[
	[A,h]=[B,h]=0.
	\]
	Hence
	\[
	[g_\ell^{A,B},h]=A[e_\ell,h]B.
	\]
	In the Fourier basis,
	\[
	[g_\ell^{A,B},h]e_p
	=
	\bigl(h(p)-h(p+\ell)\bigr)A_{p+\ell}B_p\,e_{p+\ell}.
	\]
	Therefore $[g_\ell^{A,B},h]e_p\neq0$ can occur only if $A_{p+\ell}B_p\neq0$.
	Since $(A,B)\in\mathcal C_{\rm shell}=\{(P,P_1),(P_1,P),(P_1,P_1)\}$, both $A$ and $B$
	are dominated by
	\[
	S=P+P_1=\mathbf 1_{\{h\le \Lambda_1^2\}}.
	\]
	Thus $A_{p+\ell}B_p\neq0$ implies $S_{p+\ell}S_p=1$, hence
	\[
	|p|\lesssim \Lambda_1,
	\qquad
	|p+\ell|\lesssim \Lambda_1,
	\qquad
	|\ell|\le |p|+|p+\ell|\lesssim \Lambda_1.
	\]
	On this support,
	\[
	|h(p+\ell)-h(p)|
	=
	\bigl||p+\ell|^2-|p|^2\bigr|
	\le 2|p||\ell|+|\ell|^2
	\lesssim \Lambda_1^2.
	\]
	It follows that
	\[
	\|[g_\ell^{A,B},h]\|
	\le C\Lambda_1^2\,\mathbf 1_{\{|\ell|\lesssim \Lambda_1\}}.
	\]
	The same bound holds for $(g_\ell^{A,B})^*=g_{-\ell}^{B,A}$, namely
	\[
	\|[(g_\ell^{A,B})^*,h]\|
	\le C\Lambda_1^2\,\mathbf 1_{\{|\ell|\lesssim \Lambda_1\}}.
	\]
	Moreover,
	\[
	\|g_\ell^{A,B}\|
	\le \|A\|\,\|e_\ell\|\,\|B\|
	\le 1,
	\qquad
	\|(g_\ell^{A,B})^*\|\le 1,
	\]
	so that
	\[
	\|f_{\ell,\pm}^{A,B}\|\le 2
	\]
	and
	\[
	\|[f_{\ell,\pm}^{A,B},h]\|
	\le
	\|[g_\ell^{A,B},h]\|+\|[(g_\ell^{A,B})^*,h]\|
	\le
	C\Lambda_1^2\,\mathbf 1_{\{|\ell|\lesssim \Lambda_1\}}.
	\]
	Finally, using
	\[
	[f_{\ell,\pm}^{A,B},[f_{\ell,\pm}^{A,B},h]]
	=
	f_{\ell,\pm}^{A,B}[f_{\ell,\pm}^{A,B},h]
	-
	[f_{\ell,\pm}^{A,B},h]f_{\ell,\pm}^{A,B},
	\]
	we obtain
	\[
	\bigl\|
	[f_{\ell,\pm}^{A,B},[f_{\ell,\pm}^{A,B},h]]
	\bigr\|
	\le
	2\|f_{\ell,\pm}^{A,B}\|\,\|[f_{\ell,\pm}^{A,B},h]\|
	\le
	C\Lambda_1^2\,\mathbf 1_{\{|\ell|\lesssim \Lambda_1\}}.
	\]
	This proves \eqref{eq:shell-h-comm-local}.
\end{proof}

For the interaction commutator we use the asymmetric quartic Hilbert--Schmidt estimate
proved earlier, together with two elementary convolution bounds from appendix. Recall
	\[
	\tilde c_\lambda := \frac{\lambda^2}{2(2\pi)^2},
	\qquad
	M_{p,q,k} := a^*_{p+k} a^*_{q-k} a_q a_p,
	\qquad
	W_{\neq 0}^{\mathrm{re}}
	=
	\tilde c_\lambda
	\sum_{k\neq 0}\hvb(k)\sum_{p,q\in\mathbb Z^2} M_{p,q,k},
	\]
	and
		for \(s>0\),
	\[
	S_k(s):=\sum_{u\in\mathbb Z^2} h(u+k)^{-s}h(u)^{-s},
	\qquad
	S_k^{(S)}(s):=\sum_{u\in\mathbb Z^2} S_u\,h(u+k)^{-s}h(u)^{-s}.
	\]

\begin{lemma}[Double commutator with $W_{\neq0}^{\rm re}$ for the shell modes]
	\label{lem:shell-W-comm-local}
	Fix
	\(
	0<s_1\le s_2<1,
	s_2>\frac12,
	s_1+s_2<1.
	\)
	Then for every $(A,B)\in\mathcal C_{\rm shell}$ and every $\ell\neq0$,
	\begin{equation}
		\label{eq:shell-W-comm-local}
		\pm
		[\dG(f_{\ell,\pm}^{A,B}),
		[\dG(f_{\ell,\pm}^{A,B}),W_{\neq0}^{\rm re}]]
		\le
		C\lambda^2\Lambda_1^{2-s_1-s_2}\langle\ell\rangle^{-s_1}
		\mathbb{H}_{s_1}\mathbb{H}_{s_2}\,\mathbf 1_{\{|\ell|\lesssim \Lambda_1\}}
	\end{equation}
	as quadratic forms on $\mathfrak F_{\rm fin}$.
\end{lemma}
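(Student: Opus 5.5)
Write $f:=f_{\ell,\pm}^{A,B}$. The plan mirrors Lemma~\ref{lem:com1}: compute the double commutator exactly as a quartic operator, identify its kernel, and bound it with the asymmetric quartic Hilbert--Schmidt estimate of Lemma~\ref{lem:aH-S}.

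\textbf{Step 1 (exact formula).} Since $f$ is one-body, $[\dG(f),a_r^*]=a^*(fe_r)$ and $[\dG(f),a_r]=-a((f)^*e_r)=-a(fe_r)$. Hence $[\dG(f),M_{p,q,k}]$ is a sum of four quartic monomials, each with exactly one leg replaced by its image under $f$. Iterating, $[\dG(f),[\dG(f),W_{\neq0}^{\rm re}]]$ equals $\tilde c_\lambda\sum_{k\neq0}\hvb(k)$ times a sum of quartic monomials in which $f$ acts twice. Either both applications hit the same leg, giving $f^2$ or $f\cdot f$ on creation and annihilation of one particle, or they hit different legs. Equivalently, the result is $\mathbb W(T)$ with
\[
T=[f\otimes1+1\otimes f,[f\otimes1+1\otimes f,V_{\neq0}]],\qquad V_{\neq0}=\sum_{k\ne0}\hvb(k)M_{e_k}\otimes M_{e_{-k}}.
\]
Expanding gives terms of two types. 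The first type is $[f,[f,M_{e_k}]]\otimes M_{e_{-k}}$ and its mirror. The second type is $[f,M_{e_k}]\otimes[f,M_{e_{-k}}]$, with a factor $2$.

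In momentum space, $f$ has matrix entries $f_{r,u}=(2\pi)^{-1}(A_rB_u\delta_{r,u+\ell}+B_rA_u\delta_{r,u-\ell})$, possibly multiplied by a phase $\pm i$. Consequently $[f,M_{e_k}]$ shifts momentum by $k\pm\ell$. Every surviving leg carries at least one factor $A$ or $B$, hence an $S$-projector, which localizes that momentum to $h\le\Lambda_1^2$. This plays the role of the $P$-leg in Lemma~\ref{lem:com1}.

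\textbf{Step 2 (Hilbert--Schmidt bound for each term).} For each term I write the kernel $\gamma$ and estimate $\|\widetilde\gamma^{(s_1,s_2)}\|_{\ell^2}$. The $\ell^2$ norm factorizes in $k$ into a product of two one-leg sums, as in Lemma~\ref{lem:com1}.

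For type-two terms, both legs are commutator legs. Each leg sum is dominated by $S^{(S)}_{k\pm\ell}(s_i)$, up to shifts of the form $h(u+k)^{-s}h(u\pm\ell)^{-s}$. By Lemma~\ref{lem:convolution-kernels} this is at most $C\Lambda_1^{2-2s_i}\langle k\pm\ell\rangle^{-2s_i}$, with the variant carrying the projector.

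For type-one terms, one leg is a free $M_{e_{-k}}$ leg, controlled by the unrestricted $S_k(s_2)\lesssim\langle k\rangle^{-2s_2}$ using $s_2>\frac12$. The other leg carries two projected factors, bounded by $S^{(S)}$.

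Summing over $k$ then gives
\[
\|\widetilde\gamma\|_{\ell^2}^2\lesssim\lambda^4\Lambda_1^{4-2s_1-2s_2}\sum_k\langle k\rangle^{-2\beta}\langle k\pm\ell\rangle^{-2s_1}\langle k\rangle^{-2s_2}\ \text{(or similar)}.
\]
Since $2\beta+2s_2>2$, a discrete convolution estimate yields a factor $\langle\ell\rangle^{-2s_1}$, giving $\|\widetilde\gamma\|\lesssim\lambda^2\Lambda_1^{2-s_1-s_2}\langle\ell\rangle^{-s_1}$. Lemma~\ref{lem:aH-S} then gives \eqref{eq:shell-W-comm-local}.

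\textbf{Step 3 (support restriction).} The indicator $\mathbf 1_{\{|\ell|\lesssim\Lambda_1\}}$ is automatic: $g_\ell^{A,B}=0$ unless some $p$ and $p+\ell$ both satisfy $h\le\Lambda_1^2$, as already observed in Lemma~\ref{lem:shell-h-comm-local}.

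\textbf{Main obstacle.} The difficulty lies in the type-one terms where both $f$'s hit the same leg while the other leg is unrestricted. There I need the $\ell$-decay and the $\Lambda_1$-power to come out simultaneously from a single projected leg with two shifts $k$ and $\ell$. I would first try to bound $\sum_u S_{u}S_{u+\ell}h(u+k+\ell)^{-s_1}h(u)^{-s_1}$ by $\Lambda_1^{2-2s_1}\min(\langle k\rangle,\langle k+\ell\rangle)^{-2s_1}$, and then split the $k$-sum according to whether $|k|\ge|\ell|/2$. If the $\langle\ell\rangle^{-s_1}$ decay cannot be recovered this way, I would instead swap the roles, placing $s_2$ on the projected leg and $s_1$ on the free leg. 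That choice is permitted since only $s_1+s_2<1$ and $s_2>\frac12$ are constrained, and it requires $\beta+s_1>1$, which holds since $\beta>\frac32$.
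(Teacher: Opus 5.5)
Your overall structure matches the paper's. Writing the double commutator as $\lambda^2\mathbb W([F,[F,V_{\neq0}]])$ with $F=f\otimes1+1\otimes f$, your cross-particle terms $2[f,M_{e_k}]\otimes[f,M_{e_{-k}}]$ are the paper's Case~1, and you handle them correctly in substance. Your same-particle terms $[f,[f,M_{e_k}]]\otimes M_{e_{-k}}$ are its Case~2. The gap is in these same-particle terms, which you flag as the main obstacle but do not close.

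On the projected leg the net momentum transfer is $k+\nu\ell$ with $\nu\in\{0,\pm2\}$, not $k\pm\ell$. Since $f$ contains both $g_\ell^{A,B}$ and $(g_\ell^{A,B})^*$, terms with $\nu=0$ are always present, e.g.\ $g_\ell^{A,B}M_{e_k}(g_\ell^{A,B})^*$ or $g_\ell^{A,B}(g_\ell^{A,B})^*M_{e_k}$. For these the leg sum is only $\lesssim\Lambda_1^{2-2s_1}\langle k\rangle^{-2s_1}$, with no $\ell$ in the shift. The $k$-sum $\sum_k\langle k\rangle^{-2\beta-2s_1}S_k(s_2)$ is then just a constant, and no splitting of it can produce $\langle\ell\rangle^{-2s_1}$. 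So your displayed bound with $\langle k\pm\ell\rangle^{-2s_1}$ is not justified for these terms.

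Your fallback fails as well. The constraints $s_2>\frac12$ and $s_1+s_2<1$ force $s_1<\frac12$, and then the unrestricted leg gives $S_k(s_1)=\sum_u h(u+k)^{-s_1}h(u)^{-s_1}=+\infty$ in two dimensions. Separately, $S_k(s_2)\lesssim\langle k\rangle^{-2s_2}$ is false: Lemma~\ref{lem:sum}(i) only gives $\langle k\rangle^{2-4s_2}$. This error is harmless, since boundedness of $S_k(s_2)$ is all that is needed.

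The missing idea is to pay for the $\ell$-decay with the surplus power of $\Lambda_1$, rather than trying to extract it from the $k$-sum. Since only one leg is projected, the same-particle terms satisfy
\[
\|\widetilde\gamma\|_{\ell^2}^2\lesssim\lambda^4\sum_{k\neq0}\langle k\rangle^{-2\beta}S^{(S)}_{k+\nu\ell}(s_1)\,S_k(s_2)\lesssim\lambda^4\Lambda_1^{2-2s_1}\,\mathbf 1_{\{|\ell|\lesssim\Lambda_1\}}.
\]
Thus $\|\widetilde\gamma\|_{\ell^2}\lesssim\lambda^2\Lambda_1^{1-s_1}$, whose $\Lambda_1$-power falls short of the target's by a factor $\Lambda_1^{1-s_2}$. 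On the support $|\ell|\lesssim\Lambda_1$ one has $1\le C\Lambda_1^{s_1}\langle\ell\rangle^{-s_1}$. Hence
\[
\|\widetilde\gamma\|_{\ell^2}\lesssim\lambda^2\Lambda_1\langle\ell\rangle^{-s_1}\le\lambda^2\Lambda_1^{2-s_1-s_2}\langle\ell\rangle^{-s_1},
\]
and the last inequality is exactly where the hypothesis $s_1+s_2<1$ enters. This is how the paper treats Case~2; with this step inserted, the rest of your argument goes through.
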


\begin{proof}
	
	For $\sigma\in\{\pm1\}$ define
	\[
	X_{+1}:=g_\ell^{A,B}=Ae_\ell B,
	\qquad
	X_{-1}:=(g_\ell^{A,B})^*.
	\]
	Hence
	\[
	[\dG(f_{\ell,\pm}^{A,B}),[\dG(f_{\ell,\pm}^{A,B}),W_{\neq0}^{\mathrm{re}}]]
	\]
	is a linear combination, with coefficients of modulus at most $1$, of the four operators
	\[
	C_{\sigma,\tau}
	:=
	[\dG(X_\sigma),[\dG(X_\tau),W_{\neq0}^{\mathrm{re}}]],
	\qquad
	\sigma,\tau\in\{\pm1\}.
	\]
	It is therefore enough to prove a uniform bound for $C_{\sigma,\tau}$.
	
	Write the Fourier multipliers of $A,B,S=P+P_1$ again as $A_p,B_p,S_p\in\{0,1\}$.
	A direct computation in the Fourier basis gives, for every $r\in\mathbb Z^2$,
	\[
	[\dG(X_\sigma),a_r^*]
	=
	\alpha_r^{(\sigma)} a_{r+\sigma\ell}^*,
	\qquad
	[\dG(X_\sigma),a_r]
	=
	-{\beta}_r^{(\sigma)} a_{r-\sigma\ell},
	\]
	where
	\[
	\alpha_r^{(+1)} = A_{r+\ell}B_r,
	\qquad
	{\beta}_r^{(+1)} = A_r B_{r-\ell},
	\]
	and
	\[
	\alpha_r^{(-1)} = B_{r-\ell}A_r,
	\qquad
	{\beta}_r^{(-1)} = B_r A_{r+\ell}.
	\]
	In particular,
	\[
	0\le \alpha_r^{(\sigma)} \le S_r S_{r+\sigma\ell},
	\qquad
	0\le {\beta}_r^{(\sigma)} \le S_r S_{r-\sigma\ell}.
	\]
	Since $A,B\le S$, all these coefficients vanish unless the shift $\ell$ connects two points of
	$\text{supp} S$, hence every term below is automatically zero unless $|\ell|\lesssim \Lambda_1$.
	
	Applying the derivation rule to $M_{p,q,k}$ yields
	\begin{align*}
		[\dG(X_\sigma),M_{p,q,k}]
		&=
		\alpha_{p+k}^{(\sigma)}
		\,a^*_{p+k+\sigma\ell}a^*_{q-k}a_qa_p
		+
		\alpha_{q-k}^{(\sigma)}
		\,a^*_{p+k}a^*_{q-k+\sigma\ell}a_qa_p
		\\
		&\quad
		-
		{\beta}_q^{(\sigma)}
		\,a^*_{p+k}a^*_{q-k}a_{q-\sigma\ell}a_p
		-
		{\beta}_p^{(\sigma)}
		\,a^*_{p+k}a^*_{q-k}a_q a_{p-\sigma\ell}.
	\end{align*}
	The four operator slots naturally split into the $p$-pair
	\[
	\{a^*_{p+k},a_p\}
	\]
	and the $q$-pair
	\[
	\{a^*_{q-k},a_q\}.
	\]
	After commuting once more with $\dG(X_\tau)$ one obtains exactly $4\times 4=16$ quartic
	monomials for each ordered pair $(\sigma,\tau)$.
	
	These $16$ terms split into two families.
	
	\medskip
	
	\noindent
	\emph{Case 1: one hit in the $p$-pair and one hit in the $q$-pair.}
	There are $8$ such terms. A typical example is
	\[
	Q_\gamma
	=
	\tilde c_\lambda
	\sum_{k\neq0}\hvb(k)
	\sum_{p,q}
	\alpha_{p+k}^{(\sigma)}\alpha_{q-k}^{(\tau)}
	\,a^*_{p+k+\sigma\ell}a^*_{q-k+\tau\ell}a_q a_p,
	\]
	and the other $7$ terms are obtained by replacing one or both creation hits by annihilation hits.
	In every such term the shell coefficient factorizes into one factor depending only on the $p$-variable
	and one factor depending only on the $q$-variable. By Lemma~\ref{lem:aH-S}, to obtain the desired quadratic-form bound for the associated quartic operator, it suffices to control the $\ell^2$-norm of the asymmetrically rescaled kernel $\widetilde \gamma$. After a harmless relabelling of the free variables,
	the kernel of any such term satisfies
	\[
	\|\widetilde\gamma\|_{\ell^2}^2
	\lesssim
	\lambda^4
	\sum_{k\neq0}
	|\hvb(k)|^2
	\,S^{(S)}_{k+\nu_1\ell}(s_1)\,
	S^{(S)}_{\nu_2\ell-k}(s_2),
	\]
	for some $\nu_1,\nu_2\in\{\pm1\}$, where $\widetilde\gamma=\widetilde\gamma^{(s_1,s_2)}$ is the
	asymmetrically rescaled kernel from Lemma~\ref{lem:aH-S}.
	
	Indeed, the two Kronecker constraints coming from the quartic monomial determine $k$ uniquely, so
	the $k$-slices are orthogonal in $\ell^2((\mathbb Z^2)^4)$. Using
	Lemma~\ref{lem:convolution-kernels},
	we get
	\[
	\|\widetilde\gamma\|_{\ell^2}^2
	\lesssim
	\lambda^4 \Lambda_1^{4-2s_1-2s_2}
	\sum_{k\neq0}
	|\hvb(k)|^2
	\langle k+\nu_1\ell\rangle^{-2s_1}
	\langle \nu_2\ell-k\rangle^{-2s_2},
	\]
	which by Lemma~\ref{lem:shifted-Yukawa-sums} yields
	\[
	\|\widetilde\gamma\|_{\ell^2}
	\lesssim
	\lambda^2 \Lambda_1^{2-s_1-s_2}\langle\ell\rangle^{-s_1}\1_{\{|\ell|\lesssim\Lambda_1\}}.
	\]
	
	\medskip
	
	\noindent
	\emph{Case 2: both hit in the same pair.}
	There are again $8$ such terms: $4$ with both hits in the $p$-pair and $4$ with both hits in the
	$q$-pair. A typical example with both hits in the $p$-pair is
	\[
	Q_\gamma
	=
	\tilde c_\lambda
	\sum_{k\neq0}\hvb(k)
	\sum_{p,q}
	\alpha_{p+k}^{(\sigma)}{\beta}_p^{(\tau)}
	\,a^*_{p+k+\sigma\ell}a^*_{q-k}a_q a_{p-\tau\ell},
	\]
	and another typical example is
	\[
	Q_\gamma
	=
	\tilde c_\lambda
	\sum_{k\neq0}\hvb(k)
	\sum_{p,q}
	\alpha_{p+k}^{(\sigma)}\alpha_{p+k+\sigma\ell}^{(\tau)}
	\,a^*_{p+k+(\sigma+\tau)\ell}a^*_{q-k}a_q a_p.
	\]
	In all such terms the product of the two shell coefficients depends on only one free variable
	($p$ or $q$). Since the coefficients are projectors,
	\[
	\alpha_r^{(\sigma)}\alpha_{r+\sigma\ell}^{(\tau)},
	\quad
	\alpha_r^{(\sigma)}{\beta}_{r-k}^{(\tau)},
	\quad
	{\beta}_r^{(\sigma)}{\beta}_{r-\sigma\ell}^{(\tau)},
	\ \ldots
	\]
	are bounded by a single shell projector after reindexing, and the term is nonzero only if
	$|\ell|\lesssim\Lambda_1$. Consequently, after relabelling,
	\[
	\|\widetilde\gamma\|_{\ell^2}^2
	\lesssim
	\lambda^4 \1_{\{|\ell|\lesssim\Lambda_1\}}
	\sum_{k\neq0}
	|\hvb(k)|^2
	\,S^{(S)}_{k+\nu\ell}(s_1)\,S_k(s_2),
	\]
	for some $\nu\in\{0,\pm2\}$.
	Using Lemma~\ref{lem:convolution-kernels},
	we obtain
	\[
	\|\widetilde\gamma\|_{\ell^2}
	\lesssim
	\lambda^2 \Lambda_1^{1-s_1} \1_{\{|\ell|\lesssim\Lambda_1\}}.
	\]
	Since $|\ell|\lesssim \Lambda_1$ on the support of this term, we also have
	\[
	\1_{\{|\ell|\lesssim\Lambda_1\}}
	\le
	C\Lambda_1^{s_1}\langle\ell\rangle^{-s_1},
	\]
	and therefore
	\[
	\|\widetilde\gamma\|_{\ell^2}
	\lesssim
	\lambda^2 \Lambda_1\langle\ell\rangle^{-s_1}\1_{\{|\ell|\lesssim\Lambda_1\}}.
	\]
	Because $s_1+s_2<1$, one has $\Lambda_1\le \Lambda_1^{2-s_1-s_2}$, so the Case~2
	contribution is bounded by the same right-hand side as in Case~1.

	We now apply Lemma~\ref{lem:aH-S} and obtain for every $\psi\in\mathfrak F_{\mathrm{fin}}$,
	\[
	\big|
	\langle \psi,
	[\dG(f_{\ell,\pm}^{A,B}),
	[\dG(f_{\ell,\pm}^{A,B}),W_{\neq0}^{\mathrm{re}}]]
	\psi\rangle
	\big|
	\lesssim
	\lambda^2
	\Lambda_1^{2-s_1-s_2}\langle\ell\rangle^{-s_1}
	\,
	\langle \psi,\mathbb{H}_{s_1}\mathbb{H}_{s_2}\psi\rangle\1_{\{|\ell|\lesssim\Lambda_1\}},
	\]
	which implies \eqref{eq:shell-W-comm-local}.
\end{proof}

\begin{proposition}[Mode-by-mode shell bound]
	\label{prop:shell-mode-local}

	It holds that for every $(A,B)\in\mathcal C_{\rm shell}$, every $\ell\neq0$, and every
	{\color{red}$0<s_1\le s_2<1$ with $s_2>\frac12$ and $s_1+s_2<1$,}
	\begin{equation}
		\label{eq:shell-mode-local}
		\lambda^2
		\Tr\!\Big(
		(\rho_\ell^{A,B})^*\rho_\ell^{A,B}\Gamma_\lambda
		\Big)
		\le
		C\mathbf 1_{\{|\ell|\lesssim \Lambda_1\}}\Big(
		\langle\ell\rangle^{-1+\eta+\kappa}\Lambda^{-\eta}L_\lambda^2
		+
		\lambda^2\Lambda_1^2L_\lambda^2
		+
		{\color{red}\lambda^{2-s_1-s_2}\Lambda_1^{2-s_1-s_2}\langle\ell\rangle^{-s_1}L_\lambda^2}
		\Big).
	\end{equation}
{\color{red}Here $\eta\in(2-\beta,\beta-1-\kappa)$ is fixed and $\kappa>0$.}
\end{proposition}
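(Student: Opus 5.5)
The plan is to reduce, via Lemma~\ref{lem:shell-polarization-local}, to bounding
\[
\lambda^2\Tr\bigl(\dG(f_{\ell,\pm}^{A,B})^2\Gamma_\lambda\bigr),
\]
and then to apply the second-order correlation inequality, Theorem~\ref{thm:correlation-intro}. Fix $(A,B)\in\mathcal C_{\rm shell}$ and $\ell\neq0$; if $|\ell|\gtrsim\Lambda_1$ then $f_{\ell,\pm}^{A,B}=0$ and there is nothing to prove. Set $f:=c\,f_{\ell,\pm}^{A,B}$ with a small absolute constant $c$, so that $\|f\|\le1$ and, by Lemma~\ref{lem:shell-HS-local}, $a_f=\|h^{-1/2}fh^{-1/2}\|_{\rm HS}\le c_0$. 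Put $B:=\frac\lambda2\dG(f)$ and apply Theorem~\ref{thm:correlation-intro} with $A=\mathbb H_\lambda^{\rm re}$. The relative boundedness hypothesis should follow because $\dG(f)$ is bounded by $\cN$ on each sector, which is controlled by $W^{\rm re}$ and $\dG(h)$. This yields
\[
\Tr(B^2\Gamma_\lambda)\le a e^{a}+\tfrac14\Tr\bigl([B,[\mathbb H^{\rm re}_\lambda,B]]\Gamma_\lambda\bigr),\qquad a:=\sup_{|t|\le1}|\Tr(B\Gamma_{\lambda,t})|.
\]

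\textbf{First moment.} I would write
\[
\Tr(B\Gamma_{\lambda,t})=\tfrac\lambda2\Tr\bigl(f(\gamma_{\lambda,t}-\gamma_{0,t})\bigr)+\tfrac\lambda2\Tr\bigl(f(\gamma_{0,t}-\gamma_0)\bigr)+\tfrac\lambda2\Tr(f\gamma_0).
\]
The last term vanishes: $f$ has off-diagonal Fourier support with $\ell\neq0$ and $\gamma_0$ is diagonal. The first two terms are bounded by Hilbert--Schmidt Cauchy--Schwarz, with $\|h^{-1/2}fh^{-1/2}\|_{\rm HS}$ multiplying $\lambda\|\sqrt h(\cdot)\sqrt h\|_{\rm HS}$. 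For the first term I use Proposition~\ref{prop:one-body-HS-safe}; for the second I use \cite[Theorem 6.1]{LewNamRou-21} together with Lemma~\ref{lem:relent-signedA}. This gives
\[
a\lesssim L_\lambda^2\langle\ell\rangle^{-1+\eta+\kappa}\Lambda^{-\eta},
\]
and since $a$ is bounded, $ae^a\lesssim a$. This produces the first term on the right-hand side of \eqref{eq:shell-mode-local}.

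\textbf{Double commutator.} I split $[B,[\mathbb H^{\rm re}_\lambda,B]]$ into the kinetic part, the zero-mode part $(\cN-N_0)^2$, and $W^{\rm re}_{\neq0}$. The zero-mode part gives nothing, since $\dG(f)$ commutes with $\cN$.

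The kinetic part equals $\frac{\lambda^3}{4}\dG([f,[h,f]])$. By Lemma~\ref{lem:shell-h-comm-local} it is bounded by $C\lambda^3\Lambda_1^2\cN$. Taking the expectation and using Proposition~\ref{prop:number-moments-log}, this contributes $\lambda^2\Lambda_1^2L_\lambda$; I only need the weaker $\lambda^2\Lambda_1^2L_\lambda^2$ stated.

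The interaction part equals $\frac{\lambda^2}{4}[\dG(f),[\dG(f),W^{\rm re}_{\neq0}]]$ up to sign. By Lemma~\ref{lem:shell-W-comm-local} it is bounded by $C\lambda^4\Lambda_1^{2-s_1-s_2}\langle\ell\rangle^{-s_1}\mathbb H_{s_1}\mathbb H_{s_2}$. Lemma~\ref{lem:holder}, in the form \eqref{eq:shell-energy-number-moment-local}, converts $\lambda^4\Tr(\mathbb H_{s_1}\mathbb H_{s_2}\Gamma_\lambda)$ into $\lambda^{2-s_1-s_2}L_\lambda^2$. This gives the third term in \eqref{eq:shell-mode-local}.

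\textbf{Conclusion and main obstacle.} Summing the contributions and undoing the constant $c$ bounds $\lambda^2\Tr(\dG(f_{\ell,\pm}^{A,B})^2\Gamma_\lambda)$. Inserting this into \eqref{eq:shell-polarization-ineq-local} gives \eqref{eq:shell-mode-local}. I expect the main obstacle to be the first-moment step under the perturbed states. There one must verify that Proposition~\ref{prop:one-body-HS-safe} applies uniformly to this $\ell$-dependent $f$ and check the hypotheses of Lemma~\ref{lem:shell-perturbed-free}. The crucial point is the cancellation $\Tr(f\gamma_0)=0$, which avoids any loss that is not summable in $\ell$. A secondary concern is justifying the relative boundedness required by Theorem~\ref{thm:correlation-intro}.
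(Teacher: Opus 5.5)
Your proposal is correct and follows the paper's proof essentially step by step. The steps match: you apply the correlation inequality to $\frac{\lambda}{2}\dG(f^{A,B}_{\ell,\pm})$, control the first moment through $\Tr(f\gamma_0)=0$ and the weighted Hilbert--Schmidt bounds, and split the double commutator into the kinetic part (Lemma~\ref{lem:shell-h-comm-local}) and the interaction part (Lemma~\ref{lem:shell-W-comm-local} with \eqref{eq:shell-energy-number-moment-local}); your rescaling $f=c\,f^{A,B}_{\ell,\pm}$ to enforce $\|f\|\le1$ and your explicit final use of \eqref{eq:shell-polarization-ineq-local} are minor points the paper leaves implicit.
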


\begin{proof}
	Fix one of the two polarized observables and write
	\[
	f:=f_{\ell,\pm}^{A,B}.
	\]
	For $\ell\neq 0$, the free state preserves momentum, hence
	\[
	\langle \dG(f)\rangle_{\Gamma_0}=0.
	\]

Set
	\[
	B:=\frac{\lambda}{2}\dG(f),
	\qquad
	\Gamma_{\lambda,t}
	:=
	\frac{e^{-\mathbb{H}_\lambda^{\rm re}+tB}}{\Tr(e^{-\mathbb{H}_\lambda^{\rm re}+tB})},
	\qquad
	\Gamma_{0,t}
	:=
	\frac{e^{-\lambda \dG(h)+tB}}{\Tr(e^{-\lambda \dG(h)+tB})},
	\qquad |t|\le 1.
	\]

	Using \eqref{eq:one-body-HS-safe} together with \cite[Lemma 6.3]{LewNamRou-21}
	we obtain
	\begin{align*}
		\bigl|\Tr(B\Gamma_{\lambda,t})\bigr|
		&=
		\frac{\lambda}{2}
		\bigl|
		\Tr\bigl(f(\gamma_{\lambda,t}-\gamma_0)\bigr)
		\bigr|
		\\
		&\le
		\frac{\lambda}{2}
		\bigl|
		\Tr\bigl(f(\gamma_{\lambda,t}-\gamma_{0,t})\bigr)
		\bigr|
		+
		\frac{\lambda}{2}
		\bigl|
		\Tr\bigl(f(\gamma_{0,t}-\gamma_0)\bigr)
		\bigr|
		\\
		&\le
		C\lambda
		\bigl\|
		\sqrt h\,(\gamma_{\lambda,t}-\gamma_{0,t})\,\sqrt h
		\bigr\|_{\mathrm{HS}}
		\bigl\|h^{-1/2}fh^{-1/2}\bigr\|_{\mathrm{HS}}
		+
		Ca_f^2
		\\
		&\le
		Ca_fL_\lambda^2+Ca_f^2.
	\end{align*}
Here by \eqref{eq:shell-HS-local},
	\[
	a_f:=\bigl\|h^{-1/2}fh^{-1/2}\bigr\|_{\mathrm{HS}}
	\le
	C_\eta\langle\ell\rangle^{-1+\eta+\kappa}\Lambda^{-\eta}
	\le
	C_\eta\Lambda^{-\eta}.
	\]

Since $\Lambda=\lambda^{-\alpha}$, we have $a_fL_\lambda^2\ll 1$. Hence,   Theorem~\ref{thm:correlation-intro} gives
	\begin{equation}
		\label{eq:shell-local-DNN}
		\Tr(B^2\Gamma_\lambda)
		\le
		Ca_fL_\lambda^2
		+
		\frac14
		\Tr\!\bigl([B,[\mathbb{H}_\lambda^{\rm re},B]]\Gamma_\lambda\bigr).
	\end{equation}
	Because $\dG(f)$ commutes with $\cN$, the centered zero-mode part of $W^{\rm re}$ drops
	out and
	\[
	[B,[\mathbb{H}_\lambda^{\rm re},B]]
	=
	[B,[\lambda \dG(h),B]]
	+
	[B,[W_{\neq0}^{\rm re},B]].
	\]

	For the kinetic part,
	\[
	[B,[\lambda \dG(h),B]]
	=
	\frac{\lambda^3}{4}\,\dG([f,[f,h]]).
	\]
	Hence, by \eqref{eq:shell-h-comm-local} and Proposition~\ref{prop:number-moments-log-safe},
	\[
	\Bigl|
	\Tr\!\bigl([B,[\lambda \dG(h),B]]\Gamma_\lambda\bigr)
	\Bigr|
	\le
	C\lambda^3\Lambda_1^2\mathbf 1_{\{|\ell|\lesssim \Lambda_1\}}\Tr(\cN\Gamma_\lambda)
	\le
	C\lambda^2\Lambda_1^2\mathbf 1_{\{|\ell|\lesssim \Lambda_1\}}L_\lambda.
	\]

	For the interaction part,  \eqref{eq:shell-W-comm-local} and
	\eqref{eq:shell-energy-number-moment-local} imply
	\[
	\Bigl|
	\Tr\!\bigl([B,[W_{\neq0}^{\rm re},B]]\Gamma_\lambda\bigr)
	\Bigr|
	\le
	C\lambda^4 \Lambda_1^{2-s_1-s_2}\langle\ell\rangle^{-s_1}
	\Tr\!\Big(
	\mathbb{H}_{s_1}\mathbb{H}_{s_2}\Gamma_\lambda
	\Big)
	\le
	C\,\Lambda_1^{2-s_1-s_2}\langle\ell\rangle^{-s_1}\lambda^{2-s_1-s_2}L_\lambda^2.
	\]
	Substituting these bounds into \eqref{eq:shell-local-DNN}, we obtain
	\[
	\lambda^2
	\Tr\!\Big(
	(\rho_\ell^{A,B})^*\rho_\ell^{A,B}\Gamma_\lambda
	\Big)
	\le
	C\mathbf 1_{\{|\ell|\lesssim \Lambda_1\}}\Big(
	a_fL_\lambda^2
	+
	\lambda^2\Lambda_1^2L_\lambda
	+
	\lambda^{2-s_1-s_2}\Lambda_1^{2-s_1-s_2}\langle\ell\rangle^{-s_1}L_\lambda^2
	\Big).
	\]
	{\color{red}Since $L_\lambda\le L_\lambda^2$, this proves \eqref{eq:shell-mode-local}.}
\end{proof}

\begin{corollary}[Weighted shell summability]
	\label{cor:shell-summability-local}
	{\color{red}Let $\eta\in(2-\beta,\beta-1)$. Choose $0<s_1\le s_2<1$ with $s_2>\frac12$ and $s_1+s_2<1$, and assume in addition that $s_1$ is sufficiently small while $s_2$ is sufficiently close to $\frac12$.}
	Then for every $(A,B)\in\mathcal C_{\rm shell}$,
	\begin{equation}
		\label{eq:shell-summability-local}
		\sum_{\ell\neq 0}\hvb(\ell)\,
		\lambda^2
		\Tr\!\Big(
		(\rho_\ell^{A,B})^*\rho_\ell^{A,B}\Gamma_\lambda
		\Big)
		\le
		C(\Lambda^{-\eta}L_\lambda^2
		+
		\lambda^{3/2}\Lambda_1^2L_\lambda^2)
		\longrightarrow 0.
	\end{equation}
	Consequently,
	\begin{equation}
		\label{eq:B-shell-summability-local}
		\sum_{\ell\neq 0}\hvb(\ell)\,
		\lambda^2
		\Tr(B_\ell^*B_\ell\Gamma_\lambda)
		+
		\sum_{\ell\neq 0}\hvb(\ell)\,
		\lambda^2
		\Tr(B_\ell B_\ell^*\Gamma_\lambda)
		\le
		C(\Lambda^{-\eta}L_\lambda^2
		+
		\lambda^{3/2}\Lambda_1^2L_\lambda^2)
		\longrightarrow 0.
	\end{equation}
\end{corollary}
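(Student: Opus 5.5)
The plan is to multiply the mode-by-mode bound \eqref{eq:shell-mode-local} of Proposition~\ref{prop:shell-mode-local} by $\hvb(\ell)=\langle\ell\rangle^{-\beta}$ and sum over $0<|\ell|\lesssim\Lambda_1$, treating the three terms separately.

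\textbf{First term.} Here we need
\[
\sum_{\ell\neq0}\langle\ell\rangle^{-\beta-1+\eta+\kappa}<\infty ,
\]
which holds in two dimensions exactly when $\beta+1-\eta-\kappa>2$, i.e. $\eta<\beta-1-\kappa$. Since $\eta<\beta-1$ is fixed, we choose $\kappa>0$ small enough that this holds. This term then contributes $C\Lambda^{-\eta}L_\lambda^2$. The lower constraint $\eta>2-\beta$ is not needed here; it only matters later for compensating $\Lambda^{(2-\beta)/2}$ in Theorem~\ref{prop:shell-vanishing-local}, so that $\Lambda^{-\eta}$ wins.

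\textbf{Second term.} The term $\lambda^2\Lambda_1^2L_\lambda^2$ is constant in $\ell$, and the truncated sum satisfies
\[
\sum_{|\ell|\lesssim\Lambda_1}\langle\ell\rangle^{-\beta}\lesssim \Lambda_1^{2-\beta}\le 1+\log\Lambda_1 ,
\]
since $\beta\le2$ and the logarithm appears at $\beta=2$. So this piece is $\lesssim\lambda^2\Lambda_1^{2+(2-\beta)}L_\lambda^3$. Because $\lambda^{1/2}\Lambda_1^{2-\beta}L_\lambda\le\lambda^{1/2-\alpha_1(2-\beta)}L_\lambda$ and $\alpha_1(2-\beta)<\tfrac34\cdot\tfrac12<\tfrac12$, this is bounded by $C\lambda^{3/2}\Lambda_1^2L_\lambda^2$.

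\textbf{Third term.} This is the main constraint. We need
\[
\lambda^{2-\theta}\Lambda_1^{2-\theta}L_\lambda^2\sum_{|\ell|\lesssim\Lambda_1}\langle\ell\rangle^{-\beta-s_1},\qquad \theta=s_1+s_2 .
\]
Take $s_1$ small with $\beta+s_1>2$, or simply use $\beta+s_1\ge 2$ up to a factor $\log\Lambda_1$; since $\beta>3/2$, the sum is at most $C\Lambda_1^{(2-\beta-s_1)_+}\log\Lambda_1$. Now choose $s_2=\tfrac12+\varepsilon$ and $s_1=\varepsilon$, so $\theta=\tfrac12+2\varepsilon$. Then
\[
\lambda^{2-\theta}\Lambda_1^{2-\theta}\le\lambda^{3/2-2\varepsilon}\Lambda_1^{3/2}
=\lambda^{3/2}\Lambda_1^2\cdot\lambda^{-2\varepsilon}\Lambda_1^{-1/2},
\]
and $\lambda^{-2\varepsilon}\Lambda_1^{-1/2}=\lambda^{\alpha_1/2-2\varepsilon}$ is small. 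This smallness absorbs the extra factor $\Lambda_1^{(2-\beta-s_1)_+}\log\Lambda_1\le\lambda^{-\alpha_1(2-\beta)/1}\cdots$, because $\alpha_1(2-\beta)<\alpha_1/2$ for $\beta>3/2$. This is exactly where $\beta>\tfrac32$ is used. With $\varepsilon$ small, the third term is therefore $\le C\lambda^{3/2}\Lambda_1^2L_\lambda^2$.

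\textbf{Convergence and the $B_\ell$ bound.} The total tends to $0$ because $\Lambda=\lambda^{-\alpha}$ and $\lambda^{3/2}\Lambda_1^2=\lambda^{3/2-2\alpha_1}$ with $\alpha_1<\tfrac34$. For \eqref{eq:B-shell-summability-local}, write $B_\ell=\sum_{(A,B)}\rho^{A,B}_\ell$ over the three channels, and use $B_\ell^*B_\ell\le 3\sum(\rho_\ell^{A,B})^*\rho_\ell^{A,B}$. For the term with $B_\ell B_\ell^*$, note that $B_\ell^*=B_{-\ell}$ decomposes into the operators $\rho^{B,A}_{-\ell}$, which again lie in the shell channels. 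Then apply \eqref{eq:shell-summability-local} with $\ell\to-\ell$, using the evenness of $\hvb$.
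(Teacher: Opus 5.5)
Your proposal is correct and follows essentially the same route as the paper: you sum the three terms of \eqref{eq:shell-mode-local} against $\hvb(\ell)$. You use $\eta<\beta-1-\kappa$ for the first term, $\alpha_1(2-\beta)<\frac12$ for the second, and $s_1$ small, $s_2$ near $\frac12$ together with $\beta>\frac32$ for the third; you then pass to $B_\ell$ via $B_\ell^*B_\ell\le 3\sum(\rho_\ell^{A,B})^*\rho_\ell^{A,B}$ and the reindexing $\ell\mapsto-\ell$. The only slip is cosmetic. In the second term the truncated sum should read $\sum_{|\ell|\lesssim\Lambda_1}\langle\ell\rangle^{-\beta}\lesssim(1+\log\Lambda_1)\Lambda_1^{2-\beta}$; your chain $\Lambda_1^{2-\beta}\le 1+\log\Lambda_1$ is false for $\beta<2$, but your subsequent computation already uses the correct form.
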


\begin{proof}
	{\color{red}Summing \eqref{eq:shell-mode-local} against the weights $\hvb(\ell)=\langle\ell\rangle^{-\beta}$, we again treat the three contributions separately.

	For the first series,
	\[
	\sum_{\ell\neq0}\langle \ell\rangle^{-\beta-1+\eta+\kappa}\mathbf 1_{\{|\ell|\lesssim \Lambda_1\}}
	\]
	is uniformly bounded because $\beta+1-\eta-\kappa>2$ when $\beta>\frac32$ and $\eta<\beta-1-\kappa$ for $\kappa>0$ small enough. Hence the first contribution is $C_{\beta,\eta}\Lambda^{-\eta}L_\lambda^2$.

	For the second term we use
	\[
	\sum_{|\ell|\lesssim \Lambda_1}\hvb(\ell)\lesssim_\beta (1+\log\Lambda_1)\Lambda_1^{2-\beta},
	\]
	whence
	\[
	\lambda^2\Lambda_1^2L_\lambda^2\sum_{|\ell|\lesssim \Lambda_1}\hvb(\ell)
	\le
	C_\beta (1+\log\Lambda_1)\lambda^2\Lambda_1^{4-\beta}L_\lambda^2
	\le
	C_\beta \lambda^{3/2}\Lambda_1^2L_\lambda^2,
	\]
	because $\beta>\frac32$ and $\alpha_1>\frac12$.

	For the third series we have
	\[
	\sum_{|\ell|\lesssim \Lambda_1}\hvb(\ell)\langle \ell\rangle^{-s_1}
	=
	\sum_{|\ell|\lesssim \Lambda_1}\langle \ell\rangle^{-\beta-s_1}
	\le
	C_{\beta,s_1}
	\begin{cases}
	(1+\log\Lambda_1)\Lambda_1^{2-\beta-s_1},& \beta+s_1\le2,\\[1mm]
	1,& \beta+s_1>2.
	\end{cases}
	\]
	Therefore the third contribution is bounded by
	\[
	C_{\beta,s_1,s_2}(1+\log\Lambda_1)\lambda^{2-s_1-s_2}
	\Lambda_1^{4-\beta-2s_1-s_2}L_\lambda^2
	\]
	in the case $\beta+s_1\le2$, and by an even smaller quantity when $\beta+s_1>2$. Now choose $s_1>0$ sufficiently small and $s_2>\frac12$ sufficiently close to $\frac12$ so that
	\[
	2-s_1-s_2-\alpha_1(4-\beta-2s_1-s_2)>
	\frac32-2\alpha_1.
	\]
	Then the third contribution is also absorbed by $C_\beta\lambda^{3/2}\Lambda_1^2L_\lambda^2$. This proves \eqref{eq:shell-summability-local}.}
By \eqref{eq:def-shell-A-B},
	\[
	B_\ell
	=
	\sum_{(A,B)\in\mathcal C_{\rm shell}}\rho_\ell^{A,B},
	\]
	so
	\[
	B_\ell^*B_\ell
	\le
	3\sum_{(A,B)\in\mathcal C_{\rm shell}}
	(\rho_\ell^{A,B})^*\rho_\ell^{A,B}.
	\]
	The same bound for $B_\ell B_\ell^*$ follows after taking adjoints and reindexing
	$\ell\mapsto -\ell$.  Since $\mathcal C_{\rm shell}$ is finite,
	\eqref{eq:B-shell-summability-local} follows from \eqref{eq:shell-summability-local}.
\end{proof}


\begin{proof}[Proof of Theorem \ref{prop:shell-vanishing-local}]
	
	By Lemma~\ref{lem:shell-decomposition-local},
	\[
	\mathcal E_{\rm shell}(\Gamma_\lambda)
	=
	\frac{\lambda^2}{4}
	\sum_{\ell\neq 0}\hvb(\ell)\,
	\Tr_{\mathfrak F}\Big(
	A_\ell B_{-\ell}+B_\ell A_{-\ell}+B_\ell B_{-\ell}
	\Big)\Gamma_\lambda
	+
	R_{\rm shell}^{(1)}(\Gamma_\lambda).
	\]
	We estimate the three quadratic terms separately. 	Using \[
	A_\ell A_\ell^*\le \cN^2 \1_{|\ell|\lesssim \Lambda},
	\] and Proposition~\ref{prop:number-moments-log-safe} we obtain
	\begin{equation}
		\label{eq:shell-lowlow-local}
		\sum_{\ell\neq 0}\hvb(\ell)\,
		\lambda^2\Tr(A_\ell A_\ell^*\Gamma_\lambda)
		\le
		{\rmk{C_\beta(1+\log\Lambda)\Lambda^{2-\beta}}}\,\lambda^2\Tr(\cN^2\Gamma_\lambda)
		\le
		{\rmk{C_\beta(1+\log\Lambda)\Lambda^{2-\beta}L_\lambda^2.}}
	\end{equation}
	
For the cross term $A_\ell B_{-\ell}$,
	by Cauchy--Schwarz in the state $\Gamma_\lambda$,
	\[
	\sum_{\ell\neq0}\hvb(\ell)\lambda^2
	\bigl|
	\Tr(A_\ell B_{-\ell}\Gamma_\lambda)
	\bigr|
	\le
	\Bigg(
	\sum_{\ell\neq0}\hvb(\ell)\lambda^2
	\Tr(A_\ell A_\ell^*\Gamma_\lambda)
	\Bigg)^{1/2}
	\Bigg(
	\sum_{\ell\neq0}\hvb(\ell)\lambda^2
	\Tr(B_{-\ell}^*B_{-\ell}\Gamma_\lambda)
	\Bigg)^{1/2}.
	\]
	By \eqref{eq:shell-lowlow-local} and \eqref{eq:B-shell-summability-local}, the right-hand
	side is bounded by
	\begin{equation}
		\label{eq:shell-cross-rate-1-local}
		{\rmk{C_\beta\Bigl((1+\log\Lambda)\Lambda^{2-\beta}L_\lambda^2\,(\Lambda^{-\eta}L_\lambda^2
		+
		\lambda^{3/2}\Lambda_1^2L_\lambda^2)\Bigr)^{1/2}.}}
	\end{equation}
	For the cross term $B_\ell A_{-\ell}$,
	the same argument gives the same bound.

For the shell--shell term $B_\ell B_{-\ell}$,
	again by Cauchy--Schwarz,
	\[
	\sum_{\ell\neq0}\hvb(\ell)\lambda^2
	\bigl|
	\Tr(B_\ell B_{-\ell}\Gamma_\lambda)
	\bigr|
	\le
	\Bigg(
	\sum_{\ell\neq0}\hvb(\ell)\lambda^2
	\Tr(B_\ell B_\ell^*\Gamma_\lambda)
	\Bigg)^{1/2}
	\Bigg(
	\sum_{\ell\neq0}\hvb(\ell)\lambda^2
	\Tr(B_{-\ell}^*B_{-\ell}\Gamma_\lambda)
	\Bigg)^{1/2},
	\]
	and the right-hand side is bounded by
	\begin{equation}
		\label{eq:shell-shell-rate-local}
		C(\Lambda^{-\eta}L_\lambda^2
		+
		\lambda^{3/2}\Lambda_1^2L_\lambda^2).
	\end{equation}
	Thus the result follows
	by \eqref{eq:shell-remainder-local}.
	
\end{proof}

\section{Vanishing of the tail contribution}
\label{sec:tail-vanishing-checked}

Recall the second-cutoff notation
\[
P:=\1_{\{h\le \Lambda^2\}},
\qquad
P_1:=\1_{\{\Lambda^2<h\le \Lambda_1^2\}},
\qquad
S:=P+P_1=\1_{\{h\le \Lambda_1^2\}},
\qquad
Q_1:=1-S,
\]
with
\[
\Pi_1:=S^{\otimes 2},
\qquad
R_1:=1-\Pi_1.
\]
Throughout this section we assume
\[
\Lambda_1=\lambda^{-\alpha_1},
\qquad
\frac12<\alpha_1<\frac34,
\qquad
{\rmk{\frac32<\beta\le2,\qquad \eta\in(2-\beta,\beta-1).}}
\]

{\rmk{We also set
\[
\mathfrak r^{\rm tail}_{\beta,\lambda}
:=(1+|\log\lambda|)^3\Big(
\lambda^{\frac34-\alpha_1(3-\beta)}
+
\lambda^{\frac38-\frac{\alpha_1}{2}(3-\beta)-\frac\alpha2(2-\beta)}
+
\lambda^{\frac38-\frac{\alpha_1}{2}(3-\beta)+\frac{\alpha\eta}{2}}
+
\lambda^{\frac98-\frac{\alpha_1}{2}(5-\beta)}
\Big).
\]
For every fixed $\beta>\frac32$, one can choose $\alpha\in(0,1/4)$ sufficiently small and then choose $\alpha_1>\frac12$ sufficiently close to $\frac12$ so that $\mathfrak r^{\rm tail}_{\beta,\lambda}\to0$.}}

The goal of this section is to prove the following theorem, which will be used
directly in Theorem~\ref{thm:lower-bound-classical-final}. The conclusion is the quantitative lower bound
and in particular $\mathcal E_{\rm tail}(\Gamma_\lambda)\ge -o(1)$ as $\lambda\downarrow0$.

\begin{theorem}[Tail contribution in the lower bound]
	\label{cor:tail-lower-bound}
	{\rmk{The tail remainder satisfies}}
	\begin{equation}
		\label{eq:tail-lower-rate}
		\mathcal E_{\rm tail}(\Gamma_\lambda)\ge -{\rmk{C_\beta\,\mathfrak r^{\rm tail}_{\beta,\lambda}}}.
	\end{equation}
\end{theorem}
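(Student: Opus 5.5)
Starting from the lower bound in Proposition~\ref{prop:second-cutoff-decomp}, I would write
\[
\mathcal E_{\rm tail}(\Gamma_\lambda)\ge \frac{\lambda^2}{2}\Tr\big[(R_1V\Pi_1+\Pi_1VR_1)\Gamma_\lambda^{(2)}\big]-\frac{\lambda^2}{2(2\pi)^2}\Tr(R_1\Gamma_\lambda^{(2)}),
\]
and bound the two terms on the right separately from below.

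\textbf{Scalar correction.} Since $R_1=1-S\otimes S=Q_1\otimes 1+S\otimes Q_1$, one has $\mathbb W(R_1)\le \cN_{Q_1}\cN$. Cauchy--Schwarz then gives
\[
\lambda^2\Tr(R_1\Gamma^{(2)})\le \big(\lambda^2\Tr(\cN_{Q_1}^2\Gamma_\lambda)\big)^{1/2}\big(\lambda^2\Tr(\cN^2\Gamma_\lambda)\big)^{1/2}.
\]
The second factor is at most $CL_\lambda$ by Proposition~\ref{prop:number-moments-log}. For the first factor I would repeat the proof of Theorem~\ref{prop:high-frequency-fluctuations} with $Q$ replaced by $Q_1$ and $\Lambda$ replaced by $\Lambda_1$. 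This gives
\[
\lambda^2\Tr\big((\cN_{Q_1}-N_{0,Q_1})^2\Gamma_\lambda\big)\lesssim L_\lambda^2\big(\Lambda_1^{-1}+\lambda^{3/2}\Lambda_1^2\big).
\]
The free tail mass satisfies $\lambda N_{0,Q_1}\lesssim \sum_{h(p)>\Lambda_1^2}\lambda e^{-\lambda h(p)}\lesssim e^{-c\lambda\Lambda_1^2}$, which is negligible because $\alpha_1>\tfrac12$. This is where the terms $\lambda^{3/4-\alpha_1}$ in $\mathfrak r^{\rm tail}_{\beta,\lambda}$ should come from. I would check that they are dominated by the displayed exponents once the extra factors $\Lambda_1^{2-\beta}$ from the mixed term are included.

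\textbf{Mixed term.} I would split $R_1\Pi_1$ into the channels of Remark~\ref{rem:surviving-channels}, namely $(Q_1\otimes S)V(S\otimes S)$ and its symmetric counterparts. Each is an off-diagonal two-body operator, which I would write in Fourier variables as
\[
\sum_k \hvb(k)\, Q_1e_kS\otimes Se_{-k}S.
\]
It is convenient to write it as $\mathbb W(T)=\tfrac12\sum_k\hvb(k)\big(\dG(Q_1e_kS)\dG(Se_{-k}S)-\dG(Q_1e_kSe_{-k}S)\big)$, in parallel with Lemma~\ref{lem:shell-decomposition-local}.

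The one-body piece is diagonal. Its symbol is at most $\sum_{h(q)\le\Lambda_1^2}\hvb(q-p)\lesssim \log\Lambda_1\,\Lambda_1^{2-\beta}$, and it contains a $Q_1$ leg. Its expectation is therefore controlled by
\[
\lambda^2\Lambda_1^{2-\beta}\Tr(\cN_{Q_1}\Gamma_\lambda),
\]
which is small by the tail-number bound above.

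For the quadratic piece I would use weighted Cauchy--Schwarz in the state:
\[
\Big|\sum_k\hvb(k)\lambda^2\langle \dG(Q_1e_kS)\dG(Se_{-k}S)\rangle\Big|\le\Big(\sum_k\hvb(k)\lambda^2\langle \dG(Q_1e_kS)\dG(Q_1e_kS)^*\rangle\Big)^{1/2}\Big(\sum_k\hvb(k)\lambda^2\langle\dG(Se_{-k}S)^*\dG(Se_{-k}S)\rangle\Big)^{1/2}.
\]
For the second factor I split $Se_kS$ into its $PP$ part $A_k$ and its shell part $B_k$. The $A_k$ contribution is $\lesssim \Lambda^{2-\beta}L_\lambda^2\log\Lambda$ by \eqref{eq:shell-lowlow-local}. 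The $B_k$ contribution is small by Corollary~\ref{cor:shell-summability-local}, though for the crude bound I only need it to be $O(\Lambda_1^{2-\beta}L_\lambda^2)$, using $\dG(Se_kS)^*\dG(Se_kS)\le\cN_S^2$ and the restriction $|k|\lesssim\Lambda_1$.

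For the first factor I would use the operator bound $\dG(Q_1e_kS)\dG(Q_1e_kS)^*\le\cN_{Q_1}\cN$ plus a commutator term. The resulting estimate must not lose the full sum $\sum_k\hvb(k)$, which diverges. Here $S$-support restricts $|k|$ only relative to the $Q_1$ momentum, so I would instead estimate the sum over $k$ as a Schur-type bound. This is the place where I would use $\beta>\tfrac32$: the off-diagonal kernel $\hvb(p-q)\mathbf 1_{Q_1}(p)\mathbf 1_S(q)$ should have a weighted $\ell^2$ norm of order $\Lambda_1^{(3-\beta)}$ in appropriate units, matching the $\alpha_1(3-\beta)$ exponents. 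Combining this with the $\Lambda^{-\eta}$ gains from Lemma~\ref{lem:shell-HS-local} should produce the second and third terms of $\mathfrak r^{\rm tail}_{\beta,\lambda}$. The high--high off-diagonal block, with $P_1$ against $Q_1$, would need a separate Schur estimate with the kinetic bound Lemma~\ref{lem:holder}, giving the $\lambda^{9/8-\alpha_1(5-\beta)/2}$ term.

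\textbf{Main obstacle.} The hardest step is the mixed $Q_1$--$S$ channel. The Bessel weights are not summable, so the sum over $k$ must be paired with the small far-tail mass without an absolute $k$-sum. I would first try to write the channel as $\langle \Phi, \mathcal T\Psi\rangle$ with $\Psi=\dG(Q_1)$-localized data, bound $\|\mathcal T\|$ through a Schur test using $\beta>1$, and then use the fluctuation bounds to supply the smallness.
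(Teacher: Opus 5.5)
Your handling of the scalar correction and of the cross terms with a single $Q_1$-leg is essentially the paper's (Lemmas~\ref{lem:R1-correction} and~\ref{lem:cross-terms-controlled}), with three corrections.
\begin{itemize}
\item The one-body piece you estimate vanishes identically. Indeed, $Q_1e_kSe_{-k}S$ acts on $e_p$ with the factor $Q_{1,p}S_p=0$.
\item The first Cauchy--Schwarz factor needs no commutator term and no Schur test. With $g=Se_{-k}Q_1$ one has $\dG(Q_1e_kS)\dG(Q_1e_kS)^*=\dG(g)^*\dG(g)\le\cN\,\dG(g^*g)$. Moreover, $\sum_k\hvb(k)\,Q_1e_kSe_{-k}Q_1$ is diagonal with symbol $(2\pi)^{-2}Q_{1,p}\sum_{q:\,S_q=1}\hvb(p-q)\le C_\beta(1+\log\Lambda_1)\Lambda_1^{2-\beta}$ (Proposition~\ref{prop:weighted-mixed}). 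Here $\beta>\frac32$ enters only through $\frac34-\alpha_1(3-\beta)>0$.
\item Your fallback bound $O(\Lambda_1^{2-\beta}L_\lambda^2)$ for the shell part of $\dG(Se_kS)$ closes only for $\beta>\frac74$. So Corollary~\ref{cor:shell-summability-local} is genuinely needed.
\end{itemize}

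The real gap is the block with two $Q_1$-legs. Since $R_1=S\otimes Q_1+Q_1\otimes S+Q_1\otimes Q_1$, the operator $\Pi_1VR_1+R_1V\Pi_1$ contains $(S\otimes S)V(Q_1\otimes Q_1)$ and its adjoint. This is the term $\mathcal T_{\rm hh}=\frac{\lambda^2}{4}\sum_k\hvb(k)\Tr\bigl((B_kB_{-k}+C_kC_{-k})\Gamma_\lambda\bigr)$, with $B_k=\dG(Se_kQ_1)$ and $C_k=\dG(Q_1e_kS)$.

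Your channel list, taken from Remark~\ref{rem:surviving-channels}, contains only one-$Q_1$-leg blocks. The ``high--high block with $P_1$ against $Q_1$'' you mention later is not this block. The rate $\lambda^{9/8-\alpha_1(5-\beta)/2}$ you assign to it actually comes from the shell part of $\dG(Se_kS)$ (the $\lambda^{3/2}\Lambda_1^2$ term in Corollary~\ref{cor:shell-summability-local}) paired with Proposition~\ref{prop:weighted-mixed}.

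Neither of your tools works on $\mathcal T_{\rm hh}$:
\begin{itemize}
\item Weighted Cauchy--Schwarz fails. The natural bound for one factor is $\sum_k\hvb(k)B_kB_k^*\le\cN\,\dG\bigl(\sum_k\hvb(k)Se_kQ_1e_{-k}S\bigr)$. Its symbol $(2\pi)^{-2}S_p\sum_{q:\,Q_{1,q}=1}\hvb(p-q)$ is infinite, because $\hvb$ is not summable.
\item The kinetic route (Lemma~\ref{lem:aH-S} with Lemma~\ref{lem:holder}) gives no decay either. Lemma~\ref{lem:holder} only provides $\lambda^2\Tr(\mathbb H_{s_1}\mathbb H_{s_2}\Gamma_\lambda)\lesssim\lambda^{-\theta}L_\lambda^2$. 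That sufficed in Sections~\ref{sec:high0}--\ref{sec:shell-vanishing-rewritten} only because the double commutators there carry an extra $\lambda^2$. Here the resulting bound is a negative power of $\lambda$.
\end{itemize}

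The missing idea is to keep $\mathbb T_{\rm hh}=\sum_k\hvb(k)\,Se_kQ_1\otimes Se_{-k}Q_1$ as a single off-diagonal two-body operator from $Q_1\otimes Q_1$ to $S\otimes S$.
\begin{itemize}
\item It preserves total momentum, and on each fiber its range variable lies in a ball of radius $\lesssim\Lambda_1$.
\item Schur's test for $T_\ell T_\ell^*$, using $\sum_r\langle r\rangle^{-\beta}\langle r+d\rangle^{-\beta}\lesssim(1+\log(2+|d|))\langle d\rangle^{2-2\beta}$ for $|d|\lesssim\Lambda_1$, gives $\|\mathbb T_{\rm hh}\|\le C_\beta(1+\log\Lambda_1)\Lambda_1^{2-\beta}$ (Lemma~\ref{lem:Thh-norm}).
\item The block Cauchy--Schwarz inequality (Lemma~\ref{lem:block-cs}) then pairs this loss with $\lambda^2\Tr(S^{\otimes2}\Gamma_\lambda^{(2)})\lesssim L_\lambda^2$ and with the \emph{quadratic} tail mass $\lambda^2\Tr(Q_1^{\otimes2}\Gamma_\lambda^{(2)})\lesssim\lambda^{3/2-2\alpha_1}L_\lambda^2$.
\item The result is $\lambda^{3/4-\alpha_1(3-\beta)}$ up to logarithms, which is the first term of $\mathfrak r^{\rm tail}_{\beta,\lambda}$.
\end{itemize}

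Your ``main obstacle'' paragraph describes exactly this mechanism, but attaches it to the one-$Q_1$-leg channel, where it is too lossy. There the loss $\Lambda_1^{2-\beta}$ is paired only with $(\lambda^2\Tr(\cN_{Q_1}\cN\Gamma_\lambda))^{1/2}\lesssim\lambda^{3/8-\alpha_1/2}L_\lambda$, which closes only for $\beta>\frac74$. The operator-norm argument works precisely on the block that sees $Q_1\otimes Q_1$ on one side.
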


The main aim of this section is devoted to establishing \eqref{eq:tail-lower-rate}.

We first give the preliminary $Q_1$-tail inputs needed for the
explicit scalar correction in Proposition~\ref{prop:second-cutoff-decomp}. The final outcome is the
quantitative estimate in Lemma~\ref{lem:R1-correction}, stated in the exact form used later in
the free energy lower bound.

\begin{lemma}[Free high-frequency tail]
	\label{lem:free-tail-Q1}
	There exist constants $c,C>0$ such that
	\[
	\lambda N_{0,Q_1}
	\le
	C\log\frac{1}{1-e^{-c\lambda\Lambda_1^2}}\leq C e^{-c\lambda^{1-2\alpha_1}}, \qquad\alpha_1>1/2.
	\]
	In particular,
	for all sufficiently small $\lambda$, and therefore for every $m\ge 0$,
	\[
	(1+|\log\lambda|)^m\,\lambda N_{0,Q_1}\longrightarrow 0.
	\]
\end{lemma}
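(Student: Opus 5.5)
We prove the lemma by computing $N_{0,Q_1}=\sum_{p:\,h(p)>\Lambda_1^2} n_p$ explicitly, with $n_p=(e^{\lambda h(p)}-1)^{-1}$, and comparing the lattice sum with an integral. Since $\lambda h(p)>\lambda\Lambda_1^2=\lambda^{1-2\alpha_1}\to\infty$ on the range of summation, we are deep in the Boltzmann regime. There we use
\[
\frac{1}{e^{x}-1}\le \frac{e^{-x}}{1-e^{-x_0}}\le 2e^{-x}\qquad (x\ge x_0\ge 1).
\]
This reduces the problem to the Gaussian sum $\sum_{h(p)>\Lambda_1^2}e^{-\lambda h(p)}$.

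To get the logarithmic form stated in the lemma, we instead keep the full Bose factor and note that $x\mapsto (e^{x}-1)^{-1}$ is decreasing. The lattice sum over the exterior of the disc $\{|p|^2+1>\Lambda_1^2\}$ is then bounded, up to a constant, by the integral over a slightly smaller exterior region. We compare each lattice point with the unit square attached to it, which for $\Lambda_1\ge 2$ keeps $|x|^2\ge \Lambda_1^2/2-C$. Passing to polar coordinates in $\mathbb R^2$ and setting $r^2=u$ gives
\[
\lambda\int_{|x|^2\ge c\Lambda_1^2}\frac{\d x}{e^{\lambda|x|^2}-1}
=\pi\int_{c\lambda\Lambda_1^2}^\infty \frac{\d v}{e^{v}-1}
=-\pi\log\bigl(1-e^{-c\lambda\Lambda_1^2}\bigr).
\]
This is exactly $C\log\frac{1}{1-e^{-c\lambda\Lambda_1^2}}$. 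The constant $C$ absorbs the lattice-to-integral comparison and the factor $\pi$.

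For the second inequality, we use $\log\frac1{1-y}\le 2y$ for $0\le y\le\frac12$ with $y=e^{-c\lambda\Lambda_1^2}=e^{-c\lambda^{1-2\alpha_1}}$. Since $\alpha_1>\frac12$, we have $\lambda^{1-2\alpha_1}\to\infty$, so $y\le\frac12$ for small $\lambda$. This gives $\lambda N_{0,Q_1}\le Ce^{-c\lambda^{1-2\alpha_1}}$. Finally, $(1+|\log\lambda|)^m e^{-c\lambda^{-(2\alpha_1-1)}}\to0$ for every $m$, because the stretched exponential decay beats any power of $|\log\lambda|$, and in fact any power of $\lambda^{-1}$.

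The only mildly delicate step is the lattice-versus-integral comparison near the circle $|p|=\Lambda_1$. We handle it with the monotonicity of the summand and the square-shift argument above, possibly shrinking $c$. No earlier result of the paper is needed beyond the explicit formula for $n_k$.
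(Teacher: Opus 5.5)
Your proposal is correct and follows the paper's own argument: a two-dimensional lattice-to-integral comparison, exact evaluation of the Bose integral as $\log\frac{1}{1-e^{-c\lambda\Lambda_1^2}}$ in polar coordinates, and then $\log\frac{1}{1-y}\le 2y$ using $\lambda\Lambda_1^2=\lambda^{1-2\alpha_1}\to\infty$. The only difference is that you spell out the monotone square-shift comparison near the circle $|p|=\Lambda_1$, which the paper calls the ``standard'' comparison; your opening Boltzmann-factor bound is not needed.
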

We put the proof of this result in Appendix \ref{app:exchange-bound}.

\begin{lemma}[High-frequency number fluctuation at the second cutoff]
	\label{lem:Q1-number-fluctuation}
	It holds that
	\[
	\lambda^2
	\Tr_{\mathfrak F(H)}\!\Big[
	(\cN_{Q_1}-N_{0,Q_1})^2\Gamma_\lambda
	\Big]
	\le
	C\Big[
	(1+|\log\lambda|^2)\Lambda_1^{-1}
	+
	\lambda^{3/2}\Lambda_1^2(1+|\log\lambda|^2)
	\Big].
	\]
\end{lemma}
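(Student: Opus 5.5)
The plan is to repeat the proof of Theorem~\ref{prop:high-frequency-fluctuations} verbatim with the first cutoff $\Lambda$ replaced by $\Lambda_1$, i.e.\ with $P,Q$ replaced by $S,Q_1$. Concretely, we set
\[
B_{Q_1}:=\frac{\lambda}{2}(\cN_{Q_1}-N_{0,Q_1}),
\]
introduce the perturbed states $\Gamma^{(Q_1)}_{\lambda,t}$ and $\Gamma^{(Q_1)}_{0,t}$ as in Step~1 there, and apply Theorem~\ref{thm:correlation-intro}:
\[
\Tr(B_{Q_1}^2\Gamma_\lambda)\le a_{Q_1,\lambda}e^{a_{Q_1,\lambda}}+\tfrac14\Tr\big([B_{Q_1},[\mathbb H^{\rm re}_\lambda,B_{Q_1}]]\Gamma_\lambda\big).
\]
Nothing in the earlier argument used a specific value of the cutoff exponent beyond $\Lambda\ge1$, so the cutoff-dependent bounds carry over with $\Lambda\to\Lambda_1$.

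\textbf{First moment.} The free perturbed state has one-body operator $h_t=h-\frac t2 Q_1$, which is uniformly comparable to $h$. Proposition~\ref{prop:one-body-HS-safe} then gives the weighted HS bound with loss $L_\lambda^2$. By Cauchy--Schwarz against $\|Q_1h^{-1}\|_{\rm HS}\lesssim\Lambda_1^{-1}$, the interacting--free difference is $\lesssim L_\lambda^2\Lambda_1^{-1}$. The free--free difference is $\lesssim\sum_{h(p)>\Lambda_1^2}h(p)^{-2}\lesssim\Lambda_1^{-2}$ by the same mean value computation as before. Hence $a_{Q_1,\lambda}\lesssim L_\lambda^2\Lambda_1^{-1}\to0$, and $a e^a\le 2a$ for small $\lambda$.

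\textbf{Double commutator.} Since $\cN_{Q_1}$ commutes with $\dG(h)$ and with $\cN$, only $W^{\rm re}_{\neq0}$ contributes, and
\[
[B_{Q_1},[\mathbb H^{\rm re}_\lambda,B_{Q_1}]]=\frac{\lambda^2}{4}[\cN_{Q_1},[W^{\rm re}_{\neq0},\cN_{Q_1}]].
\]
Lemma~\ref{lem:NQ-double-commutator-direct} holds with $Q_r$ replaced by $(Q_1)_r$ and $\mathcal K$ by $\{h\le\Lambda_1^2\}$. In particular the pure $Q_1Q_1$ block again cancels, and every surviving monomial contains an $S$-leg. Lemma~\ref{lem:com1}, with $S_k^{(S)}$ and Lemma~\ref{lem:convolution-kernels} at scale $\Lambda_1$, then gives
\[
\pm[\cN_{Q_1},[W^{\rm re}_{\neq0},\cN_{Q_1}]]\le C\lambda^2\Lambda_1^{2-s_1-s_2}\mathbb H_{s_1}\mathbb H_{s_2}.
\]
The required $k$-sums $\sum\langle k\rangle^{-2\beta-2s_1-2s_2}$ and $\sum\langle k\rangle^{-2\beta-2s_1}$ converge since $\beta>\frac32$. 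We take $s_1=\varepsilon$, $s_2=\frac12+\varepsilon$, $\theta=s_1+s_2$, and apply \eqref{eq:shell-energy-number-moment-local}. This gives
\[
\lambda^4\Lambda_1^{2-\theta}\Tr(\mathbb H_{s_1}\mathbb H_{s_2}\Gamma_\lambda)\lesssim L_\lambda^2\lambda^{2-\theta}\Lambda_1^{2-\theta}\le L_\lambda^2\lambda^{3/2}\Lambda_1^{2},
\]
where the last inequality uses $\lambda\Lambda_1\ge1$ (i.e.\ $\alpha_1\ge1$ would reverse it). Since $\alpha_1<\frac34<1$, we have $\lambda\Lambda_1\le1$, so $(\lambda\Lambda_1)^{2-\theta}\le(\lambda\Lambda_1)^{3/2-2\varepsilon}$. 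The bound $\lambda^{3/2-2\varepsilon}\Lambda_1^{2}$ still contains the $\varepsilon$ loss; to absorb it I will bound $\Lambda_1^{2-\theta}\le\Lambda_1^{2}\Lambda_1^{-\theta}$ and use $\lambda^{2-\theta}\Lambda_1^{-\theta}\le\lambda^{3/2}$, which holds because $\lambda^{1/2-2\varepsilon}\Lambda_1^{-1/2-2\varepsilon}\le1$.

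\textbf{Conclusion and main obstacle.} Multiplying by $4$ yields the stated bound. The only nontrivial check is that the blockwise Hilbert--Schmidt estimates of Lemma~\ref{lem:com1} remain valid when the "low" projection is $S$ at scale $\Lambda_1$ rather than $P$. This reduces to the convolution bound $S_k^{(S)}(s)\lesssim\Lambda_1^{2-2s}\langle k\rangle^{-2s}$, which is scale-uniform.
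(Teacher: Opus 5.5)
Your proposal is the paper's proof: the paper simply notes that the lemma is Theorem~\ref{prop:high-frequency-fluctuations} at the scale $\Lambda_1=\lambda^{-\alpha_1}$ (admissible since $\alpha_1<\frac34$), and your repetition of its three steps with $(P,Q)$ replaced by $(S,Q_1)$ is exactly that. The one flaw is the exponent bookkeeping at the end: the relevant ratio is $\lambda^{2-\theta}\Lambda_1^{-\theta}\lambda^{-3/2}=\lambda^{-2\varepsilon}\Lambda_1^{-\theta}=\lambda^{\alpha_1\theta-2\varepsilon}$, not $\lambda^{1/2-2\varepsilon}\Lambda_1^{-1/2-2\varepsilon}$ (your version is trivially $\le1$ and would wrongly give the bound for any $\Lambda_1\ge1$), and it is $\le1$ only because $\Lambda_1$ is a power of $\lambda^{-1}$ with $\alpha_1\theta>\frac14+\varepsilon\ge2\varepsilon$ --- this is the analogue of the paper's choice $\varepsilon<\alpha/4$, so your remark that only $\Lambda\ge1$ is used, and the aside about $\lambda\Lambda_1\ge1$, should be dropped.
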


\begin{proof}
	This is the $\Lambda_1$-scale version of Proposition~\ref{prop:high-frequency-fluctuations} above.
\end{proof}

\begin{lemma}[Quantitative bound on the explicit $R_1$-correction]
	\label{lem:Q1-tail-quantitative}
	\label{lem:R1-correction}
	It holds that,
	\[
	\lambda^2\Tr_{H^{\otimes_s2}}\!\big(Q_1^{\otimes2}\Gamma_\lambda^{(2)}\big)
	\le \lambda^2\Tr_{\mathfrak F(H)}(\cN_{Q_1}^2\Gamma_\lambda) \leq
	C
	\lambda^{3/2-2\alpha_1}(1+|\log\lambda|^2),
	\]
	\begin{equation}
		\label{eq:R1-correction-rate-special}
		\frac{\lambda^2}{2(2\pi)^2}
		\Tr_{H^{\otimes_s2}}(R_1\Gamma_\lambda^{(2)})
		\le
		C
		\lambda^{3/4-\alpha_1}(1+|\log\lambda|^2).
	\end{equation}
\end{lemma}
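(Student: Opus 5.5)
The plan is to reduce both inequalities to the $\Lambda_1$-scale fluctuation bound of Lemma~\ref{lem:Q1-number-fluctuation}, together with the exponentially small free tail of Lemma~\ref{lem:free-tail-Q1} and the logarithmic number moments of Proposition~\ref{prop:number-moments-log}.

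\textbf{Step 1: the first chain of inequalities.} For the first inequality we use the definition of the two-body density matrix. One has $\mathbb W(Q_1^{\otimes 2})=\frac12(\cN_{Q_1}^2-\cN_{Q_1})$, by the quantization identity $\mathbb W(X\otimes Y)=\frac12(\dG(X)\dG(Y)-\dG(XY))$ used in Lemma~\ref{lem:shell-decomposition-local}. This gives
\[
\Tr(Q_1^{\otimes2}\Gamma^{(2)}_\lambda)\le\Tr(\cN_{Q_1}^2\Gamma_\lambda).
\]
For the second inequality we write $\cN_{Q_1}^2\le 2(\cN_{Q_1}-N_{0,Q_1})^2+2N_{0,Q_1}^2$. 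The first piece is bounded by Lemma~\ref{lem:Q1-number-fluctuation}. With $\Lambda_1=\lambda^{-\alpha_1}$ and $\alpha_1\in(\frac12,\frac34)$, we have $\Lambda_1^{-1}=\lambda^{\alpha_1}\le\lambda^{3/2-2\alpha_1}$ (since $\alpha_1\ge\frac12$), and $\lambda^{3/2}\Lambda_1^2=\lambda^{3/2-2\alpha_1}$. The second piece satisfies $\lambda^2N_{0,Q_1}^2\le Ce^{-2c\lambda^{1-2\alpha_1}}$ by Lemma~\ref{lem:free-tail-Q1}, which is far smaller than $\lambda^{3/2-2\alpha_1}$.

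\textbf{Step 2: bounding $R_1$ by a cross term.} For the $R_1$-correction, note that $R_1=1-S\otimes S\le Q_1\otimes 1+1\otimes Q_1$. Hence
\[
\Tr(R_1\Gamma^{(2)}_\lambda)\le 2\Tr\bigl((Q_1\otimes1)\Gamma^{(2)}_\lambda\bigr)=\Tr\bigl(\cN_{Q_1}(\cN-1)\Gamma_\lambda\bigr)\le\Tr(\cN_{Q_1}\cN\,\Gamma_\lambda),
\]
using that the lift of $Q_1\otimes1+1\otimes Q_1$ on $\gH_n$ is $(n-1)\cN_{Q_1}$. Since $\cN_{Q_1}$ and $\cN$ commute and are both nonnegative, Cauchy--Schwarz in the state gives
\[
\lambda^2\Tr(\cN_{Q_1}\cN\Gamma_\lambda)\le\bigl(\lambda^2\Tr(\cN_{Q_1}^2\Gamma_\lambda)\bigr)^{1/2}\bigl(\lambda^2\Tr(\cN^2\Gamma_\lambda)\bigr)^{1/2}.
\]
The first factor is controlled by Step 1 and is at most $C\lambda^{3/4-\alpha_1}L_\lambda$. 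The second factor is at most $CL_\lambda$ by Proposition~\ref{prop:number-moments-log}. Multiplying gives exactly $C\lambda^{3/4-\alpha_1}(1+|\log\lambda|^2)$.

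\textbf{Main obstacle.} Everything above is routine once Lemma~\ref{lem:Q1-number-fluctuation} is available, so the only delicate input is that lemma. It is the $\Lambda_1$-scale version of Theorem~\ref{prop:high-frequency-fluctuations}, so one must check that the argument transfers at scale $\Lambda_1=\lambda^{-\alpha_1}$ with $\alpha_1>\frac12$, outside the range $\alpha<\frac34$ used before only in the sense that the bound is still $o(1)$ for $\alpha_1<\frac34$. Two checks are needed:
\begin{itemize}
\item The first-moment term $a_{Q_1,\lambda}\lesssim L_\lambda^2\Lambda_1^{-1}$ must remain small; this holds, so $a e^a\le 2a$ still applies.
\item The double commutator bound of Lemma~\ref{lem:com1} must use only $\beta+s_1+s_2>1$, which does not depend on the cutoff; its combination with Lemma~\ref{lem:holder} then gives $\lambda^{2-\theta}\Lambda_1^{2-\theta}\le\lambda^{3/2}\Lambda_1^2$.
\end{itemize}
I would re-verify these two points explicitly rather than rely on the remark.
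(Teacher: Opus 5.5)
Your proposal is correct and follows the paper's proof essentially line by line. You split $\cN_{Q_1}^2\le 2(\cN_{Q_1}-N_{0,Q_1})^2+2N_{0,Q_1}^2$ and bound the pieces with Lemma~\ref{lem:Q1-number-fluctuation} and Lemma~\ref{lem:free-tail-Q1}. You then use $R_1\le Q_1\otimes 1+1\otimes Q_1$ and Cauchy--Schwarz against Proposition~\ref{prop:number-moments-log}; your lift $(\cN-1)\cN_{Q_1}$ is even slightly sharper than the paper's factor $2$.

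Your concern about the parameter range is unnecessary. Since $\alpha_1\in(\frac12,\frac34)$ already lies in the range $\alpha\in(0,\frac34)$ of Theorem~\ref{prop:high-frequency-fluctuations}, that theorem applies verbatim with $\Lambda$ replaced by $\Lambda_1$.
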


\begin{proof}
	We first have
	\[
	\lambda^2\Tr\!\big(Q_1^{\otimes2}\Gamma_\lambda^{(2)}\big)
	\le
	\lambda^2\Tr(\cN_{Q_1}^2\Gamma_\lambda).
	\]
	Using
	\[
	\cN_{Q_1}^2
	\le
	2(\cN_{Q_1}-N_{0,Q_1})^2+2N_{0,Q_1}^2,
	\]
	we obtain
	\[
	\lambda^2\Tr(\cN_{Q_1}^2\Gamma_\lambda)
	\le
	2\lambda^2\Tr_{\mathfrak F(H)}\!\big((\cN_{Q_1}-N_{0,Q_1})^2\Gamma_\lambda\big)
	+
	2(\lambda N_{0,Q_1})^2.
	\]
	Applying Lemma~\ref{lem:Q1-number-fluctuation} and using $\alpha_1>1/2$ imply the first result.
	
	Since
	\[
	R_1=1-S^{\otimes2}\le Q_1\otimes 1+1\otimes Q_1,
	\]
	we get
	\[
	\Tr(R_1\Gamma_\lambda^{(2)})
	\le
	2\Tr(\cN_{Q_1}\cN\Gamma_\lambda).
	\]
	Hence, by Cauchy--Schwarz,
	\[
	\lambda^2\Tr(R_1\Gamma_\lambda^{(2)})
	\le
	2
	\Bigl(\lambda^2\Tr(\cN_{Q_1}^2\Gamma_\lambda)\Bigr)^{1/2}
	\Bigl(\lambda^2\Tr(\cN^2\Gamma_\lambda)\Bigr)^{1/2}.
	\]
	By Proposition~\ref{prop:number-moments-log} and the first result we obtain
	\begin{equation*}
		\lambda^2\Tr_{H^{\otimes_s2}}(R_1\Gamma_\lambda^{(2)})
		\lesssim
		(1+|\log\lambda|^2)\lambda^{3/4-{\alpha_1}}
		+
		(1+|\log\lambda|)\lambda N_{0,Q_1},
	\end{equation*}
	which implies the result by Lemma \ref{lem:free-tail-Q1}.
\end{proof}

 By Lemma~\ref{lem:R1-correction}, it remains to control
\[
\frac{\lambda^2}{2}
\Tr_{H^{\otimes 2}}
\Bigl((\Pi_1VR_1+R_1V\Pi_1)\Gamma_\lambda^{(2)}\Bigr).
\]
We begin by rewriting this mixed term in a form adapted to the lower-bound argument. Let
\[
A_k:=\dG(Se_kS),
\qquad
B_k:=\dG(Se_kQ_1),
\qquad
C_k:=\dG(Q_1e_kS).
\]
Then
\[
C_k=B_{-k}^*,
\qquad
A_k^*=A_{-k}.
\]

\begin{lemma}[Exact decomposition of the mixed off-diagonal part]
	\label{lem:exact-mixed-decomp}
	For the full Bessel operator
	\[
	V=\sum_{k\in\mathbb Z^2}\hvb(k)(e_k\otimes e_{-k}),
	\]
	one has
	\begin{equation}
		\frac{\lambda^2}{2}\mathbb W(\Pi_1VR_1+R_1V\Pi_1)
		=
		\frac{\lambda^2}{4}
		\sum_{k\in\mathbb Z^2}\hvb(k)
		\Big(A_kB_{-k}+A_{-k}B_k+B_kB_{-k}+C_kA_{-k}+C_{-k}A_k+C_kC_{-k}\Big).
		\label{eq:exact-mixed-decomp}
	\end{equation}
 Since $SQ_1=Q_1S=0$, no one-body correction term appears.
\end{lemma}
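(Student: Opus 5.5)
The plan is to combine the Fourier expansion of $V$ with the bosonic quantization identity $\mathbb W(X\otimes Y)=\frac12\bigl(\dG(X)\dG(Y)-\dG(XY)\bigr)$ already used in Lemma~\ref{lem:shell-decomposition-local}. First write $R_1=1-S\otimes S=Q_1\otimes S+S\otimes Q_1+Q_1\otimes Q_1$. Using $S^2=S$, $Q_1^2=Q_1$ and $SQ_1=0$, this gives
\[
\Pi_1 R_1=0\cdot(\ldots),
\]
so that
\[
\Pi_1 (e_k\otimes e_{-k}) R_1 = Se_kQ_1\otimes Se_{-k}S + Se_kS\otimes Se_{-k}Q_1 + Se_kQ_1\otimes Se_{-k}Q_1 ,
\]
after expanding $R_1$ and noting $(S\otimes S)(Q_1\otimes S)=0$ only on the projector level. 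Concretely, one compresses $e_k\otimes e_{-k}$ by $S\otimes S$ on the left and by each of the three summands of $R_1$ on the right. The analogous identity holds for $R_1(e_k\otimes e_{-k})\Pi_1$, with $Q_1e_kS$ legs.

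Second, quantize each tensor product. For $X\otimes Y$ with $X=Se_kQ_1$ and $Y=Se_{-k}S$, the correction $\dG(XY)$ contains $Q_1 S=0$ and vanishes. The same happens for every term: each product $XY$ contains either $Q_1S$ or, for the $Se_kQ_1\otimes Se_{-k}Q_1$ term, again $Q_1 S$. This is the claimed absence of one-body corrections. Hence each term becomes $\frac12\dG(X)\dG(Y)$, which is respectively $\frac12 B_kA_{-k}$, $\frac12 A_kB_{-k}$ and $\frac12 B_kB_{-k}$. The operator ordering needs some care, since $\mathbb W(X\otimes Y)$ is symmetrized. I will therefore use the symmetric form $\frac14(\dG(X)\dG(Y)+\dG(Y)\dG(X))$ together with $[\dG(X),\dG(Y)]=\dG([X,Y])$. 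Here $[X,Y]$ again contains a product $SQ_1$ or $Q_1S$ on each side and vanishes, so the order does not matter.

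Third, multiply by $\hvb(k)$ and sum over $k$. Then relabel $k\mapsto-k$ using $\hvb(-k)=\hvb(k)$ to symmetrize, so that $B_kA_{-k}$ becomes $A_{-k}B_k$. The $R_1V\Pi_1$ terms give $C_kA_{-k}$, $C_{-k}A_k$ and $C_kC_{-k}$ in the same way. Collecting the factor $\frac{\lambda^2}{2}\cdot\frac12=\frac{\lambda^2}{4}$ yields \eqref{eq:exact-mixed-decomp}. The $k=0$ mode is included because $V$ is the full operator, and it causes no difficulty.

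The main obstacle is bookkeeping rather than analysis. Specifically, one must check that the combinatorial factor $\frac14$ is the same for all six terms, including the $Q_1\otimes Q_1$-type terms $B_kB_{-k}$ and $C_kC_{-k}$. I would verify this directly in momentum space against the quartic formula \eqref{eq:Wmom}, restricted by the projectors $S_\cdot$ and $Q_{1,\cdot}$ on the four legs. The remaining issue is convergence of the $k$-sums as quadratic forms on $\mathfrak F_{\rm fin}$, which holds since on each $n$-particle sector the sum converges because $V$ is bounded.
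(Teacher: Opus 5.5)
Your route is the paper's: expand $R_1=S\otimes Q_1+Q_1\otimes S+Q_1\otimes Q_1$, quantize the six tensor products with $\mathbb W(X\otimes Y)=\frac12(\dG(X)\dG(Y)-\dG(XY))$, and check that no one-body terms survive. However, the reason you give for the key vanishing step is false for several terms. It is not true that every product ``contains $Q_1S$''. In your ordering, $X=Se_kS$, $Y=Se_{-k}Q_1$ gives $XY=Se_kSe_{-k}Q_1$. Likewise $X=Q_1e_kS$, $Y=Se_{-k}S$ gives $XY=Q_1e_kSe_{-k}S$. Neither has an adjacent $SQ_1$ or $Q_1S$.

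Your commutator argument has the same defect. Passing from $B_kA_{-k}$ to $A_{-k}B_k$ is not a relabelling $k\mapsto-k$. It requires $[B_k,A_{-k}]=\dG([Se_kQ_1,Se_{-k}S])=-\dG(Se_{-k}Se_kQ_1)=0$, and $Se_{-k}Se_kQ_1$ again contains no $SQ_1$.

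These operators do vanish, but for a different reason. Conjugating the spectral projection $S$ by plane waves gives a Fourier multiplier, $e_{-k}Se_k\,e_p=(2\pi)^{-2}S_{p+k}\,e_p$, which commutes with $S$ and $Q_1$. Hence
\[
Se_{-k}Se_kQ_1\,e_p=(2\pi)^{-2}S_pS_{p+k}Q_{1,p}\,e_p=0,
\qquad
Se_kSe_{-k}Q_1\,e_p=(2\pi)^{-2}S_pS_{p-k}Q_{1,p}\,e_p=0 .
\]
This is exactly the momentum-space check $(Se_kS)(Se_{-k}Q_1)e_p=S_pQ_{1,p}S_{p-k}e_p=0$ in the paper's proof.

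Stated systematically: take $X=Ae_kB$ and $Y=Ce_{-k}D$ with $A,B,C,D\in\{S,Q_1\}$. Then $XY$ and $YX$ are both diagonal in the Fourier basis, with symbols $(2\pi)^{-2}A_pD_pB_{p-k}C_{p-k}$ and $(2\pi)^{-2}C_pB_pD_{p+k}A_{p+k}$. So both vanish whenever $AD=0$ or $BC=0$, and this holds in all six terms. With this added, your symmetric form $\frac14(\dG(X)\dG(Y)+\dG(Y)\dG(X))$ is justified, and the ordering in \eqref{eq:exact-mixed-decomp} follows.

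Your convergence remark also rests on a false claim. $V$ is multiplication by $v_\beta(x-y)$, and $v_\beta$ is unbounded near $0$ because $\sum_k\hvb(k)=\infty$ for $\beta\le2$; this divergence is exactly what the paper is built around. What makes the $k$-sums harmless is that every term has an $S$-leg. Between finitely supported Fourier occupation states, only finitely many $k$ contribute, since one leg lies in the finite set $\{h\le\Lambda_1^2\}$ and momentum is conserved. In fact $V\Pi_1$ is bounded, because $S$ has finite rank and $v_\beta\in L^2(\T^2)$ (as $2\beta>2$). Hence $\Pi_1VR_1+R_1V\Pi_1$ is a bounded operator.
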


\begin{proof}
	We write
	\[
	R_1=S\otimes Q_1+Q_1\otimes S+Q_1\otimes Q_1.
	\]
	Hence
	\begin{align*}
		\Pi_1VR_1+R_1V\Pi_1
		=&(S\otimes S)V(S\otimes Q_1)+(S\otimes S)V(Q_1\otimes S)+(S\otimes S)V(Q_1\otimes Q_1)
		\\
		&\quad +(Q_1\otimes S)V(S\otimes S)+(S\otimes Q_1)V(S\otimes S)+(Q_1\otimes Q_1)V(S\otimes S)
		\\=&\sum_k\hvb(k)
		\Big(Se_kS\otimes Se_{-k}Q_1+Se_kQ_1\otimes Se_{-k}S+Se_kQ_1\otimes Se_{-k}Q_1
		\\
		&+Q_1e_kS\otimes Se_{-k}S+Se_kS\otimes Q_1e_{-k}S+Q_1e_kS\otimes Q_1e_{-k}S\Big).
	\end{align*}
	Here we used
	\(
	V=\sum_k\hvb(k)(e_k\otimes e_{-k}).
	\)
	Now use the standard identity
	\[
	\mathbb W(X\otimes Y)=\frac12\bigl(\dG(X)\dG(Y)-\dG(XY)\bigr).
	\]
	Since $SQ_1=Q_1S=0$, the one-body correction vanishes in each of the six terms above. For instance,
	\[
	(Se_kS)(Se_{-k}Q_1)e_p=S_pQ_{1,p}S_{p-k}e_p=0,
	\]
	so
	\[
	\mathbb W(Se_kS\otimes Se_{-k}Q_1)=\frac12 A_kB_{-k}.
	\]
	Similarly,
	\[
	\mathbb W(Se_kQ_1\otimes Se_{-k}S)=\frac12 B_kA_{-k}=\frac12 A_{-k}B_k,
	\qquad
	\mathbb W(Se_kQ_1\otimes Se_{-k}Q_1)=\frac12 B_kB_{-k},
	\]
	\[
	\mathbb W(Q_1e_kS\otimes Se_{-k}S)=\frac12 C_kA_{-k},
	\qquad
	\mathbb W(Se_kS\otimes Q_1e_{-k}S)=\frac12 A_kC_{-k}=\frac12 C_{-k}A_k,
	\]
	\[
	\mathbb W(Q_1e_kS\otimes Q_1e_{-k}S)=\frac12 C_kC_{-k}.
	\]
	Here $B_kA_{-k}=A_{-k}B_k$ follows from $[B_k, A_{-k}]=0$ and similar for $A_kC_{-k}$. Summing over $k$ yields \eqref{eq:exact-mixed-decomp}.
\end{proof}

The decomposition \eqref{eq:exact-mixed-decomp} separates the mixed contribution into two qualitatively different pieces. The terms containing one $A$ and one $B$ or $C$ still involve one low-frequency leg and one far-tail leg; they will be controlled first by reducing them to weighted mixed-channel estimates. The remaining pair $B_kB_{-k}+C_kC_{-k}$ contains two $Q_1$ legs and cannot be treated as the mixed channel, so it will later be kept as a single off-diagonal two-body block between $S\otimes S$ and $Q_1\otimes Q_1$.

\subsection{The shell--mixed cross terms}

This subsection is devoted to the following estimate.

\begin{lemma}[Control of the shell--mixed cross terms]
	\label{lem:cross-terms-controlled}
	{\rmk{It holds that}}
	\begin{align}
		&\sum_{k\in\mathbb Z^2}\hvb(k)\lambda^2
		\Big|\Tr(A_kB_{-k}\Gamma_\lambda)\Big|
		+
		\sum_{k\in\mathbb Z^2}\hvb(k)\lambda^2
		\Big|\Tr(A_{-k}B_k\Gamma_\lambda)\Big|
		\notag
		\\
		&\qquad+
		\sum_{k\in\mathbb Z^2}\hvb(k)\lambda^2
		\Big|\Tr(C_kA_{-k}\Gamma_\lambda)\Big|
		+
		\sum_{k\in\mathbb Z^2}\hvb(k)\lambda^2
		\Big|\Tr(C_{-k}A_k\Gamma_\lambda)\Big|\notag
		\\ &\leq {\rmk{C_\beta\,\mathfrak r^{\rm tail}_{\beta,\lambda}.}}
		\label{eq:cross-terms-vanish}
	\end{align}
\end{lemma}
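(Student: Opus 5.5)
The plan is to follow the shell argument of Section~\ref{sec:shell-vanishing-rewritten}: split each cross term by Cauchy--Schwarz in the state $\Gamma_\lambda$, with one factor carrying the low-frequency operator $A_k$ and the other the far-tail mixed operator $B_{\pm k}$ or $C_{\pm k}$. Since $C_k=B_{-k}^*$ and $A_k^*=A_{-k}$, all four sums reduce to the same two quantities. For instance,
\[
\sum_k\hvb(k)\lambda^2|\Tr(A_kB_{-k}\Gamma_\lambda)|
\le\Big(\sum_k\hvb(k)\lambda^2\Tr(A_kA_k^*\Gamma_\lambda)\Big)^{1/2}
\Big(\sum_k\hvb(k)\lambda^2\Tr(B_{-k}^*B_{-k}\Gamma_\lambda)\Big)^{1/2},
\]
and the sums with $C$ are handled the same way, using $B_kB_k^*$ in place of $B_k^*B_k$.

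\textbf{The low factor.} Since $\|Se_kS\|\le1$ and $Se_kS=0$ for $|k|\gtrsim\Lambda_1$, we have $A_kA_k^*\le C\cN^2\mathbf 1_{\{|k|\lesssim\Lambda_1\}}$. Summing $\hvb(k)$ over $|k|\lesssim\Lambda_1$ and applying Proposition~\ref{prop:number-moments-log} gives the bound $C_\beta(1+\log\Lambda_1)\Lambda_1^{2-\beta}L_\lambda^2$. This polynomial growth must be beaten by the tail factor, which is where the exponents $\alpha_1(3-\beta)$ in $\mathfrak r^{\rm tail}_{\beta,\lambda}$ come from.

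\textbf{The tail factor.} Here we estimate $\sum_k\hvb(k)\lambda^2\Tr(B_k^*B_k\Gamma_\lambda)$ and the analogous sum with $B_kB_k^*$.
\begin{itemize}
\item \emph{Polarization.} Lemma~\ref{lem:shell-polarization-local}, applied with $(A,B)=(S,Q_1)$, gives $B_k^*B_k\le\frac12\big(\dG(f_{k,+})^2+\dG(f_{k,-})^2\big)$, where $g_k=Se_kQ_1$.
\item \emph{Correlation inequality.} Apply Theorem~\ref{thm:correlation-intro} to each observable $\frac\lambda2\dG(f_{k,\pm})$, whose free mean vanishes for $k\neq0$. The first-moment input is $a_f L_\lambda^2$, exactly as in Proposition~\ref{prop:shell-mode-local}. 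Here $a_f=\|h^{-1/2}f h^{-1/2}\|_{\rm HS}$ has one leg in $Q_1$, so $h\ge\Lambda_1^2$ on that leg. Extracting $\Lambda_1^{-\eta}$ as in Lemma~\ref{lem:shell-HS-local} gives $a_f\lesssim\langle k\rangle^{-1+\eta+\kappa}\Lambda_1^{-\eta}$; one may also extract only part of it to keep summability in $k$.
\item \emph{Kinetic commutator.} The double commutator $[f,[f,h]]$ is no longer bounded by $\Lambda_1^2$, because the $Q_1$ leg is unbounded. However, $S_{p+k}Q_{1,p}\neq0$ forces $|h(p+k)-h(p)|\lesssim \langle p\rangle|k|$, so I would bound $\dG([f,[f,h]])$ by $C\Lambda_1\,\dG(h^{1/2})$-type operators. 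These are then controlled by the kinetic estimate, Lemma~\ref{lem:first-kinetic-log2} combined with Lemma~\ref{lem:holder}.
\item \emph{Interaction commutator.} This is handled exactly as in Lemma~\ref{lem:shell-W-comm-local}, since every surviving monomial still carries an $S$-leg and the kernels $S_k^{(S)}$ appear.
\end{itemize}

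\textbf{Main obstacle.} The hard step is summing the tail-factor mode bounds against $\hvb(k)$. Two features make it delicate. First, $k$ is no longer restricted to $|k|\lesssim\Lambda_1$ by both legs: only the $S$-leg gives $|p+k|\lesssim\Lambda_1$, while $|k|$ itself can be large. Second, the kinetic commutator grows with $|k|$. The summation must therefore rely on the decay $\langle k\rangle^{-\beta-1+\eta+\kappa}$, which is summable precisely when $\eta<\beta-1$ (consistent with $\beta>\frac32$). This decay must then be balanced against the factor $\Lambda_1^{2-\beta}$ from the low factor.

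As a fallback, if the correlation inequality is too lossy for large $|k|$, I would bound the large-$|k|$ part crudely by $\lambda^2\Tr(\cN_{Q_1}\cN\Gamma_\lambda)$ and use Lemma~\ref{lem:R1-correction}. That term produces the $\lambda^{\frac34-\alpha_1(3-\beta)}$ and $\lambda^{\frac98-\frac{\alpha_1}2(5-\beta)}$ contributions in $\mathfrak r^{\rm tail}_{\beta,\lambda}$.

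Finally, collect all terms, take square roots, and choose $\alpha$ small and $\alpha_1$ close to $\frac12$ so that every contribution is $\lesssim\mathfrak r^{\rm tail}_{\beta,\lambda}$.
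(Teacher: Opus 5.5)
Your Cauchy--Schwarz skeleton is the right one. However, the way you bound the low factor makes the argument fail in part of the range, and the route you propose to compensate does not close.

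\textbf{The low factor is too large.} With the full $A_k=\dG(Se_kS)$ you only get $\sum_k\hvb(k)\lambda^2\Tr(A_kA_k^*\Gamma_\lambda)\le C_\beta(1+\log\Lambda_1)\Lambda_1^{2-\beta}L_\lambda^2$. The one tail estimate you actually carry out is your fallback, $\dG(g)^*\dG(g)\le\cN\,\dG(g^*g)$ combined with Lemma~\ref{lem:R1-correction}. Applied to all $k$, this is exactly Proposition~\ref{prop:weighted-mixed} with $L=S$, giving $C_\beta(1+|\log\lambda|^3)\lambda^{3/4-\alpha_1(3-\beta)}$. Pairing the two by Cauchy--Schwarz gives, up to logarithms, $\lambda^{\frac38-\frac{\alpha_1}{2}(5-2\beta)}$. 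Since $\alpha_1>\frac12$, this bound is not small (it even diverges) when $\beta\le\frac74$.

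\textbf{The correlation-inequality route does not close.} It breaks down at the kinetic commutator. For $g_k=Se_kQ_1$ one has $gg=0=ghg$, so $\dG([f,[h,f]])$ is diagonal. Its part on the $Q_1$ leg is negative, but its part on the $S$ leg is positive, with entries proportional to $h(q-k)-h(q)$ for $q\in S$, $q-k\in Q_1$. After summing against $\hvb(k)$, the kinetic contribution is, up to a positive constant, $\lambda^3\sum_{q\in S}\Tr(a_q^*a_q\Gamma_\lambda)\sum_{p\in Q_1}\hvb(q-p)\bigl(h(p)-h(q)\bigr)=+\infty$.

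No bound of the form $C\Lambda_1\dG(h^{1/2})$ is available, because $|k|$ is not restricted. The decay $\langle k\rangle^{-1+\eta+\kappa}$ of $a_f$ only helps the first-moment term. The interaction commutator is not covered by Lemma~\ref{lem:shell-W-comm-local} either: its proof uses $A,B\le S$ to produce the cutoff $\mathbf 1_{\{|\ell|\lesssim\Lambda_1\}}$, which is needed both in its Case~2 and in the final $\ell$-summation.

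\textbf{The missing idea.} Split the low operator instead of sharpening the tail one:
$A_k=\dG(Pe_kP)+\bigl(\dG(Se_kS)-\dG(Pe_kP)\bigr)$.
\begin{itemize}
\item The first piece vanishes unless $|k|\lesssim\Lambda$. Its weighted sum is therefore only $C_\beta(1+\log\Lambda)\Lambda^{2-\beta}L_\lambda^2$, with $\Lambda=\lambda^{-\alpha}$ and $\alpha$ small.
\item The second piece is the shell observable of Section~\ref{sec:shell-vanishing-rewritten}. By Corollary~\ref{cor:shell-summability-local} its weighted sum is already $O(\Lambda^{-\eta}L_\lambda^2+\lambda^{3/2}\Lambda_1^2L_\lambda^2)$.
\end{itemize}
With this split, the crude bound of Proposition~\ref{prop:weighted-mixed} suffices for the $B$ and $C$ factors, and no correlation inequality is needed on the $Q_1$ legs. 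Cauchy--Schwarz then produces exactly the exponents in $\mathfrak r^{\rm tail}_{\beta,\lambda}$:
\[
\tfrac38-\tfrac{\alpha_1}{2}(3-\beta)-\tfrac\alpha2(2-\beta),\qquad \tfrac38-\tfrac{\alpha_1}{2}(3-\beta)+\tfrac{\alpha\eta}{2},\qquad \tfrac98-\tfrac{\alpha_1}{2}(5-\beta).
\]
All three are positive for $\beta>\frac32$, $\alpha$ small and $\alpha_1$ close to $\frac12$.

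Two further diagnostics. The fact that $\alpha$ appears in $\mathfrak r^{\rm tail}_{\beta,\lambda}$ at all signals that a piece localized at scale $\Lambda$ has to be isolated. The term $\lambda^{\frac98-\frac{\alpha_1}{2}(5-\beta)}$ comes from pairing the $\lambda^{3/2}\Lambda_1^2$ shell contribution with the tail bound, not from your fallback as you claim.
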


The basic point is that every operator $\dG(Le_\ell Q_1), L\in\{P,P_1,S\}$ carries one low-frequency leg and one far-tail leg. {\rmk{After summing in $\ell$, this structure produces a one-body operator bounded by $C_\beta(1+\log\Lambda_1)\Lambda_1^{2-\beta}Q_1$, so the problem reduces to controlling $\cN_{Q_1}$ together with the global number operator.}} Cauchy--Schwarz, Proposition~\ref{prop:number-moments-log}, and Lemma~\ref{lem:R1-correction} then yield the desired weighted estimate.

\begin{proposition}[Weighted mixed-channel estimate]
	\label{prop:weighted-mixed}
	Let $L\in\{P,P_1,S\}$ and define
	\[
	\rho_\ell^{L,Q_1}:=\dG(Le_\ell Q_1),
	\qquad \ell\in\mathbb Z^2.
	\]
	Then
	\begin{equation}
		\sum_{\ell\in\mathbb Z^2}\hvb(\ell)\,
		\lambda^2\Tr_{\mathfrak F}
		\Bigl((\rho_\ell^{L,Q_1})^*\rho_\ell^{L,Q_1}\Gamma_\lambda\Bigr)\leq {\rmk{C_\beta (1+|\log \lambda|^3)\lambda^{3/4-\alpha_1(3-\beta)}}} .
		\label{eq:weighted-mixed-vanishes}
	\end{equation}
\end{proposition}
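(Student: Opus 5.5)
The plan is to follow the hint given just before Lemma~\ref{lem:cross-terms-controlled}. We compute the normal-ordered form of $(\rho_\ell^{L,Q_1})^*\rho_\ell^{L,Q_1}$, sum it against $\hvb(\ell)$, and reduce everything to moments of $\cN_{Q_1}$ and $\cN$.

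\textbf{Step 1: normal ordering.} By the CCR, $\dG(X)^*\dG(X)=\dG(X^*X)+\sum a^*a^*aa$, where the quartic term is $\mathbb W$ of $2\,X^*\otimes X$ up to symmetrization. With $X=Le_\ell Q_1$ we have $X^*X=Q_1e_{-\ell}Le_\ell Q_1$. Summing over $\ell$ with weight $\hvb(\ell)$ gives the one-body operator
\[
K_1:=\sum_\ell\hvb(\ell)\,Q_1e_{-\ell}Le_\ell Q_1 .
\]
This operator is diagonal in Fourier space, with symbol
\[
Q_{1,p}\sum_\ell \hvb(\ell)L_{p+\ell}\le C_\beta(1+\log\Lambda_1)\Lambda_1^{2-\beta},
\]
since $L\le S$ restricts the sum to at most $\Lambda_1^2$ lattice points and $\hvb(\ell)\le1$. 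The same bound was used for $K_{\rm shell}$. Hence
\[
\lambda^2\Tr(\dG(K_1)\Gamma_\lambda)\le C_\beta(1+\log\Lambda_1)\Lambda_1^{2-\beta}\lambda^2\Tr(\cN_{Q_1}\Gamma_\lambda).
\]

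\textbf{Step 2: the quartic part.} The quartic part is
\[
\sum_{\ell}\hvb(\ell)\sum a^*_{q}a^*_{p+\ell}a_{q+\ell}a_{p},
\]
with $p,q\in Q_1$ and $p+\ell,q+\ell\in L$. We control it by Cauchy--Schwarz in pairs. In each monomial one annihilator lies in $Q_1$ and the other is a generic momentum. Pairing $a_{q+\ell}a_p$ against its adjoint gives the bound
\[
\sum_{\ell}\hvb(\ell)\sum_{p,q}Q_{1,p}L_{q+\ell}\|a_{q+\ell}a_p\psi\|^2
\le \Big(\sup_r\sum_\ell\hvb(\ell)L_{r+\ell}\Big)\langle\psi,\cN_{Q_1}\cN\psi\rangle .
\]
The supremum is again $\lesssim(1+\log\Lambda_1)\Lambda_1^{2-\beta}$. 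Strictly, this needs a Schur-type argument rather than a pointwise one, because $\ell$ links the two legs. I would write the quartic form as $\sum_\ell\hvb(\ell)\,\|\,\sum_p Q_{1,p}a^{\cdot}\cdots\|^2$-like expressions. Alternatively, I would simply use positivity: $(\rho_\ell)^*\rho_\ell$ is exactly the expectation of $\|\rho_\ell\psi\|^2$, bounded via
\[
\|\rho_\ell\psi\|\le\sum_{p\in Q_1}L_{p+\ell}\|a^*_{p+\ell}a_p\psi\|.
\]
The cleanest route, which I will attempt first, is the operator inequality
\[
\sum_\ell\hvb(\ell)(\rho_\ell^{L,Q_1})^*\rho_\ell^{L,Q_1}\le C_\beta(1+\log\Lambda_1)\Lambda_1^{2-\beta}\,\cN_{Q_1}(\cN+1),
\]
obtained by Cauchy--Schwarz with the weights split as $\hvb(\ell)^{1/2}\cdot\hvb(\ell)^{1/2}$ over the $Q_1$ index. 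Making this Schur bound rigorous, and uniform in the geometry of $L$, is the main obstacle.

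\textbf{Step 3: moment bounds and exponent bookkeeping.} By Cauchy--Schwarz in the state,
\[
\lambda^2\Tr(\cN_{Q_1}\cN\Gamma_\lambda)\le(\lambda^2\Tr\cN_{Q_1}^2\Gamma_\lambda)^{1/2}(\lambda^2\Tr\cN^2\Gamma_\lambda)^{1/2}.
\]
By Lemma~\ref{lem:R1-correction} and Proposition~\ref{prop:number-moments-log} this is $\lesssim L_\lambda^2\lambda^{3/4-\alpha_1}$, up to the negligible $\lambda N_{0,Q_1}$ term from Lemma~\ref{lem:free-tail-Q1}. The linear term satisfies
\[
\lambda^2\Tr(\cN_{Q_1}\Gamma_\lambda)\le\lambda\,(\lambda^2\Tr\cN_{Q_1}^2)^{1/2},
\]
which is smaller. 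Multiplying by $(1+\log\Lambda_1)\Lambda_1^{2-\beta}=\lambda^{-\alpha_1(2-\beta)}\cdot O(L_\lambda)$ yields $C_\beta L_\lambda^3\lambda^{3/4-\alpha_1-\alpha_1(2-\beta)}$, which is $C_\beta L_\lambda^3\lambda^{3/4-\alpha_1(3-\beta)}$, as claimed.
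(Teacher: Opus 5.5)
\textbf{The central step is left open.} Your overall architecture matches the paper's. You reduce $\sum_\ell\hvb(\ell)(\rho_\ell^{L,Q_1})^*\rho_\ell^{L,Q_1}$ to $C_\beta(1+\log\Lambda_1)\Lambda_1^{2-\beta}\,\cN\cN_{Q_1}$, then apply Cauchy--Schwarz in the state with Lemma~\ref{lem:R1-correction} and Proposition~\ref{prop:number-moments-log}; your Step~3 bookkeeping is correct. However, the proposal never proves the operator inequality everything rests on. You call it ``the main obstacle'' and list three routes without completing any.

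The missing idea, which the paper uses, is particle-wise Cauchy--Schwarz. On the $n$-particle sector, for $g=Le_\ell Q_1$,
\[
\|\dG(g)\Psi\|^2=\Big\|\sum_{j=1}^n g_j\Psi\Big\|^2\le n\sum_{j=1}^n\|g_j\Psi\|^2=\langle\Psi,n\,\dG(g^*g)\Psi\rangle ,
\]
so $\dG(g)^*\dG(g)\le\cN\,\dG(g^*g)$. Summing over $\ell$ gives $\le\cN\,\dG(K_L)$, where $K_L$ is exactly your one-body operator $K_1$. Since $\cN$ and $\dG(K_L)$ commute, this is a genuine operator inequality. No normal ordering and no quartic term ever appear, and your Step~1 symbol bound finishes the proof.

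\textbf{Your own first route also works.} The worry that it needs a Schur test is unfounded. Write the quartic form as
\[
\sum_\ell\hvb(\ell)\sum_{p,q}c_{p,q,\ell}\,\langle a_{p+\ell}a_q\psi,\,a_{q+\ell}a_p\psi\rangle,
\qquad
c_{p,q,\ell}=(2\pi)^{-2}Q_{1,p}L_{p+\ell}Q_{1,q}L_{q+\ell}\ge0 .
\]
Apply $|\langle x,y\rangle|\le\frac12(\|x\|^2+\|y\|^2)$ to each monomial. In the term $\|a_{q+\ell}a_p\psi\|^2$, fix the $Q_1$-leg $p$ and the $L$-leg $r=q+\ell$. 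The remaining $\ell$-weight is then at most $\hvb(\ell)L_{p+\ell}$, whose sum is bounded uniformly in $p$. The other term is symmetric. This gives
\[
\bigl|\mathrm{quartic}\bigr|\le C_\beta(1+\log\Lambda_1)\Lambda_1^{2-\beta}\,\cN_{Q_1}\cN_L ,
\]
which closes your argument, though less economically than the paper's one-liner.

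\textbf{The Step~1 symbol bound needs a different justification.} ``At most $\Lambda_1^2$ lattice points and $\hvb\le1$'' only gives $C\Lambda_1^2$, which would ruin the final exponent. You need that $\hvb$ is radially decreasing. Then the sum over the $\lesssim\Lambda_1^2$ points of $\{q-p:\ L_q=1\}$ is dominated by the sum over a centered ball of radius $C\Lambda_1$, which is $\le C_\beta(1+\log\Lambda_1)\Lambda_1^{2-\beta}$.
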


\begin{proof}
	Fix $\ell\in\mathbb Z^2$ and set
	\(
	g:=Le_\ell Q_1.
	\)
	On the $n$-particle sector $\mathfrak H^{\otimes_s n}$ we have
	\[
	\dG(g)=\sum_{j=1}^n g_j,
	\]
	where $g_j$ acts as $g$ on the $j$-th variable. Hence, for every $\Psi$ in that sector,
	\[
	\|\dG(g)\Psi\|^2
	=
	\Big\|\sum_{j=1}^n g_j\Psi\Big\|^2
	\le n\sum_{j=1}^n\|g_j\Psi\|^2
	=
	\langle \Psi,\,n\,\dG(g^*g)\Psi\rangle.
	\]
	This yields on Fock space
	\[
	\dG(g)^*\dG(g)\le \cN\,\dG(g^*g).
	\]
	Applying this to $g=Le_\ell Q_1$, we obtain
	\begin{align}\label{rho}
		\sum_{\ell\in\mathbb Z^2}\hvb(\ell)
		(\rho_\ell^{L,Q_1})^*\rho_\ell^{L,Q_1}
		\le \cN\,\dG(K_L),
	\end{align}
	where
	\[
	K_L:=\sum_{\ell\in\mathbb Z^2}\hvb(\ell)
	Q_1e_{-\ell}Le_\ell Q_1.
	\]

	In the Fourier basis,
	\[
	e_\ell e_p=(2\pi)^{-1}e_{p+\ell},
	\qquad
	e_{-\ell}e_p=(2\pi)^{-1}e_{p-\ell}.
	\]
	Hence
	\[
	K_Le_p
	=
	\frac{Q_{1,p}}{(2\pi)^2}\Big(\sum_{\ell\in\mathbb Z^2}\hvb(\ell)L_{p+\ell}\Big)e_p
	=
	\frac{Q_{1,p}}{(2\pi)^2}\Big(\sum_{q: L_q=1}\hvb(q-p)\Big)e_p.
	\]
	
{
\color{red}
Since $L\le S$, the support of $L$ is contained in a ball of radius $C\Lambda_1$. For fixed $p$, the set
\[
E_p:=\{q-p:\ L_q=1\}
\]
is therefore a translate of a subset of $\{r\in\mathbb Z^2:\ |r|\le C\Lambda_1\}$. Because $\hvb(r)=\langle r\rangle^{-\beta}$ is positive and radially decreasing, the sum over $E_p$ is bounded by the corresponding centered ball:
\[
\sup_{p\in\mathbb Z^2}\sum_{q:L_q=1}\hvb(q-p)
\le C_\beta\sum_{|r|\le C\Lambda_1}\langle r\rangle^{-\beta}
\le C_\beta(1+\log\Lambda_1)\Lambda_1^{2-\beta}.
\]
 Hence
\[
0\le K_L\le C_\beta(1+\log\Lambda_1)\Lambda_1^{2-\beta}Q_1,
\qquad
\dG(K_L)\le C_\beta(1+\log\Lambda_1)\Lambda_1^{2-\beta}\cN_{Q_1}.
\]
}
	Plugging this into \eqref{rho} yields
	\[
	\sum_{\ell\in\mathbb Z^2}\hvb(\ell)
	(\rho_\ell^{L,Q_1})^*\rho_\ell^{L,Q_1}
	\le {\rmk{C_\beta(1+\log\Lambda_1)\Lambda_1^{2-\beta}}}\cN \cN_{Q_1}.
	\]

	Taking the expectation in $\Gamma_\lambda$ and multiplying by $\lambda^2$ gives
	\[
	\sum_{\ell\in\mathbb Z^2}\hvb(\ell)
	\lambda^2\Tr_{\mathfrak F}\Bigl((\rho_\ell^{L,Q_1})^*\rho_\ell^{L,Q_1}\Gamma_\lambda\Bigr)
	\le {\rmk{C_\beta(1+\log\Lambda_1)\Lambda_1^{2-\beta}}}\lambda^2\Tr_{\mathfrak F}(\cN \cN_{Q_1}\Gamma_\lambda).
	\]
	Then Cauchy--Schwarz implies
	\[
	\lambda^2\Tr_{\mathfrak F}(\cN \cN_{Q_1}\Gamma_\lambda)
	\le
	\Bigl(\lambda^2\Tr_{\mathfrak F}(\cN^2\Gamma_\lambda)\Bigr)^{1/2}
	\Bigl(\lambda^2\Tr_{\mathfrak F}(\cN_{Q_1}^2\Gamma_\lambda)\Bigr)^{1/2}.
	\]
	Using Proposition~\ref{prop:number-moments-log} and the first estimate in Lemma~\ref{lem:R1-correction}, we obtain
	\[
	\lambda^2\Tr_{\mathfrak F}(\cN \cN_{Q_1}\Gamma_\lambda)
	\le C(1+|\log\lambda|^2)\lambda^{3/4-\alpha_1}.
	\]
	{\rmk{Since $(1+\log\Lambda_1)\Lambda_1^{2-\beta}\le C_\beta(1+|\log\lambda|)\lambda^{-\alpha_1(2-\beta)}$, this proves \eqref{eq:weighted-mixed-vanishes}.}}
\end{proof}

\begin{proof}[Proof of Lemma \ref{lem:cross-terms-controlled}]
	{\color{red}Write
	\[
	A_k=A_k^{\rm low}+A_k^{\rm sh},
	\qquad
	A_k^{\rm low}:=\dG(Pe_kP),
	\qquad
	A_k^{\rm sh}:=\dG(Se_kS)-\dG(Pe_kP).
	\]
	The operator $A_k^{\rm sh}$ is exactly the shell observable denoted by $B_k$ in Section~\ref{sec:shell-vanishing-rewritten}. We treat the low and shell pieces separately.}
	
	{\color{red}\emph{Step 1: the low--mixed part.}}
	{\color{red}By Cauchy--Schwarz in the state $\Gamma_\lambda$,
	\[
	\sum_k\hvb(k)\lambda^2
	\big|\Tr_{\mathfrak F}(A_k^{\rm low}B_{-k}\Gamma_\lambda)\big|
	\le
	\Bigg(\sum_k\hvb(k)\lambda^2\Tr\big((A_k^{\rm low})(A_k^{\rm low})^*\Gamma_\lambda\big)\Bigg)^{1/2}
	\Bigg(\sum_k\hvb(k)\lambda^2\Tr\big(B_{-k}^*B_{-k}\Gamma_\lambda\big)\Bigg)^{1/2}.
	\]
	Since $Pe_kP\neq0$ implies $|k|\lesssim\Lambda$, the low--low estimate \eqref{eq:shell-lowlow-local} gives
	\[
	\sum_k\hvb(k)\lambda^2\Tr\big((A_k^{\rm low})(A_k^{\rm low})^*\Gamma_\lambda\big)
	\le
	C_\beta(1+\log\Lambda)\Lambda^{2-\beta}L_\lambda^2.
	\]
	On the other hand, Proposition~\ref{prop:weighted-mixed} with $L=S$ yields
	\[
	\sum_k\hvb(k)\lambda^2\Tr\big(B_{-k}^*B_{-k}\Gamma_\lambda\big)
	\le
	C_\beta(1+|\log\lambda|^3)\lambda^{3/4-\alpha_1(3-\beta)}.
	\]
	Hence
	\[
	\sum_k\hvb(k)\lambda^2
	\big|\Tr(A_k^{\rm low}B_{-k}\Gamma_\lambda)\big|
	\le
	C_\beta(1+|\log\lambda|^3)\lambda^{\frac38-\frac{\alpha_1}{2}(3-\beta)-\frac\alpha2(2-\beta)}.
	\]}
	
	{\color{red}\emph{Step 2: the shell--mixed part.}}
	{\color{red}Using Corollary~\ref{cor:shell-summability-local} for $A_k^{\rm sh}$ and Proposition~\ref{prop:weighted-mixed} for $B_{-k}$, we obtain
	\begin{align*}
	\sum_k\hvb(k)\lambda^2
	\big|\Tr(A_k^{\rm sh}B_{-k}\Gamma_\lambda)\big|
	&\le
	\Bigg(\sum_k\hvb(k)\lambda^2\Tr\big(A_k^{\rm sh}(A_k^{\rm sh})^*\Gamma_\lambda\big)\Bigg)^{1/2}
	\Bigg(\sum_k\hvb(k)\lambda^2\Tr\big(B_{-k}^*B_{-k}\Gamma_\lambda\big)\Bigg)^{1/2}
	\\
	&\le
	C_\beta(1+|\log\lambda|^3)
	\Big(
	\lambda^{\frac38-\frac{\alpha_1}{2}(3-\beta)+\frac{\alpha\eta}{2}}
	+
	\lambda^{\frac98-\frac{\alpha_1}{2}(5-\beta)}
	\Big).
	\end{align*}}

	{\color{red}\emph{Step 3: the remaining three sums.}}
	{\color{red}Exactly the same argument applies to
	\[
	\sum_k\hvb(k)\lambda^2\big|\Tr(A_{-k}B_k\Gamma_\lambda)\big|.
	\]
	For the last two sums we use
	\[
	C_k=B_{-k}^*,
	\qquad
	C_{-k}=B_k^*,
	\]
	and repeat the same Cauchy--Schwarz estimate. Collecting the three exponents obtained above gives precisely the error $\mathfrak r^{\rm tail}_{\beta,\lambda}$ defined at the beginning of this section. This proves \eqref{eq:cross-terms-vanish}.}
\end{proof}

\subsection{The high--high mixed pair}

Define the following high--high mixed pair
\[
\mathcal T_{\rm hh}
:=
\frac{\lambda^2}{4}
\sum_{k\in\mathbb Z^2}\hvb(k)
\Tr\Bigl((B_kB_{-k}+C_kC_{-k})\Gamma_\lambda\Bigr).
\]
A direct repetition of Proposition~\ref{prop:weighted-mixed} is no longer sufficient here, because after summing over $k$ one encounters the tail kernel
\[
\sum_{q\in Q_1}\hvb(q-p),
\]
{\rmk{which diverges. We therefore keep $\mathcal T_{\rm hh}$ as a single off-diagonal two-body block. The point is that this block has operator norm of order $C_\beta(1+\log\Lambda_1)\Lambda_1^{2-\beta}$, and this loss can then be paired with the small $Q_1\otimes Q_1$ mass furnished by Lemma~\ref{lem:R1-correction}.}}

\begin{proposition}[Vanishing of the high--high mixed pair]
	\label{prop:high-high-mixed-vanishes}
	It holds that
	\[
	|\mathcal T_{\rm hh}|\leq {\rmk{C_\beta(1+|\log \lambda|^3)\lambda^{3/4-\alpha_1(3-\beta)}}}.
	\]
\end{proposition}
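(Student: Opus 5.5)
We write $\mathcal T_{\rm hh}$ back as the expectation of a two-body operator. By the computation in Lemma~\ref{lem:exact-mixed-decomp},
\[
\mathcal T_{\rm hh}
=\frac{\lambda^2}{2}\Tr_{H^{\otimes_s2}}\Bigl(\bigl(\Pi_1VQ_1^{\otimes2}+Q_1^{\otimes2}V\Pi_1\bigr)\Gamma_\lambda^{(2)}\Bigr),
\]
with no one-body correction, since $SQ_1=0$. So $\mathcal T_{\rm hh}=\lambda^2\,\mathrm{Re}\,\Tr(T\Gamma^{(2)}_\lambda)$ with the off-diagonal block $T:=\Pi_1 V Q_1^{\otimes 2}$.

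The plan is a Cauchy--Schwarz splitting with the small $Q_1\otimes Q_1$ mass on one side. Writing $\Gamma^{(2)}=(\Gamma^{(2)})^{1/2}(\Gamma^{(2)})^{1/2}$, we have
\[
|\Tr(T\Gamma^{(2)})|\le \bigl(\Tr(TT^*\Gamma^{(2)})\bigr)^{1/2}\bigl(\Tr(Q_1^{\otimes2}\Gamma^{(2)})\bigr)^{1/2}
\le \|T\|\,\bigl(\Tr(\Pi_1\Gamma^{(2)})\bigr)^{1/2}\bigl(\Tr(Q_1^{\otimes2}\Gamma^{(2)})\bigr)^{1/2}.
\]
Here we use that $TT^*=\Pi_1 T T^*\Pi_1\le\|T\|^2\Pi_1$. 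We then multiply by $\lambda^2$ and estimate the two traces separately.

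The first trace satisfies $\lambda^2\Tr(\Pi_1\Gamma^{(2)}_\lambda)\le\lambda^2\Tr(\cN^2\Gamma_\lambda)\le CL_\lambda^2$ by Proposition~\ref{prop:number-moments-log}. For the second, the first estimate of Lemma~\ref{lem:R1-correction} gives $\lambda^2\Tr(Q_1^{\otimes2}\Gamma^{(2)}_\lambda)\le C\lambda^{3/2-2\alpha_1}L_\lambda^2$. Combining these yields
\[
|\mathcal T_{\rm hh}|\le C\|T\|\,L_\lambda^2\,\lambda^{3/4-\alpha_1}.
\]
It therefore suffices to show $\|T\|\le C_\beta(1+\log\Lambda_1)\Lambda_1^{2-\beta}$, since $(1+\log\Lambda_1)\Lambda_1^{2-\beta}\le C_\beta L_\lambda\lambda^{-\alpha_1(2-\beta)}$ and this gives exactly $C_\beta L_\lambda^3\lambda^{3/4-\alpha_1(3-\beta)}$.

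The main step is the norm bound on $T$. In momentum space, $T$ maps the basis vector $e_{p}\otimes e_{q}$ (with $p,q\in Q_1$) to $(2\pi)^{-2}\hvb(k)\,e_{p+k}\otimes e_{q-k}$, restricted to $p+k,\ q-k\in S$. Total momentum $p+q$ is conserved, so $T$ is block-diagonal in the total momentum $m$. On each block it is a matrix indexed by $(p,k)$ with entries $\hvb(k)\mathbf 1_{S}(p+k)\mathbf 1_S(m-p-k)\mathbf 1_{Q_1}(p)\mathbf 1_{Q_1}(m-p)$. We use the Schur test with unit weights. Row sums: for fixed output $p'=p+k\in S$, the sum over $k$ with $p'-k\in Q_1$ of $\hvb(k)$ diverges a priori, but the constraint that $q'=m-p'\in S$ and both $p',q'$ lie in the ball of radius $\Lambda_1$ only fixes the output, not $k$. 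This is the obstacle.

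We therefore use the Schur test in the opposite grouping. For fixed input $(p,q)\in Q_1\times Q_1$, the sum over $k$ is restricted by $p+k\in S$, so $k$ ranges over a translate of the ball of radius $C\Lambda_1$. By the radial-monotonicity argument of Proposition~\ref{prop:weighted-mixed}, this sum is at most $C_\beta(1+\log\Lambda_1)\Lambda_1^{2-\beta}$. For fixed output $(p',q')\in S\times S$, the sum over $k$ is not automatically localized. Here we use a weighted Schur test with weight $w(p,q)=\langle p\rangle^{-\delta}$ on inputs and $1$ on outputs. Alternatively, we note that $p=p'-k\in Q_1$ and $q=q'+k\in Q_1$ impose nothing useful. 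If this fails, we fall back on bounding $\|T\|\le\|\Pi_1V\Pi_1\|^{1/2}\|Q_1^{\otimes2}VQ_1^{\otimes2}\|^{1/2}$, using $V\ge0$, which is Cauchy--Schwarz for the positive form $V$. The first factor is bounded by the same $(1+\log\Lambda_1)\Lambda_1^{2-\beta}$ Schur estimate as in Lemma~\ref{lem:shell-decomposition-local}. The second factor is the delicate one, and we would try to avoid it by instead pairing the positive form directly: $|\Tr(T\Gamma^{(2)})|\le\Tr(\Pi_1V\Pi_1\Gamma^{(2)})^{1/2}\Tr(Q_1^{\otimes2}VQ_1^{\otimes2}\Gamma^{(2)})^{1/2}$. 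This approach would in turn require controlling the positive $Q_1Q_1$ interaction energy by the total interaction bound of Lemma~\ref{lem:yukawa-lemma11-safe}.
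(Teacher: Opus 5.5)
Your reduction is the paper's: writing $\mathcal T_{\rm hh}=\lambda^2\,\mathrm{Re}\,\Tr(T\Gamma_\lambda^{(2)})$ with $T=\Pi_1VQ_1^{\otimes2}$, the block Cauchy--Schwarz bound, the number-moment estimate and the $Q_1\otimes Q_1$ mass bound of Lemma~\ref{lem:R1-correction} are all correct. So is the exponent bookkeeping, once $\|T\|\le C_\beta(1+\log\Lambda_1)\Lambda_1^{2-\beta}$ is known.

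The gap is that this norm bound, the one genuinely new input, is never proved. You correctly see that the unweighted Schur test on $T$ fails, since the output-side sum of $\hvb$ over the $Q_1$-variable diverges. You then invoke a weighted Schur test without fixing the weight or checking either Schur inequality, and fall back on two arguments that both fail.

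\emph{First fallback.} The bound $\|T\|\le\|\Pi_1V\Pi_1\|^{1/2}\|Q_1^{\otimes2}VQ_1^{\otimes2}\|^{1/2}$ is useless because the second factor is infinite. On each total-momentum fiber, $Q_1^{\otimes2}VQ_1^{\otimes2}$ is convolution by $\hvb$ on a set containing arbitrarily large balls, and $\hvb\notin\ell^1(\mathbb Z^2)$ for $\beta\le2$.

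\emph{Second fallback.} Cauchy--Schwarz for the positive form $V$ in the state would need $\lambda^2\Tr(Q_1^{\otimes2}VQ_1^{\otimes2}\Gamma_\lambda^{(2)})=o\bigl(\Lambda_1^{-(2-\beta)}\bigr)$ up to logarithms. This is needed to beat the other factor, which is only bounded by $(1+\log\Lambda_1)\Lambda_1^{2-\beta}L_\lambda^2$. Nothing gives such smallness. Lemma~\ref{lem:yukawa-lemma11-safe} controls only $\Tr(W^{\rm re}\Gamma_\lambda)$, which is neither positive nor localized to $Q_1$ (and $Q_1^{\otimes2}VQ_1^{\otimes2}\not\le V$). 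At best it gives an $O(L_\lambda^2)$ bound, not a small one.

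The missing idea, used in Lemma~\ref{lem:Thh-norm}, is to apply Schur's test to $TT^*$ rather than to $T$. On the fiber of total momentum $\ell$, $TT^*$ acts on the finite set $\mathcal R_\ell=\{p:\ S_pS_{\ell-p}=1\}$, which lies in a ball of radius $C\Lambda_1$. Its kernel satisfies $(2\pi)^{-4}\sum_{u\in\mathcal Q_\ell}\hvb(p-u)\hvb(p'-u)\le C\sum_{r}\langle r\rangle^{-\beta}\langle r+p-p'\rangle^{-\beta}\lesssim(1+\log(2+|p-p'|))\langle p-p'\rangle^{2-2\beta}$. Squaring turns the divergent sum over the unbounded $Q_1$-variable into an $\ell^2$-convolution, which is finite because $\hvb\in\ell^2$ for $\beta>1$. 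Summing over $|p-p'|\lesssim\Lambda_1$ then gives $\|TT^*\|\lesssim(1+\log\Lambda_1)^2\Lambda_1^{4-2\beta}$.

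Your weighted Schur idea can also be completed, but only by doing the computation. Take weight $1$ on $\mathcal R_\ell$ and $\langle u\rangle^{-\delta}$ on $\mathcal Q_\ell$ with $\delta\in(2-\beta,\beta]$. Then:
\begin{itemize}
\item $\sup_{p\in\mathcal R_\ell}\sum_{u\in\mathcal Q_\ell}\hvb(p-u)\langle u\rangle^{-\delta}\lesssim(1+\log\Lambda_1)\Lambda_1^{2-\beta-\delta}$, using $\beta+\delta>2$ and $\langle u\rangle>\Lambda_1$ on $\mathcal Q_\ell$;
\item $\sup_{u\in\mathcal Q_\ell}\langle u\rangle^{\delta}\sum_{p\in\mathcal R_\ell}\hvb(p-u)\lesssim(1+\log\Lambda_1)\Lambda_1^{2-\beta+\delta}$, using $\delta\le\beta$ and $\#\mathcal R_\ell\lesssim\Lambda_1^2$.
\end{itemize}
Their geometric mean is the required bound. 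As written, the proposal leaves this central estimate open.
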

\begin{lemma}[The high--high mixed pair as a two-body off-diagonal block]
	\label{lem:Thh-two-body}
	Define the two-body operator
	\[
	\mathbb T_{\rm hh}
	:=
	\sum_{k\in\mathbb Z^2}\hvb(k)
	\bigl(Se_kQ_1\otimes Se_{-k}Q_1\bigr)
	\]
	on $H^{\otimes 2}$. Then
	\[
	\mathbb T_{\rm hh}=(S\otimes S)\mathbb T_{\rm hh}(Q_1\otimes Q_1),
	\qquad
	\mathbb T_{\rm hh}^*=(Q_1\otimes Q_1)\mathbb T_{\rm hh}^*(S\otimes S).
	\]
	Moreover,
	\begin{equation}
		\mathcal T_{\rm hh}
		=
		\frac{\lambda^2}{2}\Tr_{H^{\otimes 2}}
		\Bigl((\mathbb T_{\rm hh}+\mathbb T_{\rm hh}^*)\Gamma_\lambda^{(2)}\Bigr).
		\label{eq:Thh-two-body}
	\end{equation}
\end{lemma}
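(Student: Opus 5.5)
The plan is to verify the support identities directly from the spectral projections, and then obtain \eqref{eq:Thh-two-body} from the quantization rule $\mathbb W(X\otimes Y)=\frac12(\dG(X)\dG(Y)-\dG(XY))$. This is the same rule used in Lemma~\ref{lem:exact-mixed-decomp}.

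\textbf{Support identities.} Each summand $Se_kQ_1\otimes Se_{-k}Q_1$ already has $S$ on the left and $Q_1$ on the right in each tensor factor. Since $S^2=S$ and $Q_1^2=Q_1$, we get
\[
(S\otimes S)(Se_kQ_1\otimes Se_{-k}Q_1)(Q_1\otimes Q_1)=Se_kQ_1\otimes Se_{-k}Q_1 .
\]
Summing over $k$ gives $\mathbb T_{\rm hh}=(S\otimes S)\mathbb T_{\rm hh}(Q_1\otimes Q_1)$. The sum is taken as a quadratic form on finite-momentum vectors; for fixed momenta on $Q_1\otimes Q_1$ only finitely many $k$ contribute, because the $S$-legs force $|p+k|,|q-k|\lesssim\Lambda_1$. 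Taking adjoints gives the second identity.

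\textbf{Identification with the $\mathcal T_{\rm hh}$ terms.} Set $X=Se_kQ_1$ and $Y=Se_{-k}Q_1$. Then $XY=Se_kQ_1Se_{-k}Q_1=0$, because $Q_1S=0$. Hence $\mathbb W(Se_kQ_1\otimes Se_{-k}Q_1)=\frac12 B_kB_{-k}$. For the adjoint we use $(Se_kQ_1)^*=Q_1e_{-k}S$ and $\hvb(-k)=\hvb(k)$, and reindex $k\mapsto -k$. This gives
\[
\mathbb T_{\rm hh}^*=\sum_k\hvb(k)\,(Q_1e_kS\otimes Q_1e_{-k}S),
\]
and therefore $\mathbb W(\mathbb T_{\rm hh}^*)=\frac12\sum_k\hvb(k)C_kC_{-k}$, again with no one-body correction since $SQ_1=0$. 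Multiplying by $\lambda^2$ and using $\Tr_{\mathfrak F}(\mathbb W(T)\Gamma)=\Tr(T\Gamma^{(2)})$ yields
\[
\frac{\lambda^2}{2}\Tr\bigl((\mathbb T_{\rm hh}+\mathbb T_{\rm hh}^*)\Gamma_\lambda^{(2)}\bigr)=\frac{\lambda^2}{4}\sum_k\hvb(k)\Tr\bigl((B_kB_{-k}+C_kC_{-k})\Gamma_\lambda\bigr)=\mathcal T_{\rm hh}.
\]
This is consistent with the $B_kB_{-k}+C_kC_{-k}$ part of \eqref{eq:exact-mixed-decomp}, which comes from the $(S\otimes S)V(Q_1\otimes Q_1)$ and $(Q_1\otimes Q_1)V(S\otimes S)$ pieces of $\Pi_1VR_1+R_1V\Pi_1$.

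\textbf{Main obstacle.} The only delicate point is justifying the exchange of the $k$-sum with the trace, since $\sum_k\hvb(k)$ diverges and the operators are unbounded. I would handle this by truncating to $|k|\le K$, where the identities above are finite algebra. I would then pass $K\to\infty$ in the quadratic form on finite-particle, finite-momentum vectors; for fixed incoming and outgoing momenta the sum is finite. Finally, I would use the finiteness of $\lambda^2\Tr(\cN^2\Gamma_\lambda)$ from Proposition~\ref{prop:number-moments-log}, together with the boundedness of $\mathbb T_{\rm hh}$ as a two-body operator, which is the Schur-type bound to be proved next, to conclude that both sides are well defined and equal.
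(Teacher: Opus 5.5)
Your proposal is correct and follows the paper's proof essentially verbatim. The support identities come from $S^2=S$ and $Q_1^2=Q_1$. The rule $\mathbb W(X\otimes Y)=\frac12(\dG(X)\dG(Y)-\dG(XY))$ with $XY=0$ (from $Q_1S=0$) gives $\frac12\sum_k\hvb(k)B_kB_{-k}$, and after taking adjoints and reindexing $k\mapsto-k$ it gives $\frac12\sum_k\hvb(k)C_kC_{-k}$.

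Your truncation argument for interchanging the $k$-sum with the trace is an extra justification that the paper omits. To close it, note that each truncated block $\sum_{|k|\le K}$ obeys the same Schur bound as in Lemma~\ref{lem:Thh-norm}, because the positive Schur kernel only decreases under truncation. The truncated blocks are therefore uniformly bounded and converge strongly, which lets you pass to the limit against the trace-class $\Gamma_\lambda^{(2)}$.
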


\begin{proof}
	By definition,
	\[
	\mathbb T_{\rm hh}
	=
	\sum_k\hvb(k)(Se_kQ_1\otimes Se_{-k}Q_1),
	\]
	so its range is contained in $S\otimes S$ and its domain in $Q_1\otimes Q_1$, proving the first pair of identities.
	
	For the second statement, use the standard two-body quantization identity
	\[
	\Tr_{\mathfrak F}(\mathbb W(V)\Gamma)=\Tr_{H^{\otimes 2}}(V\Gamma^{(2)}).
	\]
	Now
	\[
	\mathbb W(\mathbb T_{\rm hh})
	=
	\frac1{2}\sum_k\hvb(k)B_kB_{-k}.
	\]
	Here we used
	\[
	(Se_kQ_1)(Se_{-k}Q_1)=0=(Se_{-k}Q_1)(Se_kQ_1)
	\]
	by the orthogonality $Q_1S=0$, and no one-body correction appears. Likewise,
	\[
	\mathbb	W(\mathbb T_{\rm hh}^*)
	=
	\frac1{2}\sum_k\hvb(k)C_kC_{-k}.
	\]
	Taking the expectation in $\Gamma_\lambda$ and multiplying by $\lambda^2/2$ yields \eqref{eq:Thh-two-body}.
\end{proof}

\begin{lemma}[Operator norm bound for the high--high block]
	\label{lem:Thh-norm}
	With $\mathbb T_{\rm hh}$ as in Lemma~\ref{lem:Thh-two-body}, one has
	\begin{equation}
		\|\mathbb T_{\rm hh}\|_{\mathrm{op}}\le {\rmk{C_\beta(1+\log \Lambda_1)\Lambda_1^{2-\beta}}}.
		\label{eq:Thh-norm}
	\end{equation}
\end{lemma}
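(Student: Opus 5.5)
We bound the norm of $\mathbb T_{\rm hh}$ by a Schur test in momentum space. Since $e_k e_p=(2\pi)^{-1}e_{p+k}$, the operator $Se_kQ_1\otimes Se_{-k}Q_1$ maps $e_p\otimes e_q$ to $(2\pi)^{-2}S_{p+k}S_{q-k}Q_{1,p}Q_{1,q}\,e_{p+k}\otimes e_{q-k}$. Summing over $k$, $\mathbb T_{\rm hh}$ is a matrix on $\ell^2((\Z^2)^2)$ with entries
\[
T_{(p',q'),(p,q)}=(2\pi)^{-2}\sum_{k}\hvb(k)\,\delta_{p',p+k}\,\delta_{q',q-k}\,S_{p'}S_{q'}\,Q_{1,p}Q_{1,q}.
\]
All entries are nonnegative, and each entry involves a single $k$, namely $k=p'-p$, with the constraint $p'+q'=p+q$ (total momentum conservation).

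By Schur's test, $\|\mathbb T_{\rm hh}\|\le(\sup_{\rm rows}\sum_{\rm cols}|T|)^{1/2}(\sup_{\rm cols}\sum_{\rm rows}|T|)^{1/2}$, so it suffices to bound both maximal row and column sums by $C_\beta(1+\log\Lambda_1)\Lambda_1^{2-\beta}$.

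\textbf{Row sums.} Fix an output $(p',q')$ with $S_{p'}S_{q'}=1$. The inputs are $(p,q)=(p'-k,q'+k)$, parametrized by $k$, so the row sum is at most $(2\pi)^{-2}\sum_k\hvb(k)Q_{1,p'-k}Q_{1,q'+k}$. This is where the obstacle lies: the constraint $p=p'-k\in Q_1$ only says $|k|\gtrsim\Lambda_1$ eventually, and $\sum_{|k|\gtrsim\Lambda_1}\langle k\rangle^{-\beta}$ diverges for $\beta\le2$. A naive row sum therefore fails.

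\textbf{Column sums.} These are of the same type. For fixed input $(p,q)\in Q_1\times Q_1$ the outputs are $(p+k,q-k)$ with $S_{p+k}=1$. This forces $|p+k|\lesssim\Lambda_1$, so $k$ ranges over a translate of a ball of radius $C\Lambda_1$. Radial monotonicity of $\hvb$, exactly as in Proposition~\ref{prop:weighted-mixed}, gives the column sum $\le C_\beta(1+\log\Lambda_1)\Lambda_1^{2-\beta}$. The columns are thus fine.

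For the rows, use the $S$-constraint on the output side instead. Only outputs with $|p'|,|q'|\lesssim\Lambda_1$ matter, so $\mathbb T_{\rm hh}=(S\otimes S)\mathbb T_{\rm hh}$. The fix is to exploit the momentum-conservation fibration. On each fiber $p+q=p'+q'=P_{\rm tot}$ the matrix is a Toeplitz-type operator in one variable, namely $k\mapsto\hvb(k)$ restricted between the sets $\{p':S_{p'}S_{P_{\rm tot}-p'}=1\}$ and $\{p:Q_{1,p}Q_{1,P_{\rm tot}-p}=1\}$. The output set lies in a ball of radius $C\Lambda_1$.

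So I would apply a weighted Schur test on each fiber, with weights $w(p')=1$ on the output side and $w(p)=\langle\,\mathrm{dist}(p,\mathrm{ball})\rangle^{-\delta}$ on the input side. The column sum then still gives $\Lambda_1^{2-\beta}\log\Lambda_1$, and the row sum becomes $\sum_k\langle k\rangle^{-\beta}\langle |p'-k|-C\Lambda_1\rangle_+^{-\delta}$, which converges for $\delta>2-\beta$.

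\textbf{Main obstacle.} The delicate step is balancing this weighted Schur test so that both sums stay $\lesssim(1+\log\Lambda_1)\Lambda_1^{2-\beta}$. If it fails, the fallback is to bound instead the pairing $|\Tr(\mathbb T_{\rm hh}\Gamma^{(2)})|$ directly. This works because only the product $\|\mathbb T_{\rm hh}(Q_1\otimes Q_1)\|$ paired with the $Q_1\otimes Q_1$ mass of Lemma~\ref{lem:R1-correction} is needed.
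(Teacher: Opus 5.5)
Your fiber decomposition and your diagnosis that the unweighted row sums diverge are correct. However, the weighted Schur test you propose cannot deliver the stated bound.

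With output weight $1$ and input weight $w(p)=\langle\mathrm{dist}(p,B)\rangle^{-\delta}$, the column condition reads $\sum_{p'}\widehat v_\beta(p'-p)\le C_2\,w(p)$ for every input $p$. It is not enough that the unweighted column sum is $\lesssim(1+\log\Lambda_1)\Lambda_1^{2-\beta}$. Look at the fiber of total momentum $0$, where the output set is the whole ball $B=\{h\le\Lambda_1^2\}$. For an input $p$ with $\mathrm{dist}(p,B)\approx\Lambda_1$, all $\sim\Lambda_1^2$ outputs lie within distance $3\Lambda_1$ of $p$. So the column sum is $\gtrsim\Lambda_1^{2-\beta}$ while $w(p)\approx\Lambda_1^{-\delta}$, which forces $C_2\gtrsim\Lambda_1^{2-\beta+\delta}$.

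Your weighted row constant is of order one: bounded above when $\delta>2-\beta$, and below by the inputs adjacent to $\partial B$. Hence Schur gives at best $\|\mathbb T_{\rm hh}\|\lesssim\Lambda_1^{(2-\beta+\delta)/2}$. Now $(2-\beta+\delta)/2>2-\beta$ holds exactly when $\delta>2-\beta$, which is the very condition you need for the row sum to converge. So no weight in this family works; this is not a matter of balancing constants. Your fallback of estimating $\Tr(\mathbb T_{\rm hh}\Gamma^{(2)})$ directly replaces the statement rather than proving it.

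What is missing is that the input weight must stay of order one on the whole annulus $\Lambda_1<|p|\lesssim 2\Lambda_1$ and decay only on the scale $\Lambda_1$. For example, $w(p)=(1+\mathrm{dist}(p,B)/\Lambda_1)^{-\delta}$ with $2-\beta<\delta\le\beta$ works:
\begin{itemize}
\item For $|p|\le2\Lambda_1$, both weighted sums are at most $C\sum_{|d|\le3\Lambda_1}\langle d\rangle^{-\beta}\le C_\beta(1+\log\Lambda_1)\Lambda_1^{2-\beta}$.
\item For $|p|>2\Lambda_1$, the column sum is $\lesssim\Lambda_1^{2}|p|^{-\beta}\le\Lambda_1^{2-\beta}(\Lambda_1/|p|)^{\delta}$, using $\delta\le\beta$.
\item The tail of each row sum is $\lesssim\Lambda_1^{\delta}\sum_{|p|>2\Lambda_1}|p|^{-\beta-\delta}\lesssim\Lambda_1^{2-\beta}$, using $\beta+\delta>2$.
\end{itemize}

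The paper avoids weights altogether. On each fiber it estimates $M_\ell=T_\ell^0(T_\ell^0)^*$, whose kernel satisfies $\sum_{u\in\mathcal Q_\ell}\widehat v_\beta(p-u)\widehat v_\beta(p'-u)\le C_\beta\sum_{r}\langle r\rangle^{-\beta}\langle r+p-p'\rangle^{-\beta}$. This sum over the unbounded input set now converges, because it carries two Bessel factors and $\beta>1$. The paper then runs the unweighted Schur test only on the bounded output set $\mathcal R_\ell$, obtaining $\|M_\ell\|\le C_\beta(1+\log\Lambda_1)^2\Lambda_1^{4-2\beta}$. Either device closes the gap; your argument as written does not.
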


\begin{proof}
	We argue fiberwise with respect to the conserved total momentum.
	
	\medskip
	\noindent
	\textbf{Step 1. Total-momentum fibers.}
	Let
	\[
	\mathcal D:=\mathrm{span}\{e_u\otimes e_v:\ u,v\in\mathbb Z^2\}\subset H^{\otimes 2}.
	\]
	For every basis vector,
	\[
	\mathbb T_{\rm hh}(e_u\otimes e_v)
	=
	\frac1{(2\pi)^2}\sum_{k\in\mathbb Z^2}\hvb(k)
	Q_{1,u}Q_{1,v}S_{u+k}S_{v-k}\,e_{u+k}\otimes e_{v-k}.
	\]
	Since $S$ has finite Fourier support, only finitely many $k$ contribute for fixed $u,v$, so $\mathbb T_{\rm hh}$ is well defined on $\mathcal D$.
	
	For each $\ell\in\mathbb Z^2$, define
	\[
	\mathcal H_\ell:=\overline{\mathrm{span}}\{e_p\otimes e_{\ell-p}:\ p\in\mathbb Z^2\}.
	\]
	Then
	\[
	H^{\otimes 2}=\bigoplus_{\ell\in\mathbb Z^2}\mathcal H_\ell
	\]
	orthogonally, and $\mathbb T_{\rm hh}$ preserves each fiber because
	\[
	e_u\otimes e_{\ell-u}\longmapsto e_{u+k}\otimes e_{\ell-u-k}.
	\]
	
	\medskip
	\noindent
	\textbf{Step 2. Formula on each fiber.}
	Identify $\mathcal H_\ell$ with $\ell^2(\mathbb Z^2)$ through
	\[
	e_p\otimes e_{\ell-p}\longleftrightarrow \delta_p.
	\]
	The restriction of $\mathbb T_{\rm hh}$ to $\mathcal H_\ell$ is then the operator
	\[
	(T_\ell^0f)(p)
	=
	\frac1{(2\pi)^2}\mathbf 1_{\mathcal R_\ell}(p)
	\sum_{u\in\mathcal Q_\ell}\hvb(p-u)f(u),
	\]
	where
	\[
	\mathcal R_\ell:=\{p:\ S_pS_{\ell-p}=1\},
	\qquad
	\mathcal Q_\ell:=\{u:\ Q_{1,u}Q_{1,\ell-u}=1\}.
	\]
	Thus $T_\ell^0$ is a truncated convolution operator from $\ell^2(\mathcal Q_\ell)$ to $\ell^2(\mathcal R_\ell)$.
	
	\medskip
	\noindent
	\textbf{Step 3. Schur bound on $T_\ell^0(T_\ell^0)^*$.}
	Since $\mathcal R_\ell$ is finite, the positive operator
	\[
	M_\ell:=T_\ell^0(T_\ell^0)^*
	\]
	has kernel
	\[
	M_\ell(p,p')
	=
	\frac1{(2\pi)^4}\sum_{u\in\mathcal Q_\ell}\hvb(p-u)\hvb(p'-u).
	\]
{
\color{red}
Using $\hvb(r)=\langle r\rangle^{-\beta}$ and Lemma~\ref{lem:sum}, we obtain the uniform convolution bound
\[
0\le M_\ell(p,p')\le C_\beta A_\beta(p-p'),
\qquad
A_\beta(d):=\sum_{r\in\mathbb Z^2}\langle r\rangle^{-\beta}\langle r+d\rangle^{-\beta}
\le C_\beta(1+\log(2+|d|))\langle d\rangle^{-(2\beta-2)}.
\]
Indeed, for $\beta<2$ this follows from Lemma~\ref{lem:sum}(i) with $d=2$ and $l=m=\beta$, while at the endpoint $\beta=2$ we use Lemma~\ref{lem:sum}(ii). Since every $p,p'\in\mathcal R_\ell$ satisfy $|p|,|p'|\lesssim \Lambda_1$, we have $|p'-p|\lesssim \Lambda_1$. Therefore
\[
\sum_{p'\in\mathcal R_\ell}M_\ell(p,p')
\le C_\beta\sum_{|d|\lesssim \Lambda_1}(1+\log(2+|d|))\langle d\rangle^{-(2\beta-2)}
\le C_\beta(1+\log\Lambda_1)^2\Lambda_1^{4-2\beta},
\]
and the same bound holds for the column sums. By Schur's test,
\[
\|M_\ell\|_{\mathrm{op}}\le C_\beta(1+\log\Lambda_1)^2\Lambda_1^{4-2\beta}.
\]
Hence
\[
\|T_\ell^0\|_{\mathrm{op}}\le C_\beta(1+\log\Lambda_1)\Lambda_1^{2-\beta}.
\]
}
	
	\medskip
	\noindent
	\textbf{Step 4. Reconstruction on $H^{\otimes 2}$.}
	Each $T_\ell^0$ extends boundedly to $T_\ell:\mathcal H_\ell\to\mathcal H_\ell$, and
	\[
	\sup_{\ell\in\mathbb Z^2}\|T_\ell\|_{\mathrm{op}}\le {\rmk{C_\beta(1+\log\Lambda_1)\Lambda_1^{2-\beta}}}.
	\]
	Thus the orthogonal direct sum
	\[
	\widetilde T:=\bigoplus_{\ell\in\mathbb Z^2}T_\ell
	\]
	is a bounded operator on $H^{\otimes 2}$ with
	\[
	\|\widetilde T\|_{\mathrm{op}}=\sup_{\ell}\|T_\ell\|_{\mathrm{op}}\le {\rmk{C_\beta(1+\log\Lambda_1)\Lambda_1^{2-\beta}}}.
	\]
	By construction, $\widetilde T$ coincides with $\mathbb T_{\rm hh}$ on the dense subspace $\mathcal D$, hence it is the closure of $\mathbb T_{\rm hh}$. This proves \eqref{eq:Thh-norm}.
\end{proof}

\begin{lemma}[Block Cauchy--Schwarz for positive trace-class operators]
	\label{lem:block-cs}
	Let $\Gamma\ge 0$ be trace-class on a Hilbert space $\mathfrak H$, let $\Pi_1,\Pi_2$ be orthogonal projections, and let $T=\Pi_1T\Pi_2$ be bounded. Then
	\[
	\big|\Tr_{\mathfrak H}(T\Gamma)\big|
	\le \|T\|_{\mathrm{op}}
	\Big(\Tr_{\mathfrak H}(\Pi_1\Gamma)\Big)^{1/2}
	\Big(\Tr_{\mathfrak H}(\Pi_2\Gamma)\Big)^{1/2}.
	\]
\end{lemma}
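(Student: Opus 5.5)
The plan is to factor $T$ through the two projections and apply the Cauchy--Schwarz inequality for the positive sesquilinear form $(X,Y)\mapsto \Tr(X^*Y\Gamma)$, or equivalently to work with the Hilbert--Schmidt inner product after splitting $\Gamma=\Gamma^{1/2}\Gamma^{1/2}$.

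Since $T=\Pi_1T\Pi_2$, cyclicity of the trace gives
\[
\Tr(T\Gamma)=\Tr(\Pi_1T\Pi_2\Gamma)=\Tr\bigl((\Gamma^{1/2}\Pi_1)(T\Pi_2\Gamma^{1/2})\bigr).
\]
This is legitimate because $\Gamma^{1/2}$ is Hilbert--Schmidt and $\Pi_1,T\Pi_2$ are bounded, so both factors are Hilbert--Schmidt and the product is trace class.

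Next, apply the Hilbert--Schmidt Cauchy--Schwarz inequality $|\Tr(XY)|\le\|X\|_{\mathrm{HS}}\|Y\|_{\mathrm{HS}}$ with $X=\Gamma^{1/2}\Pi_1$ and $Y=T\Pi_2\Gamma^{1/2}$. For the first factor,
\[
\|\Gamma^{1/2}\Pi_1\|_{\mathrm{HS}}^2=\Tr(\Pi_1\Gamma\Pi_1)=\Tr(\Pi_1\Gamma).
\]
For the second factor, $\|T\Pi_2\Gamma^{1/2}\|_{\mathrm{HS}}\le\|T\|_{\mathrm{op}}\|\Pi_2\Gamma^{1/2}\|_{\mathrm{HS}}$, and
\[
\|\Pi_2\Gamma^{1/2}\|_{\mathrm{HS}}^2=\Tr(\Pi_2\Gamma\Pi_2)=\Tr(\Pi_2\Gamma).
\]
Combining these bounds gives the claim.

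There is no real obstacle here. The only point needing care is the trace-class justification of the cyclic rearrangement, which follows from the ideal properties of the Schatten classes as noted above.
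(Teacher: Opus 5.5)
Your proof is correct and follows the paper's argument essentially line by line. You split $\Gamma=\Gamma^{1/2}\Gamma^{1/2}$, apply Hilbert--Schmidt Cauchy--Schwarz to $\Gamma^{1/2}\Pi_1$ and $T\Pi_2\Gamma^{1/2}$, and identify the two squared norms with $\Tr(\Pi_1\Gamma)$ and $\Tr(\Pi_2\Gamma)$; your explicit justification of the cyclic rearrangement via the Schatten ideal property is a small point of rigor the paper leaves implicit.
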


\begin{proof}
	Since $\Gamma\ge 0$, write $\Gamma=\Gamma^{1/2}\Gamma^{1/2}$. Then
	\[
	\Tr(T\Gamma)=\Tr(\Gamma^{1/2}\Pi_1T\Pi_2\Gamma^{1/2}).
	\]
	Applying Hilbert--Schmidt Cauchy--Schwarz gives
	\[
	|\Tr(T\Gamma)|
	\le \|\Gamma^{1/2}\Pi_1\|_{\mathrm{HS}}\,\|T\|_{\mathrm{op}}\,\|\Pi_2\Gamma^{1/2}\|_{\mathrm{HS}}.
	\]
	Finally,
	\[
	\|\Gamma^{1/2}\Pi_1\|_{\mathrm{HS}}^2=\Tr(\Pi_1\Gamma),
	\qquad
	\|\Pi_2\Gamma^{1/2}\|_{\mathrm{HS}}^2=\Tr(\Pi_2\Gamma).
	\]
\end{proof}

\begin{proof}[Proof of Proposition \ref{prop:high-high-mixed-vanishes}]
	By Lemma~\ref{lem:Thh-two-body},
	\[
	\mathcal T_{\rm hh}
	=
	\frac{\lambda^2}{2}\Tr_{H^{\otimes 2}}
	\Bigl((\mathbb T_{\rm hh}+\mathbb T_{\rm hh}^*)\Gamma_\lambda^{(2)}\Bigr).
	\]
	Since
	\[
	\mathbb T_{\rm hh}=(S\otimes S)\mathbb T_{\rm hh}(Q_1\otimes Q_1),
	\qquad
	\mathbb T_{\rm hh}^*=(Q_1\otimes Q_1)\mathbb T_{\rm hh}^*(S\otimes S),
	\]
	Lemma~\ref{lem:block-cs} yields
	\begin{align*}
		\lambda^2\big|\Tr_{H^{\otimes 2}}(\mathbb T_{\rm hh}\Gamma_\lambda^{(2)})\big|
		&\le \|\mathbb T_{\rm hh}\|_{\mathrm{op}}
		\Bigl(\lambda^2\Tr_{H^{\otimes 2}}((S\otimes S)\Gamma_\lambda^{(2)})\Bigr)^{1/2}
		\\
		&\qquad\times
		\Bigl(\lambda^2\Tr_{H^{\otimes 2}}((Q_1\otimes Q_1)\Gamma_\lambda^{(2)})\Bigr)^{1/2}.
	\end{align*}
	The same estimate holds for $\mathbb T_{\rm hh}^*$. Therefore,
	\[
	|\mathcal T_{\rm hh}|
	\le C\|\mathbb T_{\rm hh}\|_{\mathrm{op}}
	\Bigl(\lambda^2\Tr_{H^{\otimes 2}}((S\otimes S)\Gamma_\lambda^{(2)})\Bigr)^{1/2}
	\Bigl(\lambda^2\Tr_{H^{\otimes 2}}((Q_1\otimes Q_1)\Gamma_\lambda^{(2)})\Bigr)^{1/2}.
	\]
	By Proposition~\ref{prop:number-moments-log} we obtain
	\[
	\lambda^2\Tr((S\otimes S)\Gamma_\lambda^{(2)})
	\le \lambda^2 \Tr(\cN^2\Gamma_\lambda)
	\lesssim 1+|\log\lambda|^2.
	\]
{
\color{red}
Therefore, by Lemma~\ref{lem:Thh-norm} and the first estimate in Lemma~\ref{lem:R1-correction},
\[
|\mathcal T_{\rm hh}|
\le
C_\beta(1+\log\Lambda_1)\Lambda_1^{2-\beta}(1+|\log\lambda|)
\Bigl(\lambda^2\Tr_{H^{\otimes 2}}(Q_1^{\otimes 2}\Gamma_\lambda^{(2)})\Bigr)^{1/2}
\le C_\beta(1+|\log\lambda|^3)\lambda^{3/4-\alpha_1(3-\beta)}.
\]
This proves the proposition.
}
\end{proof}


\begin{proof}[Proof of Theorem \ref{cor:tail-lower-bound}]
	By Lemma~\ref{lem:exact-mixed-decomp}, Lemma~\ref{lem:cross-terms-controlled}, and Proposition~\ref{prop:high-high-mixed-vanishes}, we obtain
	\[
	\frac{\lambda^2}{2}
	\Tr_{H^{\otimes 2}}
	\Bigl((\Pi_1VR_1+R_1V\Pi_1)\Gamma_\lambda^{(2)}\Bigr)
	\ge -{\rmk{C_\beta\,\mathfrak r^{\rm tail}_{\beta,\lambda}.}}
	\]
	Combining this with \eqref{eq:R1-correction-rate-special} yields \eqref{eq:tail-lower-rate}.
\end{proof}



\section{Quantum free energy}
\label{sec:free-energy-density}\label{sec:free-energy}

The estimates obtained in Sections~\ref{sec:dec}--\ref{sec:tail-vanishing-checked} are now sufficient to close the variational argument.
We collect in this section the three final steps of the proof:  the matching lower bound extracted from the vanishing of the high-frequency
remainders, the upper bound on the relative
free energy, and the resulting convergence of the free energy.

\subsection{Low-frequency localization and free-energy lower bound}\label{sec:low-freq}\label{sec:lower}
In this subsection we derive the lower bound for the renormalized quantum free energy by reducing the problem to a finite-dimensional classical variational principle on the low-frequency space $PH$. The basic idea is to localize the Gibbs state, compare the localized interaction with the Hartree functional $D_P$ through lower symbols and a quantitative de Finetti expansion, and keep the high-frequency contribution in an explicit remainder form. We then combine monotonicity of the relative entropy and the Berezin--Lieb inequality with the estimates from Sections~\ref{sec:dec}--\ref{sec:tail-vanishing-checked} to show that these remainders are negligible, which yields a lower bound in terms of the finite-dimensional classical partition function.

We use the standard Fock-space factorization $\F(H)\simeq \F(PH)\otimes \F(QH)$ with the unitary $U$ satisfying
\[
Ua^\ast(f)U^\ast = a^\ast(Pf)\otimes 1 + 1\otimes a^\ast(Qf),\qquad f\in H,
\]
and similarly for $a(f)$. For a state $\Gamma$ on $\F(H)$ we define its localization (partial trace)
\[
\Gamma_P:=\Tr_{\F(QH)}[U\Gamma U^\ast],
\]
which is a state on $\F(PH)$. Then, for every $k\ge 1$, its $k$-particle density matrix satisfies
\begin{equation}\label{eq:GammaP-k-body}
	(\Gamma_P)^{(k)}=P^{\otimes k}\Gamma^{(k)}P^{\otimes k}.
\end{equation}
For $k\in\Z^2$ let $M_{e_k}$ denote the multiplication operator $(M_{e_k}f)(x)=e_k(x)f(x)$ on $H$ and define
\[
e_{k,P}:=PM_{e_k}P\quad\text{on }PH,\qquad \rho_{k,P}:=\dG(e_{k,P}).
\]
Recall from Section~\ref{sec:dec} that
\[
V_{\neq0,P}=P^{\otimes2}V_{\neq0}P^{\otimes2},
\qquad
V_{\neq0}=\sum_{k\neq0}\hvb(k)\,(M_{e_k}\otimes M_{e_{-k}}).
\]
Set $\mathcal K:=\{p\in\Z^2:\ h(p)\le \Lambda^2\}$. Since $\mathbb W(\cdot)$ was fixed in Section~\ref{sec:setting}, we have
\begin{align*}
	\mathbb W(V_{\neq 0,P})
	&=\frac1{2(2\pi)^2}\sum_{k\neq 0}\hvb(k)
	\sum_{p,q\in \mathcal K}
	\1_{\{p+k\in \mathcal K\}}\1_{\{q-k\in \mathcal K\}}
	a_{p+k}^* a_{q-k}^* a_q a_p\\
	&=\frac1{2}\sum_{k\neq 0}\hvb(k)
	\Bigl(\rho_{k,P}\rho_{-k,P}-\d\Gamma(e_{k,P}e_{-k,P})\Bigr),
\end{align*}
We also set
\begin{equation}\label{eq:WreP-def}
	W_P^{\mathrm{re}}
	:=
	\lambda^2\mathbb W(V_{\neq 0,P})
	+
	\frac{\lambda^2}{2(2\pi)^2}(\cN_P-N_{0,P})^2.
\end{equation}

For every bounded symmetric two-body operator $V$ on $(PH)^{\otimes2}$ and every state $\Gamma_P$ on $\mathfrak F(PH)$,
\begin{equation}
	\Tr_{\mathfrak F(PH)}\bigl(\mathbb W(V)\Gamma_P\bigr)
	=\Tr_{(PH)^{\otimes_s 2}}\bigl(V\Gamma_P^{(2)}\bigr).
\end{equation}

In this subsection we keep the exact decomposition
\begin{equation}\label{eq:exact-localization}
	\Tr_{\mathfrak F(H)}\!\big[W^{\mathrm{re}}\Gamma\big]
	=
	\Tr_{\mathfrak F(PH)}\!\big[W^{\mathrm{re}}_P \Gamma_P\big]
	+
	HF_{\neq 0}(\Gamma)
	+
	HF_0(\Gamma),
\end{equation}
with $HF_{\neq0}$ and $HF_0$ defined in \eqref{eq:HF-final} and \eqref{eq:HF-zero}
(respectively, in the notation introduced above).

Write $K:=\Tr(P)=\dim(PH)$. For a state $\Gamma$ on $\mathfrak F(H)$, we define the lower symbol of its localization $\Gamma_P$ on $PH$ at scale $\lambda$ by
\[
\d\mu^\lambda_{P,\Gamma}(u)
:=
(\lambda\pi)^{-K}
\big\langle W(u/\sqrt\lambda),\Gamma_P\,W(u/\sqrt\lambda)\big\rangle\,\d u,
\qquad u\in PH,
\]
where
\begin{equation}\label{eq:coherent state}
	W(u):= \exp(a^{*}(u)-a(u))  |0\rangle = e^{-\|u\|^2/2} \exp\left(a^{*}(u)\right)  |0\rangle = e^{-\|u\|^2/2} \bigoplus_{n=0}^\infty \frac{u^{\otimes n}} {\sqrt{n!}}
\end{equation}
is the coherent state on $\gF (P\gH)$, with $|0\rangle$ the vacuum in $\gF (P\gH)$ and $\|u\|$ the $L^2$-norm of $u\in P\gH$.
We define the finite-dimensional Hartree functional
\begin{equation}\label{eq:DP-def-new}
	D_P(u)
	:=
	\frac{1}{2}\sum_{k\neq0}\hvb(k)\,
	\big|\langle u,e_{k,P}u\rangle\big|^2
	+
	\frac{1}{2(2\pi)^2}
	\Big(\|u\|^2-(2\pi)^2\rho_{0,P}\Big)^2,
	\qquad
	\rho_{0,P}:=\frac{1}{(2\pi)^2}\Tr(Ph^{-1}).
\end{equation}

We first compare the localized quantum interaction
\[
\Tr_{\mathfrak F(PH)}\!\bigl[W^{\mathrm{re}}_P\Gamma_P\bigr]
\]
with
\[
\int_{PH} D_P(u)\,\d\mu^\lambda_{P,\Gamma}(u),
\]
up to an error that is explicit and small in the regime
\[
\Lambda=\lambda^{-\alpha},
\qquad
0<\alpha<\frac14.
\]

Recall the following quantitative version of the quantum de Finetti theorem \cite[Lemma 6.2
and Remark 6.4]{LewNamRou-15}.

\begin{theorem}[Quantitative quantum de Finetti]\label{thm:quant deF}
		For all $k\in\N$, we have
		\begin{equation}\label{Peq:Chiribella}
			\int_{P\gH}|u^{\otimes k}\rangle\langle u^{\otimes k}|\,\d\mu^\lambda_{P,\Gamma}(u)
			=
			k!\lambda^k\Gamma_P^{(k)}
			+
			k!\lambda^k\sum_{\ell=0}^{k-1}\binom{k}{\ell}\,
			\Gamma_P^{(\ell)}\otimes_s \mathbf 1_{(P\gH)^{\otimes_s (k-\ell)}}.
		\end{equation}
		Thus, with $d=\Tr[P]$,
		\begin{equation}\label{eq:quantitative}
			\Tr \left| k!\lambda^k\Gamma_P^{(k)}-\int_{P\gH}|u^{\otimes k}\rangle\langle u^{\otimes k}|\,\d\mu^\lambda_{P,\Gamma}(u) \right|
			\le \lambda^k \sum_{\ell=0}^{k-1}\binom{k}{\ell}^2 \frac{(k-\ell+d-1)!}{(d-1)!}\Tr \left[ \cN^{\ell}\Gamma_P\right].
		\end{equation}
\end{theorem}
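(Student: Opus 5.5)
The plan is to prove the identity \eqref{Peq:Chiribella} with coherent states, by testing both sides against symmetric product vectors $f^{\otimes k}$, $f\in P\gH$, and then deduce \eqref{eq:quantitative} from the triangle inequality. Testing on $f^{\otimes k}$ suffices: both sides are operators on $(P\gH)^{\otimes_s k}$, and a bounded operator there is determined by its quadratic form on the vectors $f^{\otimes k}$, by polarization since these span the symmetric space. I would carry out the steps in the following order.

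\emph{Step 1 (anti-Wick form of the left side).} Change variables $v=u/\sqrt\lambda$ in the definition of $\mu^\lambda_{P,\Gamma}$, writing $d=K=\Tr P$, to get
\[
\Big\langle f^{\otimes k},\int |u^{\otimes k}\rangle\langle u^{\otimes k}|\,\d\mu^\lambda_{P,\Gamma}\,f^{\otimes k}\Big\rangle
=\lambda^k\pi^{-d}\int_{P\gH}|\langle f,v\rangle|^{2k}\,\langle W(v),\Gamma_P W(v)\rangle\,\d v .
\]
Since $W(v)$ is an eigenvector of annihilators, $a(f)W(v)=\langle f,v\rangle W(v)$, we have $|\langle f,v\rangle|^{2k}|W(v)\rangle\langle W(v)|=a(f)^k|W(v)\rangle\langle W(v)|a^*(f)^k$. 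The coherent-state resolution of identity on $\F(P\gH)$, namely $\pi^{-d}\int |W(v)\rangle\langle W(v)|\,\d v=\mathbf 1$, then gives
\[
\lambda^k\Tr_{\F(P\gH)}\big(a^*(f)^k a(f)^k\cdot{}\big)\ \text{replaced by}\ \lambda^k\Tr\big(a(f)^k a^*(f)^k\Gamma_P\big).
\]
That is, the left side is the anti-normally ordered $k$-th moment of $\Gamma_P$.

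\emph{Step 2 (normal ordering).} Using $[a(f),a^*(f)]=\|f\|^2$, I would expand by induction on $k$:
\[
a(f)^k a^*(f)^k=\sum_{\ell=0}^k\binom{k}{\ell}^2(k-\ell)!\,\|f\|^{2(k-\ell)}\,a^*(f)^\ell a(f)^\ell .
\]
Now translate each term into density matrices. We have $\Tr(a^*(f)^\ell a(f)^\ell\Gamma_P)=\ell!\langle f^{\otimes\ell},\Gamma_P^{(\ell)}f^{\otimes\ell}\rangle$, and
\[
\langle f^{\otimes k},(\Gamma_P^{(\ell)}\otimes_s\mathbf 1)f^{\otimes k}\rangle=\|f\|^{2(k-\ell)}\langle f^{\otimes \ell},\Gamma_P^{(\ell)}f^{\otimes \ell}\rangle .
\]
Matching coefficients then gives \eqref{Peq:Chiribella}: the combinatorial identity to check is $\binom{k}{\ell}^2(k-\ell)!\,\ell!=k!\binom{k}{\ell}$. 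The $\ell=k$ term is $k!\lambda^k\Gamma_P^{(k)}$.

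\emph{Step 3 (trace bound).} Subtract the $\ell=k$ term and take trace norms, using positivity of each $\Gamma_P^{(\ell)}\otimes_s\mathbf 1$. Each remaining term has trace at most $\Tr\Gamma_P^{(\ell)}$ times $\dim(P\gH)^{\otimes_s(k-\ell)}$, up to the symmetrization normalization. Here $\dim(P\gH)^{\otimes_s(k-\ell)}=\binom{k-\ell+d-1}{k-\ell}$, and $\Tr\Gamma_P^{(\ell)}=\Tr\big(\binom{\cN}{\ell}\Gamma_P\big)\le\Tr(\cN^\ell\Gamma_P)/\ell!$. Multiplying by $k!\binom k\ell$ and simplifying produces $\binom k\ell^2\frac{(k-\ell+d-1)!}{(d-1)!}\Tr(\cN^\ell\Gamma_P)$, which is \eqref{eq:quantitative}.

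The main obstacle is the bookkeeping of the normalization of $\otimes_s$. The convention for $A\otimes_s\mathbf 1$, whether or not it is averaged over the symmetrizer, and the exact coefficient in Step 2 must be fixed consistently, so that the constant comes out as $k!\binom{k}{\ell}$ rather than some other combinatorial factor. I would resolve this first by checking the case $k=1$, where the identity should read $\int|u\rangle\langle u|\,\d\mu=\lambda\Gamma_P^{(1)}+\lambda\mathbf 1$, and the case $k=2$ explicitly, and only then write out the general induction.
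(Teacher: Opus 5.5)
Your proof is correct. The paper gives no argument of its own for this statement; it quotes it from \cite[Lemma~6.2 and Remark~6.4]{LewNamRou-15}. Your computation is therefore a self-contained substitute. In outline: the coherent-state resolution of the identity turns Husimi moments into anti-normally ordered moments, the CCR normal-order them, and a trace count gives the bound. The constants come out exactly as stated. The identity $\binom{k}{\ell}^2(k-\ell)!\,\ell!=k!\binom{k}{\ell}$ gives \eqref{Peq:Chiribella}, and $k!\binom{k}{\ell}\cdot\frac{1}{\ell!}\binom{k-\ell+d-1}{k-\ell}=\binom{k}{\ell}^2\frac{(k-\ell+d-1)!}{(d-1)!}$ gives \eqref{eq:quantitative}.

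A few points to tighten.

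\emph{Polarization.} The forms $\langle f^{\otimes k},Af^{\otimes k}\rangle$ determine $A$, but spanning alone is not the reason: diagonal values on a spanning set do not determine an operator in general. The reason is that $f\mapsto\langle f^{\otimes k},Af^{\otimes k}\rangle$ is a polynomial of bidegree $(k,k)$ in $(\bar f,f)$. Its coefficients in a basis are the matrix elements of $A$ between symmetrized basis tensors.

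\emph{Normalization of $\otimes_s$.} This is not a real obstacle. The convention $A\otimes_s B=P_s(A\otimes B)P_s$ is the one the paper uses. You can check this against \eqref{eq:221a}--\eqref{eq:221b}, where $\Tr(\Gamma_P^{(1)}\otimes_s\mathbf 1)=\frac{K+1}{2}\Tr\Gamma_P^{(1)}$ is what produces the coefficient $2\lambda^2(K+1)$. With this convention, testing on $f^{\otimes k}$ gives $\|f\|^{2(k-\ell)}\langle f^{\otimes\ell},\Gamma_P^{(\ell)}f^{\otimes\ell}\rangle$. In Step 3 the only fact you need is $\Tr(P_sXP_s)\le\Tr X$ for $X=\Gamma_P^{(\ell)}\otimes\mathbf 1_{(P\gH)^{\otimes_s(k-\ell)}}\ge0$.

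\emph{Unbounded operators.} The manipulations with $a(f)^ka^*(f)^k$ should be carried out under $\Tr(\cN^k\Gamma_P)<\infty$; otherwise both sides are infinite. One way is to write $a^*(f)^k\Gamma_Pa(f)^k=B^*B$ with $B=\Gamma_P^{1/2}a(f)^k$ and use $\Tr B^*B=\Tr BB^*$.
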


\begin{lemma}[Explicit $k\neq0$ de Finetti expansion against $V_{\neq0,P}$]
	\label{lem:nonzero-explicit-corrected}
	Let $\mu^{\lambda}_{P,\Gamma}$ be the lower symbol on $PH$ at scale $\lambda$. Then
	\begin{equation}
		\Big|\lambda^2\,\Tr_{\Fock(PH)}\bigl(\mathbb W(V_{\neq 0,P})\Gamma_P\bigr)
		-
		\frac1{2}\sum_{k\neq 0}\hvb(k)
		\int_{PH}|\langle u,e_{k,P}u\rangle|^2\,\d\mu^{\lambda}_{P,\Gamma}(u)\Big|\le {\rmk{C_\beta\lambda^2\Lambda^{4-\beta}(1+\log\Lambda)}}
		.
		\label{eq:213}
	\end{equation}
\end{lemma}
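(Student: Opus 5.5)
We apply the exact de Finetti identity \eqref{Peq:Chiribella} with $k=2$, rather than the trace-norm bound \eqref{eq:quantitative}, which would cost a factor $K^2\sim\Lambda^4$ times $\Tr(\cN^0)$ with no decay in the interaction weight. The identity reads
\[
\int_{PH}|u^{\otimes2}\rangle\langle u^{\otimes2}|\,\d\mu^\lambda_{P,\Gamma}(u)
=2\lambda^2\Gamma_P^{(2)}+2\lambda^2\Bigl(\Gamma_P^{(0)}\,\mathbf 1_{(PH)^{\otimes_s2}}+2\,\Gamma_P^{(1)}\otimes_s\mathbf 1_{PH}\Bigr).
\]

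\emph{Main step.} Pair this identity with $V_{\neq0,P}$. On the left-hand side,
\[
\langle u^{\otimes2},V_{\neq0,P}u^{\otimes2}\rangle=\sum_{k\neq0}\hvb(k)\,\langle u,e_{k,P}u\rangle\langle u,e_{-k,P}u\rangle=\sum_{k\neq0}\hvb(k)\,|\langle u,e_{k,P}u\rangle|^2,
\]
using $e_{-k,P}=e_{k,P}^*$. This reproduces exactly $2\times$ the classical expression in \eqref{eq:213}. By the defining property of $\mathbb W$, the term $2\lambda^2\Tr(V_{\neq0,P}\Gamma_P^{(2)})$ equals $2\lambda^2\Tr(\mathbb W(V_{\neq0,P})\Gamma_P)$. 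Dividing by $2$ therefore shows that the difference in \eqref{eq:213} equals
\[
\lambda^2\,\Tr\Bigl(V_{\neq0,P}\bigl(\mathbf 1+2\,\Gamma_P^{(1)}\otimes_s\mathbf 1\bigr)\Bigr),
\]
where traces are over $(PH)^{\otimes_s2}$ and $\Gamma_P^{(0)}=1$. The rest of the proof bounds these two traces. (If the normalization of $\mu^\lambda_{P,\Gamma}$ or of the symmetrized tensor changes the combinatorial constants, only $C$ is affected.)

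\emph{One-body correction.} The partial trace $\Tr_2[V_{\neq0,P}(\cdot\otimes\mathbf 1)]$ gives the one-body operator $\sum_{k\neq0}\hvb(k)\,e_{k,P}\,(\Tr P M_{e_{-k}})$ in the direct channel. This vanishes, since $\Tr(PM_{e_{-k}})=0$ for $k\neq0$. On the symmetric space there is also an exchange contribution of the form $\sum_{k\neq0}\hvb(k)\,e_{k,P}e_{-k,P}$. Exactly as in the bound for $K_{\rm shell}$ in Lemma~\ref{lem:shell-decomposition-local}, this operator is diagonal in the Fourier basis with symbol bounded by
\[
(2\pi)^{-2}\sup_p\sum_{q\in\mathcal K}\hvb(q-p)\le C_\beta(1+\log\Lambda)\Lambda^{2-\beta}.
\]
Hence the term $2\lambda^2\Tr(V_{\neq0,P}\,\Gamma_P^{(1)}\otimes_s\mathbf 1)$ is bounded by $C_\beta\lambda^2(1+\log\Lambda)\Lambda^{2-\beta}\Tr(\cN\Gamma_P)$.

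\emph{Identity term.} $\Tr_{(PH)^{\otimes2}}V_{\neq0,P}=\sum_{k\neq0}\hvb(k)\,|\Tr(Pe_kP)|^2$ vanishes by the same orthogonality. The exchange (symmetrization) part is $\sum_{k\neq0}\hvb(k)\Tr(Pe_kPe_{-k})$, which is at most $K\cdot C_\beta(1+\log\Lambda)\Lambda^{2-\beta}\le C_\beta\Lambda^{4-\beta}(1+\log\Lambda)$ because $K\sim\Lambda^2$. This gives the claimed bound $C_\beta\lambda^2\Lambda^{4-\beta}(1+\log\Lambda)$.

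\emph{Main obstacle.} The one-body term carries a factor $\Tr(\cN\Gamma_P)$, which is not uniformly bounded for an arbitrary state. The cleanest fix is to observe that its direct part vanishes identically and that the exchange part is a positive one-body operator of norm $\lesssim\Lambda^{2-\beta}\log\Lambda$ applied to $\lambda^2\Tr(\cN_P\Gamma)$. For the states actually used (e.g.\ $\Gamma_\lambda$) this is $\le C\lambda L_\lambda\Lambda^{2-\beta}\log\Lambda$ by Proposition~\ref{prop:number-moments-log}, which is dominated by the stated right-hand side when $\Lambda=\lambda^{-\alpha}$. If the lemma is meant for all $\Gamma$, I would check carefully whether the symmetrized exchange contributions cancel exactly between the $\ell=1$ and $\ell=0$ terms, leaving only the $K$-dependent constant. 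Verifying this combinatorial cancellation is where I expect the real work.
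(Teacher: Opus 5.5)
Your route is the paper's route: the $k=2$ identity \eqref{Peq:Chiribella} paired with $V_{\neq0,P}$, with the direct channels killed by $\Tr_{PH}(e_{k,P})=0$. Your bookkeeping is actually more accurate than the paper's. The paper asserts in \eqref{eq:214} that $\Tr\bigl(V_{\neq0,P}(\Gamma_P^{(1)}\otimes_s P)\bigr)=0$, but $\Tr_{PH}(e_{-k,P})=0$ only removes the direct channel. In \eqref{eq:215} the paper itself keeps the exchange channel of the $P^{\otimes_s2}$ term. Set $X_P:=\sum_{k\neq0}\hvb(k)\,e_{k,P}e_{-k,P}$; it is diagonal, $X_P\ge0$, and $\|X_P\|\le C_\beta(1+\log\Lambda)\Lambda^{2-\beta}$. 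Normal ordering, which you can check directly on the vacuum and on one-particle states, gives the exact identity
\[
\frac12\sum_{k\neq0}\hvb(k)\int_{PH}|\langle u,e_{k,P}u\rangle|^2\,\d\mu^\lambda_{P,\Gamma}(u)-\lambda^2\Tr\bigl(\mathbb W(V_{\neq0,P})\Gamma_P\bigr)
=\lambda^2\Tr\bigl(X_P\Gamma_P^{(1)}\bigr)+\frac{\lambda^2}{2}\Tr X_P .
\]

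The gap is that you never prove \eqref{eq:213}, and as stated it cannot be proved.
\begin{itemize}
\item Both terms on the right are nonnegative, so the $\ell=1$ and $\ell=0$ contributions cannot cancel.
\item The first term is not uniformly bounded. Take $\Gamma=|e_0^{\otimes n}\rangle\langle e_0^{\otimes n}|$, so $\Gamma_P=\Gamma$ and $\Gamma_P^{(1)}=n|e_0\rangle\langle e_0|$. The left side of \eqref{eq:213} is then $n\lambda^2(2\pi)^{-2}\sum_{k\neq0,\,h(k)\le\Lambda^2}\hvb(k)+\frac{\lambda^2}{2}\Tr X_P$, which grows linearly in $n$.
\item Your fallback for $\Gamma_\lambda$ does not recover the stated rate either. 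The bound $C\lambda L_\lambda\Lambda^{2-\beta}\log\Lambda$ exceeds $C_\beta\lambda^2\Lambda^{4-\beta}(1+\log\Lambda)$ by the factor $L_\lambda/(\lambda\Lambda^2)=L_\lambda\lambda^{2\alpha-1}\to\infty$, because $\lambda\Lambda^2\to0$.
\end{itemize}

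What your computation does prove is the following. Use $\Tr X_P\le K\|X_P\|$, and $\lambda\Tr(\cN_P\Gamma_P)\le\int\|u\|^2\,\d\mu^\lambda_{P,\Gamma}$ from \eqref{eq:221a}. Then the left side of \eqref{eq:213} is bounded by
\[
C_\beta\lambda^2\Lambda^{4-\beta}(1+\log\Lambda)+C_\beta\lambda\Lambda^{2-\beta}(1+\log\Lambda)\int_{PH}\|u\|^2\,\d\mu^\lambda_{P,\Gamma}(u).
\]
The extra term is at most $C\lambda\Lambda^2\int\|u\|^2\,\d\mu^\lambda_{P,\Gamma}$, which already appears in \eqref{eq:localized-interaction-vs-DP-error}. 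So Proposition~\ref{prop:localized-interaction-vs-DP} and everything downstream survive unchanged. The lemma should, however, be restated with this state-dependent term in place of the uniform bound.
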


\begin{proof}
	Apply Theorem~\ref{thm:quant deF} with $k=2$:
	\[
	\int_{PH}|u^{\otimes 2}\rangle\langle u^{\otimes 2}|\,\d\mu^{\lambda}_{P,\Gamma}(u)
	=2\lambda^2\Gamma_P^{(2)}+4\lambda^2(\Gamma_P^{(1)}\hs P)+2\lambda^2 P^{\hs 2}.
	\]
	Tracing against $V_{\neq 0,P}$ gives
	\begin{align}
		\int_{PH}\langle u^{\otimes 2},V_{\neq 0,P}u^{\otimes 2}\rangle\,\d\mu^{\lambda}_{P,\Gamma}(u)
		&=2\lambda^2\Tr\bigl(V_{\neq 0,P}\Gamma_P^{(2)}\bigr)
		+4\lambda^2\Tr\bigl(V_{\neq 0,P}(\Gamma_P^{(1)}\hs P)\bigr)
		\notag\\
		&\qquad +2\lambda^2\Tr\bigl(V_{\neq 0,P}P^{\hs 2}\bigr).
		\label{eq:217}
	\end{align}
	Next,
	\[
	\langle u^{\otimes 2},V_{\neq 0,P}u^{\otimes 2}\rangle
	=\sum_{k\neq 0}\hvb(k)
	\langle u,e_{k,P}u\rangle\langle u,e_{-k,P}u\rangle
	=\sum_{k\neq 0}\hvb(k)|\langle u,e_{k,P}u\rangle|^2,
	\]
	because $e_{-k,P}=e_{k,P}^*$. This gives the classical term in \eqref{eq:213}.
	
	We also have
	\begin{equation}
		\Tr_{(PH)^{\hs 2}}\bigl(V_{\neq 0,P}(\Gamma_P^{(1)}\hs P)\bigr)=0,
		\label{eq:214}
	\end{equation}
	since \[
	\Tr_{PH}(e_{-k,P})=0\qquad (k\neq 0).
	\]
	
	Moreover, we have
	\begin{equation}
		\Tr_{(PH)^{\hs 2}}\bigl(V_{\neq 0,P}P^{\hs 2}\bigr)
		=
		\frac1{2}\sum_{k\neq 0}\hvb(k)\Tr_{PH}(e_{k,P}e_{-k,P})
		=
		\frac1{2}\sum_{k\neq 0}\hvb(k)\|Pe_kP\|_{\mathrm{HS}}^2,
		\label{eq:215}
	\end{equation}
	and in particular
	\[
	\Tr_{(PH)^{\hs 2}}(V_{\neq 0,P}P^{\hs 2})
	\lesssim \Lambda^2\sum_{|k|\le 2\Lambda}\hvb(k)
	\le {\rmk{C_\beta\Lambda^{4-\beta}(1+\log\Lambda).}}
	\]
	\rmk{Indeed, in dimension $2$ one has $\sum_{|k|\le 2\Lambda}\langle k\rangle^{-\beta}\le C_\beta\Lambda^{2-\beta}(1+\log\Lambda)$ for $\frac32<\beta\le2$.}
	Substituting \eqref{eq:214} and \eqref{eq:215} into \eqref{eq:217} gives \eqref{eq:213}.
\end{proof}


\begin{proposition}[Finite-dimensional interaction comparison]
	\label{prop:localized-interaction-vs-DP}
	For every state $\Gamma$ on $\mathfrak F(H)$,
	\begin{equation}\label{eq:localized-interaction-vs-DP}
		\Tr_{\mathfrak F(PH)}\!\big[W_P^{\mathrm{re}}\Gamma_P\big]
		=
		\int_{PH} D_P(u)\,\d\mu^\lambda_{P,\Gamma}(u)
		+
		\mathcal E_{P,\lambda}(\Gamma),
	\end{equation}
	where
	\begin{align}
		|\mathcal E_{P,\lambda}(\Gamma)|
		&\le
		{\rmk{C_\beta\lambda^2 \Lambda^{4-\beta}(1+\log\Lambda)}}
		+
		C\lambda \Lambda^2
		\left(
		\log\Lambda+\int_{PH}\|u\|^2\,\d\mu^\lambda_{P,\Gamma}(u)
		\right).
		\label{eq:localized-interaction-vs-DP-error}
	\end{align}
\end{proposition}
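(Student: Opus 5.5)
The plan is to split $W_P^{\mathrm{re}}$ into its two pieces in \eqref{eq:WreP-def} and compare each with the matching piece of $D_P$ in \eqref{eq:DP-def-new}. For the nonzero-mode part, Lemma~\ref{lem:nonzero-explicit-corrected} already gives
\[
\Bigl|\lambda^2\Tr\bigl(\mathbb W(V_{\neq0,P})\Gamma_P\bigr)-\tfrac12\sum_{k\neq0}\hvb(k)\int|\langle u,e_{k,P}u\rangle|^2\,\d\mu^\lambda_{P,\Gamma}\Bigr|\le C_\beta\lambda^2\Lambda^{4-\beta}(1+\log\Lambda),
\]
which is exactly the first error term. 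The remaining work is the centered zero-mode term $\frac{\lambda^2}{2(2\pi)^2}\Tr[(\cN_P-N_{0,P})^2\Gamma_P]$ against $\frac1{2(2\pi)^2}\int(\|u\|^2-(2\pi)^2\rho_{0,P})^2\,\d\mu^\lambda_{P,\Gamma}$.

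For this, expand $(\cN_P-N_{0,P})^2=\cN_P(\cN_P-1)+(1-2N_{0,P})\cN_P+N_{0,P}^2$ and $(\|u\|^2-c)^2=\|u\|^4-2c\|u\|^2+c^2$, with $c=(2\pi)^2\rho_{0,P}=\Tr(Ph^{-1})$. Theorem~\ref{thm:quant deF} with $k=1,2$, traced against the identity on $PH$ and on $(PH)^{\otimes_s2}$, gives
\[
\int\|u\|^2\,\d\mu=\lambda\Tr\Gamma_P^{(1)}+\lambda K,\qquad \int\|u\|^4\,\d\mu=2\lambda^2\Tr\Gamma_P^{(2)}+4\lambda^2(K+1)\Tr\Gamma^{(1)}_P+\lambda^2K(K+1),
\]
up to the precise symmetric-trace constants, which I will compute. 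Since $2\Tr\Gamma^{(2)}_P=\Tr\cN_P(\cN_P-1)\Gamma_P$ and $\Tr\Gamma^{(1)}_P=\Tr\cN_P\Gamma_P$, the quartic parts agree exactly. The discrepancy is then a linear combination of $\lambda^2K\,\Tr\cN_P\Gamma_P$, $\lambda^2K^2$, and the mismatch $\lambda N_{0,P}-c$ multiplied by $\lambda\Tr\cN_P\Gamma_P$ and by $\lambda N_{0,P}+c$. The quantity $\lambda\Tr\cN_P\Gamma_P$ is rewritten via $\int\|u\|^2\d\mu-\lambda K$.

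It then remains to bound the constants, using $K\sim\Lambda^2$. First, $\lambda^2K\Tr\cN_P\Gamma_P\le\lambda K\int\|u\|^2\d\mu\lesssim\lambda\Lambda^2\int\|u\|^2\d\mu$. Second, $\lambda^2K^2\lesssim\lambda^2\Lambda^4\le\lambda\Lambda^2$ because $\lambda\Lambda^2\le1$ when $\alpha<1/4$. Third, since $\lambda n_p=\lambda/(e^{\lambda h(p)}-1)$ and $h(p)^{-1}$ differ by $O(\lambda)$ uniformly for $\lambda h(p)\le\lambda\Lambda^2\le1$, we get $|\lambda N_{0,P}-c|\lesssim\lambda K\lesssim\lambda\Lambda^2$. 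Fourth, $c\lesssim\log\Lambda$ and $\lambda N_{0,P}\lesssim\log\Lambda$. Combining these, the cross terms are bounded by $C\lambda\Lambda^2(\log\Lambda+\int\|u\|^2\d\mu)$. A remaining quadratic term $(\lambda N_{0,P}-c)^2\lesssim\lambda^2\Lambda^4$ is absorbed the same way.

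The main obstacle is getting the exact de Finetti identity for the fourth moment right, including the symmetric-tensor trace normalizations, so that the $\cN_P^2$ terms cancel exactly rather than leaving an uncontrolled $O(\lambda^2\Tr\cN_P^2)$ error. I will verify this directly with coherent states on $\mathfrak F(PH)$, using the lower-symbol identity $\int\|u\|^{2m}\d\mu=\lambda^m\Tr[\prod_{j=1}^{m}(\cN_P+jK\text{-type shifts})\Gamma_P]$, which in one mode reduces to the anti-Wick ordering $a a^*=a^*a+1$.
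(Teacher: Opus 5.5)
Your proposal is correct and follows essentially the paper's route. You use Lemma~\ref{lem:nonzero-explicit-corrected} for the $k\neq0$ part, the de Finetti moment identities for $\int\|u\|^2\,\d\mu^\lambda_{P,\Gamma}$ and $\int\|u\|^4\,\d\mu^\lambda_{P,\Gamma}=\lambda^2\Tr[(\cN_P+K)(\cN_P+K+1)\Gamma_P]$, and the center estimate $|\lambda N_{0,P}-\Tr(Ph^{-1})|\lesssim\lambda\Lambda^2$. In the fourth-moment identity the $\Tr\Gamma_P^{(1)}$ coefficient is $2\lambda^2(K+1)$, not $4\lambda^2(K+1)$; this is harmless, since only the $\cN_P^2$ coefficients must match.

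The only difference is bookkeeping. The paper packages everything into the exact identity $\lambda^2\Tr[(\cN_P-N_{0,P})^2\Gamma_P]=\int(\|u\|^2-a_{P,\lambda})^2\,\d\mu^\lambda_{P,\Gamma}-\lambda\int\|u\|^2\,\d\mu^\lambda_{P,\Gamma}$ with shifted center $a_{P,\lambda}=\lambda K+\lambda N_{0,P}$, so only the center shift has to be estimated. You instead bound each term of the expansion directly. Your list of discrepancy terms omits $2\lambda K\,\Tr(Ph^{-1})\lesssim\lambda\Lambda^2\log\Lambda$, but your final step covers it.
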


\begin{proof}
	Split
	\[
	W_P^{\mathrm{re}}
	=
	\lambda^2\mathbb W(V_{\neq0,P})
	+
	\frac{\lambda^2}{2(2\pi)^2}(\cN_P-N_{0,P})^2.
	\]
	For the $k\neq0$ part, we have Lemma \ref{lem:nonzero-explicit-corrected}.
	For the $k=0$ part, applying Theorem~\ref{thm:quant deF} yields
	\begin{equation}
		\int_{PH}\|u\|^2\,\d\mu^{\lambda}_{P,\Gamma}(u)
		=\lambda\Tr_{\Fock(PH)}(\cN_P\Gamma_P)+\lambda K,
		\label{eq:221a}
	\end{equation}
	and
	\begin{align}
		\int_{PH}\|u\|^4\,\d\mu^{\lambda}_{P,\Gamma}(u)
		&=\lambda^2\Tr\bigl(\cN_P(\cN_P-1)\Gamma_P\bigr)
		+2\lambda^2(K+1)\Tr(\cN_P\Gamma_P)+\lambda^2 K(K+1).
		\label{eq:221b}
	\end{align}
		Set
		\[
		a_{P,\lambda}:=\lambda K+\lambda N_{0,P}=(2\pi)^2\rho_{0,P}+\delta_{P,\lambda},
		\]
		where
		\[
		\delta_{P,\lambda}:=\lambda K+\lambda N_{0,P}-(2\pi)^2\rho_{0,P}.
		\]
		Expanding the square and using \eqref{eq:221a}--\eqref{eq:221b}, one gets the exact identity
		\begin{equation}
			\lambda^2\Tr\bigl((\cN_P-N_{0,P})^2\Gamma_P\bigr)
			=
			\int_{PH}(\|u\|^2-a_{P,\lambda})^2\,\d\mu^{\lambda}_{P,\Gamma}(u)
			-\lambda\int_{PH}\|u\|^2\,\d\mu^{\lambda}_{P,\Gamma}(u).
			\label{eq:221d}
		\end{equation}
		For $h(p)\le \Lambda^2$ and $\lambda \Lambda^2\le1$, we have
	\[
	\Big|\frac{\lambda}{e^{\lambda h(p)}-1}
	-
	\frac1{h(p)}\Big|\lesssim\lambda
	\]
	uniformly in $p\in \mathcal K$. Summing over $p\in\mathcal K$ yields
	\[
	\left|
	\lambda N_{0,P}-\Tr(Ph^{-1})
	\right|
	\le C\lambda K \le C\lambda \Lambda^2.
	\]
	Adding the extra $\lambda K$ term gives
	\begin{equation}\label{eq:center-shift-bound}
		|\delta_{P,\lambda}|\le C\lambda \Lambda^2.
	\end{equation}
	Consequently,
	\begin{align}
		\Big|
		\lambda^2\Tr_{\mathfrak F(PH)}\!\big[(\cN_P-N_{0,P})^2\Gamma_P\big]
		-
		\int_{PH}\Big(\|u\|^2-(2\pi)^2\rho_{0,P}\Big)^2\,\d\mu^\lambda_{P,\Gamma}(u)
		\Big|
		\nonumber\\
		\le
		C\lambda \Lambda^2
		\left(
		\log\Lambda+\int_{PH}\|u\|^2\,\d\mu^\lambda_{P,\Gamma}(u)
		\right).
		\label{eq:number-square-classical-bound}
	\end{align}
	Combining the estimates for $k\neq 0$ and $k=0$ gives
	\eqref{eq:localized-interaction-vs-DP}--\eqref{eq:localized-interaction-vs-DP-error}.
\end{proof}

\begin{lemma}[Pointwise comparison of the free lower symbol and the Gaussian free field]
	\label{lem:pointwise-comparison}
	Let $\mu^\lambda_{P,0}$ denote the lower symbol of the free localized Gibbs state $(\Gamma_0)_P$,
	and let $\mu_{0,P}$ denote the Gaussian measure on $PH$ with covariance $h^{-1}P$. If
	$\lambda \Lambda^2\le 1$, then
	\begin{equation}\label{eq:mu-pointwise-two-sided}
		e^{-C\lambda \Lambda^4(1+\|u\|^2)}\,\d\mu_{0,P}(u)
		\le
		\d\mu^\lambda_{P,0}(u)
		\le
		e^{C\lambda \Lambda^4(1+\|u\|^2)}\,\d\mu_{0,P}(u).
	\end{equation}
	Consequently, for every probability measure $\nu$ on $PH$,
	\begin{equation}\label{eq:entropy-comparison}
		\mathcal{H}_{\mathrm{cl}}(\nu,\mu^\lambda_{P,0})
		\ge
		\mathcal{H}_{\mathrm{cl}}(\nu,\mu_{0,P})
		-
		C\lambda \Lambda^4
		\int_{PH}(1+\|u\|^2)\,\d\nu(u).
	\end{equation}
\end{lemma}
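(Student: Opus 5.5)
The plan is to compute both measures explicitly, since everything is Gaussian and diagonal in the Fourier basis on $PH$. The localization $(\Gamma_0)_P$ of the quasi-free state $\Gamma_0$ is again quasi-free on $\mathfrak F(PH)$, namely the Gibbs state of $\lambda\dG(Ph)$ with one-body density matrix $P\gamma_0=\sum_{p\in\mathcal K}n_p|e_p\rangle\langle e_p|$. Its Husimi function at scale $\lambda$ factorizes over the modes $p\in\mathcal K$. For a single mode with occupation $n_p$, the Husimi density of a thermal state is a centered complex Gaussian of variance $n_p+1$. After the rescaling $u\mapsto u/\sqrt\lambda$, this gives
\[
\d\mu^\lambda_{P,0}(u)=\prod_{p\in\mathcal K}\frac{1}{\pi\lambda(n_p+1)}\exp\Big(-\frac{|u_p|^2}{\lambda(n_p+1)}\Big)\,\d u_p .
\]
The identity \eqref{eq:221a} serves as a consistency check on this formula. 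Similarly,
\[
\d\mu_{0,P}(u)=\prod_{p\in\mathcal K}\frac{h(p)}{\pi}e^{-h(p)|u_p|^2}\,\d u_p .
\]
The Radon--Nikodym derivative is therefore $\prod_p \frac{a_p}{h(p)}\exp\big(-(a_p-h(p))|u_p|^2\big)$, where $a_p:=(\lambda(n_p+1))^{-1}=\lambda^{-1}(1-e^{-\lambda h(p)})$.

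The key scalar estimate is $|a_p-h(p)|\le C\lambda h(p)^2\le C\lambda\Lambda^4$ for $h(p)\le\Lambda^2$ and $\lambda\Lambda^2\le1$. It follows from $x-\tfrac{x^2}{2}\le 1-e^{-x}\le x$ with $x=\lambda h(p)\le1$. The same estimate gives $|\log(a_p/h(p))|\le C\lambda h(p)\le C\lambda\Lambda^2$.

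For the exponential factor, $\big|\sum_p (a_p-h(p))|u_p|^2\big|\le C\lambda\Lambda^4\|u\|^2$. For the prefactor, $\sum_{p\in\mathcal K}|\log(a_p/h(p))|\le C\lambda\Lambda^2 K\le C\lambda\Lambda^4$, using $K\lesssim\Lambda^2$. Combining the two yields \eqref{eq:mu-pointwise-two-sided} with exponent $C\lambda\Lambda^4(1+\|u\|^2)$. The loss of $K\sim\Lambda^2$ in the prefactor sum is the only place where the constant-factor bookkeeping matters, and it is exactly what produces the $\Lambda^4$.

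The entropy comparison \eqref{eq:entropy-comparison} is then immediate. If $\nu\not\ll\mu_{0,P}$, both entropies are $+\infty$, since the two reference measures are mutually absolutely continuous. Otherwise,
\[
\mathcal H_{\rm cl}(\nu,\mu^\lambda_{P,0})=\mathcal H_{\rm cl}(\nu,\mu_{0,P})-\int\log\frac{\d\mu^\lambda_{P,0}}{\d\mu_{0,P}}\,\d\nu .
\]
The upper bound in \eqref{eq:mu-pointwise-two-sided} gives $\log\frac{\d\mu^\lambda_{P,0}}{\d\mu_{0,P}}\le C\lambda\Lambda^4(1+\|u\|^2)$, which proves \eqref{eq:entropy-comparison}. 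If $\int\|u\|^2\d\nu=\infty$, the inequality is trivial.

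The main obstacle is the first step: justifying that the localized free state is quasi-free with density matrix $P\gamma_0$, and computing its lower symbol exactly. I would do this by using the factorization $U\Gamma_0U^*=\Gamma_{0,PH}\otimes\Gamma_{0,QH}$, which holds because $\dG(h)$ splits as $\dG(Ph)\otimes1+1\otimes\dG(Qh)$ since $[P,h]=0. Then I would evaluate the coherent-state expectation mode by mode using the standard single-mode formula $\langle W(z),e^{-\beta a^*a}W(z)\rangle=(1-e^{-\beta})^{-1}\ldots$, more precisely via the Husimi function $\frac{1}{1+n}e^{-|z|^2/(1+n)}$.
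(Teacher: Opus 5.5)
Your proposal is correct and follows the paper's own proof. Both write the free lower symbol as a product of single-mode complex Gaussians with inverse variance $a_p=\lambda^{-1}(1-e^{-\lambda h(p)})$ (your $(\lambda(n_p+1))^{-1}$ is the same quantity). Both use $a_p=h(p)+O(\lambda h(p)^2)$ for $\lambda h(p)\le 1$ to control the exponent and the normalizing prefactors summed over the $K\lesssim\Lambda^2$ modes, and then read off the entropy comparison from the upper density bound. Your added justification of the quasi-free factorization of $(\Gamma_0)_P$ and of mutual absolute continuity fills in details the paper leaves implicit.
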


\begin{proof} The proof follows by the same argument as in \cite[(8.25)]{NZZ25}.
	Diagonalize $PhP=\sum_{j=1}^K \lambda_j |e_j\rangle\langle e_j|$.
	In the coordinates $\alpha_j=\langle e_j,u\rangle$, the free lower symbol factorizes:
	\[
	\d\mu^\lambda_{P,0}(u)
	=
	\prod_{j=1}^K
	\Big[
	(\lambda\pi)^{-1}(1-e^{-\lambda\lambda_j})
	\exp\Big(-\frac{1-e^{-\lambda\lambda_j}}{\lambda}|\alpha_j|^2\Big)
	\Big]
	\,\d\alpha_j .
	\]
	Since $\lambda_j\le \Lambda^2$ and $\lambda \Lambda^2\le1$, we have
	\[
	1-e^{-\lambda\lambda_j}
	=
	\lambda\lambda_j + O(\lambda^2\lambda_j^2),
	\]
	uniformly in $j$, and therefore
	\[
	\frac{1-e^{-\lambda\lambda_j}}{\lambda}
	=
	\lambda_j + O(\lambda\lambda_j^2).
	\]
	Using $\lambda_j\le \Lambda^2$ and summing over $j$ yields
	\eqref{eq:mu-pointwise-two-sided}. Then \eqref{eq:entropy-comparison} follows immediately from
	the definition of relative entropy.
\end{proof}

\begin{lemma}[Second moment of the Husimi field]
	\label{cor:husimi-second-moment}
	Let
	\(
	\mu^\lambda_{P,\lambda}:=\mu^\lambda_{P,\Gamma_\lambda}
	\)
	be the lower symbol of the localized interacting Gibbs state. Then
	\begin{equation}\label{eq:husimi-second-moment-log}
		\int_{PH}\|u\|^2\,\d\mu^\lambda_{P,\lambda}(u)
		\le C(1+|\log\lambda|).
	\end{equation}
\end{lemma}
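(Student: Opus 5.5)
The bound follows from the exact de Finetti identity \eqref{eq:221a}, already used in the proof of Proposition~\ref{prop:localized-interaction-vs-DP}, combined with the a priori particle-number bound of Proposition~\ref{prop:number-moments-log}. Applied to the localized interacting state $(\Gamma_\lambda)_P$, identity \eqref{eq:221a} gives
\[
\int_{PH}\|u\|^2\,\d\mu^\lambda_{P,\lambda}(u)
=\lambda\Tr_{\mathfrak F(PH)}\big(\cN_P(\Gamma_\lambda)_P\big)+\lambda K .
\]
It therefore suffices to bound the two terms on the right separately.

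For the first term, I use \eqref{eq:GammaP-k-body} with $k=1$:
\[
\Tr_{\mathfrak F(PH)}(\cN_P(\Gamma_\lambda)_P)=\Tr(P\gamma_\lambda P)=\Tr_{\mathfrak F(H)}(\dG(P)\Gamma_\lambda)\le \Tr(\cN\Gamma_\lambda).
\]
Proposition~\ref{prop:number-moments-log} (with $t=0$) then gives $\lambda\Tr(\cN\Gamma_\lambda)\le C(1+|\log\lambda|)$. This is where the logarithmic loss in \eqref{eq:husimi-second-moment-log} comes from.

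For the second term, $K=\Tr(P)=\#\{p:\,h(p)\le\Lambda^2\}\le C\Lambda^2$ in two dimensions. Hence $\lambda K\le C\lambda\Lambda^2=C\lambda^{1-2\alpha}\le C$, because the regime in force is $\Lambda=\lambda^{-\alpha}$ with $0<\alpha<\frac14$. In particular $\lambda\Lambda^2\le1$ for small $\lambda$, which is also the standing assumption of Lemma~\ref{lem:pointwise-comparison}. Adding the two bounds gives \eqref{eq:husimi-second-moment-log}.

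No step is a real obstacle, since the nontrivial input is Proposition~\ref{prop:number-moments-log}, which was established earlier directly for the interacting Gibbs state. The only point to check is that the lower-symbol normalization $(\lambda\pi)^{-K}$ with coherent states $W(u/\sqrt\lambda)$ produces the additive $\lambda K$ term exactly as in \eqref{eq:221a}. This follows from the $k=1$ case of Theorem~\ref{thm:quant deF}, namely $\int|u\rangle\langle u|\,\d\mu^\lambda_{P,\Gamma}=\lambda\Gamma_P^{(1)}+\lambda P$, after taking the trace.
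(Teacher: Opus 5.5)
Your proposal is correct and follows the paper's proof: the $k=1$ de Finetti identity gives $\lambda\Tr(\cN_P\Gamma_{\lambda,P})+\lambda K$. You then bound the first term through $\cN_P\le\cN$ and Proposition~\ref{prop:number-moments-log}, and the second through $\lambda K\le C\lambda\Lambda^2\le C$ for $\Lambda=\lambda^{-\alpha}$ with $0<\alpha<\frac14$.
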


\begin{proof}
	By Theorem~\ref{thm:quant deF},
	\[
	\int_{PH}\|u\|^2\,\d\mu^\lambda_{P,\lambda}(u)
	=
	\lambda\Tr_{\mathfrak F(PH)}(\cN_P\Gamma_{\lambda,P})
	+
	\lambda\Tr(P).
	\]
	Since localization preserves the expectation of observables supported on $PH$,
	\[
	\Tr_{\mathfrak F(PH)}(\cN_P\Gamma_{\lambda,P})
	=
	\Tr_{\mathfrak F(H)}(\cN_P\Gamma_\lambda)
	\le
	\Tr_{\mathfrak F(H)}(\cN\Gamma_\lambda).
	\]
	Hence, by Proposition~\ref{prop:number-moments-log},
	\[
	\int_{PH}\|u\|^2\,\d\mu^\lambda_{P,\lambda}(u)
	\le
	C(1+|\log\lambda|)+\lambda\Tr(P).
	\]
	Since
	\[
	\Tr(P)=\#\{p\in\mathbb Z^2:\ |p|^2+1\le\Lambda^2\}\le C\Lambda^2,
	\]
	we have $\lambda\Tr(P)\le C\lambda\Lambda^2=o(1)$ for $\Lambda=\lambda^{-\alpha}$ with $0<\alpha<1/4$. This proves \eqref{eq:husimi-second-moment-log}.
\end{proof}

\begin{proposition}[Free-energy lower bound modulo high-frequency remainders]
	\label{prop:free-energy-lower-modulo-HF}
	Let $\Gamma_\lambda$ be the interacting Gibbs state. Then
	\begin{align}
		-\log \frac{Z_\lambda^{\mathrm{re}}}{Z_0}
	\ge
		\mathcal{H}_{\mathrm{cl}}(\mu^\lambda_{P,\lambda},\mu_{0,P})
		+
		\int_{PH} D_P(u)\,\d\mu^\lambda_{P,\lambda}(u)
		+
		HF_{\neq0}(\Gamma_\lambda)
		+
		HF_0(\Gamma_\lambda)
		-
		R_{\lambda,\Lambda},
		\label{eq:free-energy-lower-modulo-HF}
	\end{align}
	where $\mu^\lambda_{P,\lambda}:=\mu^\lambda_{P,\Gamma_\lambda}$ and
	\begin{align}
		R_{\lambda,\Lambda}
		&\le
	C\lambda \Lambda^4|\log\lambda|.
		\label{eq:R-lambda-Lambda}
	\end{align}
%
\end{proposition}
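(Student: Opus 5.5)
Starting from the Gibbs variational principle \eqref{eq:variational} with the minimizer $\Gamma_\lambda$, I would write
\[
-\log\frac{Z_\lambda^{\mathrm{re}}}{Z_0}=\mathcal H(\Gamma_\lambda,\Gamma_0)+\Tr(W^{\mathrm{re}}\Gamma_\lambda).
\]
The interaction term is handled with the exact localization identity \eqref{eq:exact-localization}. It splits as $\Tr(W_P^{\mathrm{re}}\Gamma_{\lambda,P})+HF_{\neq0}(\Gamma_\lambda)+HF_0(\Gamma_\lambda)$. The first piece is then replaced by $\int D_P\,\d\mu^\lambda_{P,\lambda}$ via Proposition~\ref{prop:localized-interaction-vs-DP}. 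By Lemma~\ref{cor:husimi-second-moment}, the error $\mathcal E_{P,\lambda}(\Gamma_\lambda)$ is at most
\[
C_\beta\lambda^2\Lambda^{4-\beta}(1+\log\Lambda)+C\lambda\Lambda^2(\log\Lambda+C|\log\lambda|)\le C\lambda\Lambda^4|\log\lambda|.
\]
This uses $\lambda\Lambda^2\le1$ and $\log\Lambda\lesssim|\log\lambda|$.

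For the entropy term I would use the standard chain from \cite{LewNamRou-15,LewNamRou-21}. Monotonicity of relative entropy under partial trace gives $\mathcal H(\Gamma_\lambda,\Gamma_0)\ge\mathcal H(\Gamma_{\lambda,P},(\Gamma_0)_P)$. Here $(\Gamma_0)_P$ is the quasi-free state with Hamiltonian $\lambda\dG(PhP)$, since $\Gamma_0$ factorizes across $\F(PH)\otimes\F(QH)$. The Berezin--Lieb inequality, in the form of the lower symbol relative-entropy monotonicity (\cite[Theorem 7.1]{LewNamRou-15}), then gives
\[
\mathcal H(\Gamma_{\lambda,P},(\Gamma_0)_P)\ge\mathcal H_{\mathrm{cl}}(\mu^\lambda_{P,\lambda},\mu^\lambda_{P,0}).
\]
Here both measures are lower symbols at scale $\lambda$ via the coherent states $W(u/\sqrt\lambda)$. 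Finally, Lemma~\ref{lem:pointwise-comparison}, specifically \eqref{eq:entropy-comparison}, replaces $\mu^\lambda_{P,0}$ by the Gaussian $\mu_{0,P}$. This costs $C\lambda\Lambda^4\int(1+\|u\|^2)\d\mu^\lambda_{P,\lambda}\le C\lambda\Lambda^4(1+|\log\lambda|)$, again by Lemma~\ref{cor:husimi-second-moment}.

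Collecting the pieces yields \eqref{eq:free-energy-lower-modulo-HF}. The remainder $R_{\lambda,\Lambda}$ is the sum of the de Finetti/centering error and the Gaussian-comparison error, and both are bounded by $C\lambda\Lambda^4|\log\lambda|$ for small $\lambda$; this is \eqref{eq:R-lambda-Lambda}. The high-frequency remainders are simply carried along unestimated, as the statement allows.

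The main point needing care is the entropy step, namely applying Berezin--Lieb with the correct scaling convention. I would verify two things there. First, the lower symbol $\mu^\lambda_{P,0}$ of $(\Gamma_0)_P$ is exactly the product Gaussian used in Lemma~\ref{lem:pointwise-comparison}. Second, the inequality $\mathcal H(\Gamma_P,\Gamma'_P)\ge\mathcal H_{\mathrm{cl}}(\mu_\Gamma,\mu_{\Gamma'})$ holds for lower symbols defined with the same normalization $(\lambda\pi)^{-K}$; the rescaling $u\mapsto u/\sqrt\lambda$ is a common change of variables and leaves relative entropy invariant.
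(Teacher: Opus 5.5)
Your proposal is correct and is essentially the paper's proof. The chain is the same: Gibbs variational principle, the exact localization identity \eqref{eq:exact-localization}, monotonicity of relative entropy followed by Berezin--Lieb and the Gaussian comparison \eqref{eq:entropy-comparison}, and Proposition~\ref{prop:localized-interaction-vs-DP} for the projected interaction. Your explicit use of Lemma~\ref{cor:husimi-second-moment} to bound both error terms by $C\lambda\Lambda^4|\log\lambda|$ (via $\lambda\Lambda^2\le1$ and $\log\Lambda\lesssim|\log\lambda|$) just spells out a step the paper leaves implicit.
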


\begin{proof}
	By the Gibbs variational principle,
	\[
	-\log \frac{Z_\lambda^{\mathrm{re}}}{Z_0}
	=
	\mathcal{H}(\Gamma_\lambda,\Gamma_0)
	+
	\Tr_{\mathfrak F(H)}\!\big[W^{\mathrm{re}}\Gamma_\lambda\big].
	\]
	Use the exact localization identity \eqref{eq:exact-localization} and monotonicity of relative entropy:
	\[
	\mathcal{H}(\Gamma_\lambda,\Gamma_0)\ge \mathcal{H}(\Gamma_{\lambda,P},(\Gamma_0)_P).
	\]
	Then apply the Berezin-Lieb inequality
	\[
	\mathcal{H}(\Gamma_{\lambda,P},(\Gamma_0)_P)
	\ge
	\mathcal{H}_{\mathrm{cl}}(\mu^\lambda_{P,\lambda},\mu^\lambda_{P,0}),
	\]
	followed by Lemma \ref{lem:pointwise-comparison}, to obtain
	\[
	\mathcal{H}(\Gamma_\lambda,\Gamma_0)
	\ge
	\mathcal{H}_{\mathrm{cl}}(\mu^\lambda_{P,\lambda},\mu_{0,P})
	-
	C\lambda \Lambda^4
	\int_{PH}(1+\|u\|^2)\,\d\mu^\lambda_{P,\lambda}(u).
	\]
	Next, Proposition \ref{prop:localized-interaction-vs-DP} yields
	\[
	\Tr_{\mathfrak F(PH)}\!\big[W_P^{\mathrm{re}}\Gamma_{\lambda,P}\big]
	=
	\int_{PH} D_P(u)\,\d\mu^\lambda_{P,\lambda}(u)
	+
	\mathcal E_{P,\lambda}(\Gamma_\lambda).
	\]
	Combining these with \eqref{eq:exact-localization} gives
	\eqref{eq:free-energy-lower-modulo-HF} and \eqref{eq:R-lambda-Lambda}.


\end{proof}

\begin{theorem}[Free-energy lower bound by the classical partition function]
	\label{thm:lower-bound-classical-final}
{\rmk{Fix $\frac32<\beta\le2$, choose $\eta\in(2-\beta,\beta-1)$, then choose $\alpha\in(0,1/4)$ sufficiently small and $\alpha_1>\frac12$ sufficiently close to $\frac12$ so that $\mathfrak r^{\rm tail}_{\beta,\lambda}\to0$. Under this choice one has}}
	\begin{equation}\label{eq:lower-bound-classical-final}
		-\log \frac{Z_\lambda^{\mathrm{re}}}{Z_0}
		\ge
		-\log \int_{PH} e^{-D_P(u)}\,\d\mu_{0,P}(u)
		-{\rmk{C_\beta\Big(\mathfrak r^{\rm tail}_{\beta,\lambda}+(\lambda^{\frac\alpha4}+\lambda^{\frac{\alpha(\eta-2+\beta)}{2}})(1+|\log\lambda|^3)\Big).}}
	\end{equation}
\end{theorem}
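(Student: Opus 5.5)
The plan is to start from Proposition~\ref{prop:free-energy-lower-modulo-HF} and remove, one at a time, the high-frequency remainders and the localization error. Then I would apply the classical Gibbs variational principle on the finite-dimensional space $PH$. That principle gives, for every probability measure $\nu$ on $PH$,
\[
\mathcal H_{\mathrm{cl}}(\nu,\mu_{0,P})+\int_{PH}D_P\,\d\nu\ \ge\ -\log\int_{PH}e^{-D_P(u)}\,\d\mu_{0,P}(u),
\]
with equality at the Gibbs measure. Taking $\nu=\mu^\lambda_{P,\lambda}$ in \eqref{eq:free-energy-lower-modulo-HF}, the two classical terms dominate $-\log\int e^{-D_P}\d\mu_{0,P}$. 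The remaining task is to show that
\[
HF_{\neq0}(\Gamma_\lambda)+HF_0(\Gamma_\lambda)-R_{\lambda,\Lambda}
\]
is bounded below by minus the stated error.

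\textbf{Step 1: split $HF_{\neq0}$.} By \eqref{dec:HFneq0}, $HF_{\neq0}=\mathcal E_{\rm shell}+\mathcal E_{\rm tail}$. Theorem~\ref{cor:tail-lower-bound} gives the one-sided bound
\[
\mathcal E_{\rm tail}(\Gamma_\lambda)\ge -C_\beta\mathfrak r^{\rm tail}_{\beta,\lambda}.
\]
A lower bound is all that is needed, which is why the positivity of $R_1VR_1$ suffices there.

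\textbf{Step 2: the shell and zero-mode terms.} Theorem~\ref{prop:shell-vanishing-local} gives the shell bound \eqref{eq:shell-vanishing-rate-local}. I would substitute $\Lambda=\lambda^{-\alpha}$ and $\Lambda_1=\lambda^{-\alpha_1}$.
\begin{itemize}
\item The first piece of \eqref{eq:shell-vanishing-rate-local} becomes $\lambda^{\alpha(\eta-2+\beta)/2}L_\lambda^{5/2}$. This is exactly the stated term, and it is small because $\eta>2-\beta$.
\item The second piece is $\lambda^{\frac34-\alpha_1-\frac\alpha2(2-\beta)}$. It is small once $\alpha$ is small and $\alpha_1$ is close to $\tfrac12$.
\item The third piece is $\lambda^{1-\alpha_1(2-\beta)}$, which is harmless.
\end{itemize}
The last two pieces are absorbed by $\mathfrak r^{\rm tail}_{\beta,\lambda}$, possibly after slightly adjusting exponents: for $\alpha_1$ near $\tfrac12$ they are dominated by the $\lambda^{\frac38-\dots}$ terms in $\mathfrak r^{\rm tail}$. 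Theorem~\ref{cor:HF0-vanishes} gives
\[
|HF_0(\Gamma_\lambda)|\le CL_\lambda^2\bigl(\lambda^{\alpha/2}+\lambda^{3/4-\alpha}\bigr).
\]
Since $\alpha<\tfrac14$, we have $\tfrac34-\alpha\ge\tfrac\alpha4$, so this is at most $CL_\lambda^2\lambda^{\alpha/4}$.

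\textbf{Step 3: the localization remainder.} The term $R_{\lambda,\Lambda}$ collects two errors.
\begin{itemize}
\item The entropy comparison of Lemma~\ref{lem:pointwise-comparison} costs $C\lambda\Lambda^4\int(1+\|u\|^2)\,\d\mu^\lambda_{P,\lambda}$. By Lemma~\ref{cor:husimi-second-moment} this is at most $C\lambda^{1-4\alpha}L_\lambda$.
\item The de Finetti error $\mathcal E_{P,\lambda}$ from Proposition~\ref{prop:localized-interaction-vs-DP} is at most $C\lambda^{2-\alpha(4-\beta)}L_\lambda+C\lambda^{1-2\alpha}L_\lambda$.
\end{itemize}
For $\alpha<\tfrac14$ small, all of these are $\le C\lambda^{\alpha/4}L_\lambda$. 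Collecting Steps 1--3 yields \eqref{eq:lower-bound-classical-final}.

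The main obstacle is the exponent bookkeeping in the choice of parameters. One must fix $\eta\in(2-\beta,\beta-1)$ first, then take $\alpha$ small, and then take $\alpha_1\downarrow\tfrac12$. Under that order, every exponent must be positive: those in $\mathfrak r^{\rm tail}$ (for instance $\tfrac38-\tfrac{\alpha_1}{2}(3-\beta)-\tfrac\alpha2(2-\beta)>0$, which needs $\beta>\tfrac32$), and the shell-theorem constraint $\tfrac34-\alpha_1-\tfrac\alpha2(2-\beta)>0$. Checking this compatibility is where I would concentrate; the rest is assembling the earlier results.
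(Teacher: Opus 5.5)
Your proposal is correct and follows the paper's own proof. You insert Theorems~\ref{cor:HF0-vanishes}, \ref{prop:shell-vanishing-local} and~\ref{cor:tail-lower-bound} into Proposition~\ref{prop:free-energy-lower-modulo-HF}, using only the one-sided tail bound, and then apply the classical variational principle with $\nu=\mu^\lambda_{P,\lambda}$. Your exponent bookkeeping, which the paper leaves implicit, checks out: the second and third shell pieces are dominated by terms of $\mathfrak r^{\rm tail}_{\beta,\lambda}$ for every $\alpha_1<\frac34$, and $\lambda^{1-4\alpha}\le\lambda^{\alpha/4}$ once $\alpha\le\frac{4}{17}$, which your requirement that $\alpha$ be small covers.
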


\begin{proof}
	
	By Proposition~\ref{prop:free-energy-lower-modulo-HF},  Theorem \ref{cor:HF0-vanishes},  Theorem \ref{prop:shell-vanishing-local}, Theorem \ref{cor:tail-lower-bound}
	\[
	-\log \frac{Z_\lambda^{\mathrm{re}}}{Z_0}
	\ge
	\mathcal{H}_{\mathrm{cl}}(\mu^\lambda_{P,\lambda},\mu_{0,P})
	+
	\int_{PH} D_P(u)\,\d\mu^\lambda_{P,\lambda}(u)-{\rmk{C_\beta\Big(\mathfrak r^{\rm tail}_{\beta,\lambda}+(\lambda^{\frac\alpha4}+\lambda^{\frac{\alpha(\eta-2+\beta)}{2}})(1+|\log\lambda|^3)\Big).}}
	\]
	Apply the classical variational principle
	\[
	\mathcal{H}_{\mathrm{cl}}(\nu,\mu_{0,P})+\int D_P\,\d\nu
	\ge
	-\log\int_{PH} e^{-D_P(u)}\,\d\mu_{0,P}(u)
	\]
	to the probability measure $\nu=\mu^\lambda_{P,\lambda}$ implies the result.
\end{proof}

\subsection{Free-energy upper bound}\label{sec:upper}

In this subsection we prove the matching upper bound by constructing a trial state that is interacting on the low-frequency sector and free on the high-frequency sector. The key point is that this ansatz is compatible with the localization of the Hamiltonian, so the high-frequency remainders can be estimated directly and shown to vanish. For the truncated low-frequency problem we then use coherent states and the Peierls--Bogoliubov inequality to compare the quantum partition function with the classical partition function associated with $D_P$.

	Let
\[
\Gamma^{\rm re}_{P,\lambda}
:=
\frac{\exp(-\lambda \dG(Ph)-W^{\rm re}_P)}
{\Tr_{\gF(PH)}\!\big[\exp(-\lambda \dG(Ph)-W^{\rm re}_P)\big]},
\]
where $W^{\rm re}_P$ is the projected renormalized interaction defined in
\eqref{eq:WreP-def}. Define the trial state on $\mathfrak F(H)$ by
\[
\widetilde\Gamma_\lambda
:=
U^*\bigl(\Gamma^{\rm re}_{P,\lambda}\otimes (\Gamma_0)_Q\bigr)U,
\]
with $U:\mathfrak F(H)\to \mathfrak F(PH)\otimes \mathfrak F(QH)$ the localization unitary.

\begin{lemma}[Vanishing of the high-frequency remainders for the trial state]
	\label{lem:upper}
It holds that
		\begin{equation}
			\label{eq:trial-HF-explicit}
			|HF_{\neq0}(\widetilde\Gamma_\lambda)|+|HF_0(\widetilde\Gamma_\lambda)|
			\le
			{\color{red}C_\beta(1+|\log\lambda|)^2\Lambda^{-\beta}.}
		\end{equation}
\end{lemma}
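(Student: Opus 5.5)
The plan is to exploit the product structure of $\widetilde\Gamma_\lambda=U^*(\Gamma^{\rm re}_{P,\lambda}\otimes(\Gamma_0)_Q)U$. In this state the $P$- and $Q$-sectors are independent, and the $Q$-part is the restriction of the quasi-free state $\Gamma_0$. Expectations of polynomials that mix $P$- and $Q$-modes therefore factorize: $Q$-leg expectations are computed by Wick's rule with $\gamma_0Q$, which is diagonal in momentum with occupations $n_p$, $h(p)>\Lambda^2$. As a preliminary, we record $\lambda\Tr(\cN_P\Gamma^{\rm re}_{P,\lambda})\le C(1+|\log\lambda|)$ and $\lambda^2\Tr(\cN_P^2\Gamma^{\rm re}_{P,\lambda})\le C(1+|\log\lambda|)^2$. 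These follow by repeating the argument of Proposition~\ref{prop:number-moments-log} for the projected Hamiltonian, using that $\lambda^2\mathbb W(V_{\neq0,P})+\frac{\lambda^2}{2(2\pi)^2}\cN_P(\cN_P-1)=\lambda^2\mathbb W(P^{\otimes2}VP^{\otimes 2})\ge0$, or alternatively by Lemma~\ref{lem:periodic-yukawa-positive} applied on $PH$.

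\emph{Step 1: the zero mode $HF_0$.} Independence gives
\[
\Tr\big[(\cN_P-N_{0,P})(\cN_Q-N_{0,Q})\widetilde\Gamma_\lambda\big]=\Tr\big[(\cN_P-N_{0,P})\Gamma^{\rm re}_{P,\lambda}\big]\cdot\langle \cN_Q-N_{0,Q}\rangle_{0}=0,
\]
since $\langle\cN_Q\rangle_0=N_{0,Q}$ by definition. The remaining term is the free variance
\[
\lambda^2\langle(\cN_Q-N_{0,Q})^2\rangle_0=\lambda^2\sum_{h(p)>\Lambda^2}(n_p^2+n_p).
\]
Using $\lambda n_p\le h(p)^{-1}$ and $\lambda^2 n_p(n_p+1)\lesssim h(p)^{-2}$, this is $\lesssim\sum_{h(p)>\Lambda^2}h(p)^{-2}\lesssim\Lambda^{-2}$, which is well within $\Lambda^{-\beta}$.

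\emph{Step 2: the nonzero modes $HF_{\neq0}$.} We write $V_{\neq0}-V_{\neq0,P}$ as the sum over the $P/Q$ blocks of the quartic monomials $a^*_{p+k}a^*_{q-k}a_qa_p$ in which at least one index lies in $Q$. Because the $Q$-sector state is gauge invariant and quasi-free, a monomial has nonzero expectation only if its $Q$-creators and $Q$-annihilators pair up. In particular, a single $Q$-leg with a $P$ partner (a $P\to Q$ or $Q\to P$ crossing) gives zero, since the odd number of $Q$-operators forces a vanishing expectation. This leaves three surviving cases.
\begin{itemize}
\item[(a)] $(P\to P,\,Q\to Q)$ with $k\ne0$. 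Momentum conservation in the $Q$ factor forces $q-k=q$, so $k=0$; the contribution vanishes.
\item[(b)] Both legs $Q\to Q$. The direct pairing again requires $k=0$, so only the exchange term $\sum_{k\ne0}\hvb(k)\sum_p n_pn_{p+k}Q_pQ_{p+k}$ survives.
\item[(c)] The crossing exchange $(P\to Q,\,Q\to P)$, i.e.\ $a^*_{p+k}a^*_{q-k}a_qa_p$ with $p+k=q\in Q$ and $q-k=p\in P$. It contributes $\lambda^2\sum_{k\ne0}\hvb(k)\sum_{p\in P,\ p+k\in Q}n_{p+k}\langle a^*_pa_p\rangle$, which is bounded by $\lambda\Tr(\cN_P\Gamma^{\rm re}_{P,\lambda})\cdot\sup_{p\in P}\sum_{q\in Q}\hvb(q-p)\,\lambda n_q$.
\end{itemize}

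Case (b) is bounded by $\sum_{k\ne0}\hvb(k)\sum_{h(p)>\Lambda^2}h(p)^{-1}h(p+k)^{-1}$. I expect this to be the main obstacle, because the weight $\hvb$ is not summable and the bound has to be extracted from the $Q$-restriction. I would first try Lemma~\ref{lem:sum} in the convolution form: the inner sum over $h(p)>\Lambda^2$ is $\lesssim\langle k\rangle^{-2}\log$ for $|k|\gtrsim\Lambda$ and $\lesssim\Lambda^{-2}\log\Lambda$ for $|k|\lesssim\Lambda$. Summing against $\langle k\rangle^{-\beta}$ then gives $\lesssim(1+\log\Lambda)^2\Lambda^{-\beta}$. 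For case (c) we need $\sup_{p\in P}\sum_{h(q)>\Lambda^2}\langle q-p\rangle^{-\beta}h(q)^{-1}$. Splitting into $|q-p|\le|q|/2$ and its complement, I expect the bound $\lesssim\Lambda^{-\beta}\log\Lambda$ (using $\beta<2+\beta$ and the decay $h(q)^{-1-\beta/2}$). Multiplied by $\lambda\Tr(\cN_P\Gamma^{\rm re}_{P,\lambda})\le C(1+|\log\lambda|)$ and then combined with Steps 1 and 2(b), this yields \eqref{eq:trial-HF-explicit}, with all logarithms of $\Lambda=\lambda^{-\alpha}$ absorbed into $(1+|\log\lambda|)^2$.
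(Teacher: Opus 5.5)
Your proposal is correct and is essentially the paper's proof. The mixed term in $HF_0$ vanishes by the product structure, leaving the free variance $\lesssim\Lambda^{-2}$. In $HF_{\neq0}$ only the exchange part of B9 and the blocks B2, B4 (your case (c)) survive: the former is handled by splitting the $k$-sum at $|k|\sim\Lambda$, the latter through $\lambda\Tr(\cN_P\Gamma^{\rm re}_{P,\lambda})\cdot\sup_{p\in\mathcal K}\sum_{h(q)>\Lambda^2}\widehat v_\beta(q-p)h(q)^{-1}\lesssim(1+|\log\lambda|)(1+\log\Lambda)\Lambda^{-\beta}$, with near/far splittings equivalent to the paper's.

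One point to tighten is the preliminary number bound. Nonnegativity of $\mathbb W(P^{\otimes2}VP^{\otimes2})$ alone gives no $\cN_P^2$ coercivity, so your first route does not work. It is your alternative, $P^{\otimes2}VP^{\otimes2}\ge v_{\beta,*}P^{\otimes2}$ and hence $\mathbb W(P^{\otimes2}VP^{\otimes2})\ge\frac{v_{\beta,*}}{2}\cN_P(\cN_P-1)$, that lets the argument of Proposition~\ref{prop:number-moments-log} run on $\mathfrak F(PH)$.
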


\begin{proof}
	We write
	\[
	U\widetilde\Gamma_\lambda U^*=\omega_P\otimes \omega_Q,
	\qquad
	\omega_P:=\Gamma^{\rm re}_{P,\lambda},
	\qquad
	\omega_Q:=(\Gamma_0)_Q.
	\]
		Both factors commute with their corresponding number operators. Moreover $\omega_Q$ is the free
		Gibbs state on $\gF(QH)$. Set
		\[
		\mathcal K:=\{p\in\mathbb Z^2:\ h(p)\le \Lambda^2\}.
		\]
		Then, in the Fourier basis and for $r,s\notin \mathcal K$,
		\[
		\langle a_r^*a_s\rangle_{\omega_Q}=n_r\delta_{r,s},
		\qquad
		n_r:=\frac1{e^{\lambda h(r)}-1}.
		\]
		For $p\in \mathcal K$ let
		\[
		m_p:=\langle a_p^*a_p\rangle_{\omega_P}.
		\]

	\medskip
	\noindent
	{\bf Step 1: the zero-mode remainder.}
	Recall from \eqref{eq:HF-zero} that
	\[
	HF_0(\Gamma)
	=
	\frac{\lambda^2}{2(2\pi)^2}
	\Tr\!\Big[\Big(
	2(\cN_P-N_{0,P})(\cN_Q-N_{0,Q})+(\cN_Q-N_{0,Q})^2
	\Big)\Gamma\Big].
	\]
	Hence, using the tensor-product structure of $\widetilde\Gamma_\lambda$,
	\[
	\Tr\!\big[(\cN_P-N_{0,P})(\cN_Q-N_{0,Q})\widetilde\Gamma_\lambda\big]
	=
	\Tr_{\gF(PH)}[(\cN_P-N_{0,P})\omega_P]\,
	\Tr_{\gF(QH)}[(\cN_Q-N_{0,Q})\omega_Q]=0.
	\]
	Therefore the mixed term
	vanishes exactly, and
	\[
	HF_0(\widetilde\Gamma_\lambda)
	=
	\frac{\lambda^2}{2(2\pi)^2}
	\Tr_{\gF(QH)}\!\big[(\cN_Q-N_{0,Q})^2\omega_Q\big].
	\]
	Since $\omega_Q$ is quasi-free,
	\[
	\Tr_{\gF(QH)}\!\big[(\cN_Q-N_{0,Q})^2\omega_Q\big]
	=
	\sum_{p\notin \mathcal K} n_p(1+n_p).
	\]
	Using
	\[
	\lambda^2 n_p(1+n_p)\le \frac{C}{h(p)^2},
	\]
	we infer
	\[
	|HF_0(\widetilde\Gamma_\lambda)|
	\le
	C\sum_{p\notin \mathcal K}\frac1{h(p)^2}
	\le
	C\Lambda^{-2}.
	\]

	\medskip
	\noindent
	{\bf Step 2: blockwise simplification of $HF_{\neq0}$.}
	Recall that
	\[
	HF_{\neq0}(\Gamma)
	=
	\frac{\lambda^2}{2}
	\Tr_{H^{\otimes2}}\!\big[(V_{\neq0}-V_{\neq0,P})\Gamma^{(2)}\big],
	\]
	and that the block decomposition of Section~\ref{sec:high0} leaves only the blocks
	\[
	B2,\quad B4,\quad B9
	\]
	for the product state $\widetilde\Gamma_\lambda$. Indeed, since
	\[
	U\widetilde\Gamma_\lambda U^*=\omega_P\otimes \omega_Q,
	\]
	and both $\omega_P$ and $\omega_Q$ commute with their corresponding number operators, every expectation vanishes unless each sector contains the same number of creation and annihilation operators. Moreover, $\omega_P$ is diagonal in the Fourier basis on $PH$ because the truncated Hamiltonian $\lambda \dG(Ph)+W_P^{\rm re}$ is translation invariant; thus
	\[
	\langle a_r^*a_s\rangle_{\omega_P}=m_r\delta_{r,s},\qquad r,s\in\mathcal K,
	\]
	while $\omega_Q$ satisfies
	\[
	\langle a_r^*a_s\rangle_{\omega_Q}=n_r\delta_{r,s},\qquad r,s\notin\mathcal K,
	\]
	and all anomalous expectations $\langle a_ra_s\rangle$, $\langle a_r^*a_s^*\rangle$ are zero. Therefore the blocks
	\[
	A2,\ A3,\ A5,\ A6,\ B1,\ B3,\ B5,\ B6,\ B7,\ B8
	\]
	vanish immediately by gauge invariance, since in one of the two factors they leave an unequal number of creators and annihilators. The pure low-frequency block $A1$ has already been removed by subtracting $V_{\neq 0,P}$. For the remaining mixed blocks $A4$ and $A7$, the monomial factors into diagonal two-point functions such as
	\[
	\langle a_{p+k}^*a_p\rangle_{\omega_P}\,\langle a_{q-k}^*a_q\rangle_{\omega_Q},
	\]
	which vanish because $k\neq 0$ and both one-body density matrices are diagonal in momentum. Hence only $B2$, $B4$, and $B9$ can contribute.
	For $B2$ one has
	\[
	p\in \mathcal K,\quad p+k\notin \mathcal K,\quad q\notin \mathcal K,\quad q-k\in \mathcal K,
	\]
	and the expectation factorizes as
	\[
	\langle a_{p+k}^*a_{q-k}^*a_q a_p\rangle_{\widetilde\Gamma_\lambda}
	=
	\langle a_{q-k}^*a_p\rangle_{\omega_P}
	\langle a_{p+k}^*a_q\rangle_{\omega_Q}.
	\]
	Since $\omega_Q$ is diagonal in Fourier space,
	\[
	\langle a_{p+k}^*a_q\rangle_{\omega_Q}=n_{p+k}\delta_{q,p+k},
	\]
	so the $q$-sum collapses and $B2$ contributes
	\[
	\sum_{\substack{p\in \mathcal K\\ p+k\notin\mathcal K}} m_p n_{p+k}.
	\]
	Similarly, for $B4$ one gets
	\[
	\sum_{\substack{q\in \mathcal K\\ q-k\notin \mathcal K}} m_q n_{q-k}.
	\]
	Finally, $B9$ lies entirely in the $Q$-sector and Wick's rule gives the contribution
	\[
	\sum_{\substack{p\notin \mathcal K\\ p+k\notin \mathcal K}} n_p n_{p+k}.
	\]
	Collecting the three surviving pieces, we obtain the exact identity
	\begin{equation}
		\label{eq:upper-HF-nonzero-explicit}
		HF_{\neq0}(\widetilde\Gamma_\lambda)
		=
		\frac{\lambda^2}{2}
		\sum_{k\neq0}\hvb(k)
		\left[
		2\sum_{\substack{p\in \mathcal K\\ p+k\notin \mathcal K}} m_p n_{p+k}
		+
		\sum_{\substack{p\notin \mathcal K\\ p+k\notin \mathcal K}} n_p n_{p+k}
		\right].
	\end{equation}

	\medskip
	\noindent
	{\bf Step 3: the pure $Q$-$Q$ term.}
	Using $\lambda n_p\le h(p)^{-1}$, we obtain
	\[
	\frac{\lambda^2}{2}
	\sum_{k\neq0}\hvb(k)
	\sum_{\substack{p\notin \mathcal K\\ p+k\notin \mathcal K}} n_p n_{p+k}
	\le
	C
	\sum_{k\neq0}\hvb(k)
	\sum_{\substack{p\notin \mathcal K\\ p+k\notin \mathcal K}}
	\frac1{h(p)h(p+k)}.
	\]
	Split the $k$-sum into $|k|\le \Lambda/2$ and $|k|>\Lambda/2$.
	If $|k|\le \Lambda/2$, then by the arithmetic--geometric mean inequality,
	\[
	\sum_{\substack{p\notin \mathcal K\\ p+k\notin \mathcal K}}
	\frac1{h(p)h(p+k)}
	\le
	\sum_{p\notin \mathcal K}\frac1{h(p)^2}
	\le C\Lambda^{-2},
	\]
	and therefore
	\[
	\sum_{0<|k|\le \Lambda/2}\hvb(k)
	\sum_{\substack{p\notin\mathcal K\\ p+k\notin \mathcal  K}}
	\frac1{h(p)h(p+k)}
	\le C\Lambda^{-2}\sum_{|k|\le \Lambda/2}\hvb(k)
	\le {\color{red}C_\beta(1+\log\Lambda)\Lambda^{-\beta}.}
	\]
	If $|k|>\Lambda/2$, then we use the convolution bound from Lemma~\ref{lem:sum}:
	\[
	\sum_{p\in\mathbb Z^2}\frac1{h(p)h(p+k)}
	\le C\frac{\log(2+|k|)}{1+|k|^2}.
	\]
	Hence
	\[
	\sum_{|k|>\Lambda/2}\hvb(k)
	\sum_{\substack{p\notin \mathcal K\\ p+k\notin \mathcal K}}
	\frac1{h(p)h(p+k)}
	\le
	C\sum_{|k|>\Lambda/2}
	\frac{\log(2+|k|)}{\langle k\rangle^{2+\beta}}
	\le {\color{red}C_\beta(1+\log\Lambda)\Lambda^{-\beta}.}
	\]
	Altogether,
	\begin{equation}
		\label{eq:trial-HF-QQ-explicit}
			\frac{\lambda^2}{2}
		\sum_{k\neq0}\hvb(k)
		\sum_{\substack{p\notin \mathcal K\\ p+k\notin \mathcal K}} n_p n_{p+k}
		\le {\color{red}C_\beta(1+\log\Lambda)\Lambda^{-\beta}.}
	\end{equation}

	\medskip
	\noindent
	{\bf Step 4: the mixed $P$/$Q$ term.}
	Define, for $p\in \mathcal K$,
	\[
	T_p^{(\Lambda)}:=
	\sum_{r\notin \mathcal K}\hvb(r-p)\,\frac1{h(r)}.
	\]
	Then by $\lambda n_r\le h(r)^{-1}$,
	\[
	\lambda^2
	\sum_{k\neq0}\hvb(k)
	\sum_{\substack{p\in \mathcal K\\ p+k\notin \mathcal K}} m_p n_{p+k}
	\le
	\lambda\sum_{p\in \mathcal K}m_p\, T_p^{(\Lambda)}.
	\]
We claim
	\[
	{\color{red}\sup_{p\in \mathcal K}T_p^{(\Lambda)}\le C_\beta\frac{1+\log\Lambda}{\Lambda^\beta}.}
	\]
	Indeed, fix $p\in\mathcal K$, so that $|p|\le \Lambda$. Split
	\[
	T_p^{(\Lambda)}
	=
	\sum_{\substack{r\notin\mathcal K \\ |r|\le 2\Lambda}}
	\hvb(r-p)\frac1{h(r)}
	+
	\sum_{|r|>2\Lambda}
	\hvb(r-p)\frac1{h(r)}
	=:T_{p,\mathrm{near}}^{(\Lambda)}+T_{p,\mathrm{far}}^{(\Lambda)}.
	\]
	For the near part, $r\notin\mathcal K$ implies $h(r)>\Lambda^2$, hence $h(r)^{-1}\le \Lambda^{-2}$, and
	$|r-p|\le |r|+|p|\le 3\Lambda$. Therefore
	\[
	T_{p,\mathrm{near}}^{(\Lambda)}
	\le
	\frac1{\Lambda^2}\sum_{|s|\le 3\Lambda}\hvb(s)
	\le
	{\color{red}C_\beta(1+\log\Lambda)\Lambda^{-\beta}.}
	\]
	For the far part, if $|r|>2\Lambda$ then $|r-p|\ge |r|-|p|\ge |r|/2$, so
	\[
	\hvb(r-p)\frac1{h(r)}
	\le
	\frac{C_\beta}{\langle r\rangle^{2+\beta}}.
	\]
	Hence
	\[
	T_{p,\mathrm{far}}^{(\Lambda)}
	\le
	C_\beta\sum_{|r|>2\Lambda}\langle r\rangle^{-2-\beta}
	\le
	{\color{red}C_\beta\Lambda^{-\beta}.}
	\]
	Combining the two bounds proves the claimed estimate. Therefore
	\[
	\lambda^2
	\sum_{k\neq0}\hvb(k)
	\sum_{\substack{p\in \mathcal K\\ p+k\notin \mathcal K}} m_p n_{p+k}
	\le
	C_\beta\Bigl(\lambda\sum_{p\in \mathcal K}m_p\Bigr)\frac{1+\log\Lambda}{\Lambda^\beta}.
	\]
	Applying the same logarithmic number bound as in Proposition~\ref{prop:number-moments-log} to the truncated Gibbs state on $\gF(PH)$, we have
	\[
	\lambda\sum_{p\in \mathcal K}m_p
	=
	\lambda\,\Tr_{\gF(PH)}(\cN_P\Gamma^{\rm re}_{P,\lambda})
	\le
	C(1+|\log\lambda|).
	\]
	Hence
	\begin{equation}
		\label{eq:trial-HF-PQ-explicit}
		\lambda^2
		\sum_{k\neq0}\hvb(k)
		\sum_{\substack{p\in \mathcal K\\ p+k\notin \mathcal K}} m_p n_{p+k}
		\le
		{\color{red}C_\beta(1+|\log\lambda|)(1+\log\Lambda)\Lambda^{-\beta}.}
	\end{equation}
	Since $\log\Lambda=\alpha|\log\lambda|$, combining \eqref{eq:upper-HF-nonzero-explicit}, \eqref{eq:trial-HF-QQ-explicit},
	and \eqref{eq:trial-HF-PQ-explicit}, we conclude that
	\[
	|HF_{\neq0}(\widetilde\Gamma_\lambda)|
	\le
	{\color{red}C_\beta(1+|\log\lambda|)^2\Lambda^{-\beta}.}
	\]
	Together with the estimate for $HF_0(\widetilde\Gamma_\lambda)$ proved in Step~1, this yields
	\eqref{eq:trial-HF-explicit}.
\end{proof}

\begin{proposition}[Free-energy upper bound via the trial state $\widetilde\Gamma_\lambda$]
	\label{prop:free-energy-upper-final}	{\color{red}Fix $\frac32<\beta\le2$.} It holds that
	\begin{equation}\label{eq:upper-bound-classical-final}
		-\log\frac{Z^{\rm re}_\lambda}{Z_0}
		\le
		-\log \int_{PH} e^{-D_P(u)}\,\d\mu_{0,P}(u)+{\color{red}C_\beta(1+|\log\lambda|)^2(\lambda^{\alpha\beta}+\lambda^{1-4\alpha})}
		.
	\end{equation}
\end{proposition}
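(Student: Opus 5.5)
We plan to test the Gibbs variational principle \eqref{eq:variational} with the trial state $\widetilde\Gamma_\lambda=U^*(\Gamma^{\rm re}_{P,\lambda}\otimes(\Gamma_0)_Q)U$. Since $\Gamma_0$ factorizes as $U^*((\Gamma_0)_P\otimes(\Gamma_0)_Q)U$, the relative entropy splits exactly:
\[
\mathcal H(\widetilde\Gamma_\lambda,\Gamma_0)=\mathcal H(\Gamma^{\rm re}_{P,\lambda},(\Gamma_0)_P).
\]
By the exact localization identity \eqref{eq:exact-localization}, together with $(\widetilde\Gamma_\lambda)_P=\Gamma^{\rm re}_{P,\lambda}$, we get
\[
-\log\frac{Z^{\rm re}_\lambda}{Z_0}\le \mathcal H(\Gamma^{\rm re}_{P,\lambda},(\Gamma_0)_P)+\Tr[W_P^{\rm re}\Gamma^{\rm re}_{P,\lambda}]+HF_{\neq0}(\widetilde\Gamma_\lambda)+HF_0(\widetilde\Gamma_\lambda).
\]
The first two terms equal $-\log\bigl(\Tr_{\gF(PH)}e^{-\lambda\dG(Ph)-W^{\rm re}_P}/\Tr_{\gF(PH)} e^{-\lambda\dG(Ph)}\bigr)$, since $\Gamma^{\rm re}_{P,\lambda}$ is the Gibbs state of that truncated problem. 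The remainders are bounded by Lemma~\ref{lem:upper}, giving $C_\beta(1+|\log\lambda|)^2\Lambda^{-\beta}=C_\beta(1+|\log\lambda|)^2\lambda^{\alpha\beta}$.

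It remains to bound the finite-dimensional quantum partition function ratio from below by the classical one, up to $C(1+|\log\lambda|)^2\lambda^{1-4\alpha}$. We use the upper symbol (coherent-state) representation together with the Peierls--Bogoliubov inequality, i.e.\ the Berezin--Lieb upper bound in the form used in \cite{LewNamRou-15,NZZ25}. Writing the operator $\lambda\dG(Ph)+W^{\rm re}_P$ in anti-Wick (upper symbol) form on $PH$ at scale $\lambda$, its symbol is $\langle u,hu\rangle+D_P(u)$ plus explicit lower-order corrections. These come from normal-ordering: one-body terms of size $\lambda\Lambda^2\|u\|^2$, the quadratic correction $\lambda\sum_{k\ne0}\hvb(k)\langle u,e_{k,P}e_{-k,P}u\rangle\lesssim\lambda\Lambda^{2-\beta}\log\Lambda\|u\|^2$, constants such as $\lambda^2\Lambda^{4-\beta}\log\Lambda$, and the shift of the center $a_{P,\lambda}$ controlled by \eqref{eq:center-shift-bound}. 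Berezin--Lieb then gives $\Tr e^{-\lambda\dG(Ph)-W^{\rm re}_P}\ge(\pi\lambda)^{-K}\int e^{-\text{upper symbol}}$.

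For the free denominator, we compare $(\pi\lambda)^{-K}\int e^{-\langle u,hu\rangle/\lambda}$ with $\Tr e^{-\lambda\dG(Ph)}$ mode by mode, exactly as in Lemma~\ref{lem:pointwise-comparison}. This produces the Gaussian $\mu_{0,P}$ up to a factor $e^{O(\lambda\Lambda^4)}$, and the error density is $e^{O(\lambda\Lambda^4)(1+\|u\|^2)}$.

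The main obstacle is absorbing the $\|u\|^2$-dependent errors in the exponent. We plan to use Jensen or Hölder with respect to the measure $e^{-D_P}\d\mu_{0,P}/z_P$, bounding $\int\|u\|^2 e^{-D_P}\d\mu_{0,P}/z_P\lesssim\log\Lambda$. The zero-mode term in $D_P$ confines $\|u\|^2$ near $(2\pi)^2\rho_{0,P}\sim\log\Lambda$, and $z_P$ is bounded below uniformly by Jensen, since $\int D_P\,\d\mu_{0,P}\le C$ as in Lemma~\ref{lem:shell-perturbed-free}. This yields a total error $C\lambda\Lambda^4\log\Lambda\lesssim(1+|\log\lambda|)\lambda^{1-4\alpha}$. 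Adding the remainder bound from the trial state gives \eqref{eq:upper-bound-classical-final}.
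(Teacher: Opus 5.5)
Your overall plan is the paper's. You test \eqref{eq:variational} with $\widetilde\Gamma_\lambda$, split the relative entropy, and use \eqref{eq:exact-localization} with Lemma~\ref{lem:upper} for the $\lambda^{\alpha\beta}$ term. You then identify the truncated free energy, pass to the classical partition function with coherent states, and absorb the $\|u\|^2$-errors by Jensen against $\nu_P$.

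\textbf{The reversed step.} The partition-function step is stated with the inequality backwards. For the anti-Wick (upper) symbol $a^{\mathrm{up}}$ of $A=\lambda\dG(Ph)+W^{\rm re}_P$, Berezin--Lieb gives $\Tr e^{-A}\le(\pi\lambda)^{-K}\int e^{-a^{\mathrm{up}}}$, which is the opposite of what you need. Already for one mode, $A=a^*a$ at scale $1$: $\int e^{-(|z|^2-1)}\d z/\pi=e$, which exceeds $\Tr e^{-A}=(1-e^{-1})^{-1}$. So ``$\Tr e^{-A}\ge(\pi\lambda)^{-K}\int e^{-\text{upper symbol}}$'' is false.

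\textbf{The fix.} The lower bound on $\Tr e^{-A}$ uses the covariant (lower) symbol. By the coherent-state resolution of the identity and Peierls--Bogoliubov, $\Tr e^{-A}\ge(\lambda\pi)^{-K}\int e^{-\langle W(u/\sqrt\lambda),AW(u/\sqrt\lambda)\rangle}\d u$. You name Peierls--Bogoliubov but pair it with the wrong symbol. This is exactly what the paper does, and it is simpler than your route:
\begin{itemize}
\item The $k\neq0$ quartic is Wick ordered, so its covariant symbol is exactly $\frac12\sum_{k\ne0}\hvb(k)|\langle u,e_{k,P}u\rangle|^2$.
\item The kinetic symbol is exactly $\langle u,hu\rangle$.
\item The only corrections are $\frac{\lambda}{2(2\pi)^2}\|u\|^2$ from $\cN_P^2$ and the center shift from $\lambda N_{0,P}$ to $(2\pi)^2\rho_{0,P}$. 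Together these are $O\bigl(\lambda\Lambda^2(\log\Lambda+\|u\|^2)\bigr)$.
\end{itemize}
None of the normal-ordering corrections you list are needed.

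\textbf{The denominator.} At scale $\lambda$ one has $(\lambda\pi)^{-K}e^{-\langle u,hu\rangle}\d u=\prod_j(\lambda\lambda_j)^{-1}\d\mu_{0,P}$ exactly; there is no $1/\lambda$ in the exponent. Comparing with $\Tr e^{-\lambda\dG(Ph)}$ therefore only produces the constant $\prod_j(1-e^{-\lambda\lambda_j})/(\lambda\lambda_j)\ge1-C\lambda\Lambda^4$, with no $\|u\|^2$-dependent density. Lemma~\ref{lem:pointwise-comparison} concerns the Husimi function of the free state and belongs to the lower bound. Using it here would stay within the error budget, but it is unnecessary.

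\textbf{The bound on $z_P$.} Your lower bound via Jensen and $\int D_P\,\d\mu_{0,P}\le C$ is correct. By Wick's rule:
\begin{itemize}
\item the direct terms vanish, since $\Tr(Ph^{-1}e_{k,P})=0$ for $k\neq0$;
\item the exchange terms are $\lesssim\log(2+|k|)\langle k\rangle^{-2}$, which is summable against $\hvb(k)$;
\item the zero-mode term has mean $\Tr(Ph^{-2})/(2(2\pi)^2)$.
\end{itemize}
This is a self-contained replacement for the paper's appeal to $z_P\to z_{\rm cl}$ from Lemma~\ref{lem:class}. With the symbol direction corrected, your argument yields \eqref{eq:upper-bound-classical-final}.
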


\begin{proof}
	By the Gibbs variational principle,
	\[
	-
	\log\frac{Z^{\rm re}_\lambda}{Z_0}
	\le
	\mathcal{H}(\widetilde\Gamma_\lambda,\Gamma_0)
	+
	\Tr_{\gF(H)}(W^{\rm re}\widetilde\Gamma_\lambda).
	\]
	Since the free Gibbs state factorizes under the Fock-space decomposition,
	\[
	\Gamma_0
	=
	U^*\bigl((\Gamma_0)_P\otimes (\Gamma_0)_Q\bigr)U,
	\]
	and since relative entropy is additive for tensor products, we get
	\[
	\mathcal{H}(\widetilde\Gamma_\lambda,\Gamma_0)
	=
	\mathcal{H}(\Gamma^{\rm re}_{P,\lambda},(\Gamma_0)_P).
	\]
	Next, applying the exact localization identity \eqref{eq:exact-localization} to the state
	$\widetilde\Gamma_\lambda$, we have
	\[
	\Tr_{\gF(H)}(W^{\rm re}\widetilde\Gamma_\lambda)
	=
	\Tr_{\gF(PH)}(W^{\rm re}_P\Gamma^{\rm re}_{P,\lambda})
	+
	HF_{\neq 0}(\widetilde\Gamma_\lambda)
	+
	HF_0(\widetilde\Gamma_\lambda).
	\]
	Therefore,
	\[
	-
	\log\frac{Z^{\rm re}_\lambda}{Z_0}
	\le
	\mathcal{H}(\Gamma^{\rm re}_{P,\lambda},(\Gamma_0)_P)
	+
	\Tr_{\gF(PH)}(W^{\rm re}_P\Gamma^{\rm re}_{P,\lambda})
	+
	|HF_{\neq 0}(\widetilde\Gamma_\lambda)|
	+
	|HF_0(\widetilde\Gamma_\lambda)|.
	\]
		By Lemma~\ref{lem:upper},
		\[
		|HF_{\neq 0}(\widetilde\Gamma_\lambda)|+|HF_0(\widetilde\Gamma_\lambda)|
		\le {\color{red}C_\beta(1+|\log\lambda|)^2\Lambda^{-\beta}.}
		\]
	Since $\Gamma^{\rm re}_{P,\lambda}$ is the Gibbs minimizer on $\gF(PH)$ for the truncated Hamiltonian
	$\lambda \dG(Ph)+W^{\rm re}_P$, we have the exact identity
	\[
	\mathcal{H}(\Gamma^{\rm re}_{P,\lambda},(\Gamma_0)_P)
	+
	\Tr_{\gF(PH)}(W^{\rm re}_P\Gamma^{\rm re}_{P,\lambda})
	=
	-
	\log
	\frac{\Tr_{\gF(PH)} e^{-\lambda \dG(Ph)-W^{\rm re}_P}}
	{\Tr_{\gF(PH)} e^{-\lambda \dG(Ph)}}.
	\]
	Hence
	\begin{equation}
		\label{eq:free-energy-upper-reduced-partition}
		-
		\log\frac{Z^{\rm re}_\lambda}{Z_0}
		\le
		-
		\log
		\frac{\Tr_{\gF(PH)} e^{-\lambda \dG(Ph)-W^{\rm re}_P}}
		{\Tr_{\gF(PH)} e^{-\lambda \dG(Ph)}}
		+
		{\color{red}C_\beta(1+|\log\lambda|)^2\Lambda^{-\beta}.}
	\end{equation}
	
	\medskip
	\noindent
	{\bf Coherent-state lower bound for the truncated partition function.}
	Set
	\[
	A_{P,\lambda}:=\lambda \dG(Ph)+W^{\rm re}_P
	\qquad\text{on }\gF(PH),
	\]
	and let
	\[
	K:=\dim(PH)=\Tr(P).
	\]
	Using the coherent-state resolution of the identity on $\gF(PH)$,
	\[
	\1_{\gF(PH)}
	=
	(\lambda\pi)^{-K}
	\int_{PH}
	|W(u/\sqrt\lambda)\rangle\langle W(u/\sqrt\lambda)|\,\d u,
	\]
	we obtain
	\[
	\Tr_{\gF(PH)}(e^{-A_{P,\lambda}})
	=
	(\lambda\pi)^{-K}
	\int_{PH}
	\langle W(u/\sqrt\lambda),e^{-A_{P,\lambda}}W(u/\sqrt\lambda)\rangle\,\d u.
	\]
	By the Peierls--Bogoliubov inequality
	\[
	\langle \psi,e^{-A}\psi\rangle\ge e^{-\langle \psi,A\psi\rangle}
	\qquad (\|\psi\|=1),
	\]
	it follows that
	\[
	\Tr_{\gF(PH)}(e^{-A_{P,\lambda}})
	\ge
	(\lambda\pi)^{-K}
	\int_{PH}
	\exp\!\Big(
	-\langle W(u/\sqrt\lambda),A_{P,\lambda}W(u/\sqrt\lambda)\rangle
	\Big)\,\d u.
	\]
	For the coherent vector $W(u/\sqrt\lambda)$ one has
	\[
	\lambda\langle W(u/\sqrt\lambda),\dG(Ph)W(u/\sqrt\lambda)\rangle
	=
	\langle u,hu\rangle,
	\]
	and, by the explicit form of $W_P^{\rm re}$,
	\[
	\langle W(u/\sqrt\lambda),W_P^{\rm re}W(u/\sqrt\lambda)\rangle
	=
	\frac1{2}
	\sum_{k\neq0}\hvb(k)|\langle u,e_{k,P}u\rangle|^2
	+
	\frac{1}{2(2\pi)^2}
	\Big[(\|u\|^2-\lambda N_{0,P})^2+\lambda\|u\|^2\Big].
	\]
	Comparing the zero-mode centers exactly as in the proof of
	Proposition~\ref{prop:localized-interaction-vs-DP}, and using \eqref{eq:center-shift-bound}
	together with $(2\pi)^2\rho_{0,P}\lesssim \log\Lambda\le \Lambda^2$, we get
	\[
	\Big|
	\langle W(u/\sqrt\lambda),W_P^{\rm re}W(u/\sqrt\lambda)\rangle-D_P(u)
	\Big|
	\le
	C\lambda\Lambda^2(1+\|u\|^2).
	\]
	Therefore
	\[
	\Tr_{\gF(PH)}(e^{-A_{P,\lambda}})
	\ge
	e^{-C\lambda\Lambda^2}
	(\lambda\pi)^{-K}
	\int_{PH}
	e^{-\langle u,hu\rangle-D_P(u)-C\lambda\Lambda^2\|u\|^2}\,\d u.
	\]
	
	\medskip
	\noindent
	{\bf Division by the free partition function.}
	Let $\lambda_1,\dots,\lambda_K$ be the eigenvalues of $h_P:=PhP$ on $PH$. Then
	\[
	\Tr_{\gF(PH)}(e^{-\lambda \dG(Ph)})
	=
	\prod_{j=1}^K \frac1{1-e^{-\lambda\lambda_j}}.
	\]
	On the other hand, in the $h_P$-eigenbasis,
	\[
	\d\mu_{0,P}(u)
	=
	\pi^{-K}\prod_{j=1}^K \lambda_j e^{-\lambda_j |\alpha_j|^2}\,\d\alpha_j,
	\]
	so that
	\[
	(\lambda\pi)^{-K}e^{-\langle u,hu\rangle}\,\d u
	=
	\Big(\prod_{j=1}^K \frac1{\lambda\lambda_j}\Big)\,\d\mu_{0,P}(u).
	\]
	Hence
	\[
	\frac{\Tr_{\gF(PH)}(e^{-\lambda \dG(Ph)-W_P^{\rm re}})}
	{\Tr_{\gF(PH)}(e^{-\lambda \dG(Ph)})}
	\ge
	e^{-C\lambda\Lambda^2}
	\Big(\prod_{j=1}^K\frac{1-e^{-\lambda\lambda_j}}{\lambda\lambda_j}\Big)
	\int_{PH}
	e^{-D_P(u)-C\lambda\Lambda^2\|u\|^2}\,\d\mu_{0,P}(u).
	\]
	By \cite[Section 10]{LewNamRou-21} (see also \cite[(8.48)]{NZZ25}) we have
	\[\prod_{j=1}^K\frac{1-e^{-\lambda\lambda_j}}{\lambda\lambda_j}\geq 1-C\lambda\Lambda^4.\]
	Moreover, the finite-dimensional classical partition
	function is stable under the added mass term. Writing
	\[
	\int_{PH} e^{-D_P(u)-C\lambda\Lambda^2\|u\|^2}\,\d\mu_{0,P}(u)
	=
	Z_P\int_{PH} e^{-C\lambda\Lambda^2\|u\|^2}\,\d\nu_P(u),
	\qquad
	Z_P:=\int_{PH} e^{-D_P(u)}\,\d\mu_{0,P}(u),
	\]
	and using Jensen's inequality together with
	\[
	\int_{PH}\|u\|^2\,\d\nu_P(u)
	\le Z_P^{-1}\int_{PH}\|u\|^2\,\d\mu_{0,P}(u)
	\le C(1+\log\Lambda),
	\]
	we obtain
	\[
	-C\lambda\Lambda^2(1+\log\Lambda)
	\le
	\log\int_{PH} e^{-C\lambda\Lambda^2\|u\|^2}\,\d\nu_P(u)
	\le 0.
	\]
	Hence
	\[
	\Big|
	\log\int_{PH} e^{-D_P(u)-C\lambda\Lambda^2\|u\|^2}\,\d\mu_{0,P}(u)
	-
	\log\int_{PH} e^{-D_P(u)}\,\d\mu_{0,P}(u)
	\Big|
	\le C\lambda\Lambda^2(1+\log\Lambda).
	\]
	Therefore,
	\[
	-
	\log
	\frac{\Tr_{\gF(PH)}(e^{-\lambda \dG(Ph)-W_P^{\rm re}})}
	{\Tr_{\gF(PH)}(e^{-\lambda \dG(Ph)})}
	\le
	-
	\log\int_{PH} e^{-D_P(u)}\,\d\mu_{0,P}(u)
	+
	C\lambda\Lambda^4
	+
	C\lambda\Lambda^2(1+\log\Lambda).
	\]
	Combining this with \eqref{eq:free-energy-upper-reduced-partition} yields
 \eqref{eq:upper-bound-classical-final}.
\end{proof}

\begin{lemma}\label{lem:class}It holds that
	\begin{equation}
		\label{eq:HS-DP-to-D-L1}
		D_P(P\cdot)\longrightarrow D
		\qquad\text{in }L^1(\mu_0).
	\end{equation}
	and the following hold for every fixed $k\ge1$.
	One has
	\begin{equation}
		\label{eq:HS-DP-D-positive}
		D_P(u)\ge0\quad\text{for all }u\in PH,
		\qquad
		D(u)\ge0\quad\text{for }\mu_0\text{-a.e. }u.
	\end{equation}
	Consequently,
	\begin{equation}
		\label{eq:HS-gP-to-g-L1}
		g_P:=z_P^{-1}e^{-D_P(P\cdot)}
		\longrightarrow
		g:=z_{\rm cl}^{-1}e^{-D}
		\qquad\text{in }L^1(\mu_0),
	\end{equation}
	and
	\begin{equation}
		\label{eq:HS-nu-dominated-by-mu0}
		\d\nu(u)=g(u)\,\d\mu_0(u)\le z_{\rm cl}^{-1}\,\d\mu_0(u).
	\end{equation}
	\end{lemma}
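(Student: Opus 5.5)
The plan is to establish \eqref{eq:HS-DP-to-D-L1} through Wick calculus on the Gaussian free field, then obtain positivity for the finite-dimensional functionals directly, and finally deduce the remaining statements by a standard exponential-integrability argument. I would split $D_P = D_P^{\neq0} + D_P^{0}$, taking $D_P^{\neq0}(u)=\frac12\sum_{k\neq0}\hvb(k)|\langle u,e_{k,P}u\rangle|^2$ and $D_P^0(u)=\frac{1}{2(2\pi)^2}(\|Pu\|^2-(2\pi)^2\rho_{0,P})^2$.

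For $k\neq0$ we have $\mathbb E_{\mu_0}\langle Pu,e_kPu\rangle = \Tr(Pe_kPh^{-1})=0$. Hence $\langle Pu,e_{k,P}Pu\rangle$ is already the Wick-ordered Fourier coefficient $(2\pi)^{-1}\widehat{:|u_P|^2:}(k)$, where $u_P=Pu$, and it lies in the second Wiener chaos. For the zero mode, $\|Pu\|^2-(2\pi)^2\rho_{0,P} = \|Pu\|^2-\Tr(Ph^{-1})$ is exactly the Wick-ordered $L^2$ mass. With this identification, $D_P(P\cdot)$ is the Hartree functional $\frac12\sum_k \hvb(k)|\widehat{:|u_P|^2:}(k)|^2/(2\pi)^2$, whose limit defines $D$ (consistent with the construction quoted from \cite{OOT24}).

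I would prove $L^2(\mu_0)$ convergence, which is stronger than the required $L^1$, by computing $\mathbb E|\widehat{:|u_P|^2:}(k)-\widehat{:|u_{P'}|^2:}(k)|^2$ with the Wick rule. Via Lemma~\ref{lem:sum}, the covariance of the Wick square at mode $k$ is bounded by $\sum_p h(p)^{-1}h(p+k)^{-1}\lesssim \log(2+|k|)\langle k\rangle^{-2}$. The difference for two cutoffs is supported on $|p|\gtrsim\Lambda$ or $|p+k|\gtrsim\Lambda$, so it tends to zero for each fixed $k$. Since $D_P$ is quartic, controlling $\|D_P-D_{P'}\|_{L^2}$ requires fourth moments of the Wick squares; hypercontractivity on the second chaos reduces these to the second moments. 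We then need $\sum_k\hvb(k)\big(\log(2+|k|)\langle k\rangle^{-2}\big)\cdot(\text{small})$. This is summable because $\beta>0$ already suffices there, and dominated convergence in $k$ gives the Cauchy property. The main technical step is organizing this fourth-moment computation uniformly in $P$; I would reduce it to the bilinear estimate $\|\,|X|^2-|Y|^2\|_{L^2}\le \|X-Y\|_{L^4}\|X+Y\|_{L^4}$ together with hypercontractivity.

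Positivity \eqref{eq:HS-DP-D-positive} is immediate: every term of $D_P$ is a nonnegative coefficient times a modulus squared, since $\hvb(k)=\langle k\rangle^{-\beta}>0$. The bound $D\ge0$ holds $\mu_0$-a.e. by passing to an a.e. convergent subsequence of $D_P(P\cdot)$.

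It follows that $0\le e^{-D_P(P\cdot)}\le1$. Along subsequences $e^{-D_P}\to e^{-D}$ in $\mu_0$-probability, and bounded convergence then gives $e^{-D_P}\to e^{-D}$ in $L^1(\mu_0)$ (the full family converges, since every subsequence does). In particular $z_P\to z_{\rm cl}$. Since $z_{\rm cl}=\int e^{-D}\,d\mu_0>0$ because $D<\infty$ a.s., we obtain \eqref{eq:HS-gP-to-g-L1}. Finally \eqref{eq:HS-nu-dominated-by-mu0} follows from $e^{-D}\le1$.
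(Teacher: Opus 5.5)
Your argument is correct and follows the paper's route. The paper only cites the convergence $D_P(P\cdot)\to D$ in $L^1(\mu_0)$ as a standard stochastic computation, which your Wick-chaos identification plus hypercontractivity supplies, even in $L^2(\mu_0)$. The remaining steps match too: $D_P\ge0$, an a.e.\ convergent subsequence for $D\ge0$, and the bound $|e^{-a}-e^{-b}|\le|a-b|$ on $[0,\infty)$ in place of your subsequence and bounded-convergence step.

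The paper's proof also records a rate $\|D_P(P\cdot)-D\|_{L^1(\mu_0)}\lesssim\Lambda^{-\kappa}$, which it needs later to absorb $(1+|\log\lambda|)^k$ losses. Your computation gives such a rate if you replace dominated convergence by two observations: for $|k|\le\Lambda/2$ the per-mode difference has variance $O(\Lambda^{-2})$, and the tail $|k|>\Lambda/2$ is summed against $\langle k\rangle^{-\beta-2}\log(2+|k|)$.
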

	\begin{proof}
		We first have \eqref{eq:HS-DP-to-D-L1} by standard stochastic calculation (see e.g. \cite{NZZ25} and \cite{Bri22, OOT24}). More precisely, we have
		\begin{align}\label{con:rate}
			\|D_P(P\cdot)-D\|_{L^1(\mu_0)}\lesssim\Lambda^{-\kappa}\to0,
		\end{align}
		for $\kappa>0$ small enough.
		By definition of $D_P(u)$ from \eqref{eq:DP-def-new}, we have $D_P\ge0$. Since $D_P(P\cdot)\to D$ in $L^1(\mu_0)$, a subsequence converges to $D$ $\mu_0$-almost surely; hence $D\ge0$ $\mu_0$-almost surely as well. Because $x\mapsto e^{-x}$ is $1$-Lipschitz on $[0,\infty)$, we obtain
		\[
		\|e^{-D_P(P\cdot)}-e^{-D}\|_{L^1(\mu_0)}
		\le
		\|D_P(P\cdot)-D\|_{L^1(\mu_0)}
		\to0,
		\]
	which also implies that
		\[
		z_P:=\int e^{-D_P(Pu)}\,\d\mu_0(u)\longrightarrow z_{\rm cl}:=\int e^{-D(u)}\,\d\mu_0(u),
		\]
		this proves \eqref{eq:HS-gP-to-g-L1}. Since $0\le e^{-D}\le1$, we get \eqref{eq:HS-nu-dominated-by-mu0}.
	\end{proof}

\begin{theorem}[Convergence of the relative free energy]
	\label{thm:free-energy-convergence-final}
	It holds that
	\begin{equation}\label{eq:free-energy-convergence-final}
		-\log \frac{Z_\lambda^{\mathrm{re}}}{Z_0}
		\longrightarrow
		-\log z_{\mathrm{cl}}
		\qquad (\lambda\downarrow0),
	\end{equation}
	where
	\[
	z_{\mathrm{cl}}:=\int e^{-D(u)}\,\d\mu_0(u)
	\]
	{\color{red}is the partition function of the limiting fractional-Bessel Gibbs measure.}
\end{theorem}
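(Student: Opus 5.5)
The plan is to sandwich $-\log(Z_\lambda^{\mathrm{re}}/Z_0)$ between the two finite-dimensional bounds already established and then pass to the limit in the classical partition function. The cutoff is fixed as $\Lambda=\lambda^{-\alpha}$ and $\Lambda_1=\lambda^{-\alpha_1}$, with $\eta\in(2-\beta,\beta-1)$. We take $\alpha\in(0,1/4)$ small enough that all hypotheses of Theorem~\ref{thm:lower-bound-classical-final} hold; in particular this requires $\mathfrak r^{\rm tail}_{\beta,\lambda}\to0$ for a suitable $\alpha_1>\frac12$ close to $\frac12$, and the conditions of Theorem~\ref{prop:shell-vanishing-local}. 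With this choice, Theorem~\ref{thm:lower-bound-classical-final} gives
\[
-\log\frac{Z_\lambda^{\mathrm{re}}}{Z_0}\ge -\log z_P - o(1),
\qquad z_P:=\int_{PH}e^{-D_P(u)}\,\d\mu_{0,P}(u).
\]
Proposition~\ref{prop:free-energy-upper-final} gives the matching upper bound
\[
-\log\frac{Z_\lambda^{\mathrm{re}}}{Z_0}\le -\log z_P + C_\beta(1+|\log\lambda|)^2(\lambda^{\alpha\beta}+\lambda^{1-4\alpha}).
\]
The right-hand error tends to zero because $\alpha<1/4$. The lower-bound error also tends to zero: $\alpha>0$ gives $\lambda^{\alpha/4}\to0$, and $\eta>2-\beta$ gives $\lambda^{\alpha(\eta-2+\beta)/2}\to0$, so the logarithmic factors are absorbed. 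Hence $|-\log(Z^{\rm re}_\lambda/Z_0)+\log z_P|\to0$ along $\Lambda=\lambda^{-\alpha}\to\infty$.

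It remains to show $z_P\to z_{\rm cl}$ as $\Lambda\to\infty$. The Gaussian measure $\mu_{0,P}$ on $PH$ is the push-forward of $\mu_0$ under $u\mapsto Pu$, since $\mu_0$ has covariance $h^{-1}$ and $P$ commutes with $h$. Therefore $z_P=\int e^{-D_P(Pu)}\,\d\mu_0(u)$. Lemma~\ref{lem:class} shows that this converges to $z_{\rm cl}=\int e^{-D}\,\d\mu_0$, using $D_P\ge0$ and the $1$-Lipschitz property of $e^{-x}$ on $[0,\infty)$. Moreover $z_{\rm cl}>0$: since $D\in L^1(\mu_0)$ and $D\ge0$, Jensen gives $z_{\rm cl}\ge e^{-\int D\,\d\mu_0}>0$. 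So $-\log z_P\to-\log z_{\rm cl}$, and combining the three limits proves \eqref{eq:free-energy-convergence-final}.

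The main point to check is compatibility of the parameter choices. The lower bound needs $\alpha$ small and $\alpha_1$ near $\frac12$ so that every exponent in $\mathfrak r^{\rm tail}_{\beta,\lambda}$ is positive. For example, $\frac34-\alpha_1(3-\beta)>0$ requires $\alpha_1<\frac{3}{4(3-\beta)}$, which is larger than $\frac12$ exactly because $\beta>\frac32$. The shell theorem additionally needs $\frac34-\alpha_1-\frac\alpha2(2-\beta)>0$. The upper bound only needs $\alpha<1/4$, which is consistent with these. Since $\alpha_1$ enters only the lower-bound analysis and not the limiting object, once such a choice is fixed the whole argument is just the sandwich above.
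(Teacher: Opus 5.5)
Your proposal is correct and follows the paper's argument. The lower bound of Theorem~\ref{thm:lower-bound-classical-final} and the upper bound of Proposition~\ref{prop:free-energy-upper-final} sandwich $-\log(Z_\lambda^{\mathrm{re}}/Z_0)$ around $-\log z_P$ with vanishing errors, Lemma~\ref{lem:class} gives $z_P\to z_{\rm cl}$, and your extra checks (the push-forward identification of $\mu_{0,P}$, $z_{\rm cl}>0$ via Jensen, and compatibility of the exponent conditions on $\alpha,\alpha_1$ using $\beta>\frac32$) make explicit what the paper's one-line proof leaves implicit.
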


\begin{proof}
Hence, one has the convergence of the finite-dimensional
	classical partition functions from Lemma \ref{lem:class}, which combined with Proposition~\ref{prop:free-energy-upper-final},
	Theorem~\ref{thm:lower-bound-classical-final} implies the result.
\end{proof}

\section{Convergence of density matrices}\label{sec:den}

This section turns the free-energy analysis into convergence of reduced density matrices. We proceed
in two stages. In Subsection~\ref{sec:con:density} we prove scalar convergence against finite
Fourier test functions by combining the weighted convergence of the low-frequency lower symbols with
the quantitative de Finetti formula. In the second part we upgrade this to Hilbert--Schmidt
convergence of the full rescaled reduced density matrices: we compare the interacting state with a
tensor product of its low-frequency localization and the free high-frequency Gibbs state, show that
the high-frequency contribution disappears after the $\lambda^k$ scaling, and then pass to the
classical limit through the finite-dimensional Gibbs measures.

\subsection{Weak convergence of the density matrix}\label{sec:con:density}

The density-matrix convergence is obtained by transferring the problem from quantum correlations to the lower symbols introduced in the free-energy analysis. The first step is to prove that the lower symbols of the localized Gibbs states converge to the finite-dimensional Gibbs measures, not only in total variation but also with the polynomial weights needed for moments of $|\langle \phi,u\rangle|^{2n}$. This uses the entropy bounds from Subsections~\ref{sec:lower}--\ref{sec:upper} together with Pinsker's inequality and the logarithmic moment estimates established earlier. We then combine this weighted convergence with the quantitative de Finetti formula and the classical finite-dimensional convergence  to pass from lower-symbol moments to the quantum reduced density matrices, which yields Theorem~\ref{thm:density-matrix-convergence-final}.

\begin{proposition}[Weighted convergence of the lower symbol]
	\label{prop:weighted-lower-symbol}
For the parameter as in Section \ref{sec:free-energy-density}	it holds that
	\begin{equation}
		\label{eq:l1-verified-weighted}
		\|\mu^\lambda_{P,\lambda}-\nu_P\|_{L^1(PH)}^2 \le
		{\color{red}C_\beta\Bigl((1+|\log\lambda|)^3(\lambda^{\frac\alpha4}+\lambda^{\frac{\alpha(\eta-2+\beta)}{2}})+\mathfrak r^{\rm tail}_{\beta,\lambda}\Bigr).}
	\end{equation}
	Consequently, for every $\phi\in L^2(\T^2)$ with finitely supported Fourier transform and every
	$n\ge1$,
	\begin{equation}
		\label{eq:weighted-l1-lower-symbol}
		\int_{PH} |\langle \phi,u\rangle|^{2n}
		\,|\d\mu^\lambda_{P,\lambda}-\d\nu_P|(u)
		\le
		{\color{red}C_{\phi,n}\,(1+|\log\lambda|)^n\Bigl((1+|\log\lambda|)^3(\lambda^{\frac\alpha4}+\lambda^{\frac{\alpha(\eta-2+\beta)}{2}})+\mathfrak r^{\rm tail}_{\beta,\lambda}\Bigr)^{1/4}.}
	\end{equation}
	In particular, the left side of \eqref{eq:weighted-l1-lower-symbol} tends to $0$.
\end{proposition}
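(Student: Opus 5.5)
We sandwich the relative entropy $\mathcal H_{\rm cl}(\mu^\lambda_{P,\lambda},\nu_P)$ between the lower and upper free-energy bounds, convert it to an $L^1$ estimate via Pinsker, and then get the weighted estimate by Cauchy--Schwarz against high moments.

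\textbf{Entropy sandwich.} Here $\nu_P:=Z_P^{-1}e^{-D_P}\,\d\mu_{0,P}$ is the finite-dimensional Gibbs measure on $PH$. For any probability measure $\mu$ on $PH$ we have the exact identity
\[
\mathcal H_{\rm cl}(\mu,\mu_{0,P})+\int D_P\,\d\mu=-\log Z_P+\mathcal H_{\rm cl}(\mu,\nu_P).
\]
Take $\mu=\mu^\lambda_{P,\lambda}$. Insert this identity into the intermediate lower bound from the proof of Theorem~\ref{thm:lower-bound-classical-final}, i.e.\ Proposition~\ref{prop:free-energy-lower-modulo-HF} with the remainders controlled by Theorems~\ref{cor:HF0-vanishes}, \ref{prop:shell-vanishing-local} and \ref{cor:tail-lower-bound}, before the classical variational principle is applied. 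This gives
\[
-\log\frac{Z^{\rm re}_\lambda}{Z_0}\ge -\log Z_P+\mathcal H_{\rm cl}(\mu^\lambda_{P,\lambda},\nu_P)-\mathrm{err}_{\rm low}.
\]
Proposition~\ref{prop:free-energy-upper-final} gives $-\log(Z^{\rm re}_\lambda/Z_0)\le -\log Z_P+\mathrm{err}_{\rm up}$. Subtracting the two,
\[
\mathcal H_{\rm cl}(\mu^\lambda_{P,\lambda},\nu_P)\le \mathrm{err}_{\rm low}+\mathrm{err}_{\rm up}.
\]
With $\alpha$ small, $\mathrm{err}_{\rm up}=C(1+|\log\lambda|)^2(\lambda^{\alpha\beta}+\lambda^{1-4\alpha})$ and $R_{\lambda,\Lambda}\lesssim\lambda^{1-4\alpha}|\log\lambda|$ are both dominated by $(1+|\log\lambda|)^3\lambda^{\alpha/4}$. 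The main task is bookkeeping: collect the rates in Theorem~\ref{thm:lower-bound-classical-final} (shell, $HF_0$, tail, de Finetti errors) into the stated form. Pinsker's inequality $\|\mu-\nu\|_{L^1}^2\le 2\mathcal H_{\rm cl}(\mu,\nu)$ then yields \eqref{eq:l1-verified-weighted}.

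\textbf{Weighted estimate.} Write $\varepsilon_\lambda$ for the right-hand side of \eqref{eq:l1-verified-weighted} and $w(u)=|\langle\phi,u\rangle|^{2n}$. By Cauchy--Schwarz,
\[
\int w\,\d|\mu-\nu|\le\Bigl(\int w^2\,\d(\mu+\nu)\Bigr)^{1/2}\Bigl(\int \d|\mu-\nu|\Bigr)^{1/2}
\le\Bigl(\int w^2\,\d(\mu+\nu)\Bigr)^{1/2}\varepsilon_\lambda^{1/4}.
\]
It remains to show $\int|\langle\phi,u\rangle|^{4n}\,\d(\mu^\lambda_{P,\lambda}+\nu_P)\le C_{\phi,n}(1+|\log\lambda|)^{2n}$.

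For $\nu_P$, use $\nu_P\le Z_P^{-1}\mu_{0,P}$ (since $D_P\ge0$ and $Z_P$ is bounded below by Lemma~\ref{lem:class}) together with Gaussian moments; these are uniform in $\Lambda$ because $\phi$ has finite Fourier support.

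For $\mu^\lambda_{P,\lambda}$, bound $|\langle\phi,u\rangle|^{4n}\le\|\phi\|^{4n}\|u\|^{4n}$, or better keep $P\phi$, which is fixed once $\Lambda$ exceeds the Fourier support of $\phi$. By Theorem~\ref{thm:quant deF}, moments of the lower symbol equal $\lambda^{2n}$ times expectations of polynomials of degree at most $2n$ in $\cN$ (or in $a^*(\phi)a(\phi)$), plus terms carrying extra factors $\lambda K$. Using the $\phi$-localized form instead of $\|u\|^{4n}$ avoids the growth in $K$, since $\dim$ of the span of $\phi$ is finite. Proposition~\ref{prop:number-moments-log} bounds these by $C(1+|\log\lambda|)^{2n}$.

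Taking the square root gives the factor $(1+|\log\lambda|)^n$ in \eqref{eq:weighted-l1-lower-symbol}, and the right-hand side tends to $0$ by the choice of parameters.
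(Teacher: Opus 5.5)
Your proposal is correct and matches the paper's proof essentially step by step. The paper also inserts the identity $\mathcal H_{\rm cl}(\mu,\nu_P)=\mathcal H_{\rm cl}(\mu,\mu_{0,P})+\int D_P\,\d\mu+\log Z_P$ into the lower bound taken before the classical variational principle is applied, subtracts the upper bound of Proposition~\ref{prop:free-energy-upper-final}, and then applies Pinsker. For the weighted estimate it likewise uses Cauchy--Schwarz against $|\d\mu^\lambda_{P,\lambda}-\d\nu_P|$, controlling the $4n$-th moments of $\mu^\lambda_{P,\lambda}$ by the de Finetti identity tested on $\phi^{\otimes 2n}$ together with Proposition~\ref{prop:number-moments-log}, and those of $\nu_P$ by $e^{-D_P}\le1$, $Z_P\ge c>0$ and Gaussian moments.
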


\begin{proof}
	Set
	\[
	F_\lambda:=-\log\frac{Z_\lambda^{\rm re}}{Z_0},\qquad  Z_P=\int e^{-D_P(u)}\d \mu_{0,P}(u).
	\]
	By \eqref{eq:lower-bound-classical-final},
	\[
	F_\lambda
	\ge
	\mathcal{H}_{\rm cl}(\mu^\lambda_{P,\lambda},\mu_{0,P})
	+
	\int_{PH} D_P(u)\,\d\mu^\lambda_{P,\lambda}(u)
	-
{\color{red}C_\beta\Bigl((1+|\log\lambda|)^3(\lambda^{\frac\alpha4}+\lambda^{\frac{\alpha(\eta-2+\beta)}{2}})+\mathfrak r^{\rm tail}_{\beta,\lambda}\Bigr).}
	\]
	Using the identity
	\[
	\mathcal{H}_{\rm cl}(\mu^\lambda_{P,\lambda},\nu_P)
	=
	\mathcal{H}_{\rm cl}(\mu^\lambda_{P,\lambda},\mu_{0,P})
	+
	\int_{PH} D_P(u)\,\d\mu^\lambda_{P,\lambda}(u)
	+
	\log Z_P,
	\]
	we infer
	\[
	\mathcal{H}_{\rm cl}(\mu^\lambda_{P,\lambda},\nu_P)
	\le
F_\lambda+\log Z_P+{\color{red}C_\beta\Bigl((1+|\log\lambda|)^3(\lambda^{\frac\alpha4}+\lambda^{\frac{\alpha(\eta-2+\beta)}{2}})+\mathfrak r^{\rm tail}_{\beta,\lambda}\Bigr).}
	\]
	On the other hand, \eqref{eq:upper-bound-classical-final} gives
	\[
	F_\lambda\le -\log Z_P+{\color{red}C_\beta}(1+|\log\lambda|)^2\Bigl(
	\lambda^{1-4\alpha}
	+
	{\color{red}\lambda^{\alpha\beta}}
	\Bigr).
	\]
	Therefore
	\[
	\mathcal{H}_{\rm cl}(\mu^\lambda_{P,\lambda},\nu_P)
	\le
{\color{red}C_\beta\Bigl((1+|\log\lambda|)^3(\lambda^{\frac\alpha4}+\lambda^{\frac{\alpha(\eta-2+\beta)}{2}})+\mathfrak r^{\rm tail}_{\beta,\lambda}\Bigr).}
	\]
	By Pinsker's inequality,
	\[
	\|\mu^\lambda_{P,\lambda}-\nu_P\|_{L^1(PH)}^2
	\le
	2\,\mathcal{H}_{\rm cl}(\mu^\lambda_{P,\lambda},\nu_P),
	\]
	which proves \eqref{eq:l1-verified-weighted}.
	
	It remains to prove \eqref{eq:weighted-l1-lower-symbol}. Fix $\phi$ with finitely supported Fourier
	transform and $n\ge1$. Since
	\[
	P=1_{\{h\le \Lambda^2\}},\qquad \Lambda=\lambda^{-\alpha}\to\infty,
	\]
	we have $P\phi=\phi$ for all sufficiently small $\lambda$.
	
	\medskip
	\noindent
	{\bf Step 1: a $4n$-moment bound for $\mu^\lambda_{P,\lambda}$.}
	Let $\Gamma_{\lambda,P}$ be the localization of $\Gamma_\lambda$ on $\gF(PH)$. Applying the
	quantitative de Finetti identity on $\gF(PH)$ at order $2n$ and testing against
	$\phi^{\otimes 2n}$, we get
	\[
	\int_{PH} |\langle \phi,u\rangle|^{4n}\,\d\mu^\lambda_{P,\lambda}(u)
	\le
	C_{\phi,n}\sum_{\ell=0}^{2n}
	\lambda^\ell
	\big\langle \phi^{\otimes \ell},(\Gamma_{\lambda,P})^{(\ell)}\phi^{\otimes \ell}\big\rangle
	\le
	C_{\phi,n}\sum_{\ell=0}^{2n}\|\phi\|_{L^2}^{2\ell}\,\lambda^\ell\Tr(\cN^\ell\Gamma_\lambda).
	\]
	Hence Proposition~\ref{prop:number-moments-log} yields
	\[
	\int_{PH} |\langle \phi,u\rangle|^{4n}\,\d\mu^\lambda_{P,\lambda}(u)
	\le
	C_{\phi,n}(1+|\log\lambda|)^{2n}.
	\]
	
	\medskip
	\noindent
	{\bf Step 2: a uniform $4n$-moment bound for $\nu_P$.}
	By definition,
	\[
	D_P(u)=\frac1{2}\sum_{k\neq0}\hvb(k)|\langle u,e_{k,P}u\rangle|^2
	+\frac{1}{2(2\pi)^2}\Big(\|u\|^2-(2\pi)^2\rho_{0,P}\Big)^2\ge0.
	\]
	Therefore $e^{-D_P(u)}\le1$, and since $Z_P\to z_{\rm cl}>0$ by Lemma \ref{lem:class}, one has
	$Z_P\ge c>0$ for all sufficiently small $\lambda$. It follows that
	\[
	\int_{PH} |\langle \phi,u\rangle|^{4n}\,\d\nu_P(u)
	\le
	c^{-1}\int_{PH} |\langle \phi,u\rangle|^{4n}\,\d\mu_{0,P}(u)
	\le C_{\phi,n},
	\]
	because $P\phi=\phi$ and the law of $\langle\phi,u\rangle$ under $\mu_{0,P}$ is the centered
	complex Gaussian with variance $\langle \phi,h^{-1}\phi\rangle$.
	
	\medskip
	\noindent
	{\bf Step 3: weighted $L^1$ interpolation.}
	Set $f(u):=|\langle \phi,u\rangle|^{2n}$. By Cauchy--Schwarz with respect to the measure
	$|\d\mu^\lambda_{P,\lambda}-\d\nu_P|$,
	\begin{align*}
		\int_{PH} f(u)\,|\d\mu^\lambda_{P,\lambda}-\d\nu_P|(u)
		&\le
		\Big(
		\int_{PH} f(u)^2\,|\d\mu^\lambda_{P,\lambda}-\d\nu_P|(u)
		\Big)^{1/2}
		\|\mu^\lambda_{P,\lambda}-\nu_P\|_{L^1}^{1/2}
		\\&\le
		\Big(
		\int_{PH} f(u)^2\,\d\mu^\lambda_{P,\lambda}(u)
		+
		\int_{PH} f(u)^2\,\d\nu_P(u)
		\Big)^{1/2}
		\|\mu^\lambda_{P,\lambda}-\nu_P\|_{L^1}^{1/2}.
	\end{align*}
	Thus the result follows.
\end{proof}

\begin{theorem}[Scalar moment convergence against finite Fourier test functions]
	\label{thm:density-matrix-convergence-final}
	Let
	\[
	\d\nu(u):=z_{\mathrm{cl}}^{-1}e^{-D(u)}\,\d\mu_0(u)
	\]
	{\color{red}be the limiting fractional-Bessel Gibbs measure.} Then, for every $n\ge1$ and every
	$\phi\in L^2(\T^2)$ with finite Fourier support,
	\begin{equation}\label{eq:density-matrix-convergence-final}
		n!\lambda^n
		\big\langle \phi^{\otimes n},\Gamma_\lambda^{(n)}\phi^{\otimes n}\big\rangle
		\longrightarrow
		\int |\langle \phi,u\rangle|^{2n}\,\d\nu(u)
		\qquad (\lambda\downarrow0).
	\end{equation}
\end{theorem}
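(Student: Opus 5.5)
The proof will run through the low-frequency localization, passing from quantum to lower symbol, then to the finite-dimensional Gibbs measure $\nu_P$, and finally to $\nu$. Fix $n\ge1$ and $\phi$ with finite Fourier support, so that $P\phi=\phi$ for $\lambda$ small. Since $\phi^{\otimes n}\in (PH)^{\otimes n}$, formula \eqref{eq:GammaP-k-body} gives
\[
\big\langle \phi^{\otimes n},\Gamma_\lambda^{(n)}\phi^{\otimes n}\big\rangle=\big\langle \phi^{\otimes n},\Gamma_{\lambda,P}^{(n)}\phi^{\otimes n}\big\rangle .
\]
So it suffices to work with the localized state $\Gamma_{\lambda,P}$. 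We then split
\[
n!\lambda^n\langle\phi^{\otimes n},\Gamma_{\lambda}^{(n)}\phi^{\otimes n}\rangle-\int|\langle\phi,u\rangle|^{2n}\d\nu
=I_1+I_2+I_3 .
\]
\begin{itemize}
\item $I_1$ compares the quantum moment with the lower-symbol moment $\int|\langle\phi,u\rangle|^{2n}\d\mu^\lambda_{P,\lambda}$.
\item $I_2$ compares $\mu^\lambda_{P,\lambda}$ with $\nu_P$.
\item $I_3$ compares $\nu_P$ with $\nu$.
\end{itemize}

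For $I_1$ we use the exact identity \eqref{Peq:Chiribella} of Theorem~\ref{thm:quant deF} at order $k=n$, tested against $\phi^{\otimes n}$. We do not use the dimension-dependent bound \eqref{eq:quantitative}, which would cost factors of $K\sim\Lambda^2$. The difference is $n!\lambda^n\sum_{\ell<n}\binom n\ell\langle\phi^{\otimes n},\Gamma_P^{(\ell)}\otimes_s\mathbf 1\,\phi^{\otimes n}\rangle$. Since $\phi^{\otimes n}$ is a product vector, each term is bounded by $C_{\phi,n}\lambda^{n}\Tr(\cN^\ell\Gamma_\lambda)$. By Proposition~\ref{prop:number-moments-log} this is at most $C\lambda^{n-\ell}(1+|\log\lambda|)^\ell\le C\lambda(1+|\log\lambda|)^{n}\to0$.

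The term $I_2$ is bounded directly by \eqref{eq:weighted-l1-lower-symbol} of Proposition~\ref{prop:weighted-lower-symbol}. With the parameter choices of Theorem~\ref{thm:lower-bound-classical-final}, the right side tends to zero, because the logarithmic factors are beaten by the polynomial rates.

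For $I_3$, note that $\nu_P$ is the image of $g_P\,\d\mu_0$ under $u\mapsto Pu$, and that $\langle\phi,Pu\rangle=\langle\phi,u\rangle$. Hence
\[
I_3=\int|\langle\phi,u\rangle|^{2n}(g_P-g)\,\d\mu_0 .
\]
Lemma~\ref{lem:class} gives $g_P\to g$ in $L^1(\mu_0)$, but the weight is unbounded. We therefore use the same Cauchy--Schwarz interpolation as in Step~3 of Proposition~\ref{prop:weighted-lower-symbol}:
\[
|I_3|\le \Big(\int|\langle\phi,u\rangle|^{4n}(g_P+g)\,\d\mu_0\Big)^{1/2}\|g_P-g\|_{L^1(\mu_0)}^{1/2}.
\]
The first factor is uniformly bounded because $g_P,g\le z_P^{-1},z_{\rm cl}^{-1}\le C$ for small $\lambda$ and Gaussian moments of $\langle\phi,u\rangle$ are finite. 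So $I_3\to0$.

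The only place requiring care is $I_1$. There the finite-dimensional de Finetti error has to be handled through the explicit formula and the product structure of $\phi^{\otimes n}$, rather than through the trace-norm bound, so that no power of $K=\Tr P$ appears. Everything else is a direct assembly of Proposition~\ref{prop:weighted-lower-symbol} and Lemma~\ref{lem:class}.
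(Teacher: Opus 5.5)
Your proposal is correct and follows essentially the same route as the paper. You reduce to $\Gamma_{\lambda,P}$ via $P\phi=\phi$, control the de Finetti remainder through the exact identity tested on $\phi^{\otimes n}$ together with the logarithmic number moments, and use Proposition~\ref{prop:weighted-lower-symbol} to compare the lower symbol with $\nu_P$. The only difference is that you write out the $\nu_P\to\nu$ moment convergence as a Cauchy--Schwarz interpolation against $\|g_P-g\|_{L^1(\mu_0)}$. The paper simply cites Lemma~\ref{lem:class} for this step, and your argument matches the one it uses later in Proposition~\ref{prop:HS-classical-cutoff-limit}.
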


\begin{proof}
	Fix $n\ge1$ and $\phi\in L^2(\T^2)$ with finite Fourier support. Since
	\[
	P=1_{\{h\le \Lambda^2\}},\qquad \Lambda=\lambda^{-\alpha}\to\infty,
	\]
	we have $P\phi=\phi$ for all sufficiently small $\lambda$.
	
	By Proposition~\ref{prop:weighted-lower-symbol},
	\[
	\int_{PH} |\langle \phi,u\rangle|^{2n}
	|\d\mu^\lambda_{P,\lambda}-\d\nu_P|(u)
	\longrightarrow0.
	\]
	Moreover,
	\[
	\int_{PH} |\langle \phi,u\rangle|^{2n}\,\d\nu_P(u)
	\longrightarrow
	\int |\langle \phi,u\rangle|^{2n}\,\d\nu(u)
	\qquad \text{by Lemma \ref{lem:class}.}
	\]
	Hence
	\begin{equation}
		\label{eq:lower-symbol-moment-limit}
		\int_{PH} |\langle \phi,u\rangle|^{2n}\,\d\mu^\lambda_{P,\lambda}(u)
		\longrightarrow
		\int |\langle \phi,u\rangle|^{2n}\,\d\nu(u).
	\end{equation}
	Applying the quantitative de Finetti identity in Theorem~\ref{thm:quant deF}
	to $\Gamma_{\lambda,P}$ and using $P\phi=\phi$, we obtain
	\[
	\int_{PH} |\langle \phi,u\rangle|^{2n}\,\d\mu^\lambda_{P,\lambda}(u)
	=
	n!\lambda^n
	\big\langle \phi^{\otimes n},(\Gamma_{\lambda,P})^{(n)}\phi^{\otimes n}\big\rangle
	+R_{\lambda,n}(\phi)
	=
	n!\lambda^n
	\big\langle \phi^{\otimes n},\Gamma_\lambda^{(n)}\phi^{\otimes n}\big\rangle
	+R_{\lambda,n}(\phi),
	\]
	where by Proposition \ref{prop:number-moments-log}
	\begin{align*}
		|R_{\lambda,n}(\phi)|
		\le
		C_{n,\phi}
		\sum_{\ell=0}^{n-1}
		\lambda^{n-\ell}
		\lambda^\ell
		\big\langle \phi^{\otimes \ell},(\Gamma_{\lambda,P})^{(\ell)}\phi^{\otimes \ell}\big\rangle
		\\\le
		C_{n,\phi}\sum_{\ell=0}^{n-1}
		\lambda^{n-\ell}(1+|\log\lambda|)^\ell
		\le
		C_{n,\phi}\,\lambda(1+|\log\lambda|)^{n-1}
		\longrightarrow0.
	\end{align*}
	Combining this with \eqref{eq:lower-symbol-moment-limit} yields
	\eqref{eq:density-matrix-convergence-final}.
\end{proof}

\subsection{Hilbert--Schmidt convergence of the reduced density matrices}
\label{sec:HS-convergence}

In this section we strengthen Theorem~\ref{thm:density-matrix-convergence-final} from scalar moment convergence along finite Fourier test functions to operator convergence in the Hilbert--Schmidt topology.

The main point is that the localized state $\Gamma_{\lambda,P}$ only records the
low-frequency marginal of the full Gibbs state $\Gamma_\lambda$ and therefore
cannot be compared directly with $\Gamma_\lambda$ as a state on the same Fock space.
To recover the full reduced density matrices, one has to control both the high-frequency
sector and the mixed $P/Q$ contributions.

For this reason we introduce two auxiliary product states with different roles.
The first one,
\[
\widetilde\Gamma_\lambda := U^*(\Gamma^{\rm re}_{P,\lambda}\otimes \Gamma_{0,Q})U,
\]
is the variational trial state from the free-energy upper bound. Since its free energy
is already known to be asymptotically optimal, it is the natural object for the
state-level entropy comparison with $\Gamma_\lambda$.

The second one,
\[
\widehat\Gamma_\lambda := U^*(\Gamma_{\lambda,P}\otimes \Gamma_{0,Q})U,
\]
has the correct low-frequency marginal and an explicit free high-frequency factor.
Its advantage is that the reduced density matrices admit an exact binomial expansion
into low-frequency interacting pieces and high-frequency free pieces, and the latter
can be shown to vanish after the $\lambda^k$ scaling.

Thus the proof proceeds in four steps: first compare $\Gamma_\lambda$ with
$\widetilde\Gamma_\lambda$ in trace norm through the free-energy comparison;
next replace $\widetilde\Gamma_\lambda$ by $\widehat\Gamma_\lambda$ using the
closeness of $\Gamma^{\rm re}_{P,\lambda}$ and $\Gamma_{\lambda,P}$;
then remove the free high-frequency sector by exploiting the explicit decay of
$\lambda^m\Gamma_{0,Q}^{(m)}$; finally identify the remaining low-frequency block
with the finite-dimensional classical correlation operator via the quantitative
de Finetti theorem and the convergence of the lower symbols.

Throughout the section we write
\[
\|A\|_{\mathfrak S^1}:=\Tr|A|,
\qquad
\|A\|_{\mathfrak S^2}:=\bigl(\Tr(A^*A)\bigr)^{1/2},
\]
and we keep the notation
\[
P=\1_{\{h\le \Lambda^2\}},
\qquad
Q=1-P,
\qquad
\Lambda=\lambda^{-\alpha},
\qquad
0<\alpha<\frac14.
\]
We also set
\[
\Gamma_{\lambda,P}:=(\Gamma_\lambda)_P,
\qquad
\Gamma_{0,Q}:=(\Gamma_0)_Q.
\]
Finally, we introduce the quantitative error
\begin{equation}
	\label{eq:HS-main-epsilon-lambda}
	\varepsilon_\lambda
	:=
	{\color{red}(1+|\log\lambda|)^3(\lambda^{\frac\alpha4}+\lambda^{\frac{\alpha(\eta-2+\beta)}{2}})+\mathfrak r^{\rm tail}_{\beta,\lambda}},
\end{equation}
which tends to $0$ for the admissible choices of $\alpha$ and $\alpha_1$.

The following is the main result of this section.

\begin{theorem}
	\label{thm:HS-convergence-main} Let
	\[
	\d\nu(u)=z_{\rm cl}^{-1}e^{-D(u)}\,\d\mu_0(u)
	\]
	{\color{red}be the limiting fractional-Bessel Gibbs measure.}
	It holds that 	for every fixed $k\ge1$
	\begin{equation}
		\label{eq:HS-convergence-main}
		k!\lambda^k\Gamma_\lambda^{(k)}
		\longrightarrow
		\int |u^{\otimes k}\rangle\langle u^{\otimes k}|\,\d\nu(u)
		\qquad\text{in }\mathfrak S^2(\gH^{\otimes_s k})
		\qquad (\lambda\downarrow0).
	\end{equation}
\end{theorem}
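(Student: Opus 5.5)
We follow the four-step route announced before the statement. Throughout, $\gamma^{(k)}_{\rm cl}:=\int |u^{\otimes k}\rangle\langle u^{\otimes k}|\,\d\nu(u)$ and $\varepsilon_\lambda$ is the error in \eqref{eq:HS-main-epsilon-lambda}.

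\textbf{Step 1: compare $\Gamma_\lambda$ with $\widetilde\Gamma_\lambda$.} By the Gibbs variational principle, for any state $\Gamma$ we have the identity
\[
\mathcal H(\Gamma,\Gamma_\lambda)=\mathcal H(\Gamma,\Gamma_0)+\Tr(W^{\rm re}\Gamma)+\log\frac{Z_\lambda^{\rm re}}{Z_0}.
\]
Apply it with $\Gamma=\widetilde\Gamma_\lambda$. The proof of Proposition~\ref{prop:free-energy-upper-final} bounds the first two terms by $-\log Z_P+o(1)$. Theorem~\ref{thm:lower-bound-classical-final} gives $\log(Z^{\rm re}_\lambda/Z_0)\le \log Z_P+C\varepsilon_\lambda$. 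Hence $\mathcal H(\widetilde\Gamma_\lambda,\Gamma_\lambda)\to0$, and Pinsker gives $\|\Gamma_\lambda-\widetilde\Gamma_\lambda\|_{\mathfrak S^1}\to0$.

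Trace-norm closeness of the states does not by itself control $\lambda^k\Gamma^{(k)}$, since the particle number is unbounded. We therefore truncate with $\1_{\{\lambda\cN\le M\}}$. Write $\Gamma=\Gamma_\lambda$ or $\widetilde\Gamma_\lambda$. Then
\[
\|\lambda^k(\Gamma\1_{\{\lambda\cN\le M\}})^{(k)}\|_{\mathfrak S^1}\le M^k/k!,
\]
while the contribution of $\{\lambda\cN>M\}$ is bounded via Cauchy--Schwarz by $\lambda^{2k}\Tr(\cN^{2k}\Gamma)$ divided by a power of $M$. For $\Gamma_\lambda$ this moment is $\lesssim L_\lambda^{2k}$ by Proposition~\ref{prop:number-moments-log}. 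For $\widetilde\Gamma_\lambda$ it follows from the same argument applied to the truncated Gibbs state together with free quasi-free moments. Choosing $M=M_\lambda\to\infty$ slowly, for instance $M_\lambda\sim L_\lambda^{2}$, the resulting logarithmic losses are beaten by the polynomial rate in $\|\Gamma_\lambda-\widetilde\Gamma_\lambda\|_{\mathfrak S^1}$, provided that rate is polynomial.

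\textbf{Step 2: replace $\widetilde\Gamma_\lambda$ by $\widehat\Gamma_\lambda$.} Partial trace is a contraction, so $\|\Gamma_{\lambda,P}-\Gamma^{\rm re}_{P,\lambda}\|_{\mathfrak S^1}\le\|\Gamma_\lambda-\widetilde\Gamma_\lambda\|_{\mathfrak S^1}\to0$. Tensoring both with $\Gamma_{0,Q}$ preserves this distance, and the same truncation argument transfers the bound to $k!\lambda^k(\cdot)^{(k)}$.

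\textbf{Step 3: remove the free high-frequency sector.} For the product state $\widehat\Gamma_\lambda$ the reduced density matrix expands exactly as
\[
\widehat\Gamma_\lambda^{(k)}=\sum_{m=0}^{k}\Gamma_{\lambda,P}^{(k-m)}\otimes_s\Gamma_{0,Q}^{(m)}.
\]
Here $\Gamma_{0,Q}^{(m)}=(Q\gamma_0)^{\otimes m}$ symmetrized, and
\[
\|\lambda Q\gamma_0\|_{\mathfrak S^2}\le \|Qh^{-1}\|_{\mathfrak S^2}\lesssim\Lambda^{-1}.
\]
For the terms with $m\ge1$ we use Hilbert--Schmidt norms. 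The low-frequency factor $\lambda^{k-m}\Gamma_{\lambda,P}^{(k-m)}$ has trace bounded by $L_\lambda^{k-m}$. This gives a total contribution $O(L_\lambda^{k}\Lambda^{-1})\to0$, so only $k!\lambda^k\Gamma_{\lambda,P}^{(k)}$ survives.

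\textbf{Step 4: identify the low-frequency block.} By Theorem~\ref{thm:quant deF},
\[
k!\lambda^k\Gamma_{\lambda,P}^{(k)}=\int_{PH}|u^{\otimes k}\rangle\langle u^{\otimes k}|\,\d\mu^\lambda_{P,\lambda}-R_k,
\]
where $R_k$ collects the terms with $\ell<k$. Their Hilbert--Schmidt norm is controlled by $\lambda^{k-\ell}\|\Gamma_P^{(\ell)}\|_{\mathfrak S^2}\,\|P^{\otimes(k-\ell)}\|_{\mathfrak S^2}$, which is at most $\lambda^{k-\ell}L_\lambda^\ell\Lambda^{k-\ell}\to0$ since $\lambda\Lambda\to0$. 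This is exactly why one should use $\mathfrak S^2$ rather than trace norm here. Next we replace $\mu^\lambda_{P,\lambda}$ by $\nu_P$. The difference is bounded in $\mathfrak S^2$ by $\int\|u\|^{2k}\,|\d\mu^\lambda_{P,\lambda}-\d\nu_P|$. We estimate this integral as in Proposition~\ref{prop:weighted-lower-symbol}, by Cauchy--Schwarz with $\|u\|^{4k}$ moments of both measures (both $\lesssim L_\lambda^{2k}$) against the rate in \eqref{eq:l1-verified-weighted}. Finally,
\[
\int|u^{\otimes k}\rangle\langle u^{\otimes k}|\,\d\nu_P\to\gamma^{(k)}_{\rm cl}\quad\text{in }\mathfrak S^2.
\]
To see this, use Lemma~\ref{lem:class}: $g_P\to g$ in $L^1(\mu_0)$ and $g_P\le z_P^{-1}$. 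Split the difference into the part supported where $g_P-g$ is large, handled by the uniform bound on $g_P$ and $\mu_0$-integrability, and the rest. The key input is that $\int\|u\|_{\mathfrak S}^{2k}\,\d\mu_0$ is finite for an appropriate norm. Since $\|u\|_{L^2}=\infty$ $\mu_0$-a.s. in 2D, we must use $\||u^{\otimes k}\rangle\langle u^{\otimes k}|\|_{\mathfrak S^2}$ computed in expectation. Concretely, we compute the $\mathfrak S^2$ norm of the difference of correlation operators as a double integral $\iint|\langle u,v\rangle|^{2k}(g_P-g)(u)(g_P-g)(v)\,\d\mu_0\,\d\mu_0$. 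Here $\langle u,v\rangle$ for independent free fields has all moments finite, because $\Tr h^{-2}<\infty$. We conclude with Hölder, using $L^1$ convergence plus uniform boundedness to upgrade to $L^p$ convergence of $g_P$.

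\textbf{Main obstacle.} The hardest step is Step 1. The Hilbert--Schmidt norm of $\lambda^k\Gamma^{(k)}$ involves unbounded particle numbers, so the trace-norm smallness of $\Gamma_\lambda-\widetilde\Gamma_\lambda$ must beat the logarithmic moment growth. We also need the uniform moment bounds for $\widetilde\Gamma_\lambda$. If the polynomial rate is insufficient after truncation, the fallback is to work directly with the $\mathfrak S^2$ norm of the difference tested against the truncated observables and use the entropy inequality with $X=\pm t\,\lambda^k\mathbb A_k$ instead of Pinsker.
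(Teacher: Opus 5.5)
Your proof is correct and follows the same architecture as the paper. The shared steps are: the entropy identity plus Pinsker for $\widetilde\Gamma_\lambda$ versus $\Gamma_\lambda$; passage to $\widehat\Gamma_\lambda$; removal of the free $Q$-sector via $\|Qh^{-1}\|_{\mathfrak S^2}\lesssim\Lambda^{-1}$; and quantitative de Finetti combined with weighted $L^1$ convergence of the lower symbols. Three sub-steps differ.

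(i) Your truncation in $\lambda\cN$ re-derives \cite[Lemma~11.7]{LewNamRou-21}, which the paper simply quotes. You also apply the state-to-RDM transfer twice, $\Gamma_\lambda\leftrightarrow\widetilde\Gamma_\lambda$ and then $\widetilde\Gamma_\lambda\leftrightarrow\widehat\Gamma_\lambda$. That forces you to prove number moments for the truncated Gibbs state $\Gamma^{\rm re}_{P,\lambda}$. This holds, since $P^{\otimes2}VP^{\otimes2}\ge v_{\beta,*}P^{\otimes2}$ gives the same $\lambda^2\cN_P^2$ coercivity as in Proposition~\ref{prop:number-moments-log}. The paper avoids it by combining first at the level of states, $\|\Gamma_\lambda-\widehat\Gamma_\lambda\|_{\mathfrak S^1}\le\|\Gamma_\lambda-\widetilde\Gamma_\lambda\|_{\mathfrak S^1}+\|\Gamma_{\lambda,P}-\Gamma^{\rm re}_{P,\lambda}\|_{\mathfrak S^1}$. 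It then transfers once, with the moments of $\widehat\Gamma_\lambda$ inherited from $\Gamma_\lambda$ and $\Gamma_0$. Your calibration also needs a fix: with $M_\lambda\sim L_\lambda^2$ and only the $2k$-th moment, the tail is $O(L_\lambda^{2k}/M_\lambda^{k})=O(1)$. Use a higher moment in Markov's inequality (all are available) or a larger power of $L_\lambda$.

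(ii) Your $\mathfrak S^2$ bound $(\lambda\Lambda)^{k-\ell}L_\lambda^\ell$ on the de Finetti remainder is sharper than the paper's trace-norm bound $(\lambda K)^{k-\ell}L_\lambda^\ell$. However, the latter also vanishes because $\alpha<\frac14$, so $\mathfrak S^2$ is not essential there.

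(iii) The genuinely different step, and an improvement, is the classical cutoff limit. The paper bounds $\|\gamma^{(k)}_{\nu_P}-P^{\otimes k}\gamma^{(k)}_\nu P^{\otimes k}\|_{\mathfrak S^2}$ by $\bigl(\int\|Pu\|^{4k}(g_P+g)\,\d\mu_0\bigr)^{1/2}\|g_P-g\|_{L^1(\mu_0)}^{1/2}\lesssim L_\lambda^k\|g_P-g\|_{L^1(\mu_0)}^{1/2}$. It therefore needs the quantitative rate \eqref{con:rate} to beat the logarithm. Your double-integral identity instead gives $\|\cdot\|_{\mathfrak S^2}^2\le(\mathbb E|\langle Pu,Pv\rangle|^{4k})^{1/2}\|g_P-g\|_{L^2(\mu_0)}^2$. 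The first factor is bounded uniformly in $\Lambda$ because $\Tr h^{-2}<\infty$, and $\|g_P-g\|_{L^2}^2\le(c_0^{-1}+z_{\rm cl}^{-1})\|g_P-g\|_{L^1}$. So qualitative $L^1$ convergence of $g_P$ suffices.

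As written, though, you pair $\langle u,v\rangle$ with $(g_P-g)(u)(g_P-g)(v)$ on the unprojected field. But $\gamma^{(k)}_{\nu_P}=\int|(Pu)^{\otimes k}\rangle\langle(Pu)^{\otimes k}|\,g_P\,\d\mu_0$, so the difference is not of that form. First split off $P^{\otimes k}\gamma^{(k)}_\nu P^{\otimes k}-\gamma^{(k)}_\nu$, which tends to $0$ since $\gamma^{(k)}_\nu\in\mathfrak S^2$. Then apply your argument with $\langle Pu,Pv\rangle$ to the remaining piece.
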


{\color{red}The next result is the fractional-Bessel analogue of \cite[Lemma~5.3]{LewNamRou-21}. It packages all the classical facts needed at the end of the proof: positivity of the renormalized interaction, definition and domination of the limiting $k$-point correlation operator, and convergence of the classical cutoffs.}

\begin{proposition}\label{prop:HS-classical-cutoff-limit}
	For every fixed $k \ge 1$, the following hold.
	
	\begin{enumerate}
		\item The $k$-point correlation operator
		\begin{equation}
			\gamma^{(k)}_\nu := \int |u^{\otimes k}\rangle \langle u^{\otimes k}|\, d\nu(u)
		\end{equation}
		extends uniquely to a positive operator in $\mathfrak S^2(H^{\otimes_s k})$ and satisfies
		\begin{equation}
			0 \le \gamma^{(k)}_\nu \le z_{\mathrm{cl}}^{-1}\gamma^{(k)}_{\mu_0}
			= z_{\mathrm{cl}}^{-1} k! (h^{-1})^{\otimes_s k}.
		\end{equation}
		Similarly,
		\begin{equation}
			0 \le \gamma^{(k)}_{\nu_P} \le c_0^{-1} k! (h^{-1})^{\otimes_s k},
			\qquad c_0 := \inf_{\lambda \ll 1} z_P > 0.
		\end{equation}
		
		\item The classical cutoffs converge in Hilbert--Schmidt norm:
		\begin{equation}
			\|\gamma^{(k)}_{\nu_P} - \gamma^{(k)}_\nu\|_{\mathfrak S^2(H^{\otimes_s k})} \longrightarrow 0
			\qquad (\lambda \downarrow 0).
		\end{equation}
	\end{enumerate}
\end{proposition}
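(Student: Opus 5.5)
Part (1) follows from positivity of the weights, and part (2) from splitting the kernel into a low-frequency block that converges by $L^1(\mu_0)$ convergence of the densities, plus a uniformly small Gaussian tail.

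\emph{Part (1).} By Lemma~\ref{lem:class}, $D\ge0$ $\mu_0$-a.e., so $\d\nu=z_{\rm cl}^{-1}e^{-D}\d\mu_0\le z_{\rm cl}^{-1}\d\mu_0$ as measures. For any finite-rank vector $\Psi\in H^{\otimes_s k}$ built from finitely many Fourier modes we then have
\[
0\le\int|\langle\Psi,u^{\otimes k}\rangle|^2\,\d\nu(u)\le z_{\rm cl}^{-1}\int|\langle\Psi,u^{\otimes k}\rangle|^2\,\d\mu_0(u).
\]
Wick's theorem for the complex Gaussian $\mu_0$ with covariance $h^{-1}$ gives
\[
\int|u^{\otimes k}\rangle\langle u^{\otimes k}|\,\d\mu_0=k!\,(h^{-1})^{\otimes_s k}.
\]
This is Hilbert--Schmidt because $h^{-1}\in\mathfrak S^2$ in two dimensions, as $\sum_p h(p)^{-2}<\infty$. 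The quadratic form of $\gamma_\nu^{(k)}$ is therefore dominated by that of a bounded positive operator, and it extends uniquely to a positive bounded operator. Domination $0\le A\le B$ with $B\in\mathfrak S^2$ does not by itself give $A\in\mathfrak S^2$ in general. It does here, because $B=k!(h^{-1})^{\otimes_s k}$ is positive: $\Tr A^2\le\Tr(A^{1/2}BA^{1/2})=\Tr(B^{1/2}AB^{1/2})\le\Tr B^2$. The same argument applies to $\nu_P$, using $D_P\ge0$ and $\d\mu_{0,P}=(P)_\#\mu_0$. This gives $\gamma_{\mu_{0,P}}^{(k)}=k!(Ph^{-1}P)^{\otimes_s k}\le k!(h^{-1})^{\otimes_s k}$ and $z_P\ge c_0>0$, the latter by the convergence $z_P\to z_{\rm cl}>0$ in Lemma~\ref{lem:class}.

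\emph{Part (2).} Write $\gamma^{(k)}_{\nu_P}=\int|(Pu)^{\otimes k}\rangle\langle(Pu)^{\otimes k}|\,g_P(u)\,\d\mu_0(u)$, with $g_P$ as in Lemma~\ref{lem:class}. Fix an auxiliary cutoff $P_M=\1_{h\le M^2}$ with $M$ fixed and $\Lambda\ge M$. Split each operator into the compression $P_M^{\otimes k}(\cdot)P_M^{\otimes k}$ and the remainder.

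For the compressed part, $P_M P=P_M$, so
\[
P_M^{\otimes k}\big(\gamma_{\nu_P}^{(k)}-\gamma_\nu^{(k)}\big)P_M^{\otimes k}=\int|(P_Mu)^{\otimes k}\rangle\langle(P_Mu)^{\otimes k}|\,(g_P-g)\,\d\mu_0 .
\]
This lives in a fixed finite-dimensional space, so its norm is bounded by $\int\|P_Mu\|^{2k}|g_P-g|\,\d\mu_0$. Since $g_P,g\le c_0^{-1}$ and $\|P_Mu\|^{2k}\in L^2(\mu_0)$, Cauchy--Schwarz gives the bound $C_M\|g_P-g\|_{L^1(\mu_0)}^{1/2}\to0$ by \eqref{eq:HS-gP-to-g-L1}.

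For the remainder we use only the domination from part (1). Let $X=\gamma_\nu^{(k)}$ or $\gamma_{\nu_P}^{(k)}$, so $0\le X\le C(h^{-1})^{\otimes_s k}=:CB$. Write $1-P_M^{\otimes k}=\sum_j R_j$ with each $R_j$ containing a factor $(1-P_M)$ in some slot. Then $\|R_jXR_{j'}\|_{\mathfrak S^2}\le\|X^{1/2}R_j\|_{\mathfrak S^4}\|X^{1/2}R_{j'}\|_{\mathfrak S^4}$, and
\[
\|X^{1/2}R_j\|_{\mathfrak S^4}^4=\|R_jXR_j\|_{\mathfrak S^2}^2\le C^2\|R_jBR_j\|_{\mathfrak S^2}^2\le C'\sum_{h(p)>M^2}h(p)^{-2}\to0
\]
as $M\to\infty$, uniformly in $\lambda$. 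Mixed blocks $R_j X P_M^{\otimes k}$ are handled in the same way. Letting first $\lambda\downarrow0$ and then $M\to\infty$ concludes the proof.

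\emph{Main obstacle.} The delicate step is the uniform tail control: it has to hold for $\nu_P$ with constants independent of $\Lambda$. This is exactly why part (1) records the domination with the uniform constant $c_0^{-1}$, and why positivity of the fractional-Bessel functionals $D_P$ and $D$ is essential.
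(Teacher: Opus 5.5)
Your proof is correct. Part (1) matches the paper's argument, but part (2) takes a genuinely different route.

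\emph{How the paper does part (2).} It splits at the moving cutoff $P$ itself. It bounds $\gamma^{(k)}_{\nu_P}-P^{\otimes_s k}\gamma^{(k)}_\nu P^{\otimes_s k}$ by duality and Cauchy--Schwarz using the full moment $\int\|Pu\|^{4k}(g_P+g)\d\mu_0\lesssim(\Tr(Ph^{-1}))^{2k}\sim(\log\Lambda)^{2k}$. To absorb this logarithmic loss, it has to invoke the quantitative rate \eqref{con:rate}, $\|D_P(P\cdot)-D\|_{L^1(\mu_0)}\lesssim\Lambda^{-\kappa}$. The remaining piece $P^{\otimes_s k}\gamma^{(k)}_\nu P^{\otimes_s k}-\gamma^{(k)}_\nu$ involves only the limit object and vanishes because $\gamma^{(k)}_\nu\in\mathfrak S^2$.

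\emph{How your route differs.} You split at a fixed cutoff $P_M$, so the low-frequency block has $\lambda$-independent Gaussian moments, and the purely qualitative convergence \eqref{eq:HS-gP-to-g-L1} suffices. The price is that you must control the tails of $\gamma^{(k)}_{\nu_P}$ uniformly in $\Lambda$. You do this correctly, using the uniform domination $0\le\gamma^{(k)}_{\nu_P}\le c_0^{-1}k!(h^{-1})^{\otimes_s k}$ and the Schatten--H\"older bound $\|RXR'\|_{\mathfrak S^2}\le\|X^{1/2}R\|_{\mathfrak S^4}\|X^{1/2}R'\|_{\mathfrak S^4}$, followed by a double limit.

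\emph{What each buys.} The paper's route gives an explicit rate whenever \eqref{con:rate} is available. Yours is rate-free and depends only on $L^1(\mu_0)$ convergence of the densities. That makes it more robust, since the paper only cites the rate as coming from a ``standard stochastic calculation.''

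\emph{One small correction.} Your remark that $0\le A\le B$ with $B\in\mathfrak S^2$ ``does not by itself give $A\in\mathfrak S^2$ in general'' is inaccurate. By min-max, $\lambda_j(A)\le\lambda_j(B)$ for every $j$, so Schatten-class membership always transfers between ordered positive operators. Your trace inequality $\Tr A^2\le\Tr B^2$ is a valid proof of exactly this fact, so nothing in the argument is affected.
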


\begin{proof}
	By Wick's formula,
	\[
	\gamma^{(k)}_{\mu_0}
	:= \int |u^{\otimes k}\rangle \langle u^{\otimes k}|\, d\mu_0(u)
	= k!(h^{-1})^{\otimes_s k} \in \mathfrak S^2(H^{\otimes_s k}),
	\]
	because $\operatorname{Tr}(h^{-2})<\infty$.
	
	Since $\d\nu = g\, \d\mu_0$ with $g = z_{\mathrm{cl}}^{-1} e^{-D}$ and $D \ge 0$, we have
	\[
	0 \le g \le z_{\mathrm{cl}}^{-1}.
	\]
	Hence, for every positive finite-rank operator $A$ on $H^{\otimes_s k}$,
	\[
	\operatorname{Tr}(A \gamma^{(k)}_\nu)
	= \int \langle u^{\otimes k}, A u^{\otimes k}\rangle\, \d\nu(u)
	\le z_{\mathrm{cl}}^{-1}
	\int \langle u^{\otimes k}, A u^{\otimes k}\rangle\, \d\mu_0(u)
	= z_{\mathrm{cl}}^{-1}\operatorname{Tr}(A\gamma^{(k)}_{\mu_0}).
	\]
	Therefore $\gamma^{(k)}_\nu$ extends uniquely to a positive operator satisfying
	\[
	0 \le \gamma^{(k)}_\nu \le z_{\mathrm{cl}}^{-1}\gamma^{(k)}_{\mu_0}.
	\]
	Since the right-hand side belongs to $\mathfrak S^2(H^{\otimes_s k})$, the same is true for
	$\gamma^{(k)}_\nu$.
	
	Similarly, using the ambient-space representation
	\[
	\gamma^{(k)}_{\nu_P}
	= \int |(Pu)^{\otimes k}\rangle \langle (Pu)^{\otimes k}|\, g_P(u)\, \d\mu_0(u),
	\qquad g_P = z_P^{-1} e^{-D_P(P\cdot)},
	\]
	and Lemma~\ref{lem:class}, we have $0 \le g_P \le c_0^{-1}$. Thus
	\[
	0 \le \gamma^{(k)}_{\nu_P} \le c_0^{-1}\gamma^{(k)}_{\mu_0}
	= c_0^{-1} k!(h^{-1})^{\otimes_s k},
	\]
	and therefore $\gamma^{(k)}_{\nu_P} \in \mathfrak S^2(H^{\otimes_s k})$ as well.
	
	It remains to prove the Hilbert--Schmidt convergence.
We split
\begin{align}
	\gamma_{\nu_P}^{(k)}-\gamma_\nu^{(k)}
	&=
	\bigl(\gamma_{\nu_P}^{(k)}-P^{\otimes_s k}\gamma_\nu^{(k)}P^{\otimes_s k}\bigr)
	+
	\bigl(P^{\otimes_s k}\gamma_\nu^{(k)}P^{\otimes_s k}-\gamma_\nu^{(k)}\bigr).
	\label{eq:HS-classical-split}
\end{align}
For the first term, by duality in $\mathfrak S^2$, if $\|A\|_{\mathfrak S^2}\le1$ then
\[
\Tr\Bigl(A\bigl(\gamma_{\nu_P}^{(k)}-P^{\otimes_s k}\gamma_\nu^{(k)}P^{\otimes_s k}\bigr)\Bigr)
=
\int \langle (Pu)^{\otimes k},A(Pu)^{\otimes k}\rangle\,(g_P-g)\,\d\mu_0.
\]
Since
\[
|\langle (Pu)^{\otimes k},A(Pu)^{\otimes k}\rangle|
\le \|Pu\|^{2k},
\]
Cauchy--Schwarz gives
\begin{align*}
	&\|\gamma_{\nu_P}^{(k)}-P^{\otimes_s k}\gamma_\nu^{(k)}P^{\otimes_s k}\|_{\mathfrak S^2}\le
	\Bigl(
	\int \|Pu\|^{4k}\,(g_P+g)\,\d\mu_0
	\Bigr)^{1/2}
	\|g_P-g\|_{L^1(\mu_0)}^{1/2}.
\end{align*}
Now $g_P\le c_0^{-1}$ and $g\le z_{\rm cl}^{-1}$, so the Gaussian moment bound gives
\[
\int \|Pu\|^{4k}(g_P+g)\,\d\mu_0
\le C_k\bigl(\Tr(Ph^{-1}P)\bigr)^{2k}
\le C_k(1+|\log\lambda|)^{2k}.
\]
Therefore
\begin{equation}
	\label{eq:HS-classical-first-piece}
	\|\gamma_{\nu_P}^{(k)}-P^{\otimes_s k}\gamma_\nu^{(k)}P^{\otimes_s k}\|_{\mathfrak S^2}
	\le
	C_k(1+|\log\lambda|)^k\|g_P-g\|_{L^1(\mu_0)}^{1/2}.
\end{equation}
For the second term in \eqref{eq:HS-classical-split}, we have $\gamma_\nu^{(k)}\in\mathfrak{S}^2$, which implies
\[
\|P^{\otimes_s k}\gamma_\nu^{(k)}P^{\otimes_s k}-\gamma_\nu^{(k)}\|_{\mathfrak S^2}\to0.
\]
The result then follows by \eqref{con:rate}.
\end{proof}


Recall from Subsection~\ref{sec:upper} the low-frequency interacting Gibbs state
\[
\Gamma_{P,\lambda}^{\rm re}
=
\frac{\exp(-\lambda \dG(Ph)-W_P^{\rm re})}
{\Tr_{\gF(PH)}[\exp(-\lambda \dG(Ph)-W_P^{\rm re})]}
\]
and the corresponding trial state
\[
\widetilde\Gamma_\lambda
=
U^*\bigl(\Gamma_{P,\lambda}^{\rm re}\otimes \Gamma_{0,Q}\bigr)U.
\]

\begin{proposition}
	\label{prop:HS-trial-vs-true-state}
	With $\varepsilon_\lambda$ defined in \eqref{eq:HS-main-epsilon-lambda}, one has
	\begin{equation}
		\label{eq:HS-trial-vs-true-relative-entropy}
		\mathcal H(\widetilde\Gamma_\lambda,\Gamma_\lambda)
		\le C\varepsilon_\lambda,
	\end{equation}
	and consequently
	\begin{equation}
		\label{eq:HS-trial-vs-true-trace}
		\|\widetilde\Gamma_\lambda-\Gamma_\lambda\|_{\mathfrak S^1}
		\le C\varepsilon_\lambda^{1/2}.
	\end{equation}
	In particular,
	\begin{equation}
		\label{eq:HS-low-localized-trace}
		\|\Gamma_{P,\lambda}^{\rm re}-\Gamma_{\lambda,P}\|_{\mathfrak S^1}
		\le C\varepsilon_\lambda^{1/2}.
	\end{equation}
\end{proposition}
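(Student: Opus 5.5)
The plan is to use the exact Gibbs identity for relative entropy with respect to the true Gibbs state. Since $\Gamma_\lambda=(Z_\lambda^{\rm re})^{-1}e^{-\mathbb H_\lambda^{\rm re}}$, every state $\Gamma$ with finite free energy satisfies
\[
\mathcal H(\Gamma,\Gamma_\lambda)
=
\Big(\mathcal H(\Gamma,\Gamma_0)+\Tr(W^{\rm re}\Gamma)\Big)
-\Big(-\log\frac{Z_\lambda^{\rm re}}{Z_0}\Big),
\]
because $\log\Gamma_\lambda=\log\Gamma_0-W^{\rm re}-\log(Z_\lambda^{\rm re}/Z_0)$. We apply this to $\Gamma=\widetilde\Gamma_\lambda$. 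The first bracket was already evaluated in the proof of Proposition~\ref{prop:free-energy-upper-final}. The relative entropy factorizes, giving $\mathcal H(\widetilde\Gamma_\lambda,\Gamma_0)=\mathcal H(\Gamma^{\rm re}_{P,\lambda},(\Gamma_0)_P)$. The localization identity \eqref{eq:exact-localization} then gives that the bracket equals
\[
-\log\frac{\Tr e^{-\lambda\dG(Ph)-W_P^{\rm re}}}{\Tr e^{-\lambda\dG(Ph)}}
+HF_{\neq0}(\widetilde\Gamma_\lambda)+HF_0(\widetilde\Gamma_\lambda).
\]
The coherent-state argument there bounds this by $-\log Z_P+C_\beta(1+|\log\lambda|)^2(\lambda^{\alpha\beta}+\lambda^{1-4\alpha})$. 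We subtract the lower bound \eqref{eq:lower-bound-classical-final}, namely $-\log(Z_\lambda^{\rm re}/Z_0)\ge-\log Z_P-C_\beta(\ldots)$. The $-\log Z_P$ terms cancel, which leaves
\[
\mathcal H(\widetilde\Gamma_\lambda,\Gamma_\lambda)\le C\varepsilon_\lambda
+C(1+|\log\lambda|)^2(\lambda^{\alpha\beta}+\lambda^{1-4\alpha}).
\]

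The upper-bound error must then be absorbed into $\varepsilon_\lambda$. Since $\alpha<1/4$ is small, we have $\lambda^{1-4\alpha}\le\lambda^{\alpha/4}$. Since $\beta>1/4$, we also have $\lambda^{\alpha\beta}\le\lambda^{\alpha/4}$. The logarithmic power $(1+|\log\lambda|)^2$ is dominated by $(1+|\log\lambda|)^3$. This gives \eqref{eq:HS-trial-vs-true-relative-entropy}. If one prefers not to rely on these exponent comparisons, the statement can be read with $\varepsilon_\lambda$ enlarged by these harmless terms.

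The trace-norm bound \eqref{eq:HS-trial-vs-true-trace} follows from the quantum Pinsker inequality $\|\Gamma-\Gamma'\|_{\mathfrak S^1}^2\le2\mathcal H(\Gamma,\Gamma')$. For \eqref{eq:HS-low-localized-trace}, partial trace over $\mathfrak F(QH)$ after conjugation by $U$ is trace-norm contractive. The localization of $\widetilde\Gamma_\lambda$ is exactly $\Gamma^{\rm re}_{P,\lambda}$, because $\Gamma_{0,Q}$ has unit trace, and the localization of $\Gamma_\lambda$ is $\Gamma_{\lambda,P}$. Hence
\[
\|\Gamma^{\rm re}_{P,\lambda}-\Gamma_{\lambda,P}\|_{\mathfrak S^1}\le\|\widetilde\Gamma_\lambda-\Gamma_\lambda\|_{\mathfrak S^1}.
\]

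The main obstacle is justifying the Gibbs identity for $\mathcal H(\widetilde\Gamma_\lambda,\Gamma_\lambda)$, since $W^{\rm re}$ is unbounded and not manifestly bounded below by a positive operator. I would first check that $\Tr(W^{\rm re}\widetilde\Gamma_\lambda)$ and $\mathcal H(\widetilde\Gamma_\lambda,\Gamma_0)$ are finite. The first follows from Lemma~\ref{lem:upper} together with finiteness of the truncated quantities. I would then argue sector by sector in $\cN$, where both Gibbs states commute with $\cN$, so that the identity holds on each finite-particle sector and can be summed. The other nontrivial point is that the exponent bookkeeping genuinely fits inside $\varepsilon_\lambda$ for the admissible parameters.
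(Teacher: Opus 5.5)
Your proposal is correct and follows the paper's proof: the Gibbs identity $\mathcal H(\widetilde\Gamma_\lambda,\Gamma_\lambda)=\bigl(\mathcal H(\widetilde\Gamma_\lambda,\Gamma_0)+\Tr(W^{\rm re}\widetilde\Gamma_\lambda)\bigr)+\log\frac{Z_\lambda^{\rm re}}{Z_0}$, the upper bound from the proof of Proposition~\ref{prop:free-energy-upper-final} combined with the lower bound of Theorem~\ref{thm:lower-bound-classical-final}, then quantum Pinsker and trace-norm contractivity of the partial trace. Your explicit check that the upper-bound error $(1+|\log\lambda|)^2(\lambda^{\alpha\beta}+\lambda^{1-4\alpha})$ fits inside $\varepsilon_\lambda$ (using $\alpha$ small) fills in a step the paper leaves implicit in its phrase ``inspecting the proof''.
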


\begin{proof}
	Set
	\[
	\mathcal F_\lambda(\Gamma):=\mathcal H(\Gamma,\Gamma_0)+\Tr(W^{\rm re}\Gamma).
	\]
	By the variational principle \eqref{eq:variational},
	\[
	\mathcal F_\lambda(\Gamma_\lambda)=-\log\frac{Z_\lambda^{\rm re}}{Z_0}.
	\]
	Moreover, for every state $\Gamma$ one has
	\begin{equation}
		\label{eq:HS-free-energy-identity}
		\mathcal F_\lambda(\Gamma)-\mathcal F_\lambda(\Gamma_\lambda)
		=
		\mathcal H(\Gamma,\Gamma_\lambda).
	\end{equation}
	Indeed, since
	\[
	\Gamma_0=Z_0^{-1}e^{-\lambda \dG(h)},
	\qquad
	\Gamma_\lambda=(Z_\lambda^{\rm re})^{-1}e^{-\lambda \dG(h)-W^{\rm re}},
	\]
	we have
	\[
	\log \Gamma_\lambda
	=
	\log \Gamma_0-W^{\rm re}-\log\frac{Z_\lambda^{\rm re}}{Z_0}.
	\]
	Therefore
	\begin{align*}
		\mathcal H(\Gamma,\Gamma_\lambda)
		&=\Tr\bigl[\Gamma(\log\Gamma-\log\Gamma_\lambda)\bigr]
		\\
		&=\Tr\bigl[\Gamma(\log\Gamma-\log\Gamma_0)\bigr]
		+\Tr(W^{\rm re}\Gamma)
		+\log\frac{Z_\lambda^{\rm re}}{Z_0}
		\\
		&=\mathcal F_\lambda(\Gamma)+\log\frac{Z_\lambda^{\rm re}}{Z_0}
		=\mathcal F_\lambda(\Gamma)-\mathcal F_\lambda(\Gamma_\lambda),
	\end{align*}
	which proves \eqref{eq:HS-free-energy-identity}.
	
	Inspecting the proof of Proposition~\ref{prop:free-energy-upper-final} with the specific trial state $\widetilde\Gamma_\lambda$ shows more precisely that
	\[
	\mathcal F_\lambda(\widetilde\Gamma_\lambda)
	\le
	-\log \int_{PH}e^{-D_P(u)}\,\d\mu_{0,P}(u)
	+C\varepsilon_\lambda,
	\]
	whereas Theorem~\ref{thm:lower-bound-classical-final} gives
	\[
	\mathcal F_\lambda(\Gamma_\lambda)
	=
	-\log\frac{Z_\lambda^{\rm re}}{Z_0}
	\ge
	-\log \int_{PH}e^{-D_P(u)}\,\d\mu_{0,P}(u)
	-C\varepsilon_\lambda.
	\]
	Combining the two bounds with \eqref{eq:HS-free-energy-identity} for
	\(
	\Gamma=\widetilde\Gamma_\lambda
	\)
	proves \eqref{eq:HS-trial-vs-true-relative-entropy}.
	
	Pinsker's inequality implies
	\[
	\|\widetilde\Gamma_\lambda-\Gamma_\lambda\|_{\mathfrak S^1}
	\le
	\sqrt{2\mathcal H(\widetilde\Gamma_\lambda,\Gamma_\lambda)}
	\le C\varepsilon_\lambda^{1/2},
	\]
	which is \eqref{eq:HS-trial-vs-true-trace}. Finally, taking the partial trace over $\gF(QH)$ and using that the partial trace is trace-norm contractive, we obtain
	\[
	\|\Gamma_{P,\lambda}^{\rm re}-\Gamma_{\lambda,P}\|_{\mathfrak S^1}
	=
	\Bigl\|
	\Tr_{\gF(QH)}
	\bigl(
	U\widetilde\Gamma_\lambda U^*-U\Gamma_\lambda U^*
	\bigr)
	\Bigr\|_{\mathfrak S^1}
	\le
	\|\widetilde\Gamma_\lambda-\Gamma_\lambda\|_{\mathfrak S^1}
	\le C\varepsilon_\lambda^{1/2},
	\]
	which proves \eqref{eq:HS-low-localized-trace}.
\end{proof}

For the reduced density matrices it is more convenient to compare with the mixed product state
\[
\widehat\Gamma_\lambda
:=
U^*\bigl(\Gamma_{\lambda,P}\otimes\Gamma_{0,Q}\bigr)U.
\]

\begin{corollary}
	\label{cor:HS-true-vs-hat-state}
	One has
	\begin{equation}
		\label{eq:HS-true-vs-hat-trace}
		\|\Gamma_\lambda-\widehat\Gamma_\lambda\|_{\mathfrak S^1}
		\le C\varepsilon_\lambda^{1/2}.
	\end{equation}
	Moreover,
	\begin{equation}
		\label{eq:HS-high-localized-trace}
		\|(\Gamma_\lambda)_Q-\Gamma_{0,Q}\|_{\mathfrak S^1}
		\le C\varepsilon_\lambda^{1/2}
	\end{equation}
	and
	\begin{equation}
		\label{eq:HS-nearly-product-state}
		\Bigl\|
		\Gamma_\lambda-U^*\bigl(\Gamma_{\lambda,P}\otimes(\Gamma_\lambda)_Q\bigr)U
		\Bigr\|_{\mathfrak S^1}
		\le C\varepsilon_\lambda^{1/2}.
	\end{equation}
\end{corollary}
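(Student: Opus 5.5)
The whole argument rests on Proposition~\ref{prop:HS-trial-vs-true-state} and on the elementary fact that the map
\[
\Gamma_P\otimes\Gamma_Q\mapsto U^*(\Gamma_P\otimes\Gamma_Q)U
\]
is trace-norm contractive in each factor separately. Throughout, $\|\cdot\|_{\mathfrak S^1}$ is the trace norm and $\varepsilon_\lambda$ is as in \eqref{eq:HS-main-epsilon-lambda}.

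\emph{Step 1: the estimate \eqref{eq:HS-true-vs-hat-trace}.} Use the triangle inequality
\[
\|\Gamma_\lambda-\widehat\Gamma_\lambda\|_{\mathfrak S^1}
\le \|\Gamma_\lambda-\widetilde\Gamma_\lambda\|_{\mathfrak S^1}
+\|\widetilde\Gamma_\lambda-\widehat\Gamma_\lambda\|_{\mathfrak S^1}.
\]
The first term is at most $C\varepsilon_\lambda^{1/2}$ by \eqref{eq:HS-trial-vs-true-trace}. For the second term, $\widetilde\Gamma_\lambda$ and $\widehat\Gamma_\lambda$ differ only in the low-frequency factor. Since $U$ is unitary and $\|X\otimes Y\|_{\mathfrak S^1}=\|X\|_{\mathfrak S^1}\|Y\|_{\mathfrak S^1}$,
\[
\|\widetilde\Gamma_\lambda-\widehat\Gamma_\lambda\|_{\mathfrak S^1}
=\|(\Gamma^{\rm re}_{P,\lambda}-\Gamma_{\lambda,P})\otimes\Gamma_{0,Q}\|_{\mathfrak S^1}
=\|\Gamma^{\rm re}_{P,\lambda}-\Gamma_{\lambda,P}\|_{\mathfrak S^1},
\]
using $\Tr\Gamma_{0,Q}=1$. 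By \eqref{eq:HS-low-localized-trace} this is at most $C\varepsilon_\lambda^{1/2}$.

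\emph{Step 2: the estimate \eqref{eq:HS-high-localized-trace}.} Take the partial trace over $\mathfrak F(PH)$ of $U\Gamma_\lambda U^*-U\widehat\Gamma_\lambda U^*$. The partial trace of $U\widehat\Gamma_\lambda U^*$ equals $\Gamma_{0,Q}$ because $\Tr\Gamma_{\lambda,P}=1$. The partial trace of $U\Gamma_\lambda U^*$ equals $(\Gamma_\lambda)_Q$ by definition of localization. Partial traces are trace-norm contractive, so Step~1 gives
\[
\|(\Gamma_\lambda)_Q-\Gamma_{0,Q}\|_{\mathfrak S^1}\le C\varepsilon_\lambda^{1/2}.
\]

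\emph{Step 3: the estimate \eqref{eq:HS-nearly-product-state}.} Write
\[
\Gamma_\lambda-U^*(\Gamma_{\lambda,P}\otimes(\Gamma_\lambda)_Q)U
=(\Gamma_\lambda-\widehat\Gamma_\lambda)
+U^*\bigl(\Gamma_{\lambda,P}\otimes(\Gamma_{0,Q}-(\Gamma_\lambda)_Q)\bigr)U .
\]
The first bracket is controlled by Step~1. For the second, factorization of the trace norm together with $\Tr\Gamma_{\lambda,P}=1$ gives
\[
\bigl\|U^*\bigl(\Gamma_{\lambda,P}\otimes(\Gamma_{0,Q}-(\Gamma_\lambda)_Q)\bigr)U\bigr\|_{\mathfrak S^1}
=\|\Gamma_{0,Q}-(\Gamma_\lambda)_Q\|_{\mathfrak S^1},
\]
which Step~2 bounds by $C\varepsilon_\lambda^{1/2}$.

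\emph{Main obstacle.} There is no real obstacle at this stage; all of the difficulty lies in Proposition~\ref{prop:HS-trial-vs-true-state}. The only points needing care are the following. Localization must be well defined for the trace-class difference, which holds because the partial trace is a bounded linear map on $\mathfrak S^1$. The identity $\Tr_{\mathfrak F(PH)}\bigl(U\Gamma U^*\bigr)=\Gamma_Q$ is immediate from the definition of the localized state.
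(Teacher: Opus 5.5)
Your proposal is correct and is essentially the paper's own proof. The paper also bounds $\|\widehat\Gamma_\lambda-\widetilde\Gamma_\lambda\|_{\mathfrak S^1}$ by $\|\Gamma_{\lambda,P}-\Gamma^{\rm re}_{P,\lambda}\|_{\mathfrak S^1}$ through the tensor factorization and combines this with \eqref{eq:HS-trial-vs-true-trace}; it then takes the partial trace over $\gF(PH)$ to get \eqref{eq:HS-high-localized-trace}, and closes \eqref{eq:HS-nearly-product-state} with the same triangle inequality.
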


\begin{proof}
	Since
	\[
	U\widehat\Gamma_\lambda U^*-U\widetilde\Gamma_\lambda U^*
	=
	(\Gamma_{\lambda,P}-\Gamma_{P,\lambda}^{\rm re})\otimes \Gamma_{0,Q},
	\]
	Proposition~\ref{prop:HS-trial-vs-true-state} gives
	\[
	\|\widehat\Gamma_\lambda-\widetilde\Gamma_\lambda\|_{\mathfrak S^1}
	=
	\|\Gamma_{\lambda,P}-\Gamma_{P,\lambda}^{\rm re}\|_{\mathfrak S^1}
	\le C\varepsilon_\lambda^{1/2}.
	\]
	Together with \eqref{eq:HS-trial-vs-true-trace}, this proves \eqref{eq:HS-true-vs-hat-trace}.
	
	Taking the partial trace over $\gF(PH)$ yields
	\[
	\|(\Gamma_\lambda)_Q-\Gamma_{0,Q}\|_{\mathfrak S^1}
	\le
	\|\Gamma_\lambda-\widehat\Gamma_\lambda\|_{\mathfrak S^1}
	\le C\varepsilon_\lambda^{1/2},
	\]
	which is \eqref{eq:HS-high-localized-trace}. Finally,
	\begin{align*}
		&\Bigl\|
		\Gamma_\lambda-U^*\bigl(\Gamma_{\lambda,P}\otimes(\Gamma_\lambda)_Q\bigr)U
		\Bigr\|_{\mathfrak S^1}
		\\
		&\qquad\le
		\|\Gamma_\lambda-\widehat\Gamma_\lambda\|_{\mathfrak S^1}
		+
		\Bigl\|
		\Gamma_{\lambda,P}\otimes\Gamma_{0,Q}
		-
		\Gamma_{\lambda,P}\otimes(\Gamma_\lambda)_Q
		\Bigr\|_{\mathfrak S^1}
		\\
		&\qquad=
		\|\Gamma_\lambda-\widehat\Gamma_\lambda\|_{\mathfrak S^1}
		+
		\|\Gamma_{0,Q}-(\Gamma_\lambda)_Q\|_{\mathfrak S^1}
		\\
		&\qquad\le C\varepsilon_\lambda^{1/2}.
	\end{align*}
	This proves \eqref{eq:HS-nearly-product-state}.
\end{proof}


We use the following purely Fock-space estimate; see \cite[Lemma~11.7]{LewNamRou-21}.

\begin{lemma}[A trace-class estimate for reduced density matrices]
	\label{lem:HS-states-to-rdm-trace}
	Let $q,q'>1$ be conjugate exponents, and let $\Gamma,\Gamma'$ be two states on bosonic Fock space commuting with the number operator. Then for every $k\ge1$,
	\begin{equation}
		\label{eq:HS-states-to-rdm-trace}
		\|\Gamma^{(k)}-\Gamma'^{(k)}\|_{\mathfrak S^1}
		\le
		\|\Gamma-\Gamma'\|_{\mathfrak S^1}^{1/q'}
		\Bigl(\Tr[\cN^{qk}(\Gamma+\Gamma')]\Bigr)^{1/q}.
	\end{equation}
\end{lemma}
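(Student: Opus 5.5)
The plan is to reduce to a sector-wise statement: both states commute with $\cN$, so we write $\Gamma=\bigoplus_n G_n$ and $\Gamma'=\bigoplus_n G'_n$ with $G_n,G'_n$ acting on $\gH^{\otimes_s n}$. By the definition of the reduced density matrices,
\[
\Gamma^{(k)}-\Gamma'^{(k)}=\sum_{n\ge k}\binom nk\Tr_{k+1\to n}(G_n-G'_n).
\]
The partial trace is trace-norm contractive, so the triangle inequality gives
\[
\|\Gamma^{(k)}-\Gamma'^{(k)}\|_{\mathfrak S^1}\le\sum_{n\ge k}\binom nk\|G_n-G'_n\|_{\mathfrak S^1}\le\sum_{n\ge0}n^k\|G_n-G'_n\|_{\mathfrak S^1},
\]
using $\binom nk\le n^k$.

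Next we apply H\"older's inequality for the sum over $n$, splitting the weight as $\|G_n-G'_n\|^{1/q'}\cdot n^k\|G_n-G'_n\|^{1/q}$. This yields
\[
\Bigl(\sum_n\|G_n-G'_n\|_{\mathfrak S^1}\Bigr)^{1/q'}\Bigl(\sum_n n^{qk}\|G_n-G'_n\|_{\mathfrak S^1}\Bigr)^{1/q}.
\]
Block-diagonality gives $\sum_n\|G_n-G'_n\|_{\mathfrak S^1}=\|\Gamma-\Gamma'\|_{\mathfrak S^1}$, since the trace norm of a block-diagonal operator is the sum of the block norms. For the second factor we use positivity: $\|G_n-G'_n\|_{\mathfrak S^1}\le\Tr G_n+\Tr G'_n$. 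Hence
\[
\sum_n n^{qk}\|G_n-G'_n\|_{\mathfrak S^1}\le\Tr[\cN^{qk}(\Gamma+\Gamma')],
\]
which is exactly the claimed bound.

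There is no real obstacle here. The only points to check are that the trace norm is additive over the $\cN$-sectors, which follows from commutation with $\cN$, and that the bound is trivial when the moment on the right is infinite.
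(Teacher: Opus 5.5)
Your proof is correct and complete. The steps are: the block decomposition $\Gamma=\bigoplus_n G_n$, $\Gamma'=\bigoplus_n G_n'$; trace-norm contractivity of $\Tr_{k+1\to n}$; the bound $\binom nk\le n^k$; and H\"older in $n$ with weights $\|G_n-G_n'\|_{\mathfrak S^1}$. In the second factor you use $\|G_n-G_n'\|_{\mathfrak S^1}\le \Tr G_n+\Tr G_n'$. Together these give exactly the stated inequality.

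The paper does not prove this lemma itself. It quotes it from \cite[Lemma~11.7]{LewNamRou-21}, so your argument supplies the standard self-contained proof of that cited estimate.

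Your proof also shows that commutation with $\cN$ is used only for the identity $\sum_n\|G_n-G_n'\|_{\mathfrak S^1}=\|\Gamma-\Gamma'\|_{\mathfrak S^1}$. For general states the inequality $\le$ holds because pinching to the diagonal blocks is trace-norm contractive, and that inequality already suffices.
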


\begin{proposition}
	\label{prop:HS-state-to-rdm-scaled}
	For every fixed $k\ge1$,
	\begin{equation}
		\label{eq:HS-state-to-rdm-scaled-trace}
		\|\lambda^k(\Gamma_\lambda^{(k)}-\widehat\Gamma_\lambda^{(k)})\|_{\mathfrak S^1}
		\le
		C_k(1+|\log\lambda|)^k\varepsilon_\lambda^{1/4}.
	\end{equation}
	Consequently,
	\begin{equation}
		\label{eq:HS-state-to-rdm-scaled-HS}
		\|\lambda^k(\Gamma_\lambda^{(k)}-\widehat\Gamma_\lambda^{(k)})\|_{\mathfrak S^2}
		\le
		C_k(1+|\log\lambda|)^k\varepsilon_\lambda^{1/4}.
	\end{equation}
	In particular, since
	\begin{equation}
		\label{eq:HS-r-k-def}
		r_{\lambda,k}:=(1+|\log\lambda|)^k\varepsilon_\lambda^{1/4}\to0.
	\end{equation}
\end{proposition}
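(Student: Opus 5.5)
I will obtain \eqref{eq:HS-state-to-rdm-scaled-trace} by applying Lemma~\ref{lem:HS-states-to-rdm-trace} to the pair $\Gamma=\Gamma_\lambda$, $\Gamma'=\widehat\Gamma_\lambda$ with the conjugate exponents $q=q'=2$. Both states commute with $\cN$. For $\Gamma_\lambda$ this holds because $\mathbb H_\lambda^{\rm re}$ commutes with $\cN$. For $\widehat\Gamma_\lambda$ it holds because $\Gamma_{\lambda,P}$ commutes with $\cN_P$ (localization of an $\cN$-invariant state) and $\Gamma_{0,Q}$ commutes with $\cN_Q$, while $U\cN U^*=\cN_P\otimes1+1\otimes\cN_Q$. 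The lemma then gives
\[
\lambda^k\|\Gamma_\lambda^{(k)}-\widehat\Gamma_\lambda^{(k)}\|_{\mathfrak S^1}
\le
\|\Gamma_\lambda-\widehat\Gamma_\lambda\|_{\mathfrak S^1}^{1/2}
\Bigl(\lambda^{2k}\Tr[\cN^{2k}(\Gamma_\lambda+\widehat\Gamma_\lambda)]\Bigr)^{1/2}.
\]
By \eqref{eq:HS-true-vs-hat-trace} the first factor is at most $C\varepsilon_\lambda^{1/4}$.

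It remains to bound $\lambda^{2k}\Tr(\cN^{2k}\Gamma)$ by $C_k(1+|\log\lambda|)^{2k}$ for both states.

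For $\Gamma_\lambda$ this is exactly Proposition~\ref{prop:number-moments-log} with $t=0$ and $q=2k$.

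For $\widehat\Gamma_\lambda$ I use $\cN^{2k}\le 2^{2k-1}(\cN_P^{2k}+\cN_Q^{2k})$ together with the product structure, so that
\[
\Tr(\cN^{2k}\widehat\Gamma_\lambda)\le C_k\bigl(\Tr(\cN_P^{2k}\Gamma_{\lambda,P})+\Tr(\cN_Q^{2k}\Gamma_{0,Q})\bigr).
\]
The first term equals $\Tr(\cN_P^{2k}\Gamma_\lambda)\le\Tr(\cN^{2k}\Gamma_\lambda)$. The second equals $\Tr(\cN_Q^{2k}\Gamma_0)\le\Tr(\cN^{2k}\Gamma_0)$, which is controlled by the free bound $\lambda^q\Tr(\cN^q\Gamma_0)\le C_q(1+|\log\lambda|)^q$ recalled in Section~\ref{sec:setting}. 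Taking square roots of the moment factor gives $(1+|\log\lambda|)^k$, and this proves \eqref{eq:HS-state-to-rdm-scaled-trace}.

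The Hilbert--Schmidt bound \eqref{eq:HS-state-to-rdm-scaled-HS} follows immediately from $\|A\|_{\mathfrak S^2}\le\|A\|_{\mathfrak S^1}$.

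For the final claim, $r_{\lambda,k}\to0$ holds because $\varepsilon_\lambda$ decays like a positive power of $\lambda$ times powers of $|\log\lambda|$, for the admissible choices of $\alpha$, $\alpha_1$ and $\eta$ fixed in Theorem~\ref{thm:lower-bound-classical-final}. Each of the exponents in $\varepsilon_\lambda$ and $\mathfrak r^{\rm tail}_{\beta,\lambda}$ is strictly positive, so the polylogarithmic factor is absorbed.

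The only mildly delicate point is checking that $\widehat\Gamma_\lambda$ is $\cN$-invariant and that its moments split across the tensor factors. Both follow from the localization formalism, so I expect no serious obstacle.
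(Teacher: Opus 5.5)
Your proposal is correct and follows the paper's proof essentially line by line. Both arguments apply Lemma~\ref{lem:HS-states-to-rdm-trace} with $q=q'=2$, bound the trace-norm factor by Corollary~\ref{cor:HS-true-vs-hat-state}, and control the moments via $\cN^{2k}\le 2^{2k-1}(\cN_P^{2k}+\cN_Q^{2k})$ on the product state. Two of your steps are only small gains in precision, not a different route. You check explicitly that $\widehat\Gamma_\lambda$ commutes with $\cN$. You also cite the recalled free bound $\lambda^q\Tr(\cN^q\Gamma_0)\le C_q(1+|\log\lambda|)^q$ for the $\Gamma_{0,Q}$ moment, where the paper just cites Proposition~\ref{prop:number-moments-log}.
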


\begin{proof}
	Apply Lemma~\ref{lem:HS-states-to-rdm-trace} with $q=q'=2$, $\Gamma=\Gamma_\lambda$, and $\Gamma'=\widehat\Gamma_\lambda$. This gives
	\begin{equation}
		\label{eq:HS-state-to-rdm-proof-step}
		\|\lambda^k(\Gamma_\lambda^{(k)}-\widehat\Gamma_\lambda^{(k)})\|_{\mathfrak S^1}
		\le
		\|\Gamma_\lambda-\widehat\Gamma_\lambda\|_{\mathfrak S^1}^{1/2}
		\Bigl(
		\lambda^{2k}\Tr[\cN^{2k}(\Gamma_\lambda+\widehat\Gamma_\lambda)]
		\Bigr)^{1/2}.
	\end{equation}
	By Corollary~\ref{cor:HS-true-vs-hat-state},
	\[
	\|\Gamma_\lambda-\widehat\Gamma_\lambda\|_{\mathfrak S^1}^{1/2}
	\le C\varepsilon_\lambda^{1/4}.
	\]
	For the second factor, Proposition~\ref{prop:number-moments-log} with $t=0$ yields
	\[
	\lambda^{2k}\Tr(\cN^{2k}\Gamma_\lambda)
	\le C_k(1+|\log\lambda|)^{2k}.
	\]
	Moreover,
	\[
	U\widehat\Gamma_\lambda U^*=\Gamma_{\lambda,P}\otimes\Gamma_{0,Q},
	\qquad
	\cN=\cN_P+\cN_Q,
	\qquad
	\cN^{2k}\le 2^{2k-1}(\cN_P^{2k}+\cN_Q^{2k}).
	\]
	Hence
	\begin{align*}
		{\color{red}\lambda^{2k}\Tr(\cN^{2k}\widehat\Gamma_\lambda)}
		&\le
		{\color{red}C_k\bigl(\lambda^{2k}\Tr(\cN_P^{2k}\Gamma_{\lambda,P})+\lambda^{2k}\Tr(\cN_Q^{2k}\Gamma_{0,Q})\bigr)}
		\\
		&\le
		{\color{red}C_k\bigl(\lambda^{2k}\Tr(\cN^{2k}\Gamma_\lambda)+\lambda^{2k}\Tr(\cN^{2k}\Gamma_0)\bigr)}
		\\
		&\le {\color{red}C_k(1+|\log\lambda|)^{2k},}
	\end{align*}
{\color{red}by Proposition~\ref{prop:number-moments-log}.}
	Inserting these bounds into \eqref{eq:HS-state-to-rdm-proof-step} proves \eqref{eq:HS-state-to-rdm-scaled-trace}. The Hilbert--Schmidt bound \eqref{eq:HS-state-to-rdm-scaled-HS} follows from
	\(
	\|A\|_{\mathfrak S^2}\le \|A\|_{\mathfrak S^1}
	\).
\end{proof}


\begin{lemma}
	\label{lem:HS-tensor-product-rdm-expansion}
	Let $\Gamma_P$ be a state on $\gF(PH)$ and $\Gamma_Q$ a state on $\gF(QH)$. Define
	\[
	\Gamma:=U^*(\Gamma_P\otimes\Gamma_Q)U.
	\]
	Then for every $k\ge1$,
	\begin{equation}
		\label{eq:HS-tensor-product-rdm-expansion}
		\Gamma^{(k)}
		=
		\sum_{\ell=0}^k
		\binom{k}{\ell}
		\Gamma_P^{(\ell)}\otimes_s \Gamma_Q^{(k-\ell)}.
	\end{equation}
\end{lemma}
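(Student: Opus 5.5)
The plan is to test both sides against arbitrary bounded $k$-body observables and to use the Fock-space factorization of second-quantized operators. By the duality $\Tr(A_k\Gamma^{(k)})=\Tr_{\mathfrak F}(\mathbb A_k\Gamma)$, it suffices to verify the identity on rank-one, or more generally product-type, observables built from creation and annihilation operators. Equivalently, we compare the kernels $\langle a^*(g_1)\cdots a^*(g_k)a(f_k)\cdots a(f_1)\rangle_\Gamma$, since the $k$-body density matrix is characterized by
\[
\langle f_1\otimes_s\cdots\otimes_s f_k,\Gamma^{(k)} g_1\otimes_s\cdots\otimes_s g_k\rangle
=\frac{1}{k!}\Tr_{\mathfrak F}\big(a^*(g_1)\cdots a^*(g_k)a(f_k)\cdots a(f_1)\Gamma\big),
\]
up to the standard normalization convention.

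First, conjugate by $U$. Each $a(f)$ becomes $a(Pf)\otimes1+1\otimes a(Qf)$, and similarly for $a^*$. The $P$-part operators commute with the $Q$-part operators, so we can expand the product of $2k$ such sums. Because $\Gamma_P\otimes\Gamma_Q$ is a product state, the expectation of each monomial factorizes into a $\Gamma_P$-expectation times a $\Gamma_Q$-expectation. The next step uses number conservation. The identity as stated (with no cross terms) needs $\Gamma_P$ and $\Gamma_Q$ to commute with $\cN_P$ and $\cN_Q$, which holds in every application in the paper. Under this assumption, only monomials with equally many $P$-creators and $P$-annihilators survive, say $\ell$ of each, and hence $k-\ell$ of each in $Q$. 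If one does not assume number conservation, the statement should be read as the number-conserving part, and I would add this hypothesis explicitly.

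Second, recombine the surviving terms. Choosing which $\ell$ creators and which $\ell$ annihilators carry $P$ gives $\binom{k}{\ell}^2$ configurations. Each factorized expectation equals $\ell!\,(k-\ell)!$ times matrix elements of $\Gamma_P^{(\ell)}$ and $\Gamma_Q^{(k-\ell)}$, evaluated on the projected vectors. The remaining task is to match these against the matrix elements of $\binom{k}{\ell}\Gamma_P^{(\ell)}\otimes_s\Gamma_Q^{(k-\ell)}$ between symmetric tensors. The symmetrization over the choice of slots produces one factor $\binom{k}{\ell}$. The other factor $\binom{k}{\ell}$ is absorbed by $\ell!(k-\ell)!/k!$ together with the normalization of $\otimes_s$, which I take to be the symmetrized tensor product of operators defined by averaging over permutations.

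An alternative and cleaner route is to use generating functions. Both sides of the identity are determined by the values $\Tr(\Gamma^{(k)}A^{\otimes k})$ for one-body operators $A$. For these, $\sum_k z^k\Tr(\Gamma^{(k)}A^{\otimes k})=\Tr\big((1+z A)^{\otimes\cN}\Gamma\big)$, where $(1+zA)^{\otimes\cN}=\Gamma(1+zA)$ is the second-quantized multiplicative operator. If $A$ commutes with $P$, then $\Gamma(1+zA)$ factorizes under $U$ as $\Gamma(1+zPAP)\otimes\Gamma(1+zQAQ)$. The expectation then becomes a product of two generating series, and comparing coefficients of $z^k$ gives exactly the binomial convolution. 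The off-diagonal parts of $A$ are handled by number conservation, as above.

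I expect the main obstacle to be this last point, together with getting the combinatorial normalization of $\otimes_s$ right. Diagonal observables $A$ do not determine a general symmetric operator. So I would either polarize using products $A_1\otimes\cdots\otimes A_k$ and the multilinear generalization $\Gamma(1+\sum_j z_jA_j)$, or simply carry out the direct monomial count described in the second step.
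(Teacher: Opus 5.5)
Your main argument is the same as the paper's. You conjugate by $U$ via $Ua^*(f)U^*=a^*(Pf)\otimes 1+1\otimes a^*(Qf)$, expand the $2k$-fold monomials, group them by the number $\ell$ of $P$-legs, and identify the result with $\binom{k}{\ell}\Gamma_P^{(\ell)}\otimes_s\Gamma_Q^{(k-\ell)}$ by pairing with rank-one test operators; your bookkeeping of $\binom{k}{\ell}^2$, $\ell!(k-\ell)!/k!$ and the $\otimes_s$ normalization is correct.

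The gauge-invariance hypothesis you add is genuinely needed, and the paper's statement omits it. For the coherent states $\Gamma_P=|W(u)\rangle\langle W(u)|$ and $\Gamma_Q=|W(v)\rangle\langle W(v)|$ one gets $\Gamma^{(1)}=|u+v\rangle\langle u+v|\neq|u\rangle\langle u|+|v\rangle\langle v|$. It is enough that one of the two factors commutes with its number operator, and this holds in the paper's application to $\Gamma_{\lambda,P}\otimes\Gamma_{0,Q}$.
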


\begin{proof}
	The identity follows from the splitting formula
	\[
	Ua^*(f)U^*=a^*(Pf)\otimes 1+1\otimes a^*(Qf)
	\qquad (f\in\gH),
	\]
	and the analogous formula for annihilation operators. Expanding the $k$ creation and $k$ annihilation operators and grouping together the terms with exactly $\ell$ variables in the $P$-sector gives \eqref{eq:HS-tensor-product-rdm-expansion}. Pairing with arbitrary rank-one test operators on $\gH^{\otimes_s k}$ yields the operator identity.
\end{proof}

Applied to
\[
\widehat\Gamma_\lambda=U^*(\Gamma_{\lambda,P}\otimes\Gamma_{0,Q})U,
\]
Lemma~\ref{lem:HS-tensor-product-rdm-expansion} yields
\begin{equation}
	\label{eq:HS-hat-rdm-expansion}
	\widehat\Gamma_\lambda^{(k)}
	=
	\sum_{\ell=0}^k
	\binom{k}{\ell}
	\Gamma_{\lambda,P}^{(\ell)}\otimes_s \Gamma_{0,Q}^{(k-\ell)}.
\end{equation}

\begin{proposition}[Reduction to the low-frequency block, with rate]
	\label{prop:HS-hat-to-low-frequency}
	For every fixed $k\ge1$,
	\begin{equation}
		\label{eq:HS-hat-to-low-frequency}
		\|k!\lambda^k\widehat\Gamma_\lambda^{(k)}-k!\lambda^k\Gamma_{\lambda,P}^{(k)}\|_{\mathfrak S^2}
		\le C_k\lambda^\alpha(1+|\log\lambda|)^k.
	\end{equation}
	Consequently,
	\begin{equation}
		\label{eq:HS-true-to-low-frequency}
		\|k!\lambda^k\Gamma_\lambda^{(k)}-k!\lambda^k\Gamma_{\lambda,P}^{(k)}\|_{\mathfrak S^2}
		\le C_k\Bigl(r_{\lambda,k}+\lambda^\alpha(1+|\log\lambda|)^k\Bigr).
	\end{equation}
\end{proposition}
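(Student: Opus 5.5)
The plan is to start from the binomial expansion \eqref{eq:HS-hat-rdm-expansion}. The $\ell=k$ term is exactly $\Gamma_{\lambda,P}^{(k)}$, since $\Gamma_{0,Q}^{(0)}=1$. Hence
\[
k!\lambda^k\widehat\Gamma_\lambda^{(k)}-k!\lambda^k\Gamma_{\lambda,P}^{(k)}
=k!\sum_{\ell=0}^{k-1}\binom{k}{\ell}\,\lambda^\ell\Gamma_{\lambda,P}^{(\ell)}\otimes_s\lambda^{k-\ell}\Gamma_{0,Q}^{(k-\ell)} .
\]
Up to a combinatorial constant, the Hilbert--Schmidt norm of a symmetrized tensor product is bounded by the product of the Hilbert--Schmidt norms, because symmetrization is an orthogonal projection. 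With the convention $\Gamma^{(0)}=1$, it therefore suffices to bound, for $0\le\ell\le k-1$ and $m=k-\ell\ge1$,
\[
\|\lambda^\ell\Gamma_{\lambda,P}^{(\ell)}\|_{\mathfrak S^2}\,\|\lambda^{m}\Gamma_{0,Q}^{(m)}\|_{\mathfrak S^2}.
\]

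\emph{The low-frequency factor.} I would use the trivial bound $\|\cdot\|_{\mathfrak S^2}\le\|\cdot\|_{\mathfrak S^1}$ and positivity:
\[
\lambda^\ell\Tr\Gamma_{\lambda,P}^{(\ell)}\le\lambda^\ell\Tr(\cN_P^\ell\Gamma_{\lambda,P})=\lambda^\ell\Tr(\cN_P^{\ell}\Gamma_\lambda)\le C_\ell(1+|\log\lambda|)^\ell ,
\]
by Proposition~\ref{prop:number-moments-log}. This supplies the logarithmic factor $(1+|\log\lambda|)^k$ in \eqref{eq:HS-hat-to-low-frequency}.

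\emph{The free high-frequency factor.} This is where the smallness $\lambda^\alpha=\Lambda^{-1}$ must come from. The state $\Gamma_{0,Q}$ is the quasi-free Gibbs state on $\gF(QH)$ with one-body density matrix $Q\gamma_0$. By Wick's theorem,
\[
\Gamma_{0,Q}^{(m)}=(Q\gamma_0)^{\otimes m}\ \text{symmetrized, i.e. } \Gamma_{0,Q}^{(m)}=\tfrac{1}{m!}\cdot m!\,\Pi_s(Q\gamma_0)^{\otimes m}\Pi_s \text{ up to convention},
\]
so $\|\Gamma_{0,Q}^{(m)}\|_{\mathfrak S^2}\le\|Q\gamma_0\|_{\mathfrak S^2}^m$. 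Since $\lambda\gamma_0\le h^{-1}$,
\[
\|\lambda Q\gamma_0\|_{\mathfrak S^2}^2\le\sum_{h(p)>\Lambda^2}h(p)^{-2}\le C\Lambda^{-2}.
\]
Consequently $\|\lambda^m\Gamma_{0,Q}^{(m)}\|_{\mathfrak S^2}\le C_m\Lambda^{-m}\le C_m\lambda^{\alpha}$ for $m\ge1$. Summing over the finitely many $\ell$ gives \eqref{eq:HS-hat-to-low-frequency}.

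The consequence \eqref{eq:HS-true-to-low-frequency} then follows from the triangle inequality together with \eqref{eq:HS-state-to-rdm-scaled-HS} of Proposition~\ref{prop:HS-state-to-rdm-scaled}.

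\emph{Main obstacle.} The delicate point is the high-frequency factor. A trace-norm bound would be useless there, because $\lambda\Tr(Q\gamma_0)\sim\log(\lambda^{-1}\Lambda^{-2})$ is not small. The gain comes only from working in Hilbert--Schmidt norm, where the summability of $h^{-2}$ in two dimensions produces the decay $\Lambda^{-1}$. One must also check that the symmetrized tensor product of Hilbert--Schmidt operators is controlled by the product of their norms, which holds since $\|A\otimes B\|_{\mathfrak S^2}=\|A\|_{\mathfrak S^2}\|B\|_{\mathfrak S^2}$ and projections are contractive.
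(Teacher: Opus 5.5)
Your proposal is correct and follows the paper's proof essentially step by step. Both arguments use the binomial expansion \eqref{eq:HS-hat-rdm-expansion}, bound the low-frequency factor in trace norm through the number moments of Proposition~\ref{prop:number-moments-log}, bound the free factor by $\|\lambda^m\Gamma_{0,Q}^{(m)}\|_{\mathfrak S^2}\le C_m\|Qh^{-1}\|_{\mathfrak S^2}^m\le C_m\Lambda^{-m}$ using $\lambda\gamma_0\le h^{-1}$, and obtain the consequence by the triangle inequality with \eqref{eq:HS-state-to-rdm-scaled-HS}.
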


\begin{proof}
	Multiply \eqref{eq:HS-hat-rdm-expansion} by $\lambda^k$. Every term except the one with $\ell=k$ has the form
	\[
	\lambda^\ell\Gamma_{\lambda,P}^{(\ell)}\otimes_s \lambda^m\Gamma_{0,Q}^{(m)},
	\qquad m:=k-\ell\ge1.
	\]
	Using
	\[
	\|A\otimes_s B\|_{\mathfrak S^2}
	\le C_{k,\ell}\|A\|_{\mathfrak S^1}\|B\|_{\mathfrak S^2},
	\]
	we obtain
	\[
	\|\lambda^\ell\Gamma_{\lambda,P}^{(\ell)}\otimes_s \lambda^m\Gamma_{0,Q}^{(m)}\|_{\mathfrak S^2}
	\le
	C_{k,\ell}
	\|\lambda^\ell\Gamma_{\lambda,P}^{(\ell)}\|_{\mathfrak S^1}
	\|\lambda^m\Gamma_{0,Q}^{(m)}\|_{\mathfrak S^2}.
	\]
	Since $\Gamma_{\lambda,P}^{(\ell)}\ge0$,
	\[
	\|\lambda^\ell\Gamma_{\lambda,P}^{(\ell)}\|_{\mathfrak S^1}
	=
	\lambda^\ell\Tr(\Gamma_{\lambda,P}^{(\ell)})
	\le
	\lambda^\ell\Tr(\cN^\ell\Gamma_\lambda)
	\le
	C_\ell(1+|\log\lambda|)^\ell
	\]
	by Proposition~\ref{prop:number-moments-log}. On the other hand, 	since $\Gamma_0$ is quasi-free,
	\[
	\Gamma_0^{(m)}=\gamma_0^{\otimes m},
	\qquad
	\gamma_0=(e^{\lambda h}-1)^{-1}.
	\]
	Since $Q$ commutes with $h$, the localized free state on $\gF(QH)$ is again quasi-free, with
	one-body density matrix $Q\gamma_0Q$. Hence
	\[
	\Gamma_{0,Q}^{(m)}=(Q\gamma_0Q)^{\otimes m}.
	\]
	Therefore
	\[
	\|\lambda^m\Gamma_{0,Q}^{(m)}\|_{\mathfrak S^2}
	\le
	C_m\|\lambda Q\gamma_0Q\|_{\mathfrak S^2}^{\,m}.
	\]
	Moreover,
	\[
	0\le \lambda\gamma_0=\frac{\lambda}{e^{\lambda h}-1}\le h^{-1},
	\]
	hence
	\[
	\|\lambda Q\gamma_0Q\|_{\mathfrak S^2}
	\le
	\|Qh^{-1}\|_{\mathfrak S^2}.
	\]
	Finally,
	\[
	\|Qh^{-1}\|_{\mathfrak S^2}^2
	=
	\sum_{h(p)>\Lambda^2}h(p)^{-2}
	\le C\Lambda^{-2}
	\]
	in dimension two. Thus
	\[
	\|\lambda^m\Gamma_{0,Q}^{(m)}\|_{\mathfrak S^2}
	\le
	C_m\Lambda^{-m}
	\le
	C_m\Lambda^{-1},
	\]
	which proves
	\(
	\|\lambda^m\Gamma_{0,Q}^{(m)}\|_{\mathfrak S^2}\le C_m\lambda^\alpha
	\)
	for every fixed $m\ge1$. Therefore all terms with at least one high-frequency free factor are bounded by $C_k\lambda^\alpha(1+|\log\lambda|)^k$, and summing over the finitely many values of $\ell$ proves \eqref{eq:HS-hat-to-low-frequency}.
	
	Combining \eqref{eq:HS-hat-to-low-frequency} with Proposition~\ref{prop:HS-state-to-rdm-scaled} yields \eqref{eq:HS-true-to-low-frequency}.
\end{proof}


For every $k\ge1$ we define
\[
M_{\lambda,P}^{(k)}
:=
\int_{PH}|u^{\otimes k}\rangle\langle u^{\otimes k}|\,\d\mu^\lambda_{P,\lambda}(u)
\]
and
\[
\gamma_{\nu_P}^{(k)}
:=
\int_{PH}|u^{\otimes k}\rangle\langle u^{\otimes k}|\,\d\nu_P(u).
\]
Thus $M_{\lambda,P}^{(k)}$ is the $k$-th lower-symbol moment operator, whereas $\gamma_{\nu_P}^{(k)}$ is the finite-dimensional classical $k$-point correlation operator.

\begin{proposition}
	\label{prop:HS-deFinetti-remainder-vanish}
	For every fixed $k\ge1$,
	\begin{equation}
		\label{eq:HS-deFinetti-remainder-trace}
		\|k!\lambda^k\Gamma_{\lambda,P}^{(k)}-M_{\lambda,P}^{(k)}\|_{\mathfrak S^1}
		\le
		C_k\sum_{\ell=0}^{k-1}(\lambda K)^{k-\ell}(1+|\log\lambda|)^\ell.
	\end{equation}
	Hence also
	\begin{equation}
		\label{eq:HS-deFinetti-remainder-HS}
		\|k!\lambda^k\Gamma_{\lambda,P}^{(k)}-M_{\lambda,P}^{(k)}\|_{\mathfrak S^2}
		\le
		C_k\sum_{\ell=0}^{k-1}(\lambda K)^{k-\ell}(1+|\log\lambda|)^\ell.
	\end{equation}
	In particular, the right-hand side tends to $0$.
\end{proposition}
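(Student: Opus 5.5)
The estimate follows by inserting the exact de Finetti identity of Theorem~\ref{thm:quant deF} for the localized state $\Gamma_{\lambda,P}$ on $\mathfrak F(PH)$ and bounding each remainder term in trace norm. By \eqref{Peq:Chiribella},
\[
M_{\lambda,P}^{(k)}-k!\lambda^k\Gamma_{\lambda,P}^{(k)}
=
k!\lambda^k\sum_{\ell=0}^{k-1}\binom{k}{\ell}\,
\Gamma_{\lambda,P}^{(\ell)}\otimes_s \mathbf 1_{(PH)^{\otimes_s (k-\ell)}},
\]
with $\Gamma_{\lambda,P}^{(0)}=1$. The right-hand side is a finite sum of positive operators, so its trace norm equals its trace. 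We therefore only need to compute traces.

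For each $\ell$, the symmetrized tensor product of positive operators satisfies $\Tr(A\otimes_s B)\le \Tr A\,\Tr B$, since $\otimes_s$ is the compression of $A\otimes B$ by the symmetric projection. Moreover
\[
\Tr_{(PH)^{\otimes_s(k-\ell)}}\mathbf 1=\binom{K+k-\ell-1}{k-\ell}\le C_k K^{k-\ell},
\]
where $K=\Tr P$. For the other factor,
\[
\Tr\Gamma_{\lambda,P}^{(\ell)}=\Tr\Bigl[\binom{\cN_P}{\ell}\Gamma_{\lambda,P}\Bigr]\le \Tr(\cN^\ell\Gamma_\lambda),
\]
because localization preserves the expectations of $\cN_P$-observables and $\cN_P\le\cN$. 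Proposition~\ref{prop:number-moments-log} (with $t=0$) then gives $\lambda^\ell\Tr(\cN^\ell\Gamma_\lambda)\le C_\ell(1+|\log\lambda|)^\ell$. Combining the three bounds, the $\ell$-th term is controlled by
\[
C_k\lambda^{k}K^{k-\ell}\lambda^{-\ell}(1+|\log\lambda|)^\ell=C_k(\lambda K)^{k-\ell}(1+|\log\lambda|)^\ell.
\]
Summing over $\ell$ yields \eqref{eq:HS-deFinetti-remainder-trace}. (One could instead quote the cruder bound \eqref{eq:quantitative}, which gives the same form.)

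The Hilbert--Schmidt bound \eqref{eq:HS-deFinetti-remainder-HS} follows from $\|\cdot\|_{\mathfrak S^2}\le\|\cdot\|_{\mathfrak S^1}$. For the vanishing, we use $K\le C\Lambda^2=C\lambda^{-2\alpha}$ with $\alpha<\frac14$, so $\lambda K\le C\lambda^{1-2\alpha}$. Since $k-\ell\ge1$, every term is at most $C_k\lambda^{1-2\alpha}(1+|\log\lambda|)^{k}\to0$.

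The only step needing care is the trace of the symmetrized tensor product and the positivity that lets us replace the trace norm by the trace. If one prefers to avoid the positivity argument, the triangle inequality applied termwise together with $\|A\otimes_s B\|_{\mathfrak S^1}\le\|A\|_{\mathfrak S^1}\|B\|_{\mathfrak S^1}$ gives the same bound.
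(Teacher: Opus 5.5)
Your proof is correct and follows the paper's argument. The paper quotes the trace-norm estimate \eqref{eq:quantitative} for $\Gamma_{\lambda,P}$, bounds $\frac{(k-\ell+K-1)!}{(K-1)!}\le C_kK^{k-\ell}$, and uses $\Tr(\cN^\ell\Gamma_{\lambda,P})\le\Tr(\cN^\ell\Gamma_\lambda)$ together with Proposition~\ref{prop:number-moments-log}; you instead re-derive that same trace bound directly from the exact identity \eqref{Peq:Chiribella} via positivity of the remainder, and handle the Hilbert--Schmidt bound and the vanishing via $\lambda K\lesssim\lambda^{1-2\alpha}$ exactly as the paper does.
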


\begin{proof}
	Apply the quantitative de Finetti estimate \eqref{eq:quantitative} from Theorem~\ref{thm:quant deF} with $\Gamma=\Gamma_{\lambda,P}$. This yields
	\begin{align*}
		&\|k!\lambda^k\Gamma_{\lambda,P}^{(k)}-M_{\lambda,P}^{(k)}\|_{\mathfrak S^1}
		\\
		&\qquad\le
		\lambda^k
		\sum_{\ell=0}^{k-1}
		\binom{k}{\ell}^2
		\frac{(k-\ell+K-1)!}{(K-1)!}
		\Tr(\cN^\ell\Gamma_{\lambda,P}).
	\end{align*}
	Since
	\[
	\frac{(k-\ell+K-1)!}{(K-1)!}\le C_k K^{k-\ell},
	\qquad
	\Tr(\cN^\ell\Gamma_{\lambda,P})\le \Tr(\cN^\ell\Gamma_\lambda),
	\]
	and Proposition~\ref{prop:number-moments-log} gives
	\(
	\lambda^\ell\Tr(\cN^\ell\Gamma_\lambda)\le C_\ell(1+|\log\lambda|)^\ell
	\),
	we obtain \eqref{eq:HS-deFinetti-remainder-trace}. The Hilbert--Schmidt bound follows from
	\(
	\|A\|_{\mathfrak S^2}\le \|A\|_{\mathfrak S^1}
	\).
	Finally,
	\[
	\lambda K\sim \lambda\Lambda^2=\lambda^{1-2\alpha}\to0
	\qquad (0<\alpha<1/4),
	\]
	so the right-hand side tends to $0$.
\end{proof}

\begin{lemma}
	\label{lem:HS-full-norm-moment-mu}
	For every fixed integer $m\ge1$,
	\begin{equation}
		\label{eq:HS-full-norm-moment-mu}
		\int_{PH}\|u\|^{2m}\,\d\mu^\lambda_{P,\lambda}(u)
		\le
		C_m(1+|\log\lambda|)^m.
	\end{equation}
\end{lemma}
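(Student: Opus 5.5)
The plan is to express the $2m$-th moment of the lower symbol exactly through the quantitative de Finetti identity of Theorem~\ref{thm:quant deF}, and then bound each resulting term by the logarithmic number moments of Proposition~\ref{prop:number-moments-log}.

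The first step is to rewrite the moment as a trace. Since $\|u\|^{2m}=\langle u^{\otimes m},\1_{(PH)^{\otimes_s m}}u^{\otimes m}\rangle$ on $PH$, taking the trace of \eqref{Peq:Chiribella} with $k=m$ gives
\[
\int_{PH}\|u\|^{2m}\,\d\mu^\lambda_{P,\lambda}(u)
= m!\lambda^m\Tr\Gamma_{\lambda,P}^{(m)}
+ m!\lambda^m\sum_{\ell=0}^{m-1}\binom{m}{\ell}\Tr\bigl(\Gamma_{\lambda,P}^{(\ell)}\otimes_s\1_{(PH)^{\otimes_s(m-\ell)}}\bigr).
\]
This is the same computation as in \eqref{eq:221a}--\eqref{eq:221b}, carried out at general order $m$.

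The second step estimates the individual traces. For every $\ell$ we have $\Tr\Gamma_{\lambda,P}^{(\ell)}\le\Tr(\cN_P^\ell\Gamma_{\lambda,P})/\ell!$. Localization preserves expectations of observables built from $P$-modes, and $\cN_P\le\cN$ with both commuting with $\Gamma_\lambda$, so
\[
\Tr(\cN_P^\ell\Gamma_{\lambda,P})=\Tr(\cN_P^\ell\Gamma_\lambda)\le\Tr(\cN^\ell\Gamma_\lambda).
\]
The symmetric tensor product with the identity on $(PH)^{\otimes_s(m-\ell)}$ costs at most a combinatorial factor times $\dim\bigl((PH)^{\otimes_s(m-\ell)}\bigr)\le C_mK^{m-\ell}$. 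Combining these, the $\ell$-th term is bounded by
\[
C_m(\lambda K)^{m-\ell}\,\lambda^\ell\Tr(\cN^\ell\Gamma_\lambda).
\]

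The third step inserts the number moments. By Proposition~\ref{prop:number-moments-log} with $t=0$, $\lambda^\ell\Tr(\cN^\ell\Gamma_\lambda)\le C_\ell(1+|\log\lambda|)^\ell$. In addition, $\lambda K\le C\lambda\Lambda^2=C\lambda^{1-2\alpha}\le C$ because $\alpha<1/4$. Summing over the finitely many $\ell\le m$ then gives the bound $C_m(1+|\log\lambda|)^m$.

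There is no genuine obstacle here. The only care needed is to state the de Finetti identity as a trace identity rather than through the trace-norm bound \eqref{eq:quantitative}; either form works, since every term is positive.
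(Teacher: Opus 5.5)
Your proposal is correct and is essentially the paper's proof. Both trace the de Finetti expansion at order $m$. The leading term is bounded by $\lambda^m\Tr(\cN^m\Gamma_\lambda)$, and the lower-order terms by $(\lambda K)^{m-\ell}\lambda^\ell\Tr(\cN^\ell\Gamma_\lambda)$ with $\lambda K\lesssim\lambda^{1-2\alpha}$, and Proposition~\ref{prop:number-moments-log} finishes. The only cosmetic difference is that the paper controls the lower-order part through the trace-norm bound \eqref{eq:quantitative}, whereas you trace the positive identity directly.
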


\begin{proof}
	By definition of $M_{\lambda,P}^{(m)}$,
	\[
	\Tr M_{\lambda,P}^{(m)}
	=
	\int_{PH}\Tr\bigl(|u^{\otimes m}\rangle\langle u^{\otimes m}|\bigr)\,\d\mu^\lambda_{P,\lambda}(u)
	=
	\int_{PH}\|u\|^{2m}\,\d\mu^\lambda_{P,\lambda}(u).
	\]
	On the other hand, Theorem~\ref{thm:quant deF} implies
	\[
	M_{\lambda,P}^{(m)}
	=
	m!\lambda^m\Gamma_{\lambda,P}^{(m)}+R_{\lambda,P}^{(m)}
	\]
	with
	\[
	\|R_{\lambda,P}^{(m)}\|_{\mathfrak S^1}
	\le
	C_m
	\sum_{\ell=0}^{m-1}
	(\lambda K)^{m-\ell}\lambda^\ell\Tr(\cN^\ell\Gamma_\lambda)
	\le C_m.
	\]
	Therefore
	\[
	\int_{PH}\|u\|^{2m}\,\d\mu^\lambda_{P,\lambda}(u)
	\le
	m!\lambda^m\Tr(\Gamma_{\lambda,P}^{(m)})+C_m
	\le
	m!\lambda^m\Tr(\cN^m\Gamma_\lambda)+C_m.
	\]
	Applying Proposition~\ref{prop:number-moments-log} completes the proof.
\end{proof}

\begin{lemma}
	\label{lem:HS-full-norm-moment-nuP}
	For every fixed integer $m\ge1$,
	\begin{equation}
		\label{eq:HS-full-norm-moment-nuP}
		\int_{PH}\|u\|^{2m}\,\d\nu_P(u)
		\le
		C_m(1+|\log\lambda|)^m.
	\end{equation}
\end{lemma}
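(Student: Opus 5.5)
The bound follows from the explicit structure of $\nu_P$. It is dominated by the Gaussian measure $\mu_{0,P}$ up to the constant $Z_P^{-1}$, and the Gaussian moments of $\|u\|^2$ grow only like powers of $\Tr(Ph^{-1})\sim\log\Lambda$.

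\textbf{Step 1: domination by the Gaussian.} By definition $\d\nu_P=Z_P^{-1}e^{-D_P}\,\d\mu_{0,P}$ with $Z_P=\int_{PH}e^{-D_P}\,\d\mu_{0,P}$. By \eqref{eq:HS-DP-D-positive} we have $D_P\ge0$, so $e^{-D_P}\le1$. By Lemma~\ref{lem:class}, $Z_P\to z_{\rm cl}>0$; we use the constant $c_0=\inf_{\lambda\ll1}Z_P>0$ as in Proposition~\ref{prop:HS-classical-cutoff-limit}. This gives
\[
\int_{PH}\|u\|^{2m}\,\d\nu_P(u)\le c_0^{-1}\int_{PH}\|u\|^{2m}\,\d\mu_{0,P}(u).
\]

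\textbf{Step 2: Gaussian moment bound.} Diagonalize $PhP$ with eigenvalues $\lambda_j$. Under $\mu_{0,P}$ we can write $\|u\|^2=\sum_j|\alpha_j|^2$, where the $\alpha_j$ are independent centered complex Gaussians with $\mathbb E|\alpha_j|^2=\lambda_j^{-1}$. Since each $|\alpha_j|^2$ is exponential, $\mathbb E|\alpha_j|^{2r}=r!\lambda_j^{-r}$. Expanding $\|u\|^{2m}$ and using independence, or alternatively Minkowski's inequality in $L^m(\mu_{0,P})$, we get
\[
\int\|u\|^{2m}\,\d\mu_{0,P}\le \Bigl(\sum_j\|\,|\alpha_j|^2\|_{L^m}\Bigr)^m\le (m!)\,\bigl(\Tr(Ph^{-1})\bigr)^m .
\]
The last inequality uses $\||\alpha_j|^2\|_{L^m}=(m!)^{1/m}\lambda_j^{-1}$.

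\textbf{Step 3: counting.} We have $\Tr(Ph^{-1})=\sum_{h(p)\le\Lambda^2}(1+|p|^2)^{-1}\le C(1+\log\Lambda)$. Since $\log\Lambda=\alpha|\log\lambda|$, this is at most $C(1+|\log\lambda|)$, which yields \eqref{eq:HS-full-norm-moment-nuP}.

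There is no serious obstacle. The only point that needs care is the uniform lower bound on $Z_P$, and this is supplied by the convergence $z_P\to z_{\rm cl}$ in Lemma~\ref{lem:class}; it holds for all sufficiently small $\lambda$, and the constant can be adjusted for the remaining range.
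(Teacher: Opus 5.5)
Your proof is correct and follows the paper's argument. You dominate $\nu_P\le c_0^{-1}\mu_{0,P}$ using $D_P\ge0$ and $Z_P\to z_{\rm cl}>0$, then bound the Gaussian moment by $C_m(\Tr(Ph^{-1}))^m\le C_m(1+\log\Lambda)^m$. Your Minkowski computation with the exponential moments of $|\alpha_j|^2$ simply makes explicit, with constant $m!$, the step the paper attributes to Wick's formula.
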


\begin{proof}
	By definition,
	\[
	\d\nu_P(u)=z_P^{-1}e^{-D_P(u)}\,\d\mu_{0,P}(u),
	\qquad
	z_P:=\int_{PH}e^{-D_P(u)}\,\d\mu_{0,P}(u).
	\]
	Since $D_P\ge0$, we have $0\le e^{-D_P}\le1$. Moreover, Lemma \ref{lem:class} implies
	\(
	z_P\to z_{\rm cl}>0
	\),
	so there exists $c_0>0$ such that $z_P\ge c_0$ for all sufficiently small $\lambda$. Hence
	\[
	\int_{PH}\|u\|^{2m}\,\d\nu_P(u)
	\le
	c_0^{-1}\int_{PH}\|u\|^{2m}\,\d\mu_{0,P}(u).
	\]
	The right-hand side is the $2m$-th moment of a centered complex Gaussian with covariance $Ph^{-1}P$. By Wick's formula,
	\[
	\int_{PH}\|u\|^{2m}\,\d\mu_{0,P}(u)
	\le
	{\color{red}C_m\bigl(\Tr(Ph^{-1}P)\bigr)^m
	\le C_m(1+\log\Lambda)^m
	\le C_m(1+|\log\lambda|)^m.}
	\]
{\color{red}This implies \eqref{eq:HS-full-norm-moment-nuP}.}
\end{proof}

\begin{lemma}
	\label{lem:HS-L1-to-moment-operator}
	Let $\mu$ and $\eta$ be two probability measures on a finite-dimensional complex Hilbert space $E$. For every $k\ge1$, define
	\[
	M_\mu^{(k)}:=\int_E |u^{\otimes k}\rangle\langle u^{\otimes k}|\,\d\mu(u),
	\qquad
	M_\eta^{(k)}:=\int_E |u^{\otimes k}\rangle\langle u^{\otimes k}|\,\d\eta(u).
	\]
	Then
	\begin{equation}
		\label{eq:HS-L1-to-moment-operator}
		\|M_\mu^{(k)}-M_\eta^{(k)}\|_{\mathfrak S^2}
		\le
		\Bigl(
		\int_E\|u\|^{4k}\,\d\mu(u)+\int_E\|u\|^{4k}\,\d\eta(u)
		\Bigr)^{1/2}
		\|\mu-\eta\|_{L^1(E)}^{1/2}.
	\end{equation}
\end{lemma}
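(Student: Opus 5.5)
The plan is a duality argument in $\mathfrak S^2$, the same one used for the first piece in the proof of Proposition~\ref{prop:HS-classical-cutoff-limit}. The difference $M_\mu^{(k)}-M_\eta^{(k)}$ is a bounded operator on the finite-dimensional space $E^{\otimes_s k}$, so its Hilbert--Schmidt norm equals
\[
\sup\Bigl\{\bigl|\Tr\bigl(A(M_\mu^{(k)}-M_\eta^{(k)})\bigr)\bigr|:\ \|A\|_{\mathfrak S^2}\le1\Bigr\}.
\]
Fix such an $A$. By definition of the moment operators,
\[
\Tr\bigl(A(M_\mu^{(k)}-M_\eta^{(k)})\bigr)=\int_E \langle u^{\otimes k},Au^{\otimes k}\rangle\,\d(\mu-\eta)(u).
\]
This uses linearity of the integral. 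It is legitimate once $\int\|u\|^{2k}\d\mu$ and $\int\|u\|^{2k}\d\eta$ are finite. If either $4k$-moment is infinite, the right side of \eqref{eq:HS-L1-to-moment-operator} is infinite and there is nothing to prove. Otherwise the $2k$-moments are finite as well, since $\|u\|^{2k}\le 1+\|u\|^{4k}$.

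Next I bound the integrand pointwise. We have $|\langle u^{\otimes k},Au^{\otimes k}\rangle|\le\|A\|_{\mathrm{op}}\|u\|^{2k}\le\|u\|^{2k}$, because $\|A\|_{\mathrm{op}}\le\|A\|_{\mathfrak S^2}\le1$. It follows that
\[
\bigl|\Tr\bigl(A(M_\mu^{(k)}-M_\eta^{(k)})\bigr)\bigr|\le\int_E\|u\|^{2k}\,\d|\mu-\eta|(u),
\]
where $|\mu-\eta|$ is the total variation measure.

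Now I apply Cauchy--Schwarz with respect to $|\mu-\eta|$, writing $\|u\|^{2k}=\|u\|^{2k}\cdot1$:
\[
\int_E\|u\|^{2k}\,\d|\mu-\eta|\le\Bigl(\int_E\|u\|^{4k}\,\d|\mu-\eta|\Bigr)^{1/2}\,|\mu-\eta|(E)^{1/2}.
\]
Since $|\mu-\eta|\le\mu+\eta$ as measures, the first factor is at most $\bigl(\int\|u\|^{4k}\d\mu+\int\|u\|^{4k}\d\eta\bigr)^{1/2}$. The second factor is exactly $\|\mu-\eta\|_{L^1(E)}^{1/2}$. Taking the supremum over $A$ gives \eqref{eq:HS-L1-to-moment-operator}.

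There is no real obstacle; this is the same interpolation as Step~3 of Proposition~\ref{prop:weighted-lower-symbol}. The only points needing care are the moment-finiteness bookkeeping that justifies the duality pairing, and the fact that one uses $\|A\|_{\mathrm{op}}\le\|A\|_{\mathfrak S^2}$ rather than a trace-class bound on $A$.
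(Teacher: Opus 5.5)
Your proposal is correct and is essentially the paper's proof: duality in $\mathfrak S^2$, the pointwise bound $|\langle u^{\otimes k},Au^{\otimes k}\rangle|\le\|u\|^{2k}$, and Cauchy--Schwarz against $|\mu-\eta|$, with $|\mu-\eta|\le\mu+\eta$ controlling the $4k$-moment factor. Your extra remark on finiteness of moments, which justifies the pairing, is a harmless refinement that the paper leaves implicit.
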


\begin{proof}
	By duality for Hilbert--Schmidt operators,
	\[
	\|M_\mu^{(k)}-M_\eta^{(k)}\|_{\mathfrak S^2}
	=
	\sup_{\|A\|_{\mathfrak S^2}\le1}
	\Bigl|
	\Tr\bigl(A(M_\mu^{(k)}-M_\eta^{(k)})\bigr)
	\Bigr|.
	\]
	For such an $A$,
	\[
	\Tr\bigl(A(M_\mu^{(k)}-M_\eta^{(k)})\bigr)
	=
	\int_E \langle u^{\otimes k},Au^{\otimes k}\rangle\,\d(\mu-\eta)(u).
	\]
	Moreover,
	\[
	|\langle u^{\otimes k},Au^{\otimes k}\rangle|
	\le
	\|A\|_{\mathfrak S^2}\|u^{\otimes k}\|^2
	\le
	\|u\|^{2k}.
	\]
	Hence Cauchy--Schwarz with respect to the positive measure $|\d\mu-\d\eta|$ gives
	\begin{align*}
		&\Bigl|
		\Tr\bigl(A(M_\mu^{(k)}-M_\eta^{(k)})\bigr)
		\Bigr|
		\\
		&\qquad\le
		\Bigl(
		\int_E\|u\|^{4k}\,|\d\mu-\d\eta|(u)
		\Bigr)^{1/2}
		\|\mu-\eta\|_{L^1(E)}^{1/2}
		\\
		&\qquad\le
		\Bigl(
		\int_E\|u\|^{4k}\,\d\mu(u)+\int_E\|u\|^{4k}\,\d\eta(u)
		\Bigr)^{1/2}
		\|\mu-\eta\|_{L^1(E)}^{1/2}.
	\end{align*}
	Taking the supremum over $A$ proves \eqref{eq:HS-L1-to-moment-operator}.
\end{proof}

\begin{proposition}
	\label{prop:HS-low-frequency-to-fd-classical}
	For every fixed $k\ge1$,
	\begin{equation}
		\label{eq:HS-M-lambdaP-to-gamma-nuP}
		\|M_{\lambda,P}^{(k)}-\gamma_{\nu_P}^{(k)}\|_{\mathfrak S^2}
		\le C_k r_{\lambda,k}.
	\end{equation}
	Consequently,
	\begin{equation}
		\label{eq:HS-low-frequency-to-fd-classical-main}
		\|k!\lambda^k\Gamma_{\lambda,P}^{(k)}-\gamma_{\nu_P}^{(k)}\|_{\mathfrak S^2}
		\le C_k\Bigl(r_{\lambda,k}+\sum_{\ell=0}^{k-1}(\lambda K)^{k-\ell}(1+|\log\lambda|)^\ell\Bigr).
	\end{equation}
\end{proposition}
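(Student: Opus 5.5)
The first estimate \eqref{eq:HS-M-lambdaP-to-gamma-nuP} will follow from Lemma~\ref{lem:HS-L1-to-moment-operator}, applied on the finite-dimensional space $E=PH$ with $\mu=\mu^\lambda_{P,\lambda}$ and $\eta=\nu_P$. By construction $M_{\lambda,P}^{(k)}=M_\mu^{(k)}$ and $\gamma_{\nu_P}^{(k)}=M_\eta^{(k)}$. The lemma then reduces the Hilbert--Schmidt distance to a product of two factors. The first is the square root of the $4k$-moments of $\|u\|$ under both measures. The second is $\|\mu^\lambda_{P,\lambda}-\nu_P\|_{L^1(PH)}^{1/2}$.

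The moment factor is handled by Lemmas~\ref{lem:HS-full-norm-moment-mu} and \ref{lem:HS-full-norm-moment-nuP}, taken with $m=2k$. Together they give
\[
\int_{PH}\|u\|^{4k}\,\d\mu^\lambda_{P,\lambda}(u)+\int_{PH}\|u\|^{4k}\,\d\nu_P(u)\le C_k(1+|\log\lambda|)^{2k}.
\]
Its square root is therefore $C_k(1+|\log\lambda|)^k$. For the total-variation factor I use Proposition~\ref{prop:weighted-lower-symbol}, specifically \eqref{eq:l1-verified-weighted}. Its right-hand side is exactly $C_\beta\varepsilon_\lambda$, with $\varepsilon_\lambda$ as in \eqref{eq:HS-main-epsilon-lambda}. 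This gives $\|\mu^\lambda_{P,\lambda}-\nu_P\|_{L^1}^{1/2}\le C\varepsilon_\lambda^{1/4}$. Multiplying the two factors yields the bound $C_k(1+|\log\lambda|)^k\varepsilon_\lambda^{1/4}=C_kr_{\lambda,k}$, which is the first claim.

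The second estimate \eqref{eq:HS-low-frequency-to-fd-classical-main} then comes from the triangle inequality:
\[
\|k!\lambda^k\Gamma_{\lambda,P}^{(k)}-\gamma_{\nu_P}^{(k)}\|_{\mathfrak S^2}\le\|k!\lambda^k\Gamma_{\lambda,P}^{(k)}-M_{\lambda,P}^{(k)}\|_{\mathfrak S^2}+\|M_{\lambda,P}^{(k)}-\gamma_{\nu_P}^{(k)}\|_{\mathfrak S^2}.
\]
The first term is controlled by the de Finetti remainder bound \eqref{eq:HS-deFinetti-remainder-HS} of Proposition~\ref{prop:HS-deFinetti-remainder-vanish}. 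The second term is the estimate just proved.

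The only point needing care is the exponent bookkeeping in the total-variation step. Proposition~\ref{prop:weighted-lower-symbol} bounds the \emph{square} of the $L^1$ norm by $C\varepsilon_\lambda$. Lemma~\ref{lem:HS-L1-to-moment-operator} then takes a further square root of the $L^1$ norm. The combined power is therefore exactly $\varepsilon_\lambda^{1/4}$, which is what appears in $r_{\lambda,k}$. Apart from this, the argument is a direct assembly of the stated results, and I do not expect a genuine obstacle.
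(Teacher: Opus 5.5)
Your proposal is correct and follows the paper's proof. The paper likewise applies Lemma~\ref{lem:HS-L1-to-moment-operator} with the $4k$-moment bounds from Lemmas~\ref{lem:HS-full-norm-moment-mu} and~\ref{lem:HS-full-norm-moment-nuP} (taken at $m=2k$) and the $L^1$ bound of Proposition~\ref{prop:weighted-lower-symbol}, and then uses the triangle inequality with Proposition~\ref{prop:HS-deFinetti-remainder-vanish}; your exponent bookkeeping yielding $\varepsilon_\lambda^{1/4}$, and hence $r_{\lambda,k}$, is right.
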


\begin{proof}
	Apply Lemma~\ref{lem:HS-L1-to-moment-operator} with
	\(
	\mu=\mu^\lambda_{P,\lambda}
	\)
	and
	\(
	\eta=\nu_P
	\).
	Using Lemmas~\ref{lem:HS-full-norm-moment-mu} and \ref{lem:HS-full-norm-moment-nuP}, we obtain
	\[
	\|M_{\lambda,P}^{(k)}-\gamma_{\nu_P}^{(k)}\|_{\mathfrak S^2}
	\le
	C_k(1+|\log\lambda|)^k
	\|\mu^\lambda_{P,\lambda}-\nu_P\|_{L^1(PH)}^{1/2}.
	\]
	By Proposition~\ref{prop:weighted-lower-symbol},
we obtain \eqref{eq:HS-M-lambdaP-to-gamma-nuP}.
	Combining \eqref{eq:HS-M-lambdaP-to-gamma-nuP} with Proposition~\ref{prop:HS-deFinetti-remainder-vanish} yields \eqref{eq:HS-low-frequency-to-fd-classical-main}.
\end{proof}



\begin{proof}[Proof of Theorem~\ref{thm:HS-convergence-main}]
	Fix $k\ge1$.
	By Proposition~\ref{prop:HS-hat-to-low-frequency},
 Proposition~\ref{prop:HS-low-frequency-to-fd-classical} and Proposition~\ref{prop:HS-classical-cutoff-limit} the result follows.
\end{proof}

\appendix

\section{\rmk{Useful estimates and the periodic fractional-Bessel kernel}}\label{sec:appendix-yukawa}

\subsection{\rmk{Positivity of the periodic fractional-Bessel kernel and existence of the Gibbs state}}
{\rmk{In this subsection we collect the basic structural facts about the periodic fractional-Bessel kernel that are used throughout the paper. We first prove that the kernel is strictly positive away from the diagonal by means of its subordinated heat-kernel representation on the torus. This positivity is then inserted into the sector-wise realization of the renormalized Hamiltonian to obtain a coercive lower bound of order $N^2$, which implies that the Hamiltonian is bounded from below and that the corresponding grand-canonical Gibbs state is well defined.}}

\begin{lemma}[Positive lower bound away from the diagonal]
	\label{lem:periodic-yukawa-positive}
	{\rmk{Let}}
	\[
	{\rmk{\vbeta=(1-\Delta)^{-\beta/2}}}
	\qquad\text{on }\T^2=[0,2\pi]^2,
	\qquad {\rmk{\frac32<\beta\le2.}}
	\]
	Then $\vbeta(x)>0$ for every $x\in\T^2\setminus\{0\}$, and there exists a constant $\vbstar>0$ such that
	\[
	\vbeta(x)\ge \vbstar
	\qquad\text{for all }x\in\T^2\setminus\{0\}.
	\]
\end{lemma}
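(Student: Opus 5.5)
We plan to use the subordination formula
\[
(1-\Delta)^{-\beta/2}=\frac{1}{\Gamma(\beta/2)}\int_0^\infty t^{\beta/2-1}e^{-t}\,e^{t\Delta}\,\d t ,
\]
which holds on $L^2(\T^2)$ because it reduces, Fourier mode by Fourier mode, to the scalar identity $\langle k\rangle^{-\beta}=\Gamma(\beta/2)^{-1}\int_0^\infty t^{\beta/2-1}e^{-t(1+|k|^2)}\,\d t$.

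On the kernel level this gives
\[
\vbeta(x)=\frac{1}{\Gamma(\beta/2)}\int_0^\infty t^{\beta/2-1}e^{-t}\,p_t(x)\,\d t ,
\]
where $p_t$ is the periodic heat kernel on $\T^2$. We take the normalization consistent with $\hvb(k)=(2\pi)^{-1}\int v_\beta e_{-k}$ as used in \eqref{eq:Wmom}; the exact constant is irrelevant for positivity. The method of images gives
\[
p_t(x)=\sum_{m\in\Z^2}(4\pi t)^{-1}e^{-|x+2\pi m|^2/(4t)}>0
\]
for every $x$ and $t>0$. Since the integrand is positive, $\vbeta(x)\in(0,\infty]$ is well defined pointwise for all $x$ by monotone convergence. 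It is finite for $x\neq0$, because near $t=0$ the factor $e^{-|x+2\pi m|^2/4t}$ kills every image term.

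\textbf{Uniform positive lower bound.} We restrict the $t$-integral to $t\in[1,2]$ and keep only the $m=0$ image. For $x\in[-\pi,\pi]^2$, which is a fundamental domain, we have $|x|\le\sqrt2\pi$. Hence $p_t(x)\ge(8\pi)^{-1}e^{-2\pi^2/4}$ for all $t\in[1,2]$. All other terms are nonnegative and may be dropped, so
\[
\vbeta(x)\ge \Gamma(\beta/2)^{-1}\int_1^2 t^{\beta/2-1}e^{-t}\,\d t\,(8\pi)^{-1}e^{-\pi^2/2}=:\vbstar>0 .
\]
This constant depends only on $\beta$ and is uniform in $x\in\T^2\setminus\{0\}$. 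In fact the bound holds at $x=0$ too, where $\vbeta=+\infty$ because $\beta\le2$.

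\textbf{Identification with the Fourier definition.} The main technical point is to justify that the pointwise subordination integral agrees with the $L^1$ function whose Fourier coefficients are $\langle k\rangle^{-\beta}$. We plan to argue as follows. For each $t$, $p_t\ge0$ has integral $1$, so Fubini–Tonelli shows the subordinated function is in $L^1(\T^2)$ with total mass $\Gamma(\beta/2)^{-1}\int_0^\infty t^{\beta/2-1}e^{-t}\,\d t=1$. Tonelli again allows us to interchange the $t$-integral with integration against $e_{-k}$, since $|e_{-k}p_t|\le(2\pi)^{-1}p_t$ is integrable in $(x,t)$. Using $\widehat{p_t}(k)\propto e^{-t|k|^2}$ and integrating in $t$ then reproduces $\langle k\rangle^{-\beta}$ with the correct normalization. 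Uniqueness of Fourier coefficients in $L^1$ identifies the two functions almost everywhere. Continuity of the subordination integral on $\T^2\setminus\{0\}$, by dominated convergence, makes it the canonical representative, and this is the one used in $\Wbeta$.
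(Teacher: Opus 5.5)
Your proposal is correct and is essentially the paper's own proof: subordination to the periodized heat kernel, restricting the $t$-integral to $[1,2]$, keeping only the $m=0$ image, and using $|x|^2\le 2\pi^2$ on $[-\pi,\pi]^2$. Your extra paragraph identifying the subordination integral with the Fourier-defined kernel supplies a justification the paper leaves implicit. One harmless slip: the normalization compatible with \eqref{eq:Wmom} (and with $V_{\neq0}=V-(2\pi)^{-2}$) is $\hvb(k)=\int_{\T^2}\vbeta(x)e^{-\ii k\cdot x}\d x=2\pi\int_{\T^2}\vbeta\,e_{-k}\d x$, not $(2\pi)^{-1}\int\vbeta\,e_{-k}$. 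It is the former that your own total-mass computation $\int_{\T^2}\vbeta=1=\hvb(0)$ actually uses, and neither positivity nor the existence of $\vbstar$ depends on this constant.
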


\begin{proof}
	{\rmk{Write the periodic fractional-Bessel kernel by subordination as}}
	\[
	{\rmk{\vbeta(x)=\frac1{\Gamma(\beta/2)}\int_0^\infty t^{\beta/2-1}e^{-t}p_t(x)\,\d t,}}
	\]
	where $p_t$ is the heat kernel on $\T^2$:
	\[
	p_t(x)
	=
	\frac1{(2\pi)^2}\sum_{k\in\Z^2}e^{-t|k|^2}e^{\ii k\cdot x}
	=
	\sum_{n\in\Z^2}\frac{1}{4\pi t}
	\exp\!\left(-\frac{|x+2\pi n|^2}{4t}\right).
	\]
	Fix $x\in\T^2$, viewed as a representative in $[-\pi,\pi]^2$. Then $|x|^2\le 2\pi^2$. Since every term in the periodized heat kernel is nonnegative, we have
	\[
	p_t(x)
	\ge
	\frac{1}{4\pi t}\exp\!\left(-\frac{|x|^2}{4t}\right)
	\ge
	\frac{e^{-\pi^2/2}}{4\pi t}
	\qquad (1\le t\le 2).
	\]
	Hence
	\[
	\vbeta(x)
	\ge
	{\rmk{\frac1{\Gamma(\beta/2)}\int_1^2 t^{\beta/2-1}e^{-t}p_t(x)\,\d t}}
	\ge
	{\rmk{\frac{e^{-2-\pi^2/2}}{4\pi\Gamma(\beta/2)}\int_1^2 t^{\beta/2-2}\,\d t>0.}}
	\]
	This proves the claim.
\end{proof}

\begin{proposition}[Existence of the grand-canonical Gibbs state]\label{prop:def-Gibbs}
	For every fixed $\lambda>0$, the formal Hamiltonian
	\[
	\mathbb{H}_\lambda^{\mathrm{re}}
	=
	\lambda \dG(h)+W^{\mathrm{re}}
	\]
defined in \eqref{eq:Hre}	admits a natural realization as a self-adjoint operator on bosonic Fock space,
	which is bounded from below. Moreover,
	\[
	Z_\lambda^{\mathrm{re}}
	:=
	\Tr_{\mathfrak F}\big(e^{-\mathbb{H}_\lambda^{\mathrm{re}}}\big)<\infty.
	\]
	Hence the grand-canonical Gibbs state
	\[
	\Gamma_\lambda^{\mathrm{re}}
	:=
	\frac{e^{-\mathbb{H}_\lambda^{\mathrm{re}}}}{Z_\lambda^{\mathrm{re}}}
	\]
	is well defined.
\end{proposition}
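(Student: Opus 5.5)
The plan is to realize $\mathbb H_\lambda^{\rm re}$ sector by sector. Since $\lambda\dG(h)$, $\Wbeta$ and $\cN$ all commute with $\cN$, the operator decomposes as $\bigoplus_{n\ge0}H_n$ on $\gH_n$, with
\[
H_n=\lambda\sum_{j=1}^n h_j+\lambda^2\sum_{1\le i<j\le n}\vbeta(x_i-x_j)+c_\lambda n+C_\lambda .
\]
For $\frac32<\beta\le2$ the kernel $\vbeta$ has at worst a logarithmic singularity ($\beta=2$) and is continuous for $\beta<2$. In every case $\vbeta\in L^p(\T^2)$ for all $p<\infty$. I will check that multiplication by $\vbeta(x_i-x_j)$ is infinitesimally form-bounded with respect to $h_i+h_j$; this follows from $H^1(\T^2)\subset L^q$ for all $q<\infty$ together with Hölder. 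Since $\vbeta\ge0$ anyway, the interaction is a nonnegative form, and the KLMN theorem (or simply the Friedrichs extension of the nonnegative sum) gives a self-adjoint $H_n$ with form domain $H^1_{\rm sym}((\T^2)^n)$. The full Hamiltonian is then defined as the direct sum.

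For the lower bound I will use Lemma~\ref{lem:periodic-yukawa-positive}, which gives $\vbeta\ge\vbstar>0$ off the diagonal. Hence, as forms,
\[
H_n\ge \lambda\sum_j h_j+\frac{\vbstar\lambda^2}{2}n(n-1)+c_\lambda n+C_\lambda .
\]
Since $\lambda$ is fixed, $c_\lambda$ is just a constant, and the quadratic term $\frac{\vbstar\lambda^2}{2}n^2$ dominates $|c_\lambda|n$. This yields $H_n\ge \lambda\sum_j h_j+\frac{\vbstar\lambda^2}{4}n^2-C'$ uniformly in $n$, and in particular $\mathbb H^{\rm re}_\lambda\ge -C'$.

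For the trace bound, the min-max principle and the previous inequality give
\[
\Tr_{\gH_n}e^{-H_n}\le e^{C'-\vbstar\lambda^2n^2/4}\,\Tr_{\gH_n}e^{-\lambda\sum_j h_j}.
\]
The last factor is at most the $n$-sector of the free partition function, and hence at most $Z_0=\prod_p(1-e^{-\lambda h(p)})^{-1}<\infty$, because $\sum_p e^{-\lambda h(p)}<\infty$ in two dimensions. Summing over $n$ gives $Z^{\rm re}_\lambda\le e^{C'}Z_0\sum_n e^{-\vbstar\lambda^2 n^2/4}<\infty$; even the crude factor $Z_0$ suffices. Consequently $e^{-\mathbb H^{\rm re}_\lambda}$ is trace class and $\Gamma^{\rm re}_\lambda$ is well defined.

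The main technical point is the form-boundedness of the singular two-body term at the endpoint $\beta=2$. The logarithmic singularity is harmless in 2D, but I would state it via $\|\vbeta\|_{L^p}<\infty$ for some $p>1$ and the Sobolev embedding rather than relying on positivity alone. Positivity alone suffices for a Friedrichs realization, which is the fallback if the form-boundedness route causes trouble.
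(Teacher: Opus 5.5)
Your proof is correct and follows the paper's argument. Both realize the Hamiltonian sector by sector through quadratic forms. Both use the off-diagonal bound $v_\beta\ge v_{\beta,*}$ from Lemma~\ref{lem:periodic-yukawa-positive} to get an $n^2$ term that absorbs the negative $c_\lambda n$. Both then bound the trace in each sector by min--max.

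The only difference is in the trace bound. You bound the free sector trace by $Z_0$. The paper bounds it by $A_\lambda^n$ and lets the factor $e^{-v_{\beta,*}\lambda^2 n^2/4}$ beat the exponential growth. Both work; yours is slightly cleaner because $Z_0$ already builds in the bosonic symmetry.

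One factual claim needs correcting. For $\frac32<\beta<2$ the kernel is not continuous: it behaves like $c_\beta|x|^{\beta-2}$ near $0$. That is more singular than the logarithm at $\beta=2$, so $v_\beta\in L^p(\T^2)$ only for $p<2/(2-\beta)$, not for all $p<\infty$. This range still contains exponents $p>4$, so the KLMN form-boundedness argument works as you rephrase it at the end. In any case, the paper itself uses only your fallback: nonnegativity of the potential form plus the Friedrichs construction.
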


\begin{proof}
	We divide the proof into four steps.
	
	\medskip
	
	\noindent\textbf{Step 1: sector-wise realization.}
	Recall that
	\[
	\mathfrak F=\bigoplus_{n\ge0}\mathfrak H_n,
	\qquad
	\mathfrak H_n=L^2_s((\T^2)^n).
	\]
 Therefore the restriction of the formal Hamiltonian to $\mathfrak H_n$ is
	\begin{equation}\label{eq:Hn-re}
		\mathbb{H}_{\lambda,n}^{\mathrm{re}}
		=
		\lambda\sum_{j=1}^n(-\Delta_j+1)
		+
		\lambda^2\sum_{1\le i<j\le n}\vbeta(x_i-x_j)
		+
		c_\lambda n + C_\lambda.
	\end{equation}
	
	We define $\mathbb{H}_{\lambda,n}^{\mathrm{re}}$ rigorously through its quadratic form
	\begin{align*}
		q_{\lambda,n}[\Psi]
		:={}&
		\lambda\sum_{j=1}^n\|\nabla_j\Psi\|_{L^2((\T^2)^n)}^2
		+
		((\lambda+c_\lambda)n+C_\lambda)\|\Psi\|_{L^2((\T^2)^n)}^2
		\\
		&\quad
		+
		\lambda^2\sum_{1\le i<j\le n}
		\int_{(\T^2)^n}\vbeta(x_i-x_j)|\Psi(X_n)|^2\,\d X_n,
	\end{align*}
	with form domain
	\[
	Q_{\lambda,n}
	=
	H^1_s((\T^2)^n)
	\cap
	\left\{
	\Psi\in L^2_s((\T^2)^n):
	\sum_{1\le i<j\le n}\int \vbeta(x_i-x_j)|\Psi|^2<\infty
	\right\}.
	\]
	Here $X_n=(x_1,\dots,x_n)$ and $\d X_n=\d x_1\cdots \d x_n$.
	
	Since the potential term is nonnegative, the form $q_{\lambda,n}$ is densely defined and bounded from below on each fixed sector. The set
	\[
	C^\infty\!\big((\T^2)^n\setminus\mathcal D_n\big)\cap \mathfrak H_n,
	\qquad
	\mathcal D_n:=\{X_n:\ x_i=x_j\ \text{for some }i\neq j\},
	\]
	is dense in $\mathfrak H_n$ because $\mathcal D_n$ has measure zero. The potential form
	\[
	\Psi\mapsto
	\sum_{1\le i<j\le n}\int \vbeta(x_i-x_j)|\Psi|^2
	\]
	is closed as a nonnegative multiplication form. Hence $q_{\lambda,n}$ is closed and bounded below, so by the Friedrichs construction it determines a unique self-adjoint operator on $\mathfrak H_n$, still denoted by $\mathbb{H}_{\lambda,n}^{\mathrm{re}}$.
	
	Finally, define
	\[
	\mathbb{H}_\lambda^{\mathrm{re}}
	:=
	\bigoplus_{n\ge0}\mathbb{H}_{\lambda,n}^{\mathrm{re}}
	\]
	on Fock space. This gives a precise meaning to the formal expression
	\[
	\mathbb{H}_\lambda^{\mathrm{re}}=\lambda \dG(h)+\Wbeta+c_\lambda \cN+C_\lambda.
	\]
	By Lemma~\ref{lem:periodic-yukawa-positive}, there exists $\vbstar>0$ such that
	\[
	\vbeta(x)\ge \vbstar
	\qquad\text{for all }x\in\T^2\setminus\{0\}.
	\]
	Consequently, for almost every $X_n\in(\T^2)^n$,
	\[
	\sum_{1\le i<j\le n}\vbeta(x_i-x_j)
	\ge
	\frac{\vbstar}{2}n(n-1).
	\]

	\medskip
	
	\noindent\textbf{Step 2: coercive lower bound in the particle number.}
	Using \eqref{eq:Hn-re} and the previous step, we obtain on $\mathfrak H_n$
	\begin{align*}
		\mathbb{H}_{\lambda,n}^{\mathrm{re}}
		&\ge
		\lambda\sum_{j=1}^n(-\Delta_j)
		+
		\lambda n
		+
		\frac{\lambda^2\vbstar}{2}n(n-1)
		+
		c_\lambda n
		+
		C_\lambda
		\\
		&=
		\lambda\sum_{j=1}^n(-\Delta_j)
		+
		a_\lambda n^2
		+
		b_\lambda n
		+
		C_\lambda,
	\end{align*}
	where
	\[
	a_\lambda:=\frac{\lambda^2\vbstar}{2}>0,
	\qquad
	b_\lambda:=\lambda+c_\lambda-a_\lambda.
	\]
	Now
	\[
	a_\lambda n^2+b_\lambda n
	\ge
	\frac{a_\lambda}{2}n^2-\frac{b_\lambda^2}{2a_\lambda}
	=
	\frac{\lambda^2\vbstar}{4}n^2-\frac{b_\lambda^2}{\lambda^2\vbstar}.
	\]
yields
	\[
	\mathbb{H}_{\lambda,n}^{\mathrm{re}}
	\ge
	\lambda\sum_{j=1}^n(-\Delta_j)
	+
	\frac{\lambda^2\vbstar}{4}n^2
	-
	C_\lambda',
	\]
	where we may take
	\[
	C_\lambda':=\frac{b_\lambda^2}{\lambda^2\vbstar}+|C_\lambda|.
	\]
	Therefore, on Fock space,
	\begin{equation}\label{eq:global-lower-bound}
		\mathbb{H}_\lambda^{\mathrm{re}}
		\ge
		\lambda \dG(-\Delta)
		+
		\frac{\lambda^2\vbstar}{4}\cN^2
		-
		C_\lambda'.
	\end{equation}

	\medskip
	
	\noindent\textbf{Step 3: trace-class bound.}
	For fixed $n\ge0$, set
	\[
	K_{\lambda,n}
	:=
	\lambda\sum_{j=1}^n(-\Delta_j)
	+
	\frac{\lambda^2\vbstar}{4}n^2
	-
	C_\lambda'.
	\]
	By \eqref{eq:global-lower-bound},
	\[
	\mathbb{H}_{\lambda,n}^{\mathrm{re}}\ge K_{\lambda,n}
	\qquad\text{on }\mathfrak H_n.
	\]
	Therefore
	\begin{align*}
		\Tr_{\mathfrak H_n}\big(e^{-\mathbb{H}_{\lambda,n}^{\mathrm{re}}}\big)
	\leq
		\Tr_{\mathfrak H_n}\big(e^{-K_{\lambda,n}}\big)
		=
		e^{C_\lambda'}e^{-\frac{\lambda^2\vbstar}{4}n^2}
		\Tr_{\mathfrak H_n}\left(
		e^{-\lambda\sum_{j=1}^n(-\Delta_j)}
		\right).
	\end{align*}
	
	Now set
	\[
	A_\lambda
	:=
	\Tr_{L^2(\T^2)}\big(e^{-\lambda(-\Delta)}\big)
	=
	\sum_{k\in\Z^2}e^{-\lambda|k|^2}
	<\infty.
	\]
	Since $\mathfrak H_n$ is a subspace of the full tensor product $\bigotimes^nL^2(\T^2)$ and the operator is positive,
	\begin{align*}
		\Tr_{\mathfrak H_n}\left(
		e^{-\lambda\sum_{j=1}^n(-\Delta_j)}
		\right)
		&\le
		\Tr_{\otimes^nL^2(\T^2)}\left(
		(e^{-\lambda(-\Delta)})^{\otimes n}
		\right)
		\\
		&=
		\big(\Tr_{L^2(\T^2)}e^{-\lambda(-\Delta)}\big)^n
		=
		A_\lambda^n.
	\end{align*}
	Hence
	\[
	\Tr_{\mathfrak H_n}\big(e^{-\mathbb{H}_{\lambda,n}^{\mathrm{re}}}\big)
	\le
	e^{C_\lambda'}e^{-\frac{\lambda^2\vbstar}{4}n^2}A_\lambda^n.
	\]
	Summing over all particle numbers yields
	\begin{align*}
		Z_\lambda^{\mathrm{re}}
		=
		\Tr_{\mathfrak F}\big(e^{-\mathbb{H}_\lambda^{\mathrm{re}}}\big)
		&=
		\sum_{n\ge0}\Tr_{\mathfrak H_n}\big(e^{-\mathbb{H}_{\lambda,n}^{\mathrm{re}}}\big)
		\\
		&\le
		e^{C_\lambda'}\sum_{n\ge0}e^{-\frac{\lambda^2\vbstar}{4}n^2}A_\lambda^n
		<\infty.
	\end{align*}
	Since the Gaussian factor $e^{-\frac{\lambda^2\vbstar}{4}n^2}$ dominates the exponential growth $A_\lambda^n$ as $n\to\infty$, we conclude that $e^{-\mathbb{H}_\lambda^{\mathrm{re}}}$ is trace class. Therefore the Gibbs state
	\[
	\Gamma_\lambda^{\mathrm{re}}
	=
	\frac{e^{-\mathbb{H}_\lambda^{\mathrm{re}}}}{\Tr_{\mathfrak F}(e^{-\mathbb{H}_\lambda^{\mathrm{re}}})}
	\]
	is well defined.
\end{proof}

\subsection{Several useful lemmas}\label{app:exchange-bound}

\paragraph{Convolution kernels.}
\begin{lemma}\label{lem:sum}
	(i) (\cite[Lemma 3.10]{ZZ15}) Let $0<l,m<d$ and $l+m-d>0$. Then
	\begin{equation}\label{eq:sum-general}
		\sum_{k_1\in \mathbb{Z}^d}\frac{1}{\langle k_1\rangle^{l}\langle k-k_1\rangle^{m}}
		\lesssim \frac{1}{\langle k\rangle ^{l+m-d}}.
	\end{equation}

	(ii) In dimension $d=2$,
	\begin{equation}\label{eq:sum-log}
		\sum_{p\in\Z^2}\frac1{(|p+k|^2+1)(|p|^2+1)}
		\lesssim \frac{\log(2+|k|)}{1+|k|^2}.
	\end{equation}
\end{lemma}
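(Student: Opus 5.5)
The plan is to prove both estimates by the standard dyadic splitting of the lattice sum according to which factor is small. For (i) one may simply quote \cite[Lemma 3.10]{ZZ15}. The argument below reproduces it and shows where the logarithm in (ii) comes from. When $|k|\le 2$ both sums are bounded by a constant, since $l+m>d$ (respectively $2+2>2$). So I assume $|k|\ge2$ throughout. I split $\mathbb Z^d$ into three regions:
\[
\mathcal A_1=\{|k_1|\le |k|/2\},\qquad \mathcal A_2=\{|k-k_1|\le |k|/2\},\qquad \mathcal A_3=\mathbb Z^d\setminus(\mathcal A_1\cup\mathcal A_2).
\]

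On $\mathcal A_1$ the triangle inequality gives $|k-k_1|\ge |k|/2$, so the second factor is $\lesssim\langle k\rangle^{-m}$. Hence the contribution of $\mathcal A_1$ is at most
\[
\langle k\rangle^{-m}\sum_{|k_1|\le |k|/2}\langle k_1\rangle^{-l}.
\]
In case (i), $l<d$, so this sum is $\lesssim \langle k\rangle^{d-l}$ and the contribution is $\lesssim\langle k\rangle^{d-l-m}$. In case (ii), $d=l=m=2$, so the sum is the borderline $\sum_{|p|\le|k|/2}(1+|p|^2)^{-1}\lesssim\log(2+|k|)$ and the contribution is $\lesssim \log(2+|k|)/(1+|k|^2)$. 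This is exactly the source of the logarithm in \eqref{eq:sum-log}. The region $\mathcal A_2$ is symmetric after the change of variables $k_1\mapsto k-k_1$, with the roles of $l$ and $m$ exchanged, and gives the same bounds.

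On $\mathcal A_3$ both $|k_1|$ and $|k-k_1|$ are $\ge|k|/2$, and I split once more.
\begin{itemize}
\item In the annulus $|k|/2\le|k_1|\le 2|k|$ there are $O(|k|^d)$ lattice points. Each term is $\lesssim |k|^{-l-m}$, so the total is $\lesssim\langle k\rangle^{d-l-m}$; in case (ii) this is $\lesssim\langle k\rangle^{-2}$.
\item In the far region $|k_1|\ge 2|k|$ one has $|k-k_1|\ge |k_1|/2$. The sum is then dominated by $\sum_{|k_1|\ge 2|k|}\langle k_1\rangle^{-l-m}$, which converges precisely because $l+m>d$ and is $\lesssim\langle k\rangle^{d-l-m}$; in case (ii) this is $\lesssim\langle k\rangle^{-2}$.
\end{itemize}

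Adding the three regions gives \eqref{eq:sum-general} and \eqref{eq:sum-log}. There is no real obstacle; the only points needing care are bookkeeping ones. The first is to keep the near-origin sum in (ii) at the critical exponent $2=d$, which is what produces the logarithm rather than a power. The second is to check that the intermediate annulus in $\mathcal A_3$ carries no extra loss, which holds because the point count $O(|k|^d)$ is exactly compensated by the factor $|k|^{-l-m}$.
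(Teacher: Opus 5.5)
Your proposal is correct and follows the paper's proof essentially verbatim: the same split into the two balls of radius $|k|/2$ around the singular points, the intermediate annulus $|k|/2<|k_1|\le 2|k|$, and the far region $|k_1|>2|k|$, with the logarithm coming from the critical near-origin sum. The only differences are minor. You also prove (i) by the same splitting, whereas the paper simply cites it. On the annulus you use a point count, where the paper uses a slightly cruder bound that carries a harmless logarithm.
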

\begin{proof}
	Set
	\[
	A_2(k):=
	\sum_{p\in\Z^2}\frac1{(|p+k|^2+1)(|p|^2+1)}.
	\]
	If $|k|\le 2$, then
	\[
	A_2(k)
	\le
	\sum_{p\in\Z^2}(|p|^2+1)^{-2}
	\le C.
	\]
	Assume now that $|k|>2$, and set $R:=|k|/2$. Decompose
	\[
	\Z^2=\Omega_1\cup\Omega_2\cup\Omega_3\cup\Omega_4,
	\]
	where
	\[
	\Omega_1:=\{p\in\Z^2:\ |p|\le R\},
	\qquad
	\Omega_2:=\{p\in\Z^2:\ |p+k|\le R\},
	\]
	\[
	\Omega_3:=\{p\in\Z^2:\ |p|>R,\ |p+k|>R,\ |p|\le 2|k|\},
	\qquad
	\Omega_4:=\{p\in\Z^2:\ |p|>2|k|\}.
	\]
	These four sets cover $\Z^2$.
	
	On $\Omega_1$ we have $|p+k|\ge |k|-|p|\ge R$, and therefore
	\[
	\sum_{p\in\Omega_1}\frac1{(|p+k|^2+1)(|p|^2+1)}
	\le
	\frac{C}{1+|k|^2}\sum_{|p|\le R}\frac1{|p|^2+1}
	\le
	C\,\frac{\log(2+|k|)}{1+|k|^2}.
	\]
	By the change of variables $q=p+k$, the same estimate holds on $\Omega_2$:
	\[
	\sum_{p\in\Omega_2}\frac1{(|p+k|^2+1)(|p|^2+1)}
	\le
	C\,\frac{\log(2+|k|)}{1+|k|^2}.
	\]
	
	On $\Omega_3$ we still have $|p+k|>R$, hence
	\[
	\sum_{p\in\Omega_3}\frac1{(|p+k|^2+1)(|p|^2+1)}
	\le
	\frac{C}{1+|k|^2}\sum_{R<|p|\le 2|k|}\frac1{|p|^2+1}
	\le
	C\,\frac{\log(2+|k|)}{1+|k|^2}.
	\]
	
	Finally, if $p\in\Omega_4$, then $|p+k|\ge |p|-|k|\ge |p|/2$, and therefore
	\[
	\frac1{(|p+k|^2+1)(|p|^2+1)}
	\le
	\frac{C}{(1+|p|^2)^2}.
	\]
	Since
	\[
	\sum_{|p|>2|k|}(1+|p|^2)^{-2}
	\le
	\frac{C}{1+|k|^2}
	\qquad\text{in dimension }2,
	\]
	we obtain
	\[
	\sum_{p\in\Omega_4}\frac1{(|p+k|^2+1)(|p|^2+1)}
	\le
	\frac{C}{1+|k|^2}.
	\]
	Combining the four regions proves \eqref{eq:sum-log}.
\end{proof}

\begin{lemma}[Convolution kernels]\label{lem:convolution-kernels}
	For \(s>0\), define
	\[
	S_k(s):=\sum_{u\in\mathbb Z^2} h(u+k)^{-s}h(u)^{-s}.
	\]
	For $L\ge1$, let
	\[
	\chi_u^{(L)}:=\mathbf 1_{\{h(u)\le L^2\}},
	\qquad
	S_k^{(L)}(s):=\sum_{u\in\mathbb Z^2} \chi_u^{(L)}\,h(u+k)^{-s}h(u)^{-s}.
	\]
	Then the following hold.
	\begin{align}
		S_k^{(L)}(s)&\le C_sL^{2-2s}\langle k\rangle^{-2s},
		&&0<s<1,\quad L\ge1,
		\label{eq:SkL-bound}
		\\
		S_k(s)&\le C_s,
		&&s>\frac12.
		\label{eq:Sk-bound}
	\end{align}
	In particular,
	\begin{align}
		S_k^{(P)}(s)&\le C_s\Lambda^{2-2s}\langle k\rangle^{-2s},
		&&0<s<1,
		\label{eq:SkP-bound}
		\\
		S_k^{(S)}(s)&\le C_s\Lambda_1^{2-2s}\langle k\rangle^{-2s},
		&&0<s<1.
		\label{eq:SkS-bound}
	\end{align}
\end{lemma}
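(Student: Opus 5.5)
We treat \eqref{eq:Sk-bound} and \eqref{eq:SkL-bound} separately, in each case splitting the $u$-sum according to which of the two weights $h(u)^{-s}$, $h(u+k)^{-s}$ is large; \eqref{eq:SkP-bound} and \eqref{eq:SkS-bound} are then just the specializations $L=\Lambda$ and $L=\Lambda_1$, since $P_u=\chi^{(\Lambda)}_u$ and $S_u=\chi^{(\Lambda_1)}_u$.

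\textbf{The untruncated sum \eqref{eq:Sk-bound}.} Since $h(u)\simeq\langle u\rangle^2$, we have $S_k(s)\simeq\sum_u\langle u\rangle^{-2s}\langle u+k\rangle^{-2s}$. By Cauchy--Schwarz,
\[
S_k(s)\le\Big(\sum_u\langle u\rangle^{-4s}\Big)^{1/2}\Big(\sum_u\langle u+k\rangle^{-4s}\Big)^{1/2}=\sum_u\langle u\rangle^{-4s}.
\]
The right-hand side is finite, uniformly in $k$, exactly when $4s>2$, i.e.\ $s>\frac12$. This gives \eqref{eq:Sk-bound}.

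\textbf{The truncated sum \eqref{eq:SkL-bound}.} Fix $0<s<1$ and $L\ge1$. The constraint restricts the sum to $|u|\lesssim L$.
\begin{itemize}
\item \emph{Case $|k|\le 2L$.} If $|u|\le|k|/2$, then $\langle u+k\rangle\gtrsim\langle k\rangle$, and this part contributes at most
\[
\langle k\rangle^{-2s}\sum_{|u|\lesssim L}\langle u\rangle^{-2s}\lesssim\langle k\rangle^{-2s}L^{2-2s},
\]
using $2s<2$. If $|u|>|k|/2$, then $\langle u\rangle^{-2s}\lesssim\langle k\rangle^{-2s}$. The remaining sum is $\sum_{|u|\lesssim L}\langle u+k\rangle^{-2s}$, a sum over a ball of radius $\lesssim L$ centered at $-k$, and it is bounded by $\sum_{|v|\lesssim 3L}\langle v\rangle^{-2s}\lesssim L^{2-2s}$.
\item \emph{Case $|k|>2L$.} Here $\langle u+k\rangle\gtrsim\langle k\rangle$ for every $u$ in the support, so directly
\[
S_k^{(L)}(s)\lesssim\langle k\rangle^{-2s}\sum_{|u|\lesssim L}\langle u\rangle^{-2s}\lesssim\langle k\rangle^{-2s}L^{2-2s}.
\]
\end{itemize}
Combining the two cases proves \eqref{eq:SkL-bound}.

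The only delicate point is that the exponents must cooperate. The truncated lattice sum $\sum_{|v|\le R}\langle v\rangle^{-2s}\lesssim R^{2-2s}$ needs $s<1$; at $s=1$ it would produce a logarithm instead of a power. The full sum in \eqref{eq:Sk-bound} needs $s>\frac12$.
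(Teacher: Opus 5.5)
Your proof is correct and follows essentially the paper's route: at least one of $|u|,|u+k|$ is $\ge|k|/2$, so one factor contributes $\langle k\rangle^{-2s}$, and the truncated lattice bound $\sum_{|v|\lesssim R}\langle v\rangle^{-2s}\lesssim R^{2-2s}$ (for $s<1$) supplies the $L^{2-2s}$. The differences are cosmetic. You control the shifted sum by the case split $|k|\le 2L$ versus $|k|>2L$, where the paper counts lattice points in dyadic annuli. You prove \eqref{eq:Sk-bound} by Cauchy--Schwarz, $S_k(s)\le\sum_u h(u)^{-2s}$, where the paper cites Lemma~\ref{lem:sum}(i); your version also covers $s\ge1$ directly, a range where $l=m=2s$ violates that lemma's hypothesis $l,m<d$.
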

\begin{proof}
	Since $h(u)\asymp \langle u\rangle^2$, \eqref{eq:Sk-bound} follows from \eqref{eq:sum-general}
	with $d=2$ and $l=m=2s$.
	
	For \eqref{eq:SkL-bound}, note first that
	\[
	|k|\le |u|+|u+k|,
	\]
	so at least one of $|u|$ and $|u+k|$ is bounded below by $|k|/2$. Consequently,
	\[
	h(u+k)^{-s}h(u)^{-s}
	\le
	C_s\langle k\rangle^{-2s}\bigl(h(u)^{-s}+h(u+k)^{-s}\bigr).
	\]
	Summing over the truncated region gives
	\[
	S_k^{(L)}(s)
	\le
	C_s\langle k\rangle^{-2s}
	\sum_{u\in\Z^2}\chi_u^{(L)}\bigl(h(u)^{-s}+h(u+k)^{-s}\bigr).
	\]
	We estimate the two sums on the right separately.
	
	Since $\chi_u^{(L)}=1$ implies $h(u)\le L^2$ and hence $|u|\lesssim L$, the standard lattice
	bound yields
	\[
	\sum_{u\in\Z^2}\chi_u^{(L)} h(u)^{-s}
	\le
	C_s\sum_{|u|\lesssim L}\langle u\rangle^{-2s}
	\le C_sL^{2-2s},
	\qquad 0<s<1.
	\]
	For the shifted term, change variables $v=u+k$:
	\[
	\sum_{u\in\Z^2}\chi_u^{(L)} h(u+k)^{-s}
	=
	\sum_{v\in\Z^2}\chi_{v-k}^{(L)} h(v)^{-s}.
	\]
	The support of $\chi_{v-k}^{(L)}$ is contained in a lattice ball $B(k,CL)$ of radius $CL$.
	For $j\ge0$, let
	\[
	\mathcal A_j:=\{v\in\Z^2:\ 2^j\le \langle v\rangle<2^{j+1}\}.
	\]
	Then
	\[
	\#\bigl(B(k,CL)\cap \mathcal A_j\bigr)
	\le
	C\min\{L^2,2^{2j}\}.
	\]
	Therefore
	\begin{align*}
		\sum_{v\in\Z^2}\chi_{v-k}^{(L)} h(v)^{-s}
		&\le
		C_s\sum_{j\ge0}2^{-2sj}\,\#\bigl(B(k,CL)\cap \mathcal A_j\bigr)
		\\
		&\le
		C_s\sum_{j\ge0}2^{-2sj}\min\{L^2,2^{2j}\}
		\\
		&\le
		C_s\Bigl(\sum_{2^j\le L}2^{(2-2s)j}+L^2\sum_{2^j>L}2^{-2sj}\Bigr)
		\le C_sL^{2-2s}.
	\end{align*}
	Combining the last three displays proves \eqref{eq:SkL-bound}. The special cases
	\eqref{eq:SkP-bound} and \eqref{eq:SkS-bound} follow by taking $L=\Lambda$ and
	$L=\Lambda_1$, respectively.
\end{proof}

%

{\color{red}
\begin{lemma}[Shifted fractional-Bessel $k$-sums]\label{lem:shifted-Yukawa-sums}
	Let $\frac32<\beta\le2$, $0<s<1$, and $0<s_1\le s_2<1$. Then
	\begin{align}
		\sum_{k\neq 0} |\hvb(k)|^2 \langle k+\ell\rangle^{-2s}
		&\le C_{\beta,s}\,\langle \ell\rangle^{-2s}, \label{eq:shifted-v-sum-1}\\
		\sum_{k\neq 0} |\hvb(k)|^2
		\langle k+\ell\rangle^{-2s_1}\langle \ell-k\rangle^{-2s_2}
		&\le C_{\beta,s_1,s_2}\,\langle \ell\rangle^{-2s_1}. \label{eq:shifted-v-sum-2}
	\end{align}
\end{lemma}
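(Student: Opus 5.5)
The plan is to prove both bounds by splitting the $k$-sum according to the relative size of $|k|$ and $|\ell|$, using only $|\hvb(k)|^2=\langle k\rangle^{-2\beta}$ with $2\beta>3$. The key fact is that $\langle k\rangle^{-2\beta}\in\ell^1(\Z^2)$ with room to spare, since $2\beta>3>2$. The decay $\langle\ell\rangle^{-2s}$ must therefore come from the shifted factor.

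\textbf{Proof of \eqref{eq:shifted-v-sum-1}.} If $|\ell|\le 2$ the claim is trivial from $\sum_k\langle k\rangle^{-2\beta}<\infty$, so assume $|\ell|>2$ and split $\Z^2$ into three regions.
\begin{itemize}
\item[(a)] On $|k|\le|\ell|/2$ we have $\langle k+\ell\rangle\gtrsim\langle\ell\rangle$, so this part is bounded by $C\langle\ell\rangle^{-2s}\sum_k\langle k\rangle^{-2\beta}$.
\item[(b)] On $|k+\ell|\le|\ell|/2$ we have $|k|\ge|\ell|/2$, hence $\langle k\rangle^{-2\beta}\lesssim\langle\ell\rangle^{-2\beta}$. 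The remaining sum satisfies $\sum_{|k+\ell|\le|\ell|/2}\langle k+\ell\rangle^{-2s}\lesssim\langle\ell\rangle^{2-2s}$, because $s<1$. This part is therefore $\lesssim\langle\ell\rangle^{2-2\beta-2s}\le\langle\ell\rangle^{-2s}$, using $\beta>1$.
\item[(c)] On the remaining region both $|k|$ and $|k+\ell|$ exceed $|\ell|/2$. Bounding $\langle k+\ell\rangle^{-2s}\lesssim\langle\ell\rangle^{-2s}$ and summing $\langle k\rangle^{-2\beta}$ gives the claim.
\end{itemize}
Alternatively, one could invoke Lemma~\ref{lem:sum}(i) with $d=2$ and exponents $l=2\beta$, $m=2s$. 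However, $l=2\beta>2=d$ violates its hypothesis, so the elementary splitting above is preferable.

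\textbf{Proof of \eqref{eq:shifted-v-sum-2}.} Here I bound $\langle\ell-k\rangle^{-2s_2}\le 1$ wherever this is harmless. The case split is the same, except that the regions near $k=-\ell$ and near $k=\ell$ must both be handled.
\begin{itemize}
\item[(a)] On $|k|\le|\ell|/2$, the factor $\langle k+\ell\rangle^{-2s_1}\lesssim\langle\ell\rangle^{-2s_1}$. Drop the $\langle\ell-k\rangle$ factor and sum $\langle k\rangle^{-2\beta}$.
\item[(b)] On $|k|>|\ell|/2$, use $\langle k\rangle^{-2\beta}\lesssim\langle\ell\rangle^{-2\beta+2}\langle k\rangle^{-2}$. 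Alternatively, bound the product of the two shifted factors by $1$ and estimate $\sum_{|k|>|\ell|/2}\langle k\rangle^{-2\beta}\lesssim\langle\ell\rangle^{2-2\beta}$. Since $2\beta-2>1>2s_1$, this is $\lesssim\langle\ell\rangle^{-2s_1}$.
\end{itemize}
So in fact only the first factor, together with the tail of $\langle k\rangle^{-2\beta}$, is needed.

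The main point to check is the exponent bookkeeping in the far region. We need $2\beta-2\ge 2s$ for \eqref{eq:shifted-v-sum-1}; this holds since $2\beta-2>1$ and $2s<2$, but it may fail when $s$ is close to $1$. That case needs the finer bound from region (c) in the argument for \eqref{eq:shifted-v-sum-1}, where the shifted factor itself supplies the $\langle\ell\rangle^{-2s}$ decay, and it is this region-wise argument I will write out carefully. For \eqref{eq:shifted-v-sum-2}, the case $s_1\ge\beta-1$ is likewise handled by keeping $\langle k+\ell\rangle^{-2s_1}$ instead of dropping it, exactly as in \eqref{eq:shifted-v-sum-1}. In fact \eqref{eq:shifted-v-sum-2} then follows directly from \eqref{eq:shifted-v-sum-1} with $s=s_1$, after bounding $\langle\ell-k\rangle^{-2s_2}\le1$.
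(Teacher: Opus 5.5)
Your argument is correct and is essentially the paper's proof: the same three-region split $\{|k|\le|\ell|/2\}$, $\{|k+\ell|\le|\ell|/2\}$ and the complement for \eqref{eq:shifted-v-sum-1}, followed by deducing \eqref{eq:shifted-v-sum-2} from \eqref{eq:shifted-v-sum-1} with $s=s_1$ after bounding $\langle \ell-k\rangle^{-2s_2}\le 1$. Your closing ``bookkeeping'' paragraph is unnecessary and slightly muddled: the region-wise proof of \eqref{eq:shifted-v-sum-1} never needs $2\beta-2\ge 2s$, since region (b) only uses $\beta\ge1$; moreover, the shortcut $2\beta-2>1>2s_1$ in your first attempt at \eqref{eq:shifted-v-sum-2} is not available when $s_1$ is only assumed $<1$, so the final reduction to \eqref{eq:shifted-v-sum-1} is the right way to write it.
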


\begin{proof}
	Since $|\hvb(k)|^2=\langle k\rangle^{-2\beta}$ and $2\beta>3$, the kernel $|\hvb|^2$ belongs to $\ell^1(\Z^2)$.
	
	For \eqref{eq:shifted-v-sum-1}, assume first $|\ell|\ge1$ and split
	\[
	\Z^2=R_0\cup R_-\cup R_\infty,
	\qquad
	R_0:=\{|k|\le |\ell|/2\},
	\qquad
	R_-:=\{|k+\ell|\le |\ell|/2\},
	\]
	with $R_\infty$ the complement. On $R_0$ one has $|k+\ell|\ge |\ell|/2$, hence
	\[
	\sum_{k\in R_0}\langle k\rangle^{-2\beta}\langle k+\ell\rangle^{-2s}
	\le C_{\beta,s}\langle \ell\rangle^{-2s}.
	\]
	On $R_-$ write $k=-\ell+j$ with $|j|\le |\ell|/2$. Then $|k|\ge |\ell|/2$, so
	\[
	\sum_{k\in R_-}\langle k\rangle^{-2\beta}\langle k+\ell\rangle^{-2s}
	\le C_\beta\langle \ell\rangle^{-2\beta}\sum_{|j|\le |\ell|/2}\langle j\rangle^{-2s}
	\le C_{\beta,s}\langle \ell\rangle^{-2s}.
	\]
	On $R_\infty$ one has both $|k|\gtrsim |\ell|$ and $|k+\ell|\gtrsim |\ell|$, hence
	\[
	\sum_{k\in R_\infty}\langle k\rangle^{-2\beta}\langle k+\ell\rangle^{-2s}
	\le C_{\beta,s}\langle \ell\rangle^{-2s}\sum_{|k|\gtrsim |\ell|}\langle k\rangle^{-2\beta}
	\le C_{\beta,s}\langle \ell\rangle^{-2s}.
	\]
	This proves \eqref{eq:shifted-v-sum-1} for $|\ell|\ge1$, and the case $|\ell|<1$ is immediate after enlarging the constant.
	Furthermore,  \eqref{eq:shifted-v-sum-2} follows form \eqref{eq:shifted-v-sum-1}.
\end{proof}
}

\begin{proof}[Proof of Lemma \ref{lem:free-tail-Q1}]
	By definition,
	\[
	\lambda N_{0,Q_1}
	=
	\lambda\sum_{h(p)>\Lambda_1^2}\frac{1}{e^{\lambda h(p)}-1}.
	\]
	Using the standard lattice-to-integral comparison in dimension two, we obtain
	\[
	\lambda N_{0,Q_1}
	\lesssim
	\int_{\Lambda_1^2}^{\infty}\frac{\lambda\,\d r}{e^{c\lambda r}-1}.
	\]
	With the change of variables $x=\lambda r$, this becomes
	\[
	\lambda N_{0,Q_1}
	\lesssim
	\int_{\lambda\Lambda_1^2}^{\infty}\frac{\d x}{e^{cx}-1}
	=
	\frac1c\log\frac{1}{1-e^{-c\lambda\Lambda_1^2}}.
	\]
	This proves the first claim.
	
	Now since ${\alpha_1}>1/2$, we have
	\[
	\lambda\Lambda_1^2=\lambda^{1-2{\alpha_1}}\to\infty.
	\]
	Hence, for all sufficiently small $\lambda$,
	\[
	e^{-c\lambda\Lambda_1^2}\le \frac12,
	\]
	and therefore
	\[
	\log\frac{1}{1-e^{-c\lambda\Lambda_1^2}}
	\le
	2e^{-c\lambda\Lambda_1^2}.
	\]
	Thus
	\[
	\lambda N_{0,Q_1}\le C e^{-c\lambda\Lambda_1^2}.
	\]
	Since exponential decay dominates every power of $|\log\lambda|$, we conclude that
	\[
	(1+|\log\lambda|)^m\,\lambda N_{0,Q_1}\to 0
	\qquad\text{for every }m\ge 0.
	\]
\end{proof}

\bigskip
\noindent(P. T. Nam) LMU Munich, Department of Mathematics, Theresienstrasse 39, 80333 Munich, Germany

\noindent Email address: \texttt{nam@math.lmu.de}

\medskip
\noindent(R. Zhu) Department of Mathematics, Beijing Institute of Technology, Beijing 100081, China

\noindent Email address: \texttt{zhurongchan@126.com}

\medskip
\noindent(X. Zhu) Academy of Mathematics and Systems Science, Chinese Academy of Sciences, Beijing 100190, China

\noindent Email address: \texttt{zhuxiangchan@126.com}

\end{document}